%% file: main.tex
\documentclass[format=acmsmall, screen, nonacm]{acmart}
\AtBeginDocument{%
  }

\usepackage[utf8]{inputenc}
\usepackage[T1]{fontenc}
\usepackage{microtype}

\usepackage{amsmath}
\usepackage{amsthm}
\usepackage{stmaryrd}
\usepackage{xspace}
\usepackage{xparse} 

\usepackage{stackengine}
\stackMath

\usepackage{tikz}
\usetikzlibrary{graphs, positioning, fit, shapes, calc,arrows.meta}

\usepackage{ebproof}

\usepackage{thmtools}
\usepackage{mathtools}
\usepackage{xcolor}

\usepackage{hyperref}
\usepackage[capitalise]{cleveref}

\usepackage{versions}

\excludeversion{SHORT}
\includeversion{LONG}

\hypersetup{%
    linktoc=all,     
    colorlinks=false,
    linkbordercolor=purple,
    citebordercolor=teal
}

\crefname{section}{Section}{Sections}
\crefname{figure}{Figure}{Figures}
\crefname{definition}{Definition}{Definitions}
\crefname{proposition}{Proposition}{Propositions}
\crefname{lemma}{Lemma}{Lemmas}
\crefname{theorem}{Theorem}{Theorems}
\crefname{corollary}{Corollary}{Corollarys}
\crefname{conjecture}{Conjecture}{Conjectures}
\crefname{remark}{Remark}{Remarks}
\crefname{example}{Example}{Examples}
\crefname{appendix}{Appendix}{Appendixs}

\input{Styles/Macro}

\input{Styles/Syntax}

\input{Styles/Types}
\input{Styles/Colors}

\begin{document}

\title{When Types Intersect and Effects Get Handled}

\author{Stefano Catozi}
\orcid{0009-0007-0829-5789}
\affiliation{%
  \institution{University Sorbonne Paris Nord}
  \city{Villetaneuse}
  \country{France}
}
\email{catozi@lipn.univ-paris13.fr}

\author{Ugo Dal Lago}
\orcid{0000-0001-9200-070X}
\affiliation{%
  \institution{University of Bologna}
  \city{Bologna}
  \country{Italy}
}
\email{ugo.dallago@unibo.it}

\author{Taro Sekiyama}
\orcid{0000-0001-9286-230X}
\affiliation{%
  \institution{National Institute of Informatics}
  \city{Tokyo}
  \country{Japan}
}
\email{tsekiyama@acm.org}

\begin{abstract}
  We introduce a novel intersection type system for a $\lambda$-calculus with algebraic effects and handlers. The system, inherently behavioral in nature, 
  enjoys the classical properties of intersection type systems, in particular subject reduction and \emph{expansion}. It thus characterizes the set of 
  terms whose evaluation process terminates and, at the same time, allows reducing the reachability problem to type inference. This new system, the first 
  with these features for a calculus with handlers, induces a system of simple types which, although not guaranteeing termination, is type sound and admits 
  a decidable HOMC problem, unlike similar type systems like Dal Lago and Ghyselen's \HEPCF.
\end{abstract}

\keywords{Intersection Types, Effect Handlers, Model Checking.}



\newcommand{\PCF}{\ensuremath{\mathsf{PCF}}}
\newcommand{\HEPCF}{\ensuremath{\mathsf{HEPCF}}}
\newcommand{\HE}{\ensuremath{\mathsf{HE}}}
\newcommand{\HEBI}{\ensuremath{\mathsf{HEBI}}}
\newcommand{\HEB}{\ensuremath{\mathsf{HEB}}}

\maketitle

\section{Introduction}
\label{sect:introduction}

\emph{Intersection types}, introduced by Coppo and Dezani almost fifty years
ago~\cite{coppo1978new}, were initially developed as a powerful tool for the analysis of the
semantics and meta-theory of the
$\lambda$-calculus. In particular, they made it possible to
characterize fundamental program properties such as termination (spelled out in different ways, like
strong normalization or head-normal-form normalization), and they
naturally led to the development of semantic models such as filter
models~\cite{barendregt1983filter,Bernadet/Lengrand_2011_CSL}. From this perspective, 
their role was from the outset quite
different from that of traditional typing, which provides a verification mechanism for
safety properties via type checking and type inference, the main reason being
that in their most general form, intersection type theories give rise to
\emph{undecidable} type inference and type checking problems.
This distinguished aspect of intersection types enables applications which are
unattainable via simple types.

For example, Kobayashi and Ong \cite{kobayashi2009types,Kobayashi/Ong_2009_LICS} made the fundamental observation that the
decidability of the \emph{higher-order model-checking} (HOMC)~\cite{Knapik/Nisinski/Urzyczyn_2001_TLCA,Knapik/Niwinski/Urzyczyn_2002_FoSSaCS,Ong_2006_LICS,kobayashi2013model} problem can be
understood as a form of exhaustive type inference in a system of uniform
intersection types \emph{refining} an underlying system of simple
types.
The key idea is that this refinement enjoys a \emph{finite model property}: each
simple type admits only a finite number of refinements, and this finiteness
makes exhaustive search possible, ultimately guaranteeing the decidability of
the underlying model checking problem. The aforementioned decidability results apply to calculi in the style of Plotkin's \PCF\ which, although equipped 
with higher-order functions and recursion, are simply typed and only feature finite 
base types. Relaxing either of these two conditions quickly leads to undecidability~\cite{Kobayashi/Igarashi_ESOP_2013}. 
As shown in numerous subsequent works, the decidability results mentioned above 
are also fragile with respect to extensions of the underlying language with effects,
e.g. probabilistic choice \cite{Kobayashi/Dal-Lago/Grellois_2019_LICS}. 
Particularly noteworthy are the extensions of \PCF\ with \emph{algebraic effects}~\cite{Plotkin/Power_2003_ACS} and \emph{handlers}~\cite{Plotkin/Pretnar_2009_ESOP}, which have been the subject of
extensive study over the past twenty years, and which allow the task of providing an
implementation for algebraic operations to be delegated to the programmer by
reifying \emph{delimited continuations}. In presence of effect handlers,
Dal Lago and Ghyselen~\cite{Dal-Lago/Ghyselen_2024_POPL} show that HOMC becomes undecidable. 
It should be noted that although the HOMC problem is traditionally studied when the 
underlying specification is \emph{any MSO formula}, \emph{reachability properties} are
sufficient to render the problem undecidable in presence of probabilistic choice or
effect handlers. In the rest of this paper we focus on reachability properties.

Intersection type systems have themselves been studied for extensions of the
$\lambda$-calculus equipped with specific forms of \emph{effects}, such as
nondeterministic choice~\cite{DeLiguoroPiperno}, probabilistic 
choice~\cite{BreuvartDalLago,DalLagoFaggianRonchi}, and global 
state~\cite{Kesner}. More recently,
Gavazzo et al.~\cite{GalalGavazzoTregliaVanoni} have undertaken the development of a generic framework
for defining intersection types for calculi with algebraic
effects, in which the monad interpreting the algebraic operations is internalized within the type system, thereby allowing the
behavior of terms to be reflected at the level of types. 
{But what if algebraic operations are not given meaning through monads, with their interpretation delegated to programmers?}
Studying intersection type systems for effect handlers seems to be a natural approach
to the semantic analysis of higher-order programs handling effects by themselves.
{Moreover, intersection types appear as a promising tool in the quest for 
	type systems for programs with effect handlers for which the HOMC problem remains decidable: 
	it would suffice to ensure that the finite model property remains true.}
To the authors’ knowledge, however, there is still no account of intersection types in
calculi with effect handlers.

It is therefore natural to ask whether a system of intersection types can be
defined for calculi with handlers, and the aim of this article is precisely to
provide a first piece of positive evidence in this direction.
A key point to define an intersection type system enjoying the desired properties,
such as a characterization of the set of terminating terms, is to provide a type
discipline that can faithfully express programs' behavior, \emph{e.g.},
functions are assigned an intersection type that express all the usages of the
functions' arguments in their bodies.
In the presence of algebraic effects and handlers, the behavior of a program
relies on \emph{what algebraic operation is called in what order} and \emph{how\
their continuations are used by the underlying effect handler}.
To embody this, the paper introduces a new form of behavioural types that
captures the \emph{computation tree} of algebraic operations produced by each
term, together with a specification of how the term itself expects the
environment to handle these operations.
This is fully in line with the spirit of intersection types, in that there is no
loss of information.

The resulting system enjoys the properties typically expected of an intersection type system. In particular, alongside the standard subject \emph{reduction} property, it also satisfies a property known as subject \emph{expansion}, which states that types are preserved backwards along reductions. This property, in turn, allows one to prove that the type system characterizes termination, in analogy with what happens in the $\lambda$-calculus. In addition to subject expansion, another non-trivial step in the proof of this result consists in showing that all typable terms are terminating, which requires an adaptation of Tait's reducibility technique~\cite{Tait_1967_JSL} to the introduced behavioral type system.

Remarkable as the above results are, they are not the only contributions of this work. We also present a study of intersection types viewed as refinements of simple types in the presence of effect handlers. This analysis provides a precise confirmation of the undecidability result of the HOMC problem for $\HEPCF$, a simply-typed calculus with algebraic effects and handlers~\cite{Dal-Lago/Ghyselen_2024_POPL}. We achieve this result by showing that when intersection types are regarded as refinements of simple types, the finite refinement property \emph{fails}: there are simple types which can be refined in infinitely many distinct ways. However, this is not the end of the story. The very structure of behavioural types as described above suggests the existence of a novel type system, itself behavioral in nature, and for which, instead, the finite model property \emph{does} hold, and consequently the HOMC problem is decidable.  This is not the first fragment of $\HEPCF$ where the HOMC problem is decidable~\cite{Sekiyama/Unno_2024_OOPSLA,Sekiyama/DalLago/Unno_2025_OOPLSA,Endo/Terauchi_2025_APLAS}, but the methodology employed in all the previous approaches consists in defining \emph{CPS transformations} from the calculus under consideration to a finitary \PCF\ ---a variant of \PCF\ only with finite base types, where the HOMC problem is decidable~\cite{Ong_2006_LICS}---and in proving that such transformations are adequate and type-preserving. However, it was still unclear what the deeper reason is behind the (un)decidability of the HOMC problem for $\HEPCF$, or similar calculi. Our intersection type system provides semantic evidence on that: whether a finite model exists or not. Our work is thus a further witness that intersection types are indeed a powerful and flexible tool, not only in a semantic setting, but also in verification.

This paper's contributions can be summarized as follows:
\begin{itemize}
	\item
	Introduction of $\HEBI$, an intersection type system for a calculus with algebraic effects and handlers. The system combines in a novel way the classical notion of intersection types—here taking the form of a \emph{set} of types—with a form of behavioral typing in which the algebraic operations generated by a term are recorded within the underlying type. This is in the first part of Section~\ref{sect:behaviorintersection}.
	\item
	Proof of soundness and completeness of \HEBI\ with respect to termination and reachability. The proof relies on a forward and backward type preservation result (i.e., subject reduction and expansion), as well as on a reducibility argument generalizing Tait’s classical method to the behavioral types forming the core of \HEBI. This is in the second part of Section~\ref{sect:behaviorintersection}.
	\item 
	Introduction of a simply typed variant of \HEBI, called \HEB, together with a proof that the reachability problem becomes decidable. The proof is structured around the observation that \HEBI\ types can be regarded as refinements of \HEB\ types, and that this refinement enjoys the finite refinement property. All this can be found in Section~\ref{sect:simplebehavior}.
	\item
	A negative result about the finite refinement property for $\HEPCF$, shedding some light on why the latter has an
	undecidable HOMC problem. This is in Section~\ref{sect:backtopcf}.
\end{itemize}

In the rest of the paper, these contributions are presented after giving an overview (\Cref{sect:highlevel}) and some preliminaries (\Cref{sect:handlers}) about \HEPCF\ and its untyped version.
\begin{SHORT}
    For an extended version of this paper, containing additional details, see~\cite{lago2026typesintersecteffectshandled}.
\end{SHORT}
\begin{LONG}
    Most of the proofs and further details can be found in the Appendix.
\end{LONG}

\section{A High-Level View}\label{sect:highlevel}\input{informal}

\section{Effect Handlers, (Un)typed}\label{sect:handlers}\input{typeduntyped}

\section{Behavioral Intersection Typing}\label{sect:behaviorintersection}\input{behaviorintersection}

\section{Behavioral Simple Typing and Model Checking}\label{sect:simplebehavior}\input{behaviorsimple}

\section{Back to \HEPCF}\label{sect:backtopcf}\input{backtohepcf}

\section{Related Work}\label{sect:relatedwork}\input{relatedwork}

\section{Conclusion}\label{sect:conclusion}\input{conclusion}

\begin{acks}
We would like to express our gratitude to Victor Arrial and Mano Orenga for their valuable contributions during the initial phase of this work.
This work was partly supported by JSPS KAKENHI (JP24H00699) and JST CREST (JPMJCR21M3).
\end{acks}

\bibliographystyle{ACM-Reference-Format}
\bibliography{main}

\newpage
\appendix 
\input{appendix}
\end{document}

%% file: Styles/Macro.tex
\newtheorem{definition}{Definition}[section]
\newtheorem{lemma}[definition]{Lemma}
\newtheorem{theorem}[definition]{Theorem}
\newtheorem{proposition}[definition]{Proposition}
\newtheorem{corollary}[definition]{Corollary}

\newtheorem{remark}[definition]{Remark}
\newtheorem{example}[definition]{Example}

\NewDocumentCommand{\wordDefinition}		{ m }
	{\textbf{#1}}
\NewDocumentCommand{\wordEmphasis}			{ m }
	{\textit{#1}}
\NewDocumentCommand{\completedProof}		{ m }
    {#1}

\NewDocumentCommand{\spaceI}				{ }
	{\mkern1mu}
\NewDocumentCommand{\spaceII}				{ }
	{\mkern2mu}

\NewDocumentCommand{\spaceNegII}			{ }
	{\mkern-2mu}

\NewDocumentCommand{\vsep}					{ s }
	{\;\IfBooleanT{#1}{\middle}|\;}

\NewDocumentCommand{\setNaturals}			{ }
	{\mathbb{N}}

\NewDocumentCommand{\rangeI}				{ }
	{i \in I}
\NewDocumentCommand{\rangeJ}				{ }
	{j \in J}
\NewDocumentCommand{\rangeK}				{ }
	{k \in K}

\NewDocumentCommand{\bltI}					{ }
	{\color{gray}{{\rule[0.5mm]{2mm}{0.7mm}}}}

\NewDocumentCommand{\termFreeVar}		{ s m }
	{\mathtt{fv}\IfBooleanTF{}{(#1)}}
\NewDocumentCommand{\termBoundVar}		{ s m }
	{\mathtt{bv}\IfBooleanTF{}{(#1)}}

\NewDocumentCommand{\Refn}		{ }
	{\mathtt{Refn}}

%% file: Styles/Syntax.tex
\NewDocumentCommand{\comput}           { m }
    {\computColor{#1}}
\NewDocumentCommand{\val}            { m }
    {\valueColor{#1}}
\NewDocumentCommand{\handler}          { m }
    {\handlerColor{#1}}

\NewDocumentCommand{\constructor} { m }
    {\ensuremath{\mathsf{#1}}}          
\NewDocumentCommand{\reduction}     { m }
    {\mathtt{#1}}
\NewDocumentCommand{\predicate}   { m } 
    {\textsf{#1}}
\NewDocumentCommand{\abs}				{ m m }
	{\lambda #1.#2}
\NewDocumentCommand{\app}				{ O{\spaceI} m m }
	{#2#1#3}
\NewDocumentCommand{\fix}     { s m m }
    {\constructor{fix}\IfBooleanTF{#1}{}{\,#2.#3}}
\NewDocumentCommand{\ret}       { s m }
    {\constructor{return}\IfBooleanTF{#1}{}{(#2)}}
\NewDocumentCommand{\letin}          { s m m m }
    {\constructor{let}\IfBooleanTF{#1}{}{\,#2\doteq#3\,\constructor{in}\,#4}}
\NewDocumentCommand{\handle}      { s m m }
    {\IfBooleanTF{#1}{}{\constructor{with}\,#2\,}\constructor{handle}\IfBooleanTF{#1}{}{\,#3}}
\NewDocumentCommand{\case}         { s m m }
    {\constructor{case}\IfBooleanTF{#1}{}{(#2;#3)}}
\NewDocumentCommand{\effect}       { s O{\sigma} m m m }
    {#2 \IfBooleanTF{#1}{}{(#3;#4.#5)}}

\RenewDocumentCommand{\int}         { m }
    {\underline{#1}}
\NewDocumentCommand{\cst}    { m }
    {\texttt{#1}}
    
\NewDocumentCommand{\retClause} { m m }
    {\constructor{return}(#1) \mapsto #2}
\NewDocumentCommand{\effectClause} { O{\sigma} m m m }
    {#1(#2;#3) \mapsto #4}
\NewDocumentCommand{\retPred}   { s m }
    {\predicate{Rtrn}\IfBooleanTF{#1}{}{(#2)}}
\NewDocumentCommand{\effectPred}   { s m }
    {\predicate{Efct}\IfBooleanTF{#1}{}{(#2)}}

\NewDocumentCommand{\sub}				{ m m }
	{\{#1 \backslash #2\}}

\NewDocumentCommand{\setVar}        { }
    {\mathcal{X}}

\NewDocumentCommand{\setTerm}       { s }
    {\Lambda\IfBooleanTF{#1}{^\sim}{}}
\NewDocumentCommand{\setVal}        { s }
    {\Lambda_{\valueColor{\texttt{Val}}}
        \IfBooleanTF{#1}{^\sim}{}}
\NewDocumentCommand{\setComput}     { s }
    {\Lambda_{\computColor{\texttt{Cmp}}}
        \IfBooleanTF{#1}{^\sim}{}}
\NewDocumentCommand{\setHandler}    { s }
    {\Lambda_{\handlerColor{\texttt{Hdl}}}
        \IfBooleanTF{#1}{^\sim}{}}

\NewDocumentCommand{\reductArr}{ s t{'} e{_} e{^} }
	{\IfBooleanTF{#1}
        {\twoheadrightarrow}
        {\IfBooleanTF{#2}
            {\mapsto}
            {\rightarrow}}
    \IfValueT{#3}
        {_{#3}}{}
    \IfValueT{#4}
        {^{#4}}{}}

\NewDocumentCommand{\reductBeta}         { }
    {\reduction{\beta}}
\NewDocumentCommand{\reductFix}          { }
    {\reduction{fix}}
\NewDocumentCommand{\reductCase}         { }
    {\reduction{case}}
\NewDocumentCommand{\reductLetRet}       { }
    {\reduction{let\mbox{-}ret}}
\NewDocumentCommand{\reductLetEff}       { }
    {\reduction{let\mbox{-}ef}}
\NewDocumentCommand{\reductHdlRet}       { }
    {\reduction{hdl\mbox{-}ret}}
\NewDocumentCommand{\reductHdlEff}        { }
    {\reduction{hdl\mbox{-}ef}}

\NewDocumentCommand{\Hole}				{ }
    {\diamond}

\NewDocumentCommand{\ctxtComput}       { }
    {\mathtt{\computColor{C}}\plugCurry}
\NewDocumentCommand{\ctxtValue}        { }
    {\mathtt{\valueColor{V}}\plugCurry}
\NewDocumentCommand{\ctxtHandler}      { }
    {\mathtt{\handlerColor{H}}\plugCurry}

\NewDocumentCommand{\ctxtEval}            { }
    {\mathtt{\computColor{E}}\plugCurry}

\NewDocumentCommand{\plugCurry}        { d{<}{>} }
    {\IfValueT{#1}{\!\left<#1\right>}}


%% file: Styles/Types.tex
\NewDocumentCommand{\st}                { m }
    {\mathsf{#1}}

\NewDocumentCommand{\typeInt}			{ m }
	{\mathsf{#1}}
\NewDocumentCommand{\stEffect}          { e{_} m }
    {\prescript{}{\IfValueTF{#1}{#1}{\effectCtxt E}} {#2}}
\NewDocumentCommand{\effectCtxt}        { m }
    {\mathtt{#1}}
\NewDocumentCommand{\effectSpecArrow}  { }
    {\mathrel{\rightsquigarrow}}
\NewDocumentCommand{\effectSpec}       { O{\sigma} m m }
    {#1 : #2 \effectSpecArrow #3}

\RenewDocumentCommand{\it}				 { m }
	{\mathcal{#1} }
\NewDocumentCommand{\itSet}			{ s m }
	{\IfBooleanTF{#1}%
		{\{#2\}}%
		{\left\{#2\right\}}}
\NewDocumentCommand{\itSetEmpty}		{ s }
	{\IfBooleanTF{#1}{\itSet{}}{\itSet{\,}}}
\NewDocumentCommand{\itEffect}		{ s m m m }
	{\IfBooleanTF{#1}%
		{\prescript{#2[#3]\!}{}{
			\left\{\!\!\left\{ #4 \right\}\!\!\right\}
		}}
		{\prescript{#2[#3]\!}{}
		\{\!\!\{#4\}\!\!\}
		}}
\NewDocumentCommand{\itEffectReturn}	{ m }
	{\itEffect{\ret*{}}{#1}{\,}}

\NewDocumentCommand{\vdashComput}       { }
    {\mathrel{\stackon[0.5pt]{\computColor{\vdash}}{\spaceII{\scriptscriptstyle\mathtt{\computColor{C}}}}}}
\NewDocumentCommand{\vdashValue}       { }
    {\mathrel{\stackon[0.5pt]{\valueColor{\vdash}}{\spaceII{\scriptscriptstyle\mathtt{\valueColor{V}}}}}}
\NewDocumentCommand{\vdashHandler}     { }
    {\mathrel{\stackon[0.5pt]{\handlerColor{\vdash}}{\spaceII{\scriptscriptstyle\mathtt{\handlerColor{H}}}}}}
\NewDocumentCommand{\itJug}        { o m m m }
    {\IfValueT{#1}{#1 \hepcfTr} #2 \vdash #3 : #4}
\NewDocumentCommand{\itJugComput}        { o m m m }
    {\IfValueT{#1}{#1 \hepcfTr} #2 \vdashComput #3 : #4}
\NewDocumentCommand{\itJugValue}         { o m m m }
    {\IfValueT{#1}{#1 \hepcfTr} #2 \vdashValue #3 : #4}
\NewDocumentCommand{\itJugHandler}       { o m m m }
    {\IfValueT{#1}{#1 \hepcfTr} #2 \vdashHandler #3 : #4}
\NewDocumentCommand{\stJug}      { o m m m }
    {\IfValueT{#1}{#1 \hepcfTr} #2 \vdash #3 :: #4}
\NewDocumentCommand{\stJugComput}      { o m m m }
    {\IfValueT{#1}{#1 \hepcfTr} #2 \vdashComput #3 :: #4}
\NewDocumentCommand{\stJugValue}       { o m m m }
    {\IfValueT{#1}{#1 \hepcfTr} #2 \vdashValue #3 :: #4}
\NewDocumentCommand{\stJugHandler}     { o m m m}
    {\IfValueT{#1}{#1 \hepcfTr} #2 \vdashHandler #3 :: #4}
\NewDocumentCommand{\refJug}      { o m m m m }
    {\IfValueT{#1}{#1 \hepcfTr} #2 \vdash #3 : #4 :: #5}
\NewDocumentCommand{\refJugComput}      { o m m m m }
    {\IfValueT{#1}{#1 \hepcfTr} #2 \vdashComput #3 : #4 :: #5}
\NewDocumentCommand{\refJugValue}       { o m m m m }
    {\IfValueT{#1}{#1 \hepcfTr} #2 \vdashValue #3 : #4 :: #5}
\NewDocumentCommand{\refJugHandler}     { o m m m m }
    {\IfValueT{#1}{#1 \hepcfTr} #2 \vdashHandler #3 : #4 :: #5}

\NewDocumentCommand{\ruleNameStyle}  { m }
    {$\scriptstyle\text
        {$\mathtt{(#1)}$\xspace}$}

\NewDocumentCommand{\hepcfTr}                       { }
    {\mathrel{\triangleright}}
\NewDocumentCommand{\ruleNameInt}              { }
    {\ruleNameStyle{int}}
\NewDocumentCommand{\ruleNameCst}              { }
    {\ruleNameStyle{cst}}
\NewDocumentCommand{\ruleNameVar}              { }
    {\ruleNameStyle{var}}
\NewDocumentCommand{\ruleNameAbs}              { }
    {\ruleNameStyle{abs}}
    \NewDocumentCommand{\ruleNameFix}    { }
    {\ruleNameStyle{fix}}
\NewDocumentCommand{\ruleNameFixBase}    { }
    {\ruleNameStyle{fixb}}
\NewDocumentCommand{\ruleNameFixRec}    { }
    {\ruleNameStyle{fixr}}
\NewDocumentCommand{\ruleNameLetin}              { }
    {\ruleNameStyle{let}}
\NewDocumentCommand{\ruleNameApp}              { }
    {\ruleNameStyle{app}}
\NewDocumentCommand{\ruleNameCase}             { }
    {\ruleNameStyle{case}}
\NewDocumentCommand{\ruleNameRet}           { }
    {\ruleNameStyle{ret}}
\NewDocumentCommand{\ruleNameEff}           { }
    {\ruleNameStyle{eff}}
\NewDocumentCommand{\ruleNameHandle}          { }
    {\ruleNameStyle{hdl}}
\NewDocumentCommand{\ruleNameHandler}    { }
    {\ruleNameStyle{hdlr}}

\NewDocumentCommand{\ruleNameHandlerRet}    { }
    {\ruleNameStyle{hdlr_{ret}}}
\NewDocumentCommand{\ruleNameHandlerSigma}    { }
    {\ruleNameStyle{hdlr_{\sigma}}}

\NewDocumentCommand{\typeRuleInt}              { O{0} m }
    {\infer#1[\ruleNameInt]{#2}}
\NewDocumentCommand{\typeRuleCst}            { O{1} m }
    {\infer#1[\ruleNameCst]{#2}}
\NewDocumentCommand{\typeRuleVar}              { O{0} m }
    {\infer#1[\ruleNameVar]{#2}}
\NewDocumentCommand{\typeRuleAbs}              { O{1} m }
    {\infer#1[\ruleNameAbs]{#2}}
\NewDocumentCommand{\typeRuleFix}    { O{2} m }
    {\infer#1[\ruleNameFix]{#2}}
\NewDocumentCommand{\typeRuleFixBase}    { O{1} m }
    {\infer#1[\ruleNameFixBase]{#2}}
\NewDocumentCommand{\typeRuleFixRec}    { O{2} m }
    {\infer#1[\ruleNameFixRec]{#2}}
\NewDocumentCommand{\typeRuleLetin}              { O{3} m }
    {\infer#1[\ruleNameLetin]{#2}}
\NewDocumentCommand{\typeRuleApp}              { O{2} m }
    {\infer#1[\ruleNameApp]{#2}}
\NewDocumentCommand{\typeRuleCase}             { O{2} m }
    {\infer#1[\ruleNameCase]{#2}}
\NewDocumentCommand{\typeRuleRet}           { O{1} m }
    {\infer#1[\ruleNameRet]{#2}}
\NewDocumentCommand{\typeRuleEff}           { O{2} m }
    {\infer#1[\ruleNameEff]{#2}}
\NewDocumentCommand{\typeRuleHandle}          { O{2} m }
    {\infer#1[\ruleNameHandle]{#2}}
\NewDocumentCommand{\typeRuleHandler}    { O{2} m }
    {\infer#1[\ruleNameHandler]{#2}}

\NewDocumentCommand{\typeRuleHandlerRet}    { O{2} m }
    {\infer#1[\ruleNameHandlerRet]{#2}}
\NewDocumentCommand{\typeRuleHandlerSigma}    { O{2} m }
    {\infer#1[\ruleNameHandlerSigma]{#2}}

\NewDocumentCommand{\handlerCompat}     { }
    {\mathrel{\#}}
\NewDocumentCommand{\replaceLeaf}     { m o }
    {\IfNoValueTF{#2}
    {\mathrel{-\:\!\!\!\! \left\{#1\right\} \:\!\!\!\!  \rightarrow}}
    {\mathrel{-\:\!\!\!\! \left\{#1\right\} \:\!\!\!\!  
        \overrightarrow{\, \vspace{-0.87em}^{#2} \;}}_{}}
    }

\NewDocumentCommand{\compatEmpty}              { }
    {\ruleNameStyle{\emptyset}}
\NewDocumentCommand{\compatDiff}               { }
    {\ruleNameStyle{diff}}
\NewDocumentCommand{\compatId}                 { }
    {\ruleNameStyle{id}}
\NewDocumentCommand{\compatEmptyRule}              { m }
    {\infer0[\compatEmpty]{#1}}
\NewDocumentCommand{\compatDiffRule}               { O{1} m }
    {\infer#1[\compatDiff]{#2}}
\NewDocumentCommand{\compatIdRule}                 { m }
    {\infer1[\compatId]{#1}}

\NewDocumentCommand{\redCand}{ s m }
    {\llbracket #2\rrbracket\IfBooleanTF{#1}{^\sim}{}}

\NewDocumentCommand{\refrel}{ m m }{#1 \sqsubset #2}
\newcommand{\refrelsym}{\sqsubset}
\NewDocumentCommand{\algosym}{ }{\mathcal{A}}
\NewDocumentCommand{\algo}{ m m }{\algosym(#1,#2)}
\NewDocumentCommand{\algoleaf}{ m m m m}{\mathcal{B}(#1,#2,#3,#4)}
\NewDocumentCommand{\graph}{ m m m }{\mathcal{G}(#1,#2,#3)}

%% file: Styles/Colors.tex
\definecolor{valueColor}{RGB}{247, 122, 0}
\definecolor{computColor}{RGB}{123, 198, 25}
\definecolor{handlerColor}{RGB}{178, 0, 255}

\NewDocumentCommand{\valueColor}        { m }
    {{\color{valueColor} #1}}
\NewDocumentCommand{\computColor}       { m }
    {{\color{computColor} #1}}
\NewDocumentCommand{\handlerColor}      { m }
    {{\color{handlerColor} #1}}

\definecolor{errorColor}{RGB}{255, 0, 0}

%% file: informal.tex
In this section we aim to provide an informal account of our contribution, delving slightly more into the details while avoiding overly technical discussions.

First of all, let us try to understand in what sense intersection types can be seen as a way of steadily increasing the expressive power of simple types, while preserving their good properties. In the simply typed $\lambda$-calculus, as is well known, the term $(\lambda x.\lambda y.\langle x\;3,y\;\mathtt{false}\rangle)(\lambda z.z)(\lambda w.w)$ is typable, whereas its close relative 
\begin{equation}\label{equ:untypable}
	(\lambda x.\langle x\;3,x\;\mathtt{false}\rangle)(\lambda z.z)
\end{equation}
is \emph{not}. This is due to the fact that simple types do not support any form of type \emph{polymorphism}. Simple types can be extended with polymorphic features in at least two different ways. On the one hand, one may add \emph{parametric polymorphism} in the style of that present in System \textsf{F}~\cite{Girard_1972_PhD,Reynolds_1974_PS}.  In this setting, the variable $x$ in (\ref{equ:untypable}) would be assigned the type $\forall\alpha.\alpha\rightarrow\alpha$. Alternatively, one may opt for a form of \emph{ad-hoc}, \emph{bounded} polymorphism, in which the same variable is instead assigned an \emph{intersection type} of the form $(\mathit{int}\rightarrow\mathit{int})\cap(\mathit{bool}\rightarrow\mathit{bool})$, meaning that the variable may be used as (and requires the call site to give) a function of type $\mathit{int}\rightarrow\mathit{int}$ \emph{and} $\mathit{bool}\rightarrow\mathit{bool}$. From many perspectives, the former solution is preferable, as is well known. There is, however, a distinctive feature of intersection types that is not shared by parametric polymorphism, namely their ability to \emph{characterize} terms satisfying interesting termination properties. Ultimately, this is due to the fact that a fundamental lemma---namely, the so called \emph{anti-substitution} lemma (from which subject expansion follows)---holds naturally for intersection types, whereas it cannot hold in parametric polymorphism. 

The anti-substitution lemma, in a form that we keep deliberately oversimplified, states that whenever there exists a derivation $\Pi$ of a typing judgment $\vdash t\{x/v\} : A$, it is possible to extract from it a type $B$ such that $x : B \vdash t : A$ and $\vdash v : B$. Clearly, this is not true in simple types nor with parametric polymorphism, whereas it becomes feasible in presence of the intersection construct, which allows one to collect into $B$ \emph{all the types} that $\Pi$ assigns to the (possibly many) occurrences of $v$ found within $t\{x/v\}$. In fact, parametric polymorphism cannot satisfy the anti-substitution lemma because not all pairs of distinct types can be unified into one, i.e., it is not possible to find one type $B$ accounting for all possible types $\Pi$ attributes to $v$, because polymorphism imposes a uniformity condition. If such a property held, parametrically polymorphic type systems would be complete for termination, and in fact they are not, see e.g. \cite{sorensen2006lectures} for more details and a counterexample.
As an example of how intersection types enable the anti-substitution property, consider a derivation $\Pi$ of $\vdash \langle x\;3,x\;\mathtt{false}\rangle\{x/\lambda z.z\} : \mathit{int} \times \mathit{bool}$ (where the subject term is the result of reducing (\ref{equ:untypable})), the anti-substitution lemma guarantees that $x : (\mathit{int}\rightarrow\mathit{int})\cap(\mathit{bool}\rightarrow\mathit{bool}) \vdash \langle x\;3,x\;\mathtt{false}\rangle : \mathit{int} \times \mathit{bool}$ and $\vdash \lambda z.z : (\mathit{int}\rightarrow\mathit{int})\cap(\mathit{bool}\rightarrow\mathit{bool})$ are also derivable, where the intersection type assigned to $x$ and $\lambda z.z$ collects all the types assigned to the occurrences of $\lambda z.z$ in $\Pi$. This collecting operation is made possible by the liberal nature of ad-hoc polymorphism, and by the fact that the types assigned to the substituted value $v$ can be completely uncorrelated \emph{and} the resulting polymorphic type keeps track of \emph{all} of them. Furthermore, this property leads to subject expansion, which states that, given terms $t$ and $s$ such that $t$ reduces to $s$ and $\vdash s : A$ holds, $\vdash t : A$ also holds. This is fundamentally different from what happens in parametric polymorphism, and guarantees that \emph{any} terminating term is typable. For example, given a term $t$ that terminates at a constant value, say, $0$, since $0$ is well typed, the subject expansion guarantees that the term $t$ is also well typed.\footnote{It is easy to generalize to terms that terminate at functions.} The inverse direction---i.e., well-typed terms always terminate---also holds because, intuitively, (finite, non-idempotent) intersection types ensure that each value is used only finitely many times, whereas non-terminating terms must refer to the same (recursive) value infinitely many times (see, for example, \cite{Bernadet/Lengrand_2011_CSL} for details).

Now, suppose that we want to generalize the discussion we have just carried out to a programming language for effect handlers. Assume we start from a calculus in the style of $\HEPCF$ \cite{Dal-Lago/Ghyselen_2024_POPL}. This calculus can be seen as an extension of the simply typed $\lambda$-calculus with two features. On the one hand, terms are allowed to perform algebraic operations, intended to model computational effects such as various forms of choice, global store read and write operations, exceptions, input and output. Each such operation $\sigma$ can be invoked in the form $\effect{\val v}{y}{\comput u}$, where the actual parameter $\val{v}$ is supplied together with a \emph{continuation} $\comput u$, which may depend on the output $y$ produced by executing the operation. On the other hand, the language provides the ability for programs to \emph{handle} these operations, that is, to specify precisely how they should be interpreted, in analogy with the classic mechanism of exception handling. Syntactically, handling is captured by a construct $\handle{\handler h}{\comput t}$ in which any operation executed by the computation $\comput t$ is managed according to the handler $\handler h$, the latter being a collection of \emph{operation clauses}, each corresponding to an algebraic operation.

For example, consider the term $\handle{\handler h}{\effect{\mathsf{()}}{x}{\constructor{if} \; x \; \constructor{then} \; \ret{1} \; \constructor{else} \; \ret{2}}}$.
This term first performs the algebraic operation $\sigma$.
Suppose that, to handle it, the handler $\handler{h}$ provides an operation clause
$\effectClause{y}{r}{\comput t}$ for $\sigma$.
Then, the handler transfers the control to the clause body $\comput t$.
The variable $y$ and $r$ bound in $\comput t$ denote an argument and a continuation, respectively, given by an operation call to be handled---therefore, in the example term, they denote the argument $\mathsf{()}$ and (the functional form of) the continuation $x.{\constructor{if} \; x \; \constructor{then} \; \ret{1} \; \constructor{else} \; \ret{2}}$.
If $\comput t$ only returns a value, i.e., $\comput t = \ret{\val v}$ for some value $\val v$, the whole term will also return $\val v$. In this case, $\sigma$ and $\handler h$ work as exception raising and handling, respectively.
By contrast, if $\comput t = \app{r\,}{\mathsf{true}}$, which invokes the continuation $r$ with Boolean value $\mathsf{true}$, then the computation $\constructor{if} \; \mathsf{true} \; \constructor{then} \; \ret{1} \; \constructor{else} \; \ret{2}$ obtained by substituting the argument $\mathsf{true}$ for the variable $x$ in the continuation will be executed.
The handler $\handler h$ can also call the continuation multiple times using
different arguments (say, $\mathsf{true}$ and $\mathsf{false}$).
This capability to resume continuations flexibly enables us to easily implement,
e.g., enumeration of all possible results of non-deterministic computations,
backtracking, mutable state, and transactions~\cite{Pretnar_2015_MFPS}.

In $\HEPCF$, computations receive types in the form ${\it E}, {\it F} := \stEffect{\st M}$, where $\st M$ is the type of the values returned by the computation and $\effectCtxt E$ is a signature type keeping track of the identity and behaviour of the algebraic operations the underlying term may produce. Handlers are viewed as \emph{computation transformers}, and this view is justified by the typing rule for the construct $\constructor{with}$-$\constructor{handle}$, which installs an effect handler ${\handler h}$ to handle the effect performed in a term {\comput t}:
\[
\begin{prooftree}
	\hypo{\itJugHandler{}{\handler h}{\it F \Rightarrow \it E}}
	\hypo{\itJugComput{}{\comput t}{\it F}}
	\typeRuleHandle{\itJugComput{}{\handle{\handler h}{\comput t}}{\it E}}
\end{prooftree}
\]
{The premise $\itJugHandler{}{\handler h}{\it F \Rightarrow \it E}$ tells us, in particular, that the handling clauses included in $\handler h$ are able to somehow \emph{transform} any computation of type $\it F$ into a computation of type $\it E$. This amounts to verifying that for each clause $\effectClause{x}{r}{\comput u}$ in $\handler h$ there exists a way of typing $\comput u$ in accordance with $\it E$.
For example, the handler $\handler h$ in the aforementioned example can be assigned type $\stEffect_{\{\effectSpec{\mathsf{unit}}{\mathsf{bool}}\}}{\mathsf{int}} \Rightarrow \stEffect_\emptyset{\mathsf{int}}$, which means that $\handler h$ transforms computations that may perform algebraic operation $\effectSpec{\mathsf{unit}}{\mathsf{bool}}$, to computations that perform no operation.
}

Suppose, now, that we want to prove the subject expansion property for this type system. Among the reduction rules governing calculi such as {\HEPCF}, there is one that is particularly interesting, namely the following:
\begin{equation}\label{equ:redhandle}
	\handle{\handler h}{\effect{\val v}{y}{\comput u}}
	\reductArr'_\reductHdlEff
	\comput t \sub{x}{\val v}\sub{r}{\abs{y}{\handle{\handler h}{\comput u}}}
\end{equation}
where the handler ${\handler h}$ is assumed to provide a clause $\effectClause{x}{r}{\comput t}$ to interpret the algebraic operation $\sigma$.
This rule tells us that handling an algebraic operation $\sigma$ by means of the effect handler $\handler h$ requires extracting from $\handler h$ the clause $\effectClause{x}{r}{\comput t}$ corresponding to $\sigma$ and passing to its body $\comput t$ two objects, namely the parameter $\val v$ and the \emph{delimited continuation} $\val w=\abs{y}{\handle{\handler h}{\comput u}}$.

Now, suppose that the RHS of reduction (\ref{equ:redhandle}) is typable, say with a computation type $\it E$, and that 
we want to show that the same type can somehow be transferred to the LHS.
Collecting the types of the various occurrences of the value
$\val v$ in the RHS into a single intersection type $\it M$ can 
be done easily as sketched above, of course provided that intersections are present.
Dealing with the (possibly many) occurrences of the continuation $\val w$ is more
challenging, depending on whether the continuation occurs in the RHS.
First, assume that the continuation does \emph{not} occur in the RHS.
In this case, we
cannot collect any type information about $\handler h$ from the RHS, except for the clause
$\effectClause{x}{r}{\comput t}$, which provides the term $\comput t$ in the RHS.
As a result, a typing derivation built for the LHS should only typecheck $\sigma$'s clause and
should \emph{not} typecheck the other clauses.
To address this typing requirement on the LHS, as shown shortly, we introduce a
new form of computation types that keeps track of what algebraic operation a
term calls \emph{first}, and the typing of a handler only typechecks the clauses
of the algebraic operations that will be called \emph{necessarily}.
Next, suppose that $\comput t$ refers to the continuation.
In such a case, the only information we can collect from the occurrences of the
continuation $\val w$ in the RHS is:
$$
\begin{prooftree}
	\hypo{\itJugHandler{}{\handler h}{\it F_i \Rightarrow \it E_i}}
	\hypo{\itJugComput{y:{\it N_i}}{\comput u}{\it F_i}}
	\typeRuleHandle{\itJugComput{y:{\it N_i}}{\handle{\handler h}{\comput u}}{\it E_i}}
\end{prooftree}
$$ where $i$ ranges over a finite set $I$ indexing all occurrences
of the variable $r$ in $\comput t$. Here, the computation types $\it E_i,
\it F_i$ and the value type $\it N_i$, which are the type information at some
occurrence of $\val w$, can be \emph{completely uncorrelated} with the type
information $\it E_j, \it F_j, \it N_j$ ($i \neq j$) at any other occurrence of
$\val w$.
We have to \emph{turn them} into the typings of $\handler h$ and $\effect{\val v}{x}{\comput u}$ in the LHS.
{We are therefore in a situation similar to the one described above concerning the anti-substitution lemma.}


The solution to the aforementioned problem we propose in this paper consists in considering computation types having a behavioral nature, that is, that aggregate the information contained in the types $\it M, \it N_i, \it F_i$ into a single composite type
\begin{equation}\label{equ:comptype}
\itEffect{\sigma}{\it M}{\it N_i \rightarrow \it F_i}_{\rangeI},
\end{equation}
called a \emph{computation type}. Computation types are behavioral, as they specify the algebraic operation terms give rise to first, here $\sigma$, together with the type of the parameters
to it, namely $\it M$. The types ${\it N_i \rightarrow \it F_i}$, instead, are all the possible types we assign, in the spirit of intersection types, to the continuation. In the context of the LHS of reduction rule (\ref{equ:redhandle}), the type (\ref{equ:comptype}) is the means by which we allow the handler $\handler h$ and the algebraic operations in $\effect{\val v}{x}{\comput u}$ to ``talk to each other''. Defining computation types this way allows us to attribute a type to algebraic operations as follows:
$$
 \begin{prooftree}
	\hypo{
		\itJugValue
		{}
		{\val v}
		{\it M}
	}
	\hypo{(
		\itJugComput
		{y : \it N_i}
		{\comput u}
		{\it F_i}
		)_{\rangeI}}
	\typeRuleEff{
		\itJugComput
		{}
		{\effect{\val v}{y}{\comput u}}
		{\itEffect{\sigma}{\it M}{\it N_i \rightarrow \it F_i}_{\rangeI}}
	}
\end{prooftree}
$$
In other words, the computation type we assign to the underlying term tells us which operation is being performed, what the type of its parameter is, \emph{but also} how the continuation $\comput u$ will be used when $\sigma$ is handled by $\handler h$. But how could we type the handler $\handler h$ itself? It would be ideal if it received a type of the form
$\itEffect{\sigma}{\it M}{\it N_i \rightarrow \it F_i}_{\rangeI}\Rightarrow{\it E}$,
because this way the type would be preserved. And indeed, this turns out to be possible by means of the following typing rule:
$$
 \begin{prooftree}
	\hypo{%
		\itJugComput%
		{  x : \it M;
			r : \itSet*{\it N_i \rightarrow \it E_i}_{\rangeI}}%
		{\comput t}%
		{\it E}%
	}%
	\hypo{(%
		\itJugHandler%
		{}%
		{\handler h}%
		{\it F_i \Rightarrow \it E_i}%
		)_{\rangeI}}%
	\hypo{\{\effectClause{x}{r}{\comput t}\} \handlerCompat \handler h}%
	\typeRuleHandlerSigma[3]{%
		\itJugHandler%
		{}%
		{\{\effectClause{x}{r}{\comput t}\} \cup \handler h}%
		{
			\itEffect{\sigma}{\it M}
			{\it N_i
				\rightarrow 
				\it F_i
			}_{\rangeI}
			\Rightarrow 
			\it E
		}%
	}%
\end{prooftree}
$$
namely the one we use in our intersection type system $\HEBI$.
This rule indicates that the clause for the algebraic operation performed first
has to be typechecked (the first premise) and the handler $\handler h$ has to be
typechecked against each type $\it F_i \Rightarrow \it E_i$ (the second
premise) which matches with the use of the continuation in $\comput t$.
The premise 
$\{\effectClause{x}{r}{\comput t}\} \handlerCompat \handler h$ serves to ensure that, if a clause for $\sigma$ is present in $\handler h$, it has the same shape, namely $\sigma(x;r)\mapsto {\comput t}$. Observe that, in this situation, the \emph{handlers} in the premise and in the conclusion of the rule may coincide. Nevertheless, these two occurrences can still be \emph{typed} differently: the type in the premise collects the types corresponding to (possibly) other usages of the operations in $\handler h$.
Therefore, if the continuation $\comput t$ calls algebraic operations (including
$\sigma$), the clauses corresponding to them are typechecked in the second
premise.

Although our computation types require determining the first algebraic operations called,
intersection types allow typechecking terms with an undetermined order of operation calls.
For instance, consider a term
$\lambda x. \constructor{if} \; x \; \constructor{then} \; \effect[\sigma_1]{x_1}{\val v_1}{\comput t_1} \; \constructor{else} \; \effect[\sigma_2]{x_2}{\val v_2}{\comput t_2}$,
which calls $\sigma_1$ or $\sigma_2$ depending on a Boolean value $x$.
Given singleton types $\mathit{true}$ and $\mathit{false}$ that are only assigned to Boolean values $\mathtt{true}$ and $\mathtt{false}$, respectively,
this term can be assigned an intersection type
$$
\mathit{true} \rightarrow \itEffect{\sigma_1}{M_1}{N_{1i} \rightarrow \it E_{1i}}_{\rangeI} \;\cap\; \mathit{false} \rightarrow \itEffect{\sigma_2}{M_2}{N_{2j} \rightarrow \it E_{2j}}_{\rangeJ} ~.
$$

Finally, it is noteworthy that this typing mechanism allows assigning different
types to different calls to the same algebraic operation.
For example, consider a term
$\effect[\sigma]{1}{x}{\effect[\sigma]{\mathtt{false}}{y}{\comput t}}$, to which
a computation type
$\itEffect{\sigma}{\mathit{int}}{M_i \rightarrow \itEffect{\sigma}{\mathit{bool}}{N_{ij} \rightarrow \it E_{ij}}_{\rangeJ}}_{\rangeI}$
may be assigned.  This computation type indicates that the first and second calls to the
operation $\sigma$ may be typed differently.
This typing is impossible in the calculi such as {\HEPCF} and others~\cite{Plotkin/Pretnar_2009_ESOP,Plotkin/Pretnar_2013_LMCS,Bauer/Pretnar_2014_LMCS,Hillerstrom/Lindley_2016_TyDe,Leijen_2017_POPL,Yoshioka/Sekiyama/Igarashi_ICFP_2024,Tang/Lindley_2026_POPL},
where all the calls to the same algebraic operation handled by the same handler have to be
typed in the same way.
Our computation types can distinguish different calls to the same operation
since they are aware of the \emph{order} of the calls, namely, they capture
(finite) \emph{computation trees} of algebraic operations produced by terms.

{But why can such an intersection type system be useful in the context of HOMC, in particular when we are interested in decidability problems? The idea is quite simple. Suppose we want to check whether a simply typed $\lambda$-term $\comput t = \val v \val w$ of Boolean type evaluates to one of the two Boolean values, for example to $\mathit{true}$. One can proceed by systematically checking whether, given a type $A$ such that $\val v$ has type $A \rightarrow \mathit{bool}$ and $w$ has type $A$, there exists a \emph{refinement} $\it A$ of $A$ such that $\val v$ has type $\it A \rightarrow \mathit{true}$ and $w$ has type $\it A$. This notion of refinement should, on the one hand, provide precise information about the value to which terms evaluate and, on the other hand, support a form of exhaustive search. Classical idempotent intersection types~\cite{coppo1978new} enjoy both of these properties. On the other hand, the behavioural types introduced in this work, although not satisfying these properties with respect to $\HEPCF$, naturally suggest a new type system, called $\HEB$, for which the HOMC problem is shown to be decidable, and which will be introduced in Section~\ref{sect:simplebehavior} below.}

%% file: typeduntyped.tex
In this section, we introduce some preliminaries related to the calculi \HE\ and \HEPCF, the latter being a typed version of the former. These calculi provide, in addition to the usual constructions of a $\lambda$-calculus in fine-grain call-by-value form, recursion, integers, and, of course, algebraic effects and handlers.

\paragraph{Terms}
Given a countably infinite set $\setVar$ of variables ($x, y, z, \ldots$) and a countably infinite set $\mathcal{O}$ of operations (ranged over by $\sigma$),
the set $\setTerm$ of \wordDefinition{terms} is given by the following inductive definitions:
\begin{equation*}
	\begin{array}{crcl}
		\textbf{(Computations)}&
		\comput t, \comput u, \comput s
		&\coloneqq& \app{\val v}{\val w}
		\vsep \effect{\val v}{x}{\comput t}
		\vsep \ret{\val v}
		\vsep \letin{x}{\comput t}{\comput u}
		\\
		&&&
		\vsep \handle{\handler h}{\comput t}
		\vsep \case{\val v}{\comput t_1, \ldots, \comput t_n}
		\\[0.2cm]
		\textbf{(Values)}&
		\val v, \val w
		&\coloneqq&  \int n
		\vsep x
		\vsep \abs{x}{\comput t}
		\vsep \fix{x}{\val v}
		\\[0.2cm]
		\textbf{(Handlers)}&
		\handler h
		&\coloneqq&
		\{\retClause{x}{\comput t}\}
		\cup \{\effectClause[\sigma_{\spaceNegII i}]{x}{r}{\comput t_i} \vsep 0 < i \leq n\}
		
		\\[0.2cm]
		\textbf{(Terms)}&
		p
		&\coloneqq&
		\comput t \vsep \val v \vsep \handler h
	\end{array}
\end{equation*}
The set $\setComput$ of \wordDefinition{computations} 
($\comput t, \comput u \ldots$) includes
\wordDefinition{applications} $\app{\val v}{\val w}$ composed of
a \emph{function} $\val v$ applied to an \emph{argument} $\val w$,
\wordDefinition{effect operations} $\effect{\val v}{x}{\comput t}$ composed of
an \emph{algebraic operation} $\sigma$ applied to a 
\emph{parameter} $\val v$ and followed by a \emph{continuation} $\comput t$ 
in which the variable $x$ is to be substituted with the result of the operation,
\wordDefinition{return operations} $\ret{\val v}$ composed of 
a \emph{resulting value} $\val v$ 
(note that a return operation can be seen as a particular 
algebraic operation with no continuation, furthermore, note that the purpose of the return construct is to embed values in computations), 
\wordDefinition{let bindings} $\letin{x}{\comput t}{\comput u}$ composed of
a \emph{definition} $\comput t$ for a value $x$ followed 
by a \emph{continuation} $\comput u$,
\wordDefinition{handling} $\handle{\handler h}{\comput t}$ composed of
a \emph{handler} $\handler h$ interpreting a \emph{handled computation} $\comput t$,
and \wordDefinition{case analysis} $\case{\val v}{\comput t_1, \dots, \comput t_n}$ composed of 
a \emph{matched value} $\val v$ and \emph{possible continutations} $(\comput t_k)_{k\le n}$. This construct represents a branching computation depending on a (positive) integer.
The set $\setVal$ of \wordDefinition{values} 
($\val v, \val w \ldots$) includes  
\wordDefinition{positive integers} $\int n \in \setNaturals^+$,
\wordDefinition{variables} $x \in \setVar$, 
\wordDefinition{abstractions} $\abs{x}{\comput t}$ composed of
a \emph{body} $\comput t$ and an \emph{abstracted variable} $x$,
and \wordDefinition{fixpoint} $\fix{x}{\val v}$ composed of 
a \emph{body} $\val v$ and a \emph{recursive variable} $x$.
Finally, each \wordDefinition{handler} $\handler h \in\setHandler$ contains
\wordDefinition{effect clauses} $\{\effectClause{x}{r}{\comput t}\}$
for different operations $\sigma$ with an \emph{interpretation} $\comput t$
based on the \wordEmphasis{parameter} $x$ of the operation and the 
\wordEmphasis{continuation} $r$. Each handler $\handler h$ 
necessarly contains an interpretation $\retClause{x}{\comput t}$
of the return operation which has no continuation. 

An effect operation $\effect{\val v}{y}{\comput t}$ can be 
interpreted as \emph{performing} the operation $\sigma$ 
with the parameter $\val v$ as input, then using the output $y$ of the
operation in the following computation $\comput t$. Correspondingly,
an effect clause $\effectClause{x}{r}{\comput u}$ of a handler
defines an \emph{interpretation} of the operation $\sigma$. The variable $x$ corresponds
to the parameter $\val v$ of the effect operation, and the variable $r$ 
is a function corresponding to the computation $\comput t$ performs after $\sigma$
depending on the output $y$ of $\sigma$.
Binding is defined as usual. In particular, notice that
in each clause $\effectClause{x}{r}{\comput t}$ of a handler $\handler h$,
the variables $x$ and $r$ are bound in the interpretation $\comput t$.

\paragraph*{Operational Semantics.}
To define the operational semantics of
	\HE\ terms, we first need to introduce \wordDefinition{evaluation contexts}, which are terms containing exactly one occurrence of the \wordDefinition{hole} $\Hole$ that should be filled by terms to be reduced. Formally, they are
	inductively defined as follows.
\begin{equation*}
	\begin{array}{rrcl}
		\textbf{(Evaluation Ctxt)}&
		\ctxtEval &\coloneqq& \Hole
		\vsep \letin{x}{\ctxtEval}{\comput t}
		\vsep \handle{\handler h}{\ctxtEval}\\
	\end{array}
\end{equation*}
Moreover, we write $\ctxtEval<t>$ for the term obtained by replacing
the only occurrence of $\Hole$  in $\ctxtEval$ with the term $t$.

~
We denote by $p\sub{x}{\val v}$ the term obtained by
the usual (capture-avoiding) meta-level \wordDefinition{substitution}
of the value $\val v$ for all free occurrences of the variable $x$ in
the term $p$. The usual notions of free variables, $\alpha$-conversion and substitution are extended to the set of terms $\setTerm$, and terms are
identified up to $\alpha$-conversion.

\begin{figure}[h!]
	\centering
			$\begin{array}{clll}
				\app{(\abs{x}{\comput t})}{\val v} 
				&\reductArr'_\reductBeta&
				\comput t\sub{x}{\val v}
				\\[0.1cm]
				\app{(\fix{x}{\val v})}{\val w}
				&\reductArr'_\reductFix&
				\app{(\val v\sub{x}{\fix{x}{\val v}})}{\val w}
				\\[0.1cm]
				\case{\int n}{\comput t_1, \ldots, \comput t_m}
				&\reductArr'_\reductCase&
				\comput t_n
				&\text{when } 0 < n \leq m
				\\[0.1cm]
				\letin{x}{\ret{\val v}}{\comput t}
				&\reductArr'_\reductLetRet&
				\comput t\sub{x}{\val v}
				\\[0.1cm]
				\letin{x}{\effect{\val v}{y}{\comput t}}{\comput u}
				&\reductArr'_\reductLetEff&
				\effect{\val v}{y}{\letin{x}{\comput t}{\comput u}}
				\\[0.1cm]
				\handle{\handler h}{\ret{\val v}}
				&\reductArr'_\reductHdlRet&
				\comput t\sub{x}{\val v}
				&\text{when } \retClause{x}{\comput t} \in \handler h
				\\[0.1cm]
				\handle{\handler h}{\effect{\val v}{y}{\comput u}}
				&\reductArr'_\reductHdlEff&
				\comput t \sub{x}{\val v}\sub{r}{\abs{y}{\handle{\handler h}{\comput u}}}
				&\text{when } \effectClause[\sigma]{x}{r}{\comput t} \in \handler h
			\end{array}$
	\caption{Reduction rules for \HE\ }
	\label{def:RewriteRules}
    \Description{Reduction rules for HE}
\end{figure}

With this,
the seven rewriting rules defined in \Cref{def:RewriteRules} 
establish reduction for \HE, in which we implicitly assume the terms to be closed. We write $\comput t\reductArr'\comput s$ if $\comput t$ rewrites to $\comput s$.
The first four rules ($\reductBeta, \reductFix, \reductCase, \reductLetRet$)
correspond to the usual reduction rules for a $\lambda$-calculus with
recursive functions.
The $\reductLetEff$-rule states that any let binding with an effect
as definition can be simplified by permuting the let binding with the
effect operation, therefore highlighting  that the operation $\sigma$ is meant 
to be performed first.
The $\reductHdlEff$-rule states that any handling with an effect operation
as handled computation can be simplified into the interpretation $\comput t$
for the operation clause of the handler $\handler h$, where the occurrences of
the parameter $x$ are replaced with the argument value $\value v$. Similarly
for the $\reductHdlRet$ rule, which states that, when a handled computation returns a value, the return clause provided by the effect handler is evaluated with the returned value.
Moreover, for the $\reductHdlEff$-rule, the occurrences of the continuation $r$ are also replaced with the
handling of $\comput u$ by $\handler h$, abstracted over the output
variable $y$ of the effect.

The $\reductHdlRet$- and $\reductHdlEff$-rules illustrate how interpretations
define the performance of return and effect operations. On the other hand,
with the $\reductLetEff$-rule, the first performed effect bubbles    up to the top of
the computation. This rule works in tandem with the $\reductHdlEff$-rule
to associate deeply embedded effect operations to their interpretations
using only local rules. Using the evaluation contexts, we can now define 
reduction as followed.

\begin{definition}[Reduction]
	\label{def:Reduction}
	Let $\comput t, \comput s \in \setComput$ be closed terms.
	We say that $\comput t$ $ \wordDefinition{reduces to}$ $\comput s$, written
	$\comput t \reductArr \comput s$ iff there is an evaluation context
	$\ctxtEval$ such that
	$\comput t=\ctxtEval<\comput u>$ and
	$\comput s=\ctxtEval<\comput r>$,   
	where $\comput u \reductArr' \comput r$.
	We write $\reductArr*$ for the reflexive and transitive closure of the reduction 
	relation $\reductArr$.
\end{definition}

\noindent Note that given the way evaluation contexts are defined, reduction is deterministic and call-by-value.

\begin{example}
	Let $\handler h = \{\retClause{x}{\effect{x}{y}{\ret{x}}}\}$. Then,
  $$
  \handle{\handler h}{\ret{\int 5}}
	\reductArr'_\reductHdlRet
	\effect{\int 5}{y}{\ret{\int 5}} ~.
  $$
	And for $\ctxtEval = \letin{z}{\Hole}{\ret z}$,
  $$
  \ctxtEval<\handle{\handler h}{\ret{\int 5}}>
	\reductArr \ctxtEval<\effect{\int 5}{y}{\ret{\int 5}}> ~.
  $$
\end{example}

The just defined calculus, called $\HE$, is genuinely untyped, and as such does not satisfy any of those desirable properties like termination or progress. In fact, type clashes can easily occur. The calculus \HE, however, is Turing-powerful, and its expressive power from the standpoint of computability theory would remain unchanged in the absence of effects and handlers, as well as without recursion. Consequently, most verification problems, including reachability problems, remain undecidable in this calculus. It is anyway worth clarifying what we mean by reachability here; we in fact adopt a very simple definition, which does not capture reachability in its full generality, but which is fine for our purposes. Given a closed term $\comput t \in \setComput$, the \wordDefinition{reachability problem} asks whether there exists a value $\val v$ such that $\comput t \reductArr* \ret{\val v}$, that is, whether the evaluation of $\comput t$ reaches any value. We note that, e.g., the halting problem can be easily encoded as one such reachability problem.

\paragraph{Simple Types}

In this section, we restrict our attention to a 
subset of terms, with the aim of guaranteeing a form
of type safety closely following a simply-typed discipline.
We associate with each computation an effect context specifying 
the input and output types of each algebraic operation.
The grammar of simple types then becomes the following.
\begin{equation*}
	\begin{array}{crcl}
		\textbf{(Value Types)}&
		\st L,\st{M}, \st{N}
		&\coloneqq& 
		\typeInt{n}
		\vsep \st{M} \rightarrow \st{U}
		\\
		\textbf{(Computation Types)}&
		\st{U}, \st{S}
		&\coloneqq& \stEffect{\st M}
		\\
		\textbf{(Handler Types)}&
		\st{H}
		&\coloneqq& \st{S} \Rightarrow \st{U}
		\\
		\textbf{(Effect Contexts)}&
		\effectCtxt E, \effectCtxt F
		&\coloneqq& \emptyset \vsep \{\effectSpec{\st{M}}{\st{N}}\} \cup \effectCtxt E
		\\
		\textbf{(General Types)}&
		\st{T}
		&\coloneqq& \st{M}
		\vsep \st{U}
		\vsep \st{H}\\
	\end{array}
\end{equation*}
The integer type\footnote{We could have considered a single, infinite type of all positive integers, but we prefer to work with finite base types here and throughout the remainder of this article, in view of the discussions about the higher-order model checking problem.} $\typeInt n$ is inhabited by all natural numbers
in the range $[1,n]$. The
computation type $\stEffect{\st M}$ indicates both the type $\st M$ of the
returned value, and the specification $\effectCtxt E$ of the effects possibly used, given as an \wordDefinition{effect context}, namely a signature specifying for every operation $\sigma$ both the type $\st M$ of the argument and the type $\st N$ of the result, to be passed to the continuation. 

We define three distinct typing judgments for values, computations, and handlers, respectively. The notations for these simple type judgments are as follows, and the corresponding inference rules are given
in \Cref{def:SimpleTypingRules}.
\begin{equation*}
	\begin{array}{ccccc}
		\stJugComput{\Delta}{\comput t}{\st U}
		&\qquad&
		\stJugValue{\Delta}{\val v}{\st M}
		&\qquad&
		\stJugHandler{\Delta}{\handler h}{\st H}
	\end{array}
\end{equation*}
\begin{figure}[h!]\centering
	\input{Definitions/stType.tex}
	\caption{\HEPCF\, typing rules.}
	\label{def:SimpleTypingRules}
    \Description{HEPCF, typing rules}
\end{figure}
Most of the typing rules of \HEPCF\ are standard, as they closely follow those of the simply typed $\lambda$-calculus. Some of them, however, deserve brief discussion. Terms of the form $\effect{\val v}{x}{\comput t}$ are assigned a type by rule ($\scriptstyle\text
        {$\mathtt{eff}$\xspace}$). Observe that, while the type of the parameter $\val v$ and that of the variable $x$ must be consistent with the effect context $\effectCtxt{E}$, the return type of the computation $\comput t$ is completely arbitrary, and becomes the type of the whole term.
Furthermore, when typing handlers, each algebraic operation is analyzed separately, and it is therefore possible to assign exactly one type to each of them. This is fully consistent with the monomorphic nature of the type system.

\HEPCF\ as we have just introduced it is type safe, which can be easily shown by proving Subject Reduction and Progress. Its expressiveness is limited, in particular due to the fact that the base types are all finite. One might think that this restriction, which in \PCF\ leads to the \emph{decidability} of the HOMC problem for a very broad class of properties, namely those captured by MSO formulas, would also make this problem decidable for \HEPCF. Unfortunately, Dal Lago and Ghyselen showed~\cite{Dal-Lago/Ghyselen_2024_POPL} that this is not the case, and that the HOMC is undecidable in \HEPCF\ already for mere reachability properties. (The latter, by the way, can be formulated for \HEPCF\ in a completely analogous way to what we did for \HE\ earlier in this section). This is proved by simulating the standard inductive type of integers through algebraic effects and handlers. We are interested in understanding the deep reason behind the jump in expressiveness between finitary $\mathsf{PCF}$ and \HEPCF. The kind of behavioral intersection typing we introduce in the next section will indeed provide a new point of view on this issue, which we defer to Section~\ref{sect:backtopcf}.

%% file: Definitions/stType.tex
$
        \begin{array}{c}
            \begin{array}{ccccc}
                \begin{prooftree}
                        \hypo{0<n \le m}
                    \typeRuleInt[1]{\stJugValue{\Delta}{\int n}{\typeInt m}}
                \end{prooftree}
                &&
                \begin{prooftree}
                    \typeRuleVar{\stJugValue{\Delta;x :: \st M}
                        {x}{\st M}}
                \end{prooftree}
            \\[0.5cm]
                \begin{prooftree}
                        \hypo{ \Delta; x :: \st M \vdash \comput t :: \st U}
                    \typeRuleAbs{\stJugValue
                        {\Delta}
                        {\abs{x}{\comput t}}
                        {\st M \rightarrow \st U}}
                \end{prooftree}
        &\hspace{0.6cm}&           
                \begin{prooftree}
                        \hypo{
                            \stJugValue
                                {\Delta; x :: \st M \rightarrow \st U}
                                {\val v}
                                {\st M \rightarrow \st U}}
                    \typeRuleFix[1]{
                        \stJugValue
                            {\Delta}
                            {\fix{x}{\val v}}
                            {\st M \rightarrow \st U}}
                \end{prooftree}
                \\[0.5cm]
            \end{array}
            \\
        \begin{array}{ccc}
                \begin{prooftree}
                        \hypo{\stJugComput
                            {\Delta}
                            {\comput t}
                            {\stEffect{\st M}}}
                        \hypo{\stJugComput
                            {\Delta; x :: \st M}
                            {\comput u}
                            {\stEffect{\st N}}}
                    \typeRuleLetin[2]{\stJugComput
                            {\Delta}
                            {\letin{x}{\comput t}{\comput u}}
                            {\stEffect{\st N}}}
                \end{prooftree}
            & \hspace{0.6cm}&
                \begin{prooftree}
                    \hypo{\stJugHandler{\Delta}{\handler h}
                        {\st S \Rightarrow \st U}}
                    \hypo{\stJugComput{\Delta}{\comput t}{\st S}}
                    \typeRuleHandle{\stJugComput
                        {\Delta}
                        {\handle{\handler h}{\comput t}}{\st U}}
                \end{prooftree}
            \end{array}
            \\[0.5cm]
            \begin{array}{ccc}
                \begin{prooftree}
                        \hypo{\stJugValue
                            {\Delta}
                            {\val v}
                            {\st M \rightarrow \st U}}
                        \hypo{\stJugValue
                            {\Delta}
                            {\val w}
                            {\st M}}
                    \typeRuleApp{\stJugComput
                        {\Delta}
                        {\app{\val v}{\val w}}
                        {\st U}}
                \end{prooftree}
            &\hspace{0.6cm} &
                \begin{prooftree}
                        \hypo{\stJugValue
                            {\Delta}{\val v}{\typeInt n}}
                        \hypo{(\stJugComput
                            {\Delta}{\comput t_k}{\st U}
                            )_{ 0<k\le n}}
                    \typeRuleCase{\stJugComput
                        {\Delta}
                        {\case{\val v}{\comput t_1, \ldots, \comput t_n}}
                        {\st U}}
                \end{prooftree}
            \end{array}
            \\[0.5cm]
            \begin{array}{ccc}
                \begin{prooftree}
                        \hypo{\stJugValue
                            {\Delta}
                            {\val v}
                            {\st M}}
                    \typeRuleRet{\stJugComput
                        {\Delta}
                        {\ret{\val v}}
                        {\stEffect{\st M}}}
                \end{prooftree}
            &\hspace{0.8cm}&
                \begin{prooftree}
                        \hypo{\stJugValue
                            {\Delta}
                            {\val v}
                            {\st M}}
                        \hypo{\stJugComput
                            {\Delta; x :: \st N}
                            {\comput t}
                            {\stEffect{\st L}}}
                        \hypo{\effectSpec{\st M}{\st N} 
                            \in \effectCtxt E}
                    \typeRuleEff[3]{
                        \stJugComput
                            {\Delta}
                            {\effect{\val v}{x}{\comput t}}
                            {\stEffect{\st L}}
                        }
                \end{prooftree}
            \end{array}
            \\[0.5cm]
    \begin{prooftree}
    \hypo{(
        \stJugComput
            {\Delta; x :: \st L_i;r :: \st N_i \rightarrow \st U}
            {\comput t_i}
            {\st U})_{0<i\leq n}
            }
            \hypo{
                \begin{matrix}
            \stJugComput
            {\Delta; y :: \st M}
            {\comput t}
            {\st U}\\
            \{\sigma_i:{\st L_i}\rightsquigarrow{\st N_i}\in \st E \vsep 0 < i \leq n\}
                \end{matrix}
            }
            %
            %
            \typeRuleHandler[2]{\stJugHandler
                {\Delta}
                {\{\retClause{y}{\comput t}\} \cup\{\effectClause[\sigma_{\spaceNegII i}]{x}{r}{\comput t_i} \vsep 0 < i \leq n\}}
                {\stEffect{\st M} \Rightarrow \st U}
            }
            \end{prooftree}
    \end{array}
    $

%% file: behaviorintersection.tex
It is time to introduce our new type discipline of behavioral intersection typing, which will result in the \HEBI\ type system. After formalizing it, we show its metatheoretical properties.

\subsection{The {\HEBI} Type System}
We first define the type language of the {\HEBI} type system.
As with intersection types, values are typed with \wordEmphasis{a set} of types. 
This corresponds to the term being typable with \emph{every type} in the set. 
Moreover, the type system  is \emph{behavioral} in nature, keeping track of 
the effect produced and of their order. 
The grammars of {\HEBI} typing is defined as follows.
\begin{equation*}
	\begin{array}{crcl}
		\textbf{(Value Types)}&
		\it M, \it N
		&\coloneqq& \itSetEmpty
		\vsep \itSet{\typeInt{n}}
		\vsep \itSet{\it M_i \rightarrow \it E_i}_{\rangeI}
		\\
		\textbf{(Computation Types)}&
		\it E, \it F, \it G
		&\coloneqq& \itEffectReturn{\it M} 
        \vsep\itEffect{\sigma}{\it M}{\it N_i \rightarrow \it E_i}_{\rangeI}
		
		\\
		\textbf{(Handler Types)}&
		\it H
		&\coloneqq& \it E \Rightarrow \it F
		
		\\
		\textbf{(General Types)}&
		\tau, \rho
		&\coloneqq& \it M \vsep \it E \vsep \it H
		
	\end{array}
\end{equation*}
{where $I$ is a finite set.}
A \wordDefinition{value type} $\it M$ caracterizes how a value 
can be used by a computation. It can either be 
an \wordEmphasis{interger type} $\{\typeInt{n}\}$ associated a specific value $\int n$,
a \wordEmphasis{function type} $\itSet{\it M_i \rightarrow \it E_i}_{\rangeI}$
associated to functions that compute for every $\rangeI$ a process of type $\it E_i$
when applied to a value of type $\it M_i$,
or the \wordEmphasis{empty type} $\itSetEmpty$ corresponding to any unused value\footnote{We tacitly assume that $\itSetEmpty$ and $\itSet{\it M_i\rightarrow\it E_i}_{i\in \emptyset}$ denote the same type.}.
A \wordDefinition{computation type} $\itEffect{\sigma}{\it M}{\it N_i \rightarrow \it E_i}_{\rangeI}$
describes the tree of branching effects performed by
a computation. Its \wordEmphasis{root} (or starting) effect
is the operation $\sigma$ with a parameter of type $\it M$.
The \wordEmphasis{children} types $\it E_i$ of this root
are indexed by the possible types $\it N_i$ of the operation's output 
(each of these children corresponding to a possible process of the
effect's continuation). Similarly, the computation
type $\itEffectReturn{\it M}$ corresponds to a computation returning
a value of type $\it M$. A \wordDefinition{handler type} 
$\it E \Rightarrow \it F$ describes handlers
transforming computations of type $\it E$, into
ones of type $\it F$, as in {\HEPCF}. We note, however, that whereas in 
\HEBI\ a computation type is behavioral and a handler type therefore configures
 itself as a proper tree transformer, in \HEPCF\ a computation type only captures 
 the type of the parameter to the return command, if any, the computation 
 will call, together with a typing for the algebraic operations.
{We give in \Cref{fig:itExample} the visualisation of a computational \HEBI\ type that can be associated to the computation $\effect{\int 3}{x}{\case{x}{\ret x,\ 
				\effect[\delta]{\int 2}{f}{\app{f}{\int 3}}}}$. A type $\itEffect{\sigma}{\it M}{\it N_i \rightarrow \it E_i}_{\rangeI}$ is represented as a tree whose root is $\sigma[\it M]$, whose children are $\it E_i$, and whose edges are labeled by the corresponding $\it N_i$.} 

\begin{figure}[ht!]
	\centering
	\input{Visuals/itExample.tex}
	\caption{Visualisation of the \HEBI\ type 
		$\itEffect*{\sigma}{\typeInt 3}{
			\itSet{\typeInt{1}} \rightarrow \itEffectReturn{\typeInt 1},
			\itSet{\typeInt{2}} \rightarrow \it E}$, 
		where 
        $\it E = \itEffect*{\delta}{\itSetEmpty}{
			\itSet{\itSet{\typeInt 3} \rightarrow \it F}
			\rightarrow \it F,
			\itSet{\itSet{\typeInt 3} \rightarrow \it G}
			\rightarrow \it G}$,
		$\it F = \itEffectReturn{\typeInt 5}$ and $\it G = \itEffect{\delta}{\typeInt 3}{}$.}
	\label{fig:itExample}
    \Description{Visualisation of an \HEBI\ type.}
\end{figure}

As we said, values in \HEBI\ are typed as sets modelling intersections. Therefore, given two value types $\it M,\it N$, we can define their union $\it M\cup \it N$. It is important to notice, however, that the value $\int n$ can receive the value types $\itSet{}$ and $\itSet{n}$. Thus, the union $\itSet{m}\cup \itSet{n}$ is defined if and only if $n=m$. The union operator $\cup$ is therefore partial.
We also define a subset relation on value types in the standard manner, written $\it N\subseteq \it M$.

A type context $\Gamma = x_1 : \it M_1 ; \dots x_n : \it M_n$
is a total function from the set $\setVar$ of variables to the set of
value types, where every variable $x_i$ is associated to $\it M_i$, and
any $y$ not mentioned explicitly in $\Gamma$, is associated to the 
empty type $\itSetEmpty$. We define the sum of two type contexts 
$\Gamma + \Delta$ as
$\{x : \it M\cup\it N \vsep 
\text{where $x : \it M\in\Gamma$ and $x : \it N\in\Delta$}\}$. Notice 
that, once again this is a \emph{partial} binary operation on contexts, 
since $\cup$ is partial on value types.
%

\begin{definition}
	Let $\Gamma, \Delta$ be type contexts. We say $\Delta\subseteq\Gamma$ iff for every $x:\it N\in \Delta $, one has $x:\it M \in \Gamma$ with $\it N\subseteq\it M$. We write $Sub(\Gamma)$ for the set of type contexts $\Delta$ such that $\Delta\subseteq\Gamma$. 
\end{definition}

\paragraph*{Typing jugements}

As in \HEPCF, there are three kinds of typing jugements, written as follows.
\begin{equation*}
	\begin{array}{ccccc}
		\itJugComput{\Gamma}{\comput t}{\it E}
		&\qquad&
		\itJugValue{\Gamma}{\val v}{\it M}
		&\qquad&
		\itJugHandler{\Gamma}{\handler h}{\it H}
	\end{array}
\end{equation*}

\noindent 
A typing jugement of the form $\itJugComput{\Gamma}{\comput t}{\it E}$
states that, in the context $\Gamma$, evaluating the computation
$\comput t$ produces a computation tree coherent with the one described
by $\it E$.
A typing jugement $\itJugValue{\Gamma}{\val v}{\it M}$
indicates that the value $\val v$ can be used as prescribed by \emph{every}
type in the set $\it M$ (in the context $\Gamma$).
Finally, the jugement $\itJugHandler{\Gamma}{\handler h}{\it E \Rightarrow \it F}$
indicates that the handler $\handler h$ can transform a computation
of type $\it E$ into one of type $\it F$ in the context $\Gamma$.
The typing rules are defined in \Cref{def:EffectTreeTypingRules}.
Again, we write $\itJug{\Gamma}{p}{\tau}$ for one of the three judgments, 
depending on the nature of $p$ and $\tau$.
We also write $\itJug[\Pi]{\Gamma}{p}{\tau}$ if the jugement
$\itJug{\Gamma}{p}{\tau}$ is the conclusion of the proof tree $\Pi$.
The typing rules use also the two auxilary relations, to which 
we will devote the next two paragraphs. One of these relations is defined in \Cref{def:HandlerCompatAndLeafReplacement}.

\begin{figure}\centering
	\input{Definitions/etType.tex}
	\caption{\HEBI, typing rules.}
	\label{def:EffectTreeTypingRules}
    \Description{HEBI, typing rules}
\end{figure}

\begin{figure}\centering
	\center
	\input{Definitions/LeafReplacementHandlerCompat.tex}
	\caption{\HEBI, Leaf Replacement.}
	\label{def:HandlerCompatAndLeafReplacement}
    \Description{HEBI, Leaf Replacement}
\end{figure}

\paragraph*{Handler Compatibility} Let $\handler h\in \setHandler$ be an handler. The relation $\{\effectClause{y}{r}{\comput t}\}\handlerCompat\handler h$ holds if for any $\{\effectClause{y}{r}{\comput u}\}\in \handler h$, it holds that $\comput t=\comput u$. This relation certifies that the clause $\effectClause{y}{r}{\comput t}$
is not in conflict with any clause for $\sigma$ in $\handler h$. The handler compatibility on $\constructor{return}$ clauses is defined similarly: $\{\retClause{y}{\comput t}\}\handlerCompat \handler h$ holds if for any $\{\retClause{y}{\comput u}\}\in \handler h$, $\comput t=\comput u$. In this case, the relation certifies that $\retClause{y}{\comput t}$ is the only $\constructor{return}$ clause in $\handler h$. Observe that the handler compatibility relation is not 
required in \HEPCF, where the handling of all operations is done through terms that 
are forced to adhere to the same (\emph{non}-behavioral) return type and to the same signature. 
As a consequence, there would be no meaningful reason to type them multiple times. This situation 
changes in \HEBI, where behavioral information becomes an integral part of the type and handling
terms can be thus treated polymorphically.

\paragraph*{Leaf Replacement} $\it E \replaceLeaf{\it M_i \rightarrow \it
G_i}[\rangeI] \it F$, is a relation stating that the type $\it E$ is transformed
to the type $\it F$ by replacing each type in the form $\itEffectReturn{\it M_i}$ occurring
as a leaf of $\it E$ with  the corresponding type $\it G_i$. Intuitively, this represents
the composition of computation trees produced by sequential terms.
To see it, consider a term ${\letin{x}{\comput t}{\comput u}}$ with the term $\comput t$ having 
a type $\it E$.
When $\comput t$ terminates, its continuation $\comput u$ refers to a value $\val v$ returned by $\comput t$.
The type of $\val v$ matches $\itEffectReturn{\it M_i}$ for some $i$ in $\it E$
because $\itEffectReturn{\it M_i}$ corresponds to a \emph{leaf} of the
computation tree produced by $\comput t$, namely, a branch of its evaluation
ending at the leaf.
Suppose that $\comput u$ also produces another tree ot type $\it G_i$ using $\val v$.
Then, the produced tree represents the continuation of the process that returns the value $\val v$.
Therefore, replacing the leaf corresponding to the return process by $\it G_i$
results in composing the tree of $\comput t$ with that of $\comput u$.
The leaf replacement relation implements this idea formally.
\Cref{fig:LeafReplacement} illustrates how it works on a simple example.
One may note that it is possible to have identical leaves that are each replaced
by different trees (ie. $\it M_i = \it M_j$ with $\it G_i \neq \it G_j$).
\begin{figure}[ht!]
	\centering
	\scalebox{0.79}{\input{Visuals/LeafReplacement.tex}}
	\caption{\HEBI\ Leaf Replacement: an Example}
	\label{fig:LeafReplacement}
    \Description{HEBI, Leaf Replacement: and example}
\end{figure}

Most of the typing rules presented in \Cref{def:HandlerCompatAndLeafReplacement}
just capture our intention explained thus far, but a few remarks might help
understanding some of the most unusual rules.
First, note that we adopt a multiplicative-style type system, so that whenever a typing rule has multiple premises, the type contexts of the corresponding subderivations are \emph{summed} in the conclusion. Moreover, the rules {\ruleNameVar} and {\ruleNameInt} do not explicitly admit weakening. The context is intended to record the types associated with the different occurrences of variables. If a variable is weakened, this indicates that it is not used, which is represented by assigning it the empty type $\{\}$. Indeed, as already noted, the environment $\Gamma$ is a total mapping and any variable which is not explicitly mentioned in context expressions is assumed to be mapped to the empty type. Consequently, when a variable does not occur free in a term, the environment maps it to the empty type. In premise of the rule {\ruleNameFixBase}, we explicitly write $x:\{\}$ to emphasize that $x$ is required to be mapped precisely to $\{\}$ in the premise.
Fixpoints are typed by two rules: {\ruleNameFixBase} and
{\ruleNameFixRec}.
The former corresponds to the ``base case'', hence assuming recursive variables to be unused by assigning the empty type to it.
The latter corresponds to the ``inductive case'', recursively checking a fixpiont $\fix{x}{\val v}$ against a type $\it N$ if
the body $\val v$ is typechecked under the assignment of $\it N$ to the recursive variable $x$.
Since typing rules are inductively interpreted, the typechecking process for fixpoints can
apply {\ruleNameFixRec} only finitely many times, so, in turn, recursive
variables can be used only finitely many times.
This restriction leads to the guarantee of termination for any well-typed terms.
Finally, the typing rules for handlers ({\ruleNameHandlerRet} and
{\ruleNameHandlerSigma}) only typecheck the clause corresponding to the
operation (including $\ret$) performed first in the underlying term being handled, as explained
in \Cref{sect:highlevel}; if the continuation relies on other clauses, they
are typechecked \emph{within} the typing derivations for the continuation. 
{Finally, by examining rule {\ruleNameHandlerSigma}, we also see why we defined $\{\effectClause{y}{r}{\comput t}\}\handlerCompat\handler h$ in this way, rather than simply requiring that ${\effectClause{y}{r}{\comput t}}$ does not occur in $\handler h$. As noted earlier, the handlers appearing in the premises and in the conclusion of the rule \emph{may} coincide. This makes it possible to type multiple uses of the same effect clause.}

\subsection{Metatheoretical Properties}

Now that the \HEBI\ type system has been properly defined,
we study its properties. Notably, as with intersection types~\cite{IntersectionTypes_survey},
it should verify both subject \wordEmphasis{reduction} and subject \wordEmphasis{expansion}.
In other words, the \HEBI\ typability of a computation 
should be conserved through reduction and expansion
(the opposite of reduction).
Subject reduction ensures that the \HEBI\ type 
system is type sound, while subject expansion would ensure that it is expressive enough. Furthermore, to ensure safety, the system should verify a form of progess (\Cref{lem:HEBIProgress}), so that closed computation terms are either in the form of a return statement, a $\sigma$-operation, or can be reduced, \emph{i.e.} there are no spurious closed and irreducible computations.
In order to prove these properties, let us first note a few
useful lemmas about \HEBI\ types.

\subsection{Basic Properties} \label{sect:HEBIElementaryProperties}

\Cref{lem:Splitting,lem:IntersectionLemma}
given below ensure that value types indeed act as intersections
of their elements.
In other words, terms typed with a value type $\itSet{\rho_i}_{\rangeI}$
are exactly terms typed with $\itSet{\rho_j}_{\rangeJ}$ for
every $J\subseteq I$. Furthermore, the empty type corresponds to an 
empty intersection, and therefore should include all terms.
\begin{restatable}[\HEBI, Splitting Lemma]{lemma}{hebiSplitting}
	\label{lem:Splitting}%
	Let $\val v \in \setVal$ such that
	$\itJugValue[\Xi]{\Gamma}{\val v}{\it M}$ and $I$ such that $\it M=\bigcup\limits_{\rangeI}\it M_i$.
	Then for each $\rangeI$ there exist a derivation
	$\itJugValue[\Pi_i]{\Gamma_i}{\val v}{\it M_i}$ such that $\Gamma= +_{\rangeI}\Gamma_i$.
\end{restatable}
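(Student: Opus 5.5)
The plan is to argue by induction on the derivation $\Xi$, with a case analysis on the last rule used to type the value $\val v$. The value typing rules are \ruleNameInt, \ruleNameVar, \ruleNameAbs, \ruleNameFixBase and \ruleNameFixRec. A preliminary observation disposes of the empty index set and of empty components. For $I=\emptyset$, or for any $i$ with $\it M_i=\itSetEmpty$, we need $\itJugValue{\emptyset}{\val v}{\itSetEmpty}$ for an arbitrary value. I expect this to follow directly from the rules: \ruleNameAbs with an empty family of premises should give $\itSetEmpty$ to an abstraction under the empty context. For variables, integers and fixpoints, I would check that the corresponding axiom or rule also allows the empty type with empty context. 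If one of them does not, I would state this as a small auxiliary fact proved together with the lemma. In either case the empty context is neutral for $+$, so empty components can be dropped and added back at the end. I also need $+$ to be associative and commutative, which is immediate from the properties of $\cup$.

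\textbf{Atomic cases.} For \ruleNameVar, $\Gamma = x:\it M$ and each $\Pi_i$ is just \ruleNameVar giving $x:\it M_i$. The union $+_{\rangeI}(x:\it M_i)$ is $x:\it M$ because $\it M=\bigcup_{\rangeI}\it M_i$. For \ruleNameInt, $\it M=\itSet{\typeInt n}$, so each $\it M_i$ is either $\itSet{\typeInt n}$ or empty. We reuse the axiom for the nonempty ones, and the contexts are all empty (or already sum correctly).

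\textbf{Abstraction.} In this case $\it M=\itSet{\it N_j\rightarrow \it E_j}_{\rangeJ}$, and the premises are derivations $\Gamma_j; x:\it N_j\vdash\comput t:\it E_j$ with $\Gamma=+_{\rangeJ}\Gamma_j$. Each $\it M_i$ is a subset of these arrows, so I would pick a partition of $J$ (or a covering, if the sets overlap) into sets $J_i$ with $\it M_i=\itSet{\it N_j\rightarrow\it E_j}_{j\in J_i}$. Then $\Pi_i$ is \ruleNameAbs applied to the premises indexed by $J_i$. Overlaps are the delicate point. If $\it M_i$ and $\it M_{i'}$ share an arrow, I would reuse the same premise in both, and then the summed context contains the union of that premise's context twice. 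Value types are sets and $\cup$ is idempotent, so the contexts still sum to $\Gamma$; I need to confirm that the sum of contexts is defined pointwise by $\cup$, which it is.

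\textbf{Fixpoints.} This is the main obstacle. With \ruleNameFixBase, the premise types the body $\val w$ with $x:\itSetEmpty$ at type $\it M$. The induction hypothesis splits that premise into pieces, each with $x:\itSetEmpty$ (since a sum of contexts is empty only if every summand is empty), and we reapply \ruleNameFixBase to each piece. With \ruleNameFixRec, the premise is $\Gamma'; x:\it N\vdash \val w:\it M$ and the second premise types $\fix{x}{\val w}$ with $\it N$. I apply the induction hypothesis to the body, getting $\Gamma'_i; x:\it N_i\vdash\val w:\it M_i$ with $\bigcup_i \it N_i=\it N$. Then I apply it again to the derivation of $\fix{x}{\val w}:\it N$, split according to the $\it N_i$, and recombine each piece with \ruleNameFixRec, or with \ruleNameFixBase when $\it N_i$ is empty. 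For this to work, the induction must be on the size of the derivation rather than on the structure of $\val v$, because the fixpoint reappears in a premise. Both subderivations are strictly smaller, so the measure decreases. The remaining check is that the contexts add up to $\Gamma$, which again relies on associativity and commutativity of $+$.
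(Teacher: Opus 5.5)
Your proof is correct and follows the same route as the paper's: induction on $\Xi$ with case analysis on its last rule, taking (possibly overlapping) subfamilies of premises in the {\ruleNameAbs} case and using idempotence of $\cup$, and applying the induction hypothesis to the body in the fixpoint cases. In the {\ruleNameFixRec} case you are more careful than the paper, which keeps $x:\it N$ in every piece of the split body and reuses the whole derivation of $\fix{x}{\val w}$ at type $\it N$; the induction hypothesis only gives pieces with $x:\it N_i$ and $\bigcup_i \it N_i=\it N$, so your second appeal to it, splitting that premise accordingly (which is why induction on derivation size is needed), is what makes the reassembled derivations match these contexts exactly.
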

\begin{restatable}[\HEBI, Intersection Lemma]{lemma}{hebiIntersection}
	\label{lem:IntersectionLemma}%
	Let $\val v\in\setVal$ be a value and $I$ a set, 
	such that for every $\rangeI$, there is a $\Pi_i$ such that
	$\itJugValue[\Pi_i]{\Gamma_i}{\val v}{\it M_i}$.
	If $+_{\rangeI}\Gamma_i$ and $\bigcup\limits_{\rangeI}\it M_i$ are defined, 
	then there exists $\itJugValue[\Pi]{+_{\rangeI}\Gamma_i}{\val v}{\bigcup_{\rangeI}\it M_i}$.
\end{restatable}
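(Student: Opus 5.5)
The plan is to argue by induction on the structure of $\val v$, with a nested induction on the total size $\sum_{\rangeI}|\Pi_i|$ of the given derivations to handle fixpoints. In each case I invert the last rule of every $\Pi_i$ and rebuild one derivation for $\bigcup_{\rangeI}\it M_i$. The rebuilding uses that the sum of contexts is associative and commutative, and that it is defined whenever the larger sum is. The latter holds because $\cup$ is only partial on integer singletons: if $+_{\rangeI}\Gamma_i$ is defined, then so is any sum of sub-contexts of the $\Gamma_i$'s, in the sense of $Sub(\cdot)$.

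\textbf{Degenerate and base cases.}
\begin{itemize}
\item If $I=\emptyset$, I need $\itJugValue{}{\val v}{\itSetEmpty}$ under the empty context. For an abstraction this is the abstraction rule with an empty family of premises. For integers, variables and fixpoints I will check that the rules in \Cref{def:EffectTreeTypingRules} give the empty type; for fixpoints this should come from {\ruleNameFixBase} on a body typed with $\itSetEmpty$, by the induction hypothesis with empty $I$.
\item If $\val v=\int n$, every $\it M_i$ is $\itSetEmpty$ or $\itSet{\typeInt n}$ and every $\Gamma_i$ is empty. The union is one of these two types and is given by {\ruleNameInt}.
\item If $\val v=x$, every $\Pi_i$ is {\ruleNameVar} with $\Gamma_i=x:\it M_i$. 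So $+_{\rangeI}\Gamma_i = x:\bigcup_{\rangeI}\it M_i$, and one instance of {\ruleNameVar} concludes.
\end{itemize}

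\textbf{Abstraction case.} For $\val v=\abs{x}{\comput t}$, each $\Pi_i$ ends with {\ruleNameAbs}. That rule has a family of premises $(\Gamma_{ij};x:\it M_{ij}\vdash \comput t:\it E_{ij})_{j\in J_i}$, with $\Gamma_i=+_{j}\Gamma_{ij}$ and $\it M_i=\itSet{\it M_{ij}\rightarrow\it E_{ij}}_{j\in J_i}$. I take the disjoint union of all these premise families and apply {\ruleNameAbs} once. This yields the type $\bigcup_{\rangeI}\it M_i$ in context $+_{i,j}\Gamma_{ij}=+_{\rangeI}\Gamma_i$. No induction hypothesis is needed here, since computation premises are simply juxtaposed.

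\textbf{Fixpoint case: the main obstacle.} Let $\val v=\fix{x}{\val w}$. Each $\Pi_i$ ends with {\ruleNameFixBase} or {\ruleNameFixRec}. I normalise both shapes as follows.
\begin{itemize}
\item A body premise $\Gamma'_i;x:\it N_i\vdash\val w:\it M_i$.
\item For {\ruleNameFixBase}, $\it N_i=\itSetEmpty$ and there is no recursive premise.
\item For {\ruleNameFixRec}, there is a recursive premise $\itJugValue{\Delta_i}{\fix{x}{\val w}}{\it N_i}$ with $\Gamma_i=\Gamma'_i+\Delta_i$.
\end{itemize}
The steps are then:
\begin{enumerate}
\item Apply the structural induction hypothesis to $\val w$. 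This gives $+_{\rangeI}\Gamma'_i;\,x:\bigcup_{\rangeI}\it N_i\vdash\val w:\bigcup_{\rangeI}\it M_i$. The unions $\bigcup_{\rangeI}\it N_i$ are defined because the $\it N_i$ are function types.
\item If $\bigcup_{\rangeI}\it N_i=\itSetEmpty$, conclude with {\ruleNameFixBase}.
\item Otherwise, let $I'\subseteq I$ be the indices that used {\ruleNameFixRec}. The recursive premises $\itJugValue{\Delta_i}{\fix{x}{\val w}}{\it N_i}$ for $i\in I'$ have total size strictly smaller than $\sum_{\rangeI}|\Pi_i|$.
\item By the inner induction hypothesis, obtain $\itJugValue{+_{i\in I'}\Delta_i}{\fix{x}{\val w}}{\bigcup_{i\in I'}\it N_i}$.
\item Since $\it N_i=\itSetEmpty$ for $i\notin I'$, this union equals $\bigcup_{\rangeI}\it N_i$. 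One application of {\ruleNameFixRec} then gives context $+_{\rangeI}\Gamma'_i+_{i\in I'}\Delta_i=+_{\rangeI}\Gamma_i$.
\end{enumerate}

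The delicate point, and what I expect to be the main obstacle, is making sure this induction is well-founded. The lexicographic order on (size of $\val v$, total size of the derivations) works because step 1 only descends into the strictly smaller body $\val w$, while the recursive premises in step 3 keep the same term but shrink the derivations. A second concern is that all of this depends on the exact shape of {\ruleNameFixBase} and {\ruleNameFixRec}, in particular on whether {\ruleNameFixRec} allows the recursive variable to be typed with exactly the type of the recursive premise. I would verify that side condition against \Cref{def:EffectTreeTypingRules} before finalising the fixpoint case.
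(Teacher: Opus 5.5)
Your argument is sound in substance and agrees with the paper on integers, variables and abstractions; the paper too simply merges all the $\mathtt{(abs)}$ premise families into one. The routes differ in the fixpoint case.

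The paper inducts on the single measure $\sum_{i\in I}|\Pi_i|$. When some $\Pi_i$ ends with $\mathtt{(fixr)}$, it pads every $\mathtt{(fixb)}$-index with the canonical derivation of $\emptyset\vdash\mathsf{fix}\,x.w:\{\,\}$ from \Cref{lem:TypabilityValues}. It then uses \Cref{Col:TypabilityOfValuesOptmialSize} (this derivation has minimal size) to ensure the padded family of recursive premises is strictly smaller. Finally it applies the hypothesis to all $|I|$ recursive premises and to the body premises.

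You instead keep only the indices $I'$ that actually used $\mathtt{(fixr)}$, noting that $\bigcup_{i\in I'}\mathcal N_i=\bigcup_{i\in I}\mathcal N_i$ because the remaining $\mathcal N_i$ are empty. This removes both the padding and the minimality corollary, and that is the real simplification.

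Your lexicographic order is not what makes this work: body and recursive premises are strict subderivations, so total size alone suffices when $I\neq\emptyset$. Its structural component does, however, handle the empty family at fixpoints uniformly. There a pure size measure has nothing to recurse on, and one must fall back on \Cref{lem:TypabilityValues}. Your closing worry is also settled: in $\mathtt{(fixr)}$ the body premise types $x$ with exactly the type $\mathcal N$ of the recursive premise.

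There is one bookkeeping slip. Step 2 splits on whether $\bigcup_{i\in I}\mathcal N_i=\{\,\}$, but an index may end with $\mathtt{(fixr)}$ and still have $\mathcal N_i=\{\,\}$. Applying $\mathtt{(fixb)}$ then yields the context $+_{i\in I}\Gamma'_i$, which silently drops the $\Delta_i$ for $i\in I'$. So you have not shown that you reach $+_{i\in I}\Gamma_i$. There are two easy fixes:
\begin{itemize}
\item Split on whether $I'=\emptyset$, as the paper does. Steps 3--5 go through verbatim when $\bigcup_{i}\mathcal N_i=\{\,\}$, since $\mathtt{(fixr)}$ does not forbid an empty recursive type.
\item Alternatively, add the fact, proved by induction on derivations, that $\Gamma\vdash v:\{\,\}$ forces $\Gamma$ to be empty, so those $\Delta_i$ vanish.
\end{itemize}
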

\noindent
Both of these lemmas can be proved by induction on the structure of the derivations 
which are assumed to exist by hypothesis.

In turn, these results constitute an essential intermediate step in the proof of a pair of key lemmas concerning the relationship between typability and the substitution operation. The first of these results is the classical Substitution Lemma, which states that the operation preserves typing in the forward direction.
\begin{restatable}[\HEBI, Substitution Lemma]{lemma}{hebiSub}
\label{lem:SubstitutionLemma}%
	Let $p \in \setTerm$ be any term and $\val v \in
	\setVal$ closed such that
	$\itJug[\Pi]{\Gamma; x : \it M}{p}{\tau}$ and
	$\itJugValue[\Pi_{\val v}]{\emptyset}{\val v}{\it M}$.
	Then there exists
	$\itJug[\Xi]{\Gamma}
	{p \sub{x}{\val v}}{\tau}$.
\end{restatable}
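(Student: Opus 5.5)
We prove the statement by induction on the derivation $\Pi$ of $\itJug{\Gamma; x : \it M}{p}{\tau}$, simultaneously for the three judgment forms (computations, values, handlers), since the rules for each kind mention the others. The one non-routine ingredient is the way the multiplicative discipline splits the type of $x$ among the premises. The Splitting Lemma (\Cref{lem:Splitting}) is exactly the tool that lets us split the derivation $\Pi_{\val v}$ along these lines. Because $\val v$ is closed, every piece produced by splitting has an empty context, so sums of contexts remain unchanged.

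\textbf{Base cases.}
\begin{itemize}
\item For \ruleNameVar{} with $p = x$, the context forces $\Gamma$ to be empty (all variables mapped to $\itSetEmpty$) and $\tau = \it M$. We take $\Xi = \Pi_{\val v}$.
\item For \ruleNameVar{} with $p = y \neq x$, and for \ruleNameInt{}, the variable $x$ is mapped to $\itSetEmpty$, so $\it M = \itSetEmpty$. Here $p\sub{x}{\val v} = p$ and we reuse $\Pi$.
\end{itemize}

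\textbf{General inductive step.} Consider a rule with premises $\Pi_1,\dots,\Pi_k$, where premise $j$ has context $\Gamma_j; x : \it M_j$. Multiplicativity gives $\Gamma = +_j \Gamma_j$ and $\it M = \bigcup_j \it M_j$. By \Cref{lem:Splitting} we obtain derivations $\itJugValue{\emptyset}{\val v}{\it M_j}$ for each $j$. The induction hypothesis then yields derivations of $p_j\sub{x}{\val v}$ in context $\Gamma_j$, and we reapply the same rule.

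Some rules carry family premises, such as \ruleNameAbs{} with a set of arrow types, \ruleNameEff{} and \ruleNameHandlerSigma{} with families indexed by $I$, and \ruleNameCase{}. These are handled identically: the type of $x$ is the union over all premises, including the indexed ones, and we split accordingly. For binders ($\lambda$, $\constructor{let}$, $\constructor{fix}$, effect continuations, handler clauses) we assume by $\alpha$-conversion that the bound variable differs from $x$ and is not free in $\val v$, which holds trivially since $\val v$ is closed. The bound variable's assignment in the premise is then left untouched.

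\textbf{Side conditions.} The purely type-level side conditions carry over unchanged. Leaf replacement relates types only, so it is unaffected by substitution. Handler compatibility $\{\effectClause{y}{r}{\comput t}\}\handlerCompat \handler h$ requires that clauses for $\sigma$ be syntactically equal. Substitution is a function applied uniformly, so equal clauses remain equal after substitution. Hence the relation holds again between $\{\effectClause{y}{r}{\comput t\sub{x}{\val v}}\}$ and $\handler h\sub{x}{\val v}$. The same argument applies to return clauses.

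\textbf{Main obstacle.} The delicate cases are \ruleNameFixBase{} and \ruleNameFixRec{}.
\begin{itemize}
\item In \ruleNameFixRec{}, the premise types the body $\val w$ under the extra assumption that the recursive variable $y$ has some type $\it N$. The induction hypothesis must apply with the combined context $\Gamma';\, y:\it N;\, x:\it M'$. This is fine, since the induction is on $\Pi$ and the statement allows an arbitrary $\Gamma$.
\item In \ruleNameFixBase{}, the premise requires $y : \itSetEmpty$ exactly. We must check that substitution does not disturb this, which holds because $y \notin \mathrm{fv}(\val v)$.
\item The second place needing care is the partiality of $\cup$ on value types: the unions involved are the given ones, so they are defined by hypothesis, and no new unions are formed.
\end{itemize}
Neither issue is a real obstruction, so the bookkeeping of splitting $\it M$ across family premises (via \Cref{lem:Splitting}) is the only substantive step.
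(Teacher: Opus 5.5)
Your proposal is correct and follows essentially the same route as the paper's proof. The shared steps are: induction on the derivation $\Pi$, simultaneously over values, computations and handlers; splitting $\Pi_{\val v}$ via \Cref{lem:Splitting} into closed pieces, one for each premise of a multi-premise rule; reusing $\Pi$ or $\Pi_{\val v}$ in the base cases; and noting that handler compatibility is preserved by substitution (the paper's \Cref{lem:HandlerComputIndipendence}). You are also right to induct on $\Pi$ rather than on $p$, since \ruleNameFixRec{} has a premise typing the same term $\fix{y}{\val w}$.
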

\begin{proof}[Proof Sketch]
As usual, the proof of the substitution lemma proceeds by induction on the structure of the term into which substitution is performed; in this case, one can equivalently proceed by induction on the structure of $\Pi$. The argument is mostly standard: it is perhaps only worth noting that, in those inductive cases concerning rules with more than one premise, Lemma~\ref{lem:Splitting} is used in an essential way.
\end{proof}
\noindent
Contrary to what happens in most traditional type systems, a result dual to Lemma~\ref{lem:SubstitutionLemma} also holds here, stating that typability is preserved in the backward direction as well.
\begin{restatable}[\HEBI, Anti-substitution Lemma]{lemma}{hebiAntiSub} 
	\label{lem:antiSubstLemma}%
	Let $p \in \setTerm$
	be a term and $\val v\in\setVal$ a closed value.
	If $\itJug[\Pi]{\Gamma}{p\sub{x}{\val v}}{\tau}$, then there exists a type $\it M$ such that 
	$\itJugValue[\Pi_\val v]{\emptyset}{\val v}{\it M}$ and
	$\itJug[\Pi_p]{\Gamma, x : \it M}{p}{\tau}$.
\end{restatable}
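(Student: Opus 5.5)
We prove the statement by structural induction on the term $p$, with a case analysis on the last rule of $\Pi$. The induction is strengthened to state that the context of $\Pi_p$ is exactly $\Gamma$ extended with $x : \it M$; this is consistent because $\val v$ is closed, so every context in $\Pi$ is a context for the free variables of $p$ other than $x$. Throughout, the type $\it M$ is built as the union of the types assigned to the occurrences of $\val v$ that $\Pi$ creates. The multiplicative discipline guarantees that each subderivation of $\Pi$ contributes its own uses of $\val v$. Closedness of $\val v$ means that the derivations $\Pi_{\val v}$ live in the empty context, so context sums involving them are trivially defined.

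\textbf{Base cases.} If $p = x$, then $p\sub{x}{\val v} = \val v$ and $\tau$ is a value type. We take $\it M = \tau$ and $\Pi_{\val v} = \Pi$, and $\Pi_p$ is the rule \ruleNameVar\ with $x : \it M$. Here $\Gamma$ is empty because $\val v$ is closed; this can be checked by a routine lemma saying that a derivation for a closed value only mentions variables with nonempty types if they occur free. If $p$ is another variable $y \neq x$ or an integer, then $p\sub{x}{\val v} = p$. We take $\it M = \itSetEmpty$; the value $\val v$ receives the empty type by the intersection rule with an empty index set (the degenerate case of \Cref{lem:IntersectionLemma}, or directly by the typing rule for empty value types), and $\Pi_p = \Pi$.

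\textbf{Inductive cases.} For each rule with premises $\Pi^1,\dots,\Pi^k$ typing the immediate subterms $p_j\sub{x}{\val v}$ in contexts $\Gamma_j$ with $\Gamma = +_j \Gamma_j$, the induction hypothesis gives types $\it M_j$, derivations $\emptyset \vdash \val v : \it M_j$, and derivations of $\Gamma_j, x : \it M_j \vdash p_j : \tau_j$. We set $\it M = \bigcup_j \it M_j$. Then \Cref{lem:IntersectionLemma} produces $\Pi_{\val v}$ for $\val v$ of type $\it M$. Reapplying the same rule gives $\Pi_p$ in the context $+_j(\Gamma_j, x:\it M_j) = \Gamma, x : \it M$. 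The side conditions of the rule must be untouched by substitution:
\begin{itemize}
\item leaf replacement only refers to types;
\item handler compatibility $\handlerCompat$ is preserved and reflected by substitution of a closed value, since $\comput t\sub{x}{\val v} = \comput u\sub{x}{\val v}$ follows from $\comput t = \comput u$;
\item binders (abstraction, let, \ruleNameFixBase/\ruleNameFixRec, handler clauses) can be assumed by $\alpha$-conversion not to capture $x$, which never occurs in the closed $\val v$.
\end{itemize}
Rules with a family of premises indexed by $\rangeI$ (\ruleNameEff, \ruleNameHandlerSigma, value set types, \ruleNameCase) are handled identically, with a union indexed over all premises. In \ruleNameFixBase the premise requires the recursive variable to have type $\itSetEmpty$; this is unaffected, since that variable is distinct from $x$.

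\textbf{Main obstacle: partiality of $\cup$.} The union $\bigcup_j \it M_j$ is defined only if no two integer types $\itSet{\typeInt m}$ and $\itSet{\typeInt n}$ with $m \neq n$ are collected. If $\val v = \int n$, every nonempty type for $\val v$ is $\itSet{\typeInt n}$, so the union is defined. If $\val v$ is an abstraction or fixpoint, its types are sets of arrow types, which always unite. To make this clean, I would first prove a small inversion lemma. It states that any value type derivable for a closed value is either empty, or $\itSet{\typeInt n}$ with $\val v = \int n$, or a set of arrow types. With this lemma, \Cref{lem:IntersectionLemma} applies in every case, and the induction closes.
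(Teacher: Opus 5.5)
Your overall plan matches the paper's. You do a case analysis on the last rule of $\Pi$ and apply the induction hypothesis to the premises. You take $\it M$ as the union of the collected types, which is defined because $\val v$ is closed (exactly \Cref{lem:SumClosedValues}). You assemble $\Pi_{\val v}$ with the Intersection Lemma, use context simplification for $p=x$, and use the empty type for unrelated variables and integers.

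\textbf{The gap is the induction measure.} Structural induction on $p$ does not give an induction hypothesis for every premise. In {\ruleNameFixRec}, the second premise types $(\fix{y}{\val w})\sub{x}{\val v}$, which is $p\sub{x}{\val v}$ itself. In {\ruleNameHandlerSigma}, the handler premises type $\handler h\sub{x}{\val v}$, and $\handler h$ may already contain the clause being typed, in which case $\handler h$ coincides with $p$. The paper insists on this possibility, since it is how several uses of one clause get typed. Your generic step about ``premises typing the immediate subterms'' is therefore false for exactly these two rules. The repair is to induct on the derivation $\Pi$, generalising over $p$, as the paper's appendix proof does. Each premise is then a strictly smaller derivation whatever term it types, and the rest of your argument goes through unchanged.

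\textbf{Two smaller corrections.}

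First, the side conditions of {\ruleNameHandlerRet} and {\ruleNameHandlerSigma} need \emph{reflection}: from $\{\sigma(y;r)\mapsto\comput t\sub{x}{\val v}\}\handlerCompat\handler h\sub{x}{\val v}$ you must obtain $\{\sigma(y;r)\mapsto\comput t\}\handlerCompat\handler h$. The reason you give ($\comput t=\comput u$ implies $\comput t\sub{x}{\val v}=\comput u\sub{x}{\val v}$) proves only preservation. Reflection cannot come from injectivity of substitution, because that fails: $\ret{x}\sub{x}{\val v}=\ret{\val v}\sub{x}{\val v}$. Reflection does hold here, because the paper stipulates that a handler's clauses are for different operations, which makes the compatibility condition automatic for the handler $p$. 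The paper takes the same bi-implication for granted in \Cref{lem:HandlerComputIndipendence}.

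Second, there is no dedicated typing rule for the empty value type. That a closed value gets $\itSetEmpty$ in the empty context is a short induction on the value (\Cref{lem:TypabilityValues}). The $I=\emptyset$ instance of \Cref{lem:IntersectionLemma}, as stated, also yields it.
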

\begin{proof}[Proof Sketch]
In this case, it is convenient to build the proof as an induction on the structure of $p$. The base case is therefore the one in which $p$ is a variable, and in that case the thesis follows trivially from the hypothesis concerning $\val v$. The most interesting inductive cases, once again, are those concerning typing rules with at least two premises. In such cases, Lemma \ref{lem:IntersectionLemma} is crucial, and its hypotheses are satisfied precisely because the value $\val v$ is closed and therefore must necessarily be either an abstraction or a ground value. This ensures that the types we assign to $\val v$ are mutually compatible.
\end{proof}

The last auxiliary lemma we consider is about evaluation contexts, and can be proved itself by induction on the structure of derivations. It tells
us that from any type derivation for a closed term in the form $\ctxtEval<\comput t>$ one can extract one for $\comput t$.
\begin{restatable}[\HEBI, Context Typing]{lemma}{hebiContextTyping}
	\label{lem:ContextTyping}%
	Let $\comput t \in \setComput$ and 
	$\ctxtEval$ a computation context such that 
	$\itJugComput[\Pi]{\emptyset}{\ctxtEval<\comput t>}{\it E}$.
	Then there exists an \HEBI\ type $\it F$ 
	such that $\itJugComput{\emptyset}{\comput t}{\it F}$ and 
	for every $\comput u$ such that $\itJugComput{\emptyset}{\comput u}{\it F}$, then it exists
	$\itJugComput[\Xi]{\emptyset}{\ctxtEval<\comput u>}{\it E}$.
\end{restatable}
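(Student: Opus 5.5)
We proceed by induction on the structure of the evaluation context $\ctxtEval$, strengthening the statement slightly so that it holds for arbitrary contexts $\Gamma$, not just the empty one. The typing rules for the context constructors are then inverted in the inductive step.

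\emph{Base case.} If $\ctxtEval = \Hole$, take $\it F = \it E$. The first claim is the hypothesis, and the second is immediate because $\ctxtEval<\comput u> = \comput u$.

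\emph{Case $\ctxtEval = \handle{\handler h}{\ctxtEval'}$.} The last rule of $\Pi$ must be {\ruleNameHandle}, so $\Pi$ splits into
$\itJugHandler{\Gamma_1}{\handler h}{\it E' \Rightarrow \it E}$ and
$\itJugComput{\Gamma_2}{\ctxtEval'<\comput t>}{\it E'}$, with $\Gamma = \Gamma_1+\Gamma_2$.
Applying the induction hypothesis to the second premise gives a type $\it F$ with $\comput t : \it F$. It also guarantees that any $\comput u$ with $\comput u : \it F$ satisfies $\ctxtEval'<\comput u> : \it E'$. Reapplying {\ruleNameHandle} with the unchanged handler derivation yields $\ctxtEval<\comput u> : \it E$. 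Since we are ultimately in the closed setting, $\Gamma_1=\Gamma_2=\emptyset$; in general the sum is reconstructed exactly as before, because the handler derivation is reused verbatim.

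\emph{Case $\ctxtEval = \letin{x}{\ctxtEval'}{\comput s}$.} This is the main obstacle, because rule {\ruleNameLetin} is not a simple two-premise rule. As described in the discussion of leaf replacement, its premises are:
\begin{itemize}
\item a derivation $\ctxtEval'<\comput t> : \it E'$;
\item a family of derivations $x : \it M_i \vdash \comput s : \it G_i$ for $i\in I$;
\item the side condition $\it E' \replaceLeaf{\it M_i \rightarrow \it G_i}[\rangeI] \it E$.
\end{itemize}
The key observation is that the leaf-replacement premise and the derivations for $\comput s$ depend only on the \emph{type} $\it E'$, not on the term $\ctxtEval'<\comput t>$. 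We therefore apply the induction hypothesis to the first premise and obtain $\it F$. Given any $\comput u : \it F$, we get $\ctxtEval'<\comput u> : \it E'$, and we reassemble the conclusion with {\ruleNameLetin} using the same family for $\comput s$ and the same leaf replacement. Contexts are summed exactly as in $\Pi$, and since $x$ is bound the $\Gamma$-components of the $\comput s$ premises are unaffected. The only thing to verify carefully is that the rule, as given in \Cref{def:EffectTreeTypingRules}, has this shape, with the type of the bound computation appearing only through $\it E'$. If instead the rule were presented with several premises for the bound computation, we would first need to check that typing is syntax-directed enough for inversion to produce a single derivation for $\ctxtEval'<\comput t>$. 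I expect it does, since computation types are behavioral trees rather than sets, so no Splitting Lemma style argument should be needed there.

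Finally, we specialize $\Gamma=\emptyset$ to recover the statement.
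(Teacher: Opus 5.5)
Your proof is correct and follows the paper's own argument. It proceeds by induction on the evaluation context: $\mathcal F=\mathcal E$ for the hole, and in the $\mathsf{with}$--$\mathsf{handle}$ and $\mathsf{let}$ cases it inverts the last rule, applies the induction hypothesis to the unique premise containing the hole, and reassembles while reusing the remaining premises verbatim (the handler derivation, resp.\ the leaf replacement and the body derivations).

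The strengthening to arbitrary contexts is unnecessary, and as stated it is imprecise: you never fix the context in which $u$ is typed, and $+$ is partial. Since $+$ is pointwise union, an empty conclusion context forces the hole-containing premise to have empty context as well, so the closed statement is already preserved by the induction, which is how the paper uses it.
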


Using the previous lemmas, we may now show that \HEBI\ typing 
is preserved along reduction and expansion. These properties
are particularly important because they certify that \HEBI\ types
express properties that are stable by reduction,
but also that  these properties are precise enough to be passed on to
any expansion. Thanks to \Cref{lem:ContextTyping}, 
it suffices to consider the seven rewrite rules presented in \Cref{def:RewriteRules}.

\begin{lemma}[\HEBI, Subject Reduction and Expansion] 
	\label{lem:ElementarySubjectReduction}%
	Let $\comput t, \comput s \in \setComput$ closed terms such that $\comput t \reductArr\comput s$. There
	exists
	$\itJugComput[\Pi]{\emptyset}{\comput t}{\it E}$
	if and only if there exists
	$\itJugComput[\Xi]{\emptyset}{\comput s}{\it E}$.
\end{lemma}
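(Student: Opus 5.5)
First, I reduce to root steps. By Definition~\ref{def:Reduction}, $\comput t=\ctxtEval<\comput u>$ and $\comput s=\ctxtEval<\comput r>$ with $\comput u\reductArr'\comput r$.

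For the forward direction, Lemma~\ref{lem:ContextTyping} applied to $\Pi$ yields a type $\it F$ with $\itJugComput{\emptyset}{\comput u}{\it F}$, and it plugs any $\comput r$ of type $\it F$ back into $\ctxtEval$ at type $\it E$. The backward direction needs the symmetric fact: from a typing of $\ctxtEval<\comput r>$ we extract $\it F$ for $\comput r$ and plug $\comput u$ back. Lemma~\ref{lem:ContextTyping} is stated for an arbitrary computation in the hole, so it applies to $\ctxtEval<\comput r>$ just as well. It therefore suffices to prove, for each of the seven rules of \Cref{def:RewriteRules}, that $\itJugComput{\emptyset}{\comput u}{\it F}$ holds iff $\itJugComput{\emptyset}{\comput r}{\it F}$ does. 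I will do this by inversion on the last rules of the derivation.

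The rules $\reductBeta$, $\reductLetRet$ and $\reductHdlRet$ are handled by Lemma~\ref{lem:SubstitutionLemma} forward and Lemma~\ref{lem:antiSubstLemma} backward.
\begin{itemize}
\item For $\reductBeta$, inversion of \ruleNameApp{} and \ruleNameAbs{} gives $x:\it M\vdash\comput t:\it F$ with the argument of type $\it M$. In the \ruleNameApp{} rule the function has a singleton set type, so exactly one arrow is selected. Backward, anti-substitution supplies $\it M$, and we rebuild with \ruleNameAbs{}.
\item For $\reductLetRet$, the type of $\ret{\val v}$ is $\itEffectReturn{\it M}$. Its leaf replacement by a single $\it G$ must equal $\it F$, so the \ruleNameLetin{} premise types $\comput t$ at $\it F$ under $x:\it M$.
\item For $\reductHdlRet$, \ruleNameHandlerRet{} types the return clause at $\it F$ under $x:\it M$, and compatibility guarantees that this is the clause actually used by the reduction.
\end{itemize}

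The rules $\reductCase$ and $\reductFix$ are more direct.
\begin{itemize}
\item For $\reductCase$, the value is typed $\itSet{\typeInt n}$ and only branch $n$ is typechecked. I expect the rule to type only the selected branch, as is needed for expansion; if it typed all branches, expansion would fail for untypable dead branches.
\item For $\reductFix$, forward, the \ruleNameFixRec{}/\ruleNameFixBase{} derivation for $\fix{x}{\val v}$ at the arrow used gives $x:\it N\vdash \val v$, with $\fix{x}{\val v}:\it N$ obtained from the subderivations. Substitution then gives the unfolded value. Backward, anti-substitution on $\val v\sub{x}{\fix{x}{\val v}}$ gives $\it N$ with $\fix{x}{\val v}:\it N$, and \ruleNameFixRec{} closes.
\end{itemize}

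For $\reductLetEff$, the argument is a commutation property of leaf replacement. I will prove a small auxiliary lemma: $\itEffect{\sigma}{\it M}{\it N_i\rightarrow\it E_i}_{\rangeI}\replaceLeaf{\dots}\it G$ holds iff $\it G=\itEffect{\sigma}{\it M}{\it N_i\rightarrow\it G_i}_{\rangeI}$ with each $\it E_i\replaceLeaf{\dots}\it G_i$ using a partition of the leaf family. I read this off the inductive definition in \Cref{def:HandlerCompatAndLeafReplacement}. Contexts are then redistributed using context sum and Lemmas~\ref{lem:Splitting} and~\ref{lem:IntersectionLemma}, since $x$ and the return leaves are split across the children.

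The main obstacle is $\reductHdlEff$, especially expansion. Forward, inversion of \ruleNameHandle{} and \ruleNameHandlerSigma{} gives the following:
\begin{itemize}
\item $x:\it M; r:\itSet{\it N_i\rightarrow\it E_i}\vdash\comput t:\it E$;
\item $\handler h:\it F_i\Rightarrow\it E_i$ for each $i$;
\item from \ruleNameEff{}, $\val v:\it M$ and $y:\it N_i\vdash\comput u:\it F_i$.
\end{itemize}
Building $\handle{\handler h}{\comput u}:\it E_i$ and abstracting gives $\abs{y}{\handle{\handler h}{\comput u}}:\itSet{\it N_i\rightarrow\it E_i}$, by Lemma~\ref{lem:IntersectionLemma}. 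We then substitute twice, using Lemma~\ref{lem:SubstitutionLemma}.

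Backward, we apply Lemma~\ref{lem:antiSubstLemma} twice to obtain $\it M$ for $\val v$ and a set type for the continuation. By inverting \ruleNameAbs{} and \ruleNameHandle{} on each element, this set type has the form $\itSet{\it N_i\rightarrow\it E_i}_{\rangeI}$, with $\handler h:\it F_i\Rightarrow\it E_i$ and $y:\it N_i\vdash\comput u:\it F_i$. We reassemble these with \ruleNameEff{} and \ruleNameHandlerSigma{}.

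The delicate point is the compatibility premise $\{\effectClause{x}{r}{\comput t}\}\handlerCompat\handler h$. It holds because the clause comes from $\handler h$ itself, so the set union is simply $\handler h$.

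A second delicate point arises when $r$ is unused, i.e.\ $I=\emptyset$. Then \ruleNameEff{} must accept $\comput u$ with no typings at all, and the handler premise family is empty. This matches the design motivated in \Cref{sect:highlevel}, and I will check it explicitly against the rules.
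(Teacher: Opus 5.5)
Your overall plan is the paper's. You lift from root steps with \Cref{lem:ContextTyping}, applied to $E\langle u\rangle$ forward and to $E\langle r\rangle$ backward. You then treat the seven rules by inversion, using \Cref{lem:SubstitutionLemma} forward and \Cref{lem:antiSubstLemma} backward. This is exactly how the appendix proves \Cref{lem:ElementaryElSubjectReduction} and \Cref{lem:ElementarySubjectExpansion}.

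One step, however, does not follow by inversion as you claim. In the forward $\reductHdlEff$ case, the last rule typing the handler $h$ of the redex is $(\mathtt{hdlr}_\sigma)$. Its conclusion is $\{c\}\cup h'$, with $c$ the $\sigma$-clause. Its premises type only the residual $h'$: you get $\vdash h' : F_i\Rightarrow E_i$ together with $\{c\}\mathrel{\#}h'$. Nothing forces $h'=h$; a derivation may take $h'=h\setminus\{c\}$. But the reduct substitutes $\lambda y.\,\mathsf{with}\,h\,\mathsf{handle}\,u$ for $r$, with the full handler, so you need $\vdash h : F_i\Rightarrow E_i$.

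The missing ingredient is Handler Weakening (\Cref{lem:HandlerWeakening}): from $\Gamma\vdash h' : F\Rightarrow E$ and $\psi\mathrel{\#}h'$, derive $\Gamma\vdash \psi\cup h' : F\Rightarrow E$. It is proved by induction on the handler derivation. Consider a $(\mathtt{hdlr}_{\mathtt{ret}})$ or $(\mathtt{hdlr}_\sigma)$ node with clause $c'$ and residual $h''$. The assumption $\psi\mathrel{\#}h'$ gives $\psi\mathrel{\#}h''$, so the induction hypothesis weakens the premises. It also gives $c'\mathrel{\#}\psi\cup h''$, since if $\psi$ and $c'$ handle the same operation they coincide. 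With this lemma, the forward case goes through as you describe.

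Everything else matches the paper's case analysis.
\begin{itemize}
\item In the backward direction you justify the compatibility premise by noting the clause is taken from $h$, which has one clause per operation. This is the same fact the paper expresses as ``$t$ is not a clash'' (\Cref{lem:CompatImpliesClashFree}). It is needed for $\reductHdlRet$ as well as $\reductHdlEff$.
\item The splitting and intersection lemmas are not needed in $\reductLetEff$, nor for typing $\lambda y.\,\mathsf{with}\,h\,\mathsf{handle}\,u$. All outer contexts are empty for closed terms, and $(\mathtt{abs})$ already takes a whole family of premises. The only bookkeeping is context simplification for $y$, which does not occur in $u$ or $h$.
\item In the forward $\reductFix$ case under $(\mathtt{fixb})$, the typing $\vdash\mathsf{fix}\,x.v:\{\}$ comes from typability of values (\Cref{lem:TypabilityValues}), not from a subderivation. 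For the same reason, closing the backward case with $(\mathtt{fixr})$ is fine even when $N=\{\}$.
\item The case $I=\emptyset$ is unproblematic, since $(\mathtt{eff})$ and $(\mathtt{hdlr}_\sigma)$ both accept empty premise families.
\end{itemize}
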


{We can also prove, by induction on the derivation, the following Progress Lemma.}

\begin{restatable}[\HEBI, Progress]{lemma}{hebiProgress}
	 \label{lem:HEBIProgress}
    Let $\comput t\in \setComput$ be a closed term such that $\itJugComput[\Pi]{\emptyset}{\comput t}{\it E}$. Then either there exists $\comput s \in \setComput$ such that $\comput t \reductArr \comput s$, or $\comput t$ is a normal form computation, that is, either $\comput t = \ret{\val v}$ or $\comput t = \effect{\val v}{x}{\comput u}$ for some $\val v \in \setVal$ and $\comput u \in \setComput$.
\end{restatable}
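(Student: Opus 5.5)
We argue by induction on the derivation $\Pi$ of $\itJugComput{\emptyset}{\comput t}{\it E}$, or equivalently by case analysis on the last rule together with induction on the structure of $\comput t$. The key auxiliary fact is a \emph{canonical forms} property for closed values, which we establish first.
\begin{itemize}
\item If $\itJugValue{\emptyset}{\val v}{\itSet{\it M_i \rightarrow \it E_i}_{\rangeI}}$ with $I\neq\emptyset$, then $\val v$ is an abstraction or a fixpoint. It cannot be an integer, since integers only receive $\itSetEmpty$ or $\itSet{\typeInt n}$, and it cannot be a variable, since the context is empty.
\item If $\itJugValue{\emptyset}{\val v}{\itSet{\typeInt n}}$, then $\val v = \int n$. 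For an abstraction or a fixpoint, the rules {\ruleNameAbs}, {\ruleNameFixBase} and {\ruleNameFixRec} only produce function types or $\itSetEmpty$.
\end{itemize}

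With this in hand, we go through the shape of $\comput t$.
\begin{itemize}
\item If $\comput t = \ret{\val v}$ or $\comput t = \effect{\val v}{x}{\comput u}$, then $\comput t$ is already a normal form.
\item If $\comput t = \app{\val v}{\val w}$, the rule {\ruleNameApp} types $\val v$ with a value type containing some arrow $\it M \rightarrow \it E$. By canonical forms, $\val v$ is then $\abs{x}{\comput s}$, which reduces by $\reductBeta$, or $\fix{x}{\val v'}$, which reduces by $\reductFix$.
\item If $\comput t = \case{\val v}{\comput t_1,\ldots,\comput t_m}$, rule {\ruleNameCase} assigns $\val v$ a type $\itSet{\typeInt n}$ and selects the branch $\comput t_n$. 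For this we must check that the rule forces $0<n\le m$; presumably the rule types exactly the selected branch. Then $\val v=\int n$ and $\reductCase$ applies.
\end{itemize}

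The two remaining shapes use the induction hypothesis on the premise typing the inner computation.
\begin{itemize}
\item If $\comput t = \letin{x}{\comput t'}{\comput u}$, rule {\ruleNameLetin} provides $\itJugComput{\emptyset}{\comput t'}{\it F}$. The context stays empty because the conclusion context is a sum of premise contexts, and summing to $\emptyset$ forces each summand to be empty on every variable. By induction, either $\comput t'\reductArr\comput s'$, and then $\comput t$ reduces since $\letin{x}{\Hole}{\comput u}$ is an evaluation context, or $\comput t'$ is a $\constructor{return}$, so $\reductLetRet$ fires, or $\comput t'$ is an operation, so $\reductLetEff$ fires.
\item If $\comput t = \handle{\handler h}{\comput t'}$, we likewise get $\itJugHandler{\emptyset}{\handler h}{\it F\Rightarrow\it E}$ and $\itJugComput{\emptyset}{\comput t'}{\it F}$. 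If $\comput t'$ reduces, we close under the evaluation context. If $\comput t'=\ret{\val v}$, every handler has a return clause, so $\reductHdlRet$ applies.
\end{itemize}

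The main obstacle is the case $\comput t' = \effect{\val v}{y}{\comput u}$, where we must show that $\handler h$ contains a clause for $\sigma$. Such a term can only be typed by {\ruleNameEff}, so $\it F = \itEffect{\sigma}{\it M}{\it N_i\rightarrow\it F_i}_{\rangeI}$ with root $\sigma$. The handler judgment must therefore end with {\ruleNameHandlerSigma}, since {\ruleNameHandlerRet} only handles types of the form $\itEffectReturn{\it M}$. That rule exhibits the handler as $\{\effectClause{x}{r}{\comput s}\}\cup \handler h'$, so the clause for $\sigma$ belongs to $\handler h$ and $\reductHdlEff$ applies. The point to verify here is that no other rule, such as a weakening-like rule, can conclude a handler type with a $\sigma$-root without exhibiting the clause. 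This is a matter of inspecting \Cref{def:EffectTreeTypingRules}.
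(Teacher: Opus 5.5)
Your proof is correct and essentially matches the paper's: induction on $\Pi$ with case analysis on $\comput t$; inversion on the premises for application and case analysis (and {\ruleNameCase} does carry the explicit premise $0<m\leq n$); the induction hypothesis closed under the evaluation contexts $\letin{x}{\Hole}{\comput u}$ and $\handle{\handler h}{\Hole}$; and inversion to {\ruleNameHandlerSigma}, which is the only handler rule besides {\ruleNameHandlerRet}, to exhibit the $\sigma$-clause. The paper has one extra step: it also uses that rule's premise $\{\effectClause{x}{r}{\comput s}\}\handlerCompat\handler h'$ to exclude the conflicting-clause ``clash'' listed in its appendix, whereas you apply $\reductHdlEff$ as soon as a $\sigma$-clause is in $\handler h$, which is all the rule as stated requires.
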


\subsection{Reducibility, Reachability, and Termination}
\label{sect:reducibility}
Thanks to Lemma \ref{lem:ElementarySubjectReduction} and to the fact that values can all receive
the empty intersection type, we can conclude that all terminating terms are themselves typable.
The converse inclusion, however, is not a direct consequence of Lemma \ref{lem:ElementarySubjectReduction}
and, as usual, requires an argument of a higher logical complexity, which we are going to describe.

To precisely characterize the properties of \HEBI\ types,
we define a \textbf{reducibility set} for each such type.
The reducibility set of a type $\tau$ corresponds to 
the set of terms that are expected to \emph{behave according} to $\tau$, although not necessarily typable.
For example, the reducibility set of a value type $\it M$ 
consists of values that can be used as intended by all types in $\it M$, while
the reducibility set of a computation type $\it E$ includes all
of computations that behave according to $\it E$, i.e., that perform the
operations specified in $\it E$ in such a way that the continuations can receive
the types as from $\it E$. This definition is intended to ensure that the behavior of terms is 
preserved even when they are encapsulated within an evaluation context,
provided that the context is well-formed.
Reducibility sets of types, contexts and judgements are defined as follows, by induction.

\begin{definition}[Reducibility Sets]
	\label{def:ReducibilityCandidates}
	\input{Definitions/etReducibilityCandidates.tex}
\end{definition}
Some of the inductive cases in \Cref{def:ReducibilityCandidates} are standard. In particular, the treatment of 
values is exactly the same as in traditional intersection types, in which the Tait's reducibility
is well known to work well. Functional values, in particular, are handled by universally quantifying over the elements of the reducibility set of the argument type, requiring that the corresponding application is itself reducible. The treatment of computations and handlers, on the other hand, is novel and deserves some discussion. While the reducibility set of the type $\itEffectReturn{\it M}$ is defined in a very natural way, that of the type $\itEffect{\sigma}{\it M}{\it N_i \rightarrow \it E_i}_{\rangeI}$ 
may appear less conventional: for a term to belong to it, it is required not only that the term reduces to one 
in the appropriate canonical form, but also that the continuation is itself reducible seen as an open term. It is also noteworthy that the reducibility set $\redCand{\it E \Rightarrow \it F}$, whose elements are handlers, behaves structurally analogously to a genuine function type, where, however, the role of term application is played by the construct $\mathsf{with}$.

The first result on reducibility sets we present concerns type contexts, and states that the operator $+$ on type contexts behaves exactly as expected, namely that it gets interpreted as set-theoretic intersection.
\begin{remark}[Context Reducibility] 
	\label{rem:ContextReducibility}%
	If $\vec{\val v} \in \redCand{\Gamma + \Delta}$
	then $\vec{\val v} \in \redCand{\Gamma} \cap \redCand{\Delta}$.
\end{remark}

Another important step is that reducibility sets are stable under reduction and expansion, that is, that a \emph{saturation} property holds.
\begin{restatable}[Reducibility Expansion and Reduction]{lemma}{reducibilityExpansion}
	 \label{lem:ReducibilityExpansionReduction}%
    If $\comput t \reductArr \comput s$ then 
    $\comput s \in \redCand{\it E}$ iff $\comput t \in \redCand{\it E}$.
\end{restatable}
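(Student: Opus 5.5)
The proof is by case analysis on the form of $\it E$, using the definition of reducibility sets (\Cref{def:ReducibilityCandidates}) together with the fact, noted after \Cref{def:Reduction}, that reduction on closed computations is deterministic. The definitions of $\redCand{\itEffectReturn{\it M}}$ and $\redCand{\itEffect{\sigma}{\it M}{\it N_i \rightarrow \it E_i}_{\rangeI}}$ have the same shape. A closed computation belongs to the set iff it reduces, via $\reductArr*$, to a term of a prescribed canonical form, namely $\ret{\val v}$ or $\effect{\val v}{y}{\comput u}$, whose components satisfy a reducibility condition: $\val v \in \redCand{\it M}$, and in the effect case also, for each $\rangeI$ and each $\val w \in \redCand{\it N_i}$, $\comput u\sub{y}{\val w} \in \redCand{\it E_i}$. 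The essential point is that membership depends only on the canonical form reached, not on how many steps it takes to get there.

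For the direction from $\comput s$ to $\comput t$, suppose $\comput s \in \redCand{\it E}$, so $\comput s \reductArr* \comput c$ with $\comput c$ canonical and satisfying the component conditions. Prepending the step $\comput t \reductArr \comput s$ gives $\comput t \reductArr* \comput c$ with the same witness $\comput c$. Hence $\comput t \in \redCand{\it E}$ with no further work.

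For the direction from $\comput t$ to $\comput s$, suppose $\comput t \reductArr* \comput c$ with $\comput c$ canonical. Canonical terms $\ret{\val v}$ and $\effect{\val v}{y}{\comput u}$ are irreducible, since no rule of \Cref{def:RewriteRules} applies and neither form is $\ctxtEval<\comput u'>$ for a nontrivial $\ctxtEval$. So $\comput t \neq \comput c$, because $\comput t$ reduces. The reduction sequence therefore has a first step $\comput t \reductArr \comput s'$, and determinism gives $\comput s' = \comput s$. Thus $\comput s \reductArr* \comput c$, with the same witness and the same component conditions, and $\comput s \in \redCand{\it E}$.

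The step needing the most care is the determinism claim, in particular the uniqueness of the decomposition $\comput t = \ctxtEval<\comput u>$ with $\comput u$ a redex. I would check it by induction on $\comput t$. For a let, the only contexts are $\Hole$ and $\letin{x}{\ctxtEval}{\comput t'}$. The root is a redex ($\reductLetRet$ or $\reductLetEff$) exactly when the bound computation is canonical, and canonical terms have no internal redex, so the two possibilities cannot overlap. The handling construct $\handle{\handler h}{\cdot}$ is analogous. For $\reductHdlEff$ one also needs a single clause for each $\sigma$; this is guaranteed by the grammar of handlers, which has distinct operations, plus one return clause.

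If the reducibility definition instead quantifies over all reduction sequences, or states membership as ``for all $\comput c$ normal with $\comput t \reductArr* \comput c$'', the same two observations still settle it, because the sequence from $\comput t$ is unique and passes through $\comput s$. The case of handler or value types does not arise, since the statement concerns computations.
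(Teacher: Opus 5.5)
Your proposal is correct and follows the same route as the paper's proof. For expansion you prepend the step $\comput t \reductArr \comput s$ to the witnessing sequence, and for reduction you use determinism of $\reductArr$ to conclude that the unique sequence from $\comput t$ to the canonical form passes through $\comput s$. You are more careful than the paper on two points it leaves implicit: that $\ret{\val v}$ and $\effect{\val v}{y}{\comput u}$ are irreducible (so the sequence from $\comput t$ has at least one step), and why reduction is deterministic, including uniqueness of the matching clause in $\reductHdlEff$.
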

\noindent
Its proof follows from the fact that reduction is deterministic: for every $\comput t$ there is \emph{at most} one term $\comput s$ such that $\comput t \reductArr \comput s$.
The final step establishes the connection between typability and membership in reducibility sets.
\begin{restatable}[Fundamental Lemma]{lemma}{typableReducibility} 
	\label{lem:TypableReducibility}%
	If $\itJug[\Pi]{\Gamma}{p}{\tau}$ then $p \in \redCand{\Gamma \vdash \tau}$.
\end{restatable}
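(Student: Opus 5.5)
We prove the statement by induction on the derivation $\Pi$, after unfolding $\redCand{\Gamma \vdash \tau}$ in the usual way. We assume it means: for every closing substitution $\vec{\val v} \in \redCand{\Gamma}$, the term $p\sub{\vec x}{\vec{\val v}}$ belongs to $\redCand{\tau}$. Values lie in the value set, computations in the computation set, and handlers in $\redCand{\it E \Rightarrow \it F}$. In each case we fix $\vec{\val v} \in \redCand{\Gamma}$. When the last rule has several premises with contexts $\Gamma_1 + \dots + \Gamma_k$, we use Remark~\ref{rem:ContextReducibility} to pass $\vec{\val v}$ to each induction hypothesis.

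\textbf{Value rules.} \ruleNameVar\ and \ruleNameInt\ are immediate from the definition of the value reducibility sets.

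For \ruleNameAbs, let $\val w \in \redCand{\it M_i}$. Then $(\abs{x}{\comput t})\val w$ reduces in one $\reductBeta$ step to $\comput t\sub{x}{\val w}$. This term is reducible by the induction hypothesis applied with the extended substitution. Lemma~\ref{lem:ReducibilityExpansionReduction} then closes the case.

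For \ruleNameFixBase\ and \ruleNameFixRec\ we argue by the same induction. The recursive variable receives a type that was derived by a strictly smaller subderivation, so $\fix{x}{\val v}$ is already in $\redCand{\it N}$ by the induction hypothesis. We then use $\reductFix$ together with expansion (Lemma~\ref{lem:ReducibilityExpansionReduction}).

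\textbf{Computation rules.} \ruleNameApp\ follows directly from the definition of the function reducibility set. \ruleNameCase\ uses one $\reductCase$ step followed by expansion. \ruleNameRet\ is immediate. For \ruleNameEff, the term is already in canonical form. Its continuation is reducible as an open term by the induction hypothesis on the premises indexed by $\rangeI$, which is exactly what the reducibility clause for $\itEffect{\sigma}{\it M}{\it N_i \rightarrow \it E_i}_{\rangeI}$ requires.

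For \ruleNameHandle, the handler case is designed so that $\redCand{\it F \Rightarrow \it E}$ maps $\redCand{\it F}$ into $\redCand{\it E}$ via $\constructor{with}$. The result therefore follows from the induction hypotheses on both premises.

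\textbf{Handler rules.} \ruleNameHandlerRet\ is handled by $\reductHdlRet$ and expansion.

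\textbf{Main obstacle: \ruleNameLetin.} Given $\comput t\in\redCand{\it E}$ with $\it E\replaceLeaf{\it M_i\rightarrow\it G_i}[\rangeI]\it F$, we must show $\letin{x}{\comput t}{\comput u} \in \redCand{\it F}$. We prove this by an auxiliary induction on the structure of the tree $\it E$.

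If $\it E=\itEffectReturn{\it M_i}$, then $\comput t$ reduces to $\ret{\val w}$ with $\val w$ reducible. The let-binding reduces by $\reductLetRet$, and the induction hypothesis for $\comput u$ applies.

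If $\it E$ is a $\sigma$-node, then $\comput t$ reduces to $\effect{\val w}{y}{\comput t'}$. The let-binding then commutes by $\reductLetEff$, and we apply the auxiliary induction to each reducible continuation $\comput t'$.

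\textbf{Second obstacle: \ruleNameHandlerSigma.} We must show that $\handle{\handler h}{\comput s} \in \redCand{\it E}$ whenever $\comput s\in\redCand{\itEffect{\sigma}{\it M}{\it N_i \rightarrow \it F_i}_{\rangeI}}$.

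The term $\comput s$ reduces to $\effect{\val w}{y}{\comput s'}$, and one $\reductHdlEff$ step yields
\[
\comput t\sub{x}{\val w}\sub{r}{\abs{y}{\handle{\handler h}{\comput s'}}}.
\]
Handler compatibility ensures that the clause fired is exactly the typed one. The premises give $\handler h\in\redCand{\it F_i\Rightarrow\it E_i}$, and $\comput s'$ is reducible for inputs in $\redCand{\it N_i}$. Hence the delimited continuation lies in $\redCand{\itSet*{\it N_i\rightarrow\it E_i}_{\rangeI}}$. The induction hypothesis on the clause body, together with expansion, then concludes.

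The delicate point is making this argument uniform over the substitution $\vec{\val v}$ of the free variables of $\handler h$. This requires Remark~\ref{rem:ContextReducibility} applied to the summed contexts of all premises.
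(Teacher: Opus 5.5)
Apart from \ruleNameHandlerSigma, your cases follow the paper's proof. You use the same induction on $\Pi$, the same appeals to Remark~\ref{rem:ContextReducibility} and Lemma~\ref{lem:ReducibilityExpansionReduction}, and essentially the same auxiliary induction on the leaf-replacement tree for \ruleNameLetin.

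In \ruleNameHandlerSigma, however, you conflate two different handlers. The conclusion types the full handler $\{\effectClause{x}{r}{\comput t}\}\cup\handler h$. The premises $\itJugHandler{\Gamma_i}{\handler h}{\it F_i\Rightarrow\it E_i}$ concern only the residual $\handler h$, which need not contain the $\sigma$-clause at all; the side condition $\{\effectClause{x}{r}{\comput t}\}\handlerCompat\handler h$ allows both situations. The $\reductHdlEff$ step reinstalls the \emph{full} handler, producing the continuation $\abs{y}{\handle{(\{\effectClause{x}{r}{\comput t}\}\cup\handler h)}{\comput s'}}$. To place this in $\redCand{\itSet*{\it N_i\rightarrow\it E_i}_{\rangeI}}$ you need $\{\effectClause{x}{r}{\comput t}\}\cup\handler h\in\redCand{\it F_i\Rightarrow\it E_i}$. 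The induction hypothesis only gives this for $\handler h$. Unfolding the definition of $\redCand{\it F_i\Rightarrow\it E_i}$ does not bridge the two: with the larger handler, every clause body receives continuations that reinstall the larger handler, so the handled terms are different and membership of $\handler h$ says nothing directly about them.

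The paper fills this with the syntactic Handler Weakening lemma (Lemma~\ref{lem:HandlerWeakening}). It turns each premise derivation into a derivation of $\itJugHandler{\Gamma_i}{\{\effectClause{x}{r}{\comput t}\}\cup\handler h}{\it F_i\Rightarrow\it E_i}$ and applies the induction hypothesis to these new derivations. They are not subderivations of $\Pi$, so a purely structural induction does not license this step. Induct on the size of $\Pi$ instead: weakening changes only the handler term, not the shape of the derivation, so the new derivations are strictly smaller than $\Pi$.

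Also, weaken the \emph{unsubstituted} premise derivations and only afterwards instantiate with $\vec{\val v}$. The paper's text invokes the Substitution Lemma at this point, but values in $\redCand{\Gamma}$ need not be typable, so the substitution cannot be pushed through the derivation first. By comparison, the uniformity in $\vec{\val v}$ that you single out as the delicate point is routine.
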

\begin{proof}[Proof Sketch]
	This goes by induction on the structure of $\Pi$. In some inductive cases, we make use of Lemma~\ref{lem:ReducibilityExpansionReduction}.
\end{proof}
This characterization allows us to establish a strong
connection between reachability and \HEBI\ typability.
\begin{theorem}[\HEBI\ Reachability]
	\label{th:EffectTreeReachability}%
	For every closed $\comput t \in \setComput$, $n\in\setNaturals^+$ it holds that
	$\itJugComput{\emptyset}{\comput t}{\itEffectReturn{\itSet{\typeInt{n}}}}$
	if and only if $\comput t \reductArr* \ret{\int n}$.
\end{theorem}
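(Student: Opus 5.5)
The two directions are proved separately: the "if" direction by subject expansion, the "only if" direction through reducibility.

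\emph{If direction.} Assume $\comput t \reductArr* \ret{\int n}$. First, $\ret{\int n}$ is typable. Rule $\ruleNameInt$ gives $\itJugValue{\emptyset}{\int n}{\itSet{\typeInt n}}$, since the context is empty and nothing needs weakening. Rule $\ruleNameRet$ then gives $\itJugComput{\emptyset}{\ret{\int n}}{\itEffectReturn{\itSet{\typeInt n}}}$. Every term in the reduction sequence is closed, so we can apply the backward direction of \Cref{lem:ElementarySubjectReduction} once per step. Induction on the length of the sequence then transfers the judgement back to $\comput t$.

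\emph{Only if direction.} Assume $\itJugComput{\emptyset}{\comput t}{\itEffectReturn{\itSet{\typeInt n}}}$. The Fundamental Lemma (\Cref{lem:TypableReducibility}) gives $\comput t \in \redCand{\emptyset \vdash \itEffectReturn{\itSet{\typeInt n}}}$. The empty context is interpreted by the empty substitution, which belongs to $\redCand{\emptyset}$. So, once the judgement-level reducibility is unfolded, $\comput t \in \redCand{\itEffectReturn{\itSet{\typeInt n}}}$.

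Next we unfold \Cref{def:ReducibilityCandidates} for return types. I expect it to state that $\comput t \reductArr* \ret{\val v}$ for some $\val v \in \redCand{\itSet{\typeInt n}}$. I also expect the reducibility set of $\itSet{\typeInt n}$, as an intersection containing only the integer type, to be exactly $\{\int n\}$, because the values typed by $\typeInt n$ are just the numeral $\int n$. Together these give $\comput t \reductArr* \ret{\int n}$.

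The main point to check is this last unfolding. The definition of the reducibility set for $\itSet{\typeInt n}$ must really pin down the numeral $\int n$. It must not admit, say, a variable or an arbitrary closed value, so I need to confirm that closedness is built into the reducibility sets. If the definition instead yields only that $\comput t$ reduces to $\ret{\val v}$ with $\val v$ typable at $\itSet{\typeInt n}$, I would argue as follows. By subject reduction, $\ret{\val v}$ is typed by $\itEffectReturn{\itSet{\typeInt n}}$. Inverting rule $\ruleNameRet$ and then the value rules shows that $\val v$ is closed and typed $\itSet{\typeInt n}$. The only value rule producing an integer type is $\ruleNameInt$; abstractions and fixpoints receive function types, and variables cannot occur since $\val v$ is closed. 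Hence $\val v = \int n$.
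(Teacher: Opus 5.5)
Your proof is correct and follows the paper's argument: the Fundamental Lemma plus unfolding of \Cref{def:ReducibilityCandidates} for the ``only if'' direction, and induction on the number of reduction steps using \Cref{lem:ElementarySubjectReduction} for the ``if'' direction. The point you flag is settled by the definition itself, which sets $\redCand{\itSet{\typeInt{n}}} \coloneqq \{\int n\}$, so your fallback inversion argument is not needed.
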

\begin{proof}
	Suppose that $\comput t \in \setComput$ is a closed term and assume $\itJugComput{\emptyset}{\comput t}{\itEffectReturn{\itSet{\typeInt{n}}}}$. From \Cref{lem:TypableReducibility} it follows that $t\in\redCand{\itEffectReturn{\itSet{\typeInt{n}}}}$, from which, by Definition~\ref{def:ReducibilityCandidates}, we deduce that $\comput t \reductArr* \ret{\int n}$. Conversely, if $\comput t \reductArr* \ret{\int n}$, let $m$ be the number of reduction steps. By induction on $m$ one proves that necessarily $\itJugComput{\emptyset}{\comput t}{\itEffectReturn{\itSet{\typeInt{n}}}}$: in the base case this is immediate, while in the inductive step Lemma~\ref{lem:ElementarySubjectReduction} applies.
\end{proof}
The type system \HEBI\ characterizes not only reachability in the sense of Theorem~\ref{th:EffectTreeReachability}, but also termination. In this respect, we say that a closed term $\comput t$ is terminating if the only reduction sequence $t = t_1 \rightarrow t_2 \rightarrow t_3\rightarrow \cdots$ has finite length. Not only are all terms typable in \HEBI\ terminating, as can be readily deduced from Theorem~\ref{lem:TypableReducibility}, but the converse also holds.

%% file: Visuals/itExample.tex
\begin{tabular}{c}
    $

    \hspace{2.5cm}
\begin{tikzpicture}
    [level distance= 22mm, sibling distance= 50mm, 
    edge from parent/.style={-latex}]

\tikzstyle{index}=
    [rectangle, fill=white,
    inner sep=2pt, scale=0.75]
\tikzstyle{C}=[blue]
\tikzstyle{V-}=[draw=blue, text=valueColor]
 
  \node[C] {$ \sigma[\valueColor{\typeInt 3}]$}
    child {node[C] {$\ret*{}[\valueColor{\typeInt 1}]$}
        edge from parent[V-] node[index,V-] {$\typeInt 1$}
    }
    child {node[C] {$ \delta[\valueColor{\itSetEmpty}]$}
        child {node[C] {$\ret*{}[\valueColor{\typeInt 5}]$}
            edge from parent[V-] node[index,V-] {$\itSet{\itSet{\typeInt 3} \rightarrow \itEffectReturn{\typeInt 5}}$}
        }
        child {node[C] {$\delta[\valueColor{\typeInt 3}]$}
            edge from parent[V-] node[index,V-] {$\itSet{\itSet{\typeInt 3} \rightarrow \itEffect{\delta}{\typeInt 3}{}}$}
        }
        edge from parent[V-] node[index,V-] {$\typeInt 2$}
    };
\end{tikzpicture}
\hspace{2.5cm}
    $
\end{tabular}

%% file: Definitions/etType.tex
$
        \begin{array}{c}
            \begin{array}{ccccc}
                \begin{prooftree}
                        \hypo{\it M \in \{\itSetEmpty, \itSet{\typeInt{n}}\}}
                    \typeRuleInt[1]{\itJugValue{\emptyset}{\int n}{\it M}}
                \end{prooftree}
            & \hspace{0.3cm}&
                \begin{prooftree}
                    \typeRuleVar{\itJugValue{x : \it M}{x}{\it M}}
                \end{prooftree}

               &\hspace{0.3cm} &
               \begin{prooftree}
                        \hypo{(\itJugComput{\Gamma_i; x : \it M_i}{\comput t}{\it E_i})_{\rangeI}}
                    \typeRuleAbs{\itJugValue{+_{\rangeI} \Gamma_i}{\abs{x}{\comput t}}{\itSet{\it M_i \rightarrow \it E_i}_{\rangeI}}}
                \end{prooftree}
               
            \end{array} \\[0.5cm]
        
         \begin{array}{ccc}
            \begin{prooftree}
                    \hypo{
                        \itJugValue
                            {\Gamma_1; x : \itSetEmpty}
                            {\val v}
                            {\itSet{\it M_i \rightarrow \it E_i}_{\rangeI}}}
                \typeRuleFixBase{
                    \itJugValue
                        {\Gamma_1}
                        {\fix{x}{\val v}}
                        {\itSet{\it M_i \rightarrow \it E_i}_{\rangeI}}}
            \end{prooftree}& \hspace{0.3cm}&

                \begin{prooftree}
                        \hypo{\itJugValue{\Gamma_1}{\val v}{\itSet{\it M \rightarrow \it E}}}
                        \hypo{\itJugValue{\Gamma_2}{\val w}{\it M}}
                    \typeRuleApp{\itJugComput{\Gamma_1 + \Gamma_2}{\app{\val v}{\val w}}{\it E}}
                \end{prooftree}
            
         \end{array} \\[0.5cm]
            
         \begin{prooftree}
                    \hypo{
                        \itJugValue
                            {\Gamma_1; x : \it N}
                            {\val v}
                            {\itSet{\it M_i \rightarrow \it E_i}_{\rangeI}}}
                    \hypo{
                        \itJugValue
                            {\Gamma_2}
                            {\fix{x}{\val v}}
                            {\it N}}
                \typeRuleFixRec{
                    \itJugValue
                        {\Gamma_1 + \Gamma_2}
                        {\fix{x}{\val v}}
                        {\itSet{\it M_i \rightarrow \it E_i}_{\rangeI}}}
            \end{prooftree}
           
            \\[0.5cm]
            
                \begin{prooftree}
                        \hypo{\itJugComput{\Gamma}{\comput t}{\it E}}
                        \hypo{
                            \it E
                                \replaceLeaf{\it M_i \rightarrow \it G_i}[\rangeI]
                            \it F}
                        \hypo{
                            (\itJugComput
                                {\Gamma_i; x : \it M_i}
                                {\comput u}
                                {\it G_i}
                            )_{\rangeI}}
                    \typeRuleLetin{
                        \itJugComput
                            {\Gamma + (+_{\rangeI} \Gamma_i)}
                            {\letin{x}{\comput t}{\comput u}}
                            {\it F}
                    }
                \end{prooftree}\\[0.5cm]
              
            \begin{array}{ccc}
                
                \begin{prooftree}
                        \hypo{\itJugValue{\Gamma_1}{\val v}{\itSet{\typeInt{m}}}}
                        \hypo{0 < m \leq n}
                       \hypo{\itJugComput{\Gamma_2}{\comput t_m}{\it E}}
                    \typeRuleCase[3]{\itJugComput{\Gamma_1 + \Gamma_2}{\case{\val v}{\comput t_1, \ldots, \comput t_n}}{\it E}}
                \end{prooftree}
                & \hspace{0.3cm}&
                \begin{prooftree}
                        \hypo{\itJugValue{\Gamma}{\val v}{\it M}}
                    \typeRuleRet{\itJugComput{\Gamma}{\ret{\val v}}{\itEffectReturn{\it M}}}
                \end{prooftree}
            \\[0.5cm]
            \end{array}
        \\
            \begin{array}{c}
                \begin{prooftree}
                        \hypo{
                            \itJugValue
                                {\Gamma}
                                {\val v}
                                {\it M}
                        }
                        \hypo{(
                            \itJugComput
                                {\Gamma_i; x : \it N_i}
                                {\comput t}
                                {\it E_i}
                        )_{\rangeI}}
                    \typeRuleEff{
                        \itJugComput
                            {\Gamma + (+_{\rangeI} \Gamma_i)}
                            {\effect{\val v}{x}{\comput t}}
                            {\itEffect{\sigma}{\it M}{\it N_i \rightarrow \it E_i}_{\rangeI}}
                        }
                \end{prooftree}
            \hspace{0.8cm}
                \begin{prooftree}
                    \hypo{\itJugHandler{\Gamma_1}{\handler h}{\it F \Rightarrow \it E}}
                    \hypo{\itJugComput{\Gamma_2}{\comput t}{\it F}}
                    \typeRuleHandle{\itJugComput{\Gamma_1 + \Gamma_2}{\handle{\handler h}{\comput t}}{\it E}}
                \end{prooftree}
            \\[0.5cm]
            \begin{prooftree}
                \hypo{\itJugComput
                    {\Gamma;
                        y : \it M}
                    {\comput t}
                    {\it E}}
                \hypo{\{\retClause{y}{\comput t}\} \handlerCompat \handler h}%
            \typeRuleHandlerRet[2]{\itJugHandler
                {\Gamma}
                {\{\retClause{y}{\comput t}\}\cup \handler h }%
                {\itEffectReturn{\it M} \Rightarrow \it E}
            }%
        \end{prooftree}\\[0.5cm]

        \begin{prooftree}
                        \hypo{%
                            \itJugComput%
                                {\Gamma;
                                    y : \it M;
                                    r : \itSet*{\it N_i \rightarrow \it G_i}_{\rangeI}}%
                                {\comput t}%
                                {\it E}%
                            }%
                        \hypo{(%
                            \itJugHandler%
                                {\Gamma_i}%
                                {\handler h}%
                                {\it F_i \Rightarrow \it G_i}%
                        )_{\rangeI}}%
                        \hypo{\{\effectClause{y}{r}{\comput t}\} \handlerCompat \handler h}%
                    \typeRuleHandlerSigma[3]{%
                        \itJugHandler%
                            {\Gamma + (+_{\rangeI} \Gamma_i)}%
                            {\{\effectClause{y}{r}{\comput t}\} \cup \handler h}%
                            {
                                \itEffect{\sigma}{\it M}
                                    {\it N_i
                                        \rightarrow 
                                    \it F_i
                                }_{\rangeI}
                                    \Rightarrow 
                                \it E
                            }%
                    }%
                \end{prooftree}
            \end{array}
        \end{array}
    $

%% file: Definitions/LeafReplacementHandlerCompat.tex
\begin{tabular}{c}$
    \begin{array}{c}
        \begin{prooftree}
            \infer0{
                \itEffectReturn{\it M}
                    \replaceLeaf{\it M \rightarrow \it E}
                \it E
            }
        \end{prooftree}
    \quad\;
        \begin{prooftree}
                \hypo{\left(
                    \it F_i
                        \replaceLeaf{
                            \it M_k^i \rightarrow \it G_k^i
                            }[\rangeK_i]
                    \it E_i\right)_{\rangeI}
                }
            \infer1{
                \itEffect{\sigma}{\it M}
                        {\it N_i \rightarrow \it F_i}_{\rangeI}
                \replaceLeaf{
                    \it M_k^i \rightarrow \it G_k^i
                    }[\rangeK_i, \rangeI]
                \itEffect{\sigma}{\it M}
                    {\it N_i \rightarrow \it E_i}_{\rangeI}}
        \end{prooftree}
\end{array}
$\end{tabular}

%% file: Visuals/LeafReplacement.tex
\begin{tikzpicture}
    [level distance= 17mm, sibling distance= 23mm, 
    edge from parent/.style={-latex}]

\tikzstyle{index}=
    [near start, circle, fill=white,
    inner sep=0.7pt, scale=0.9]
\tikzstyle{E}=[blue]
\tikzstyle{G}=[orange]
\tikzstyle{E-}=[draw=blue, text=blue]
\tikzstyle{G-}=[draw=orange, text=orange]

  \node[E] at (-5.5,2) {$ \sigma[\it M]$}
    child {node[G] {$\ret*{}[\it M_1]$}
        edge from parent[E-] node[index,E-] {$\it N_1$}
    }
    child {node[E] {$ \sigma[\it N]$}
        child {node[G] {$\ret*{}[\it M_2]$}
            edge from parent[E-] node[index,E-] {$\it N_2$}
        }
        child {node[G] {$\ret*{}[\it M_1]$}
            edge from parent[E-] node[index,E-] {$\it N_3$}
        }
        edge from parent[E-] node[index,E-] {$\it N_2$}
    };

    \draw[line width=1mm, -{Latex[length=5mm]}]      (-2.5,-0.5)   -- (2.5,-0.5);
    \node at (0,2) {$
        \it M_1 \rightarrow \itEffect*{\sigma}{\it M_1}
                { \begin{matrix}
                    \it N_1 \rightarrow \itEffectReturn{\it M_1}\\
                    \it N_2 \rightarrow \itEffectReturn{\it M}
                \end{matrix} } $};
    \node at (0,1) {$
        \it M_1 \rightarrow \itEffect*{\sigma}{\it M}
                {\it N_2 \rightarrow \itEffectReturn{\it M_1}} $};
    \node at (0,0.2) {$ \it M_2 \rightarrow \itEffectReturn{\it N_2}$};

    \node[E] at (5.5,2) {$ \sigma[\it M]$}
    child {node[G] {$ \sigma[\it M]$}
        child {node[G, xshift=-10mm] {$\ret*{}[\it M_1]$}
            edge from parent[G-] node[index,G-] {$\it N_2$}
        }
        edge from parent[E-] node[index,E-] {$\it N_1$}
    }
    child {node[E] {$ \sigma[\it N]$}
        child {node[G] {$\ret*{}[\it N_2]$}
            edge from parent[E-] node[index,E-] {$\it N_2$}
        }
        child {node[G] {$ \sigma[\it M_1]$}
            child {node[G] {$\ret*{}[\it M_1]$}
                edge from parent[G-] node[index,G-] {$\it N_1$}
            }
            child {node[G] {$\ret*{}[\it M]$}
                edge from parent[G-] node[index,G-] {$\it N_2$}
            }
            edge from parent[E-] node[index,E-] {$\it N_3$}
        }
        edge from parent[E-] node[index,E-] {$\it N_2$}
    };
\end{tikzpicture}

%% file: Definitions/etReducibilityCandidates.tex
First, we define for every type $\tau$, for every typing context $\Gamma$ and 
for every expression in the form $\Gamma\vdash\tau$, the corresponding 
reducibility sets  $\redCand{\tau}$, $\redCand{\Gamma}$, and $\redCand{\Gamma\vdash\tau}$ as follows.
\begin{equation*}
    \hspace{-1cm}
    \begin{array}{rcl}
    \\
        \redCand{\itSet{\typeInt{n}}}
            &\coloneqq&
                \{\int n\}
    \\
        \redCand{\itSet{\it M_i \rightarrow \it E_i}_{\rangeI}}
            &\coloneqq&
                \bigcap\limits_{\rangeI} \{\val v \in \setVal
                \vsep \forall\; \val w \in \redCand{\it M_i},
                \; \app{\val v}{\val w} \in \redCand{\it E_i}\}
    \\[0.5cm]

        \redCand{\itEffectReturn{\it M}}
            &\coloneqq&
                \{\comput t \in \setComput
                \vsep \exists\; \val v \in \redCand{\it M},\;
                \comput t \reductArr* \ret{\val v}\}
   \\
        \redCand{\itEffect{\sigma}{\it M}{\it N_i \rightarrow \it E_i}_{\rangeI}}
            &\coloneqq& 
                \{\comput t \in \setComput
                \vsep \exists\; \val v \in \redCand{\it M},\,
                \exists\; \comput u \in \bigcap\limits_{\rangeI}
                    \redCand{x : \it N_i \vdash \it E_i},
                
        \\
            && \hspace{5cm}
                \comput t \reductArr* \effect{\val v}{x}{\comput u}
            \}
    \\[0.5cm]
        \redCand{\it E \Rightarrow \it F}
            &\coloneqq& \{\handler h \in \setHandler
                \vsep \forall\; \comput t \in \redCand{\it E},
                \; \handle{\handler h}{\comput t} \in \redCand{\it F}\}
                \\[0.5cm]

        \redCand{\Gamma}
            &\coloneqq& \{(\val v_i)_{\rangeI}\in\setVal^I
                \vsep \forall \rangeI,\; \val v_i\in\redCand{\it M_i} \} 
                \\
            && \hspace{3cm} \text{ where } \Gamma = (x_i : \it M_i)_{\rangeI}
    \\

        \redCand{\Gamma \vdash \tau}
            &\coloneqq& \{p \in \setVal \cup \setComput \cup \setHandler
                \vsep \forall\; \vec {\val v} \in \redCand{\Gamma},
                \; p\sub{\Gamma}{\vec{\val v}} \in \redCand{\tau}\}
    \end{array}
\end{equation*}
Where for $\Gamma = x_1 : \it M_1\dots x_n : \it M_n$
and $\vec{\val v} \in \redCand{\Gamma}$,
$p\sub{\Gamma}{\vec{\val v}} \coloneqq p\{x_1\backslash \val v_1\dots x_n\backslash \val v_n\}$.

%% file: behaviorsimple.tex
In the last section, we introduced a notion of behavioral intersection typing for \HE\ terms and proved that it precisely characterizes termination and reachability. In doing so, we observed that intersection types naturally acquire a behavioural interpretation, whereby the operations performed become an integral part of the underlying type.

The price to pay for all that, of course, is that type checking necessarily becomes undecidable, as is to be expected in a system featuring intersection types. At this point, however, it is natural to ask whether dispensing with intersections while retaining the behavioural component yields a system with satisfactory metatheoretic properties. By this we mean that typing is preserved under forward reduction, (see \Cref{thm:HEBElementarySubjectReduction}) and that it must satisfy the same form of progess already explained in the case of \HEBI\ (see \Cref{lem:HEBProgress}). On the other hand, we do not require the system to satisfy subject expansion, as is standard for non-intersection type systems.
In this section, we show that this question admits a positive answer. Specifically, we prove that a system, denoted \HEB, obtained from \HEBI\ in the manner outlined above, not only enjoys these desirable properties, but also features a decidable reachability problem.
We establish this decidability result as follows.
First, we define a \textbf{refinement relation} between \HEB\ typing derivations and those of {\HEBI}.
This relation allows us to identify the space of \HEBI\ typing derivations accommodating a term that is well typed in {\HEB}, and ultimately suggests a way to explore it.
We then show that the space is finite and, based on it, that the underlying reachability problem is decidable.
This technique is essentially the same as that employed in parts of the literature on HOMC~\cite{Kobayashi/Ong_2009_LICS,melliesgrellois,clairambaultmurawski}, and it is revisited here by making explicit use of the refinement relation between typing derivations, thereby yielding a more streamlined presentation.

\subsection{The {\HEB} Type System}
\label{sect:hebsystem}
The grammars defining \HEB\ types are as follows.
\begin{equation*}
	\begin{array}{crcl}
		\textbf{(Value Types)}&
		\st M, \st N
		&\coloneqq& \typeInt{n}
		\vsep \st M \rightarrow \st E
		\\
		\textbf{(Computation Types)}&
		\st E, \st F, \st G
		&\coloneqq& \itEffectReturn{\st M} 
        \vsep\itEffect{\sigma}{\st M}{\st N \rightarrow \st E}
		
		\\
		\textbf{(Handler Types)}&
		\st H
		&\coloneqq& \st E \Rightarrow \st F
		
		\\
		\textbf{(General Types)}&
		\st T,\st U
		&\coloneqq& \st M \vsep \st E \vsep \st H
		
	\end{array}
\end{equation*}
As can be readily observed, the \HEB\ types are obtained in a natural way from those of \HEBI\ through a process of uniformization of intersections, while preserving their behavioural structure. In particular, $\lambda$-abstractions in \HEB\ are assigned a \emph{single} arrow type of the form $\st M \rightarrow \st E$ and, similarly, computations that produce the effect $\sigma$ are typed as $\itEffect{\sigma}{\st M}{\st N \rightarrow \st E}$. In \HEBI, computation types use intersections to describe the type-level tree of branching effects produced by a computation. In contrast, \HEB\ prohibits this mechanism, ensuring that the algebraic operations performed by a well-typed computation in \HEB\ do not branch and can be \emph{linearly} ordered. But beware: this does not mean that handlers make a linear use of the continuation, which can in fact be copied; rather, all the multiple uses of the continuation still receive \emph{the same type}.

\begin{sloppypar} The typing rules of the system are presented in Figure~\ref{def:HEBTypingRules}, while the new notion of Leaf Replacement---substantially simplified with respect to that of \HEBI---are given in Figure~\ref{def:SimpleLeafReplacement}. We write $\Delta$ for type contexts of {\HEB}.
The resulting system enjoys robust metatheoretic properties.
Type preservation can be established using entirely standard techniques and, in light of the corresponding property for \HEBI, this is not surprising. Although it does not follow as a direct consequence of the preservation result for \HEBI, it deserves nonetheless to be stated and proved explicitly\end{sloppypar}

\begin{figure}\centering
	\input{Definitions/estType.tex}
	\caption{\HEB, Typing Rules}
	\label{def:HEBTypingRules}
    \Description{HEB, typing rules}
\end{figure}
\begin{figure}
	\centering
	\input{Definitions/SimpleLeafReplacement.tex}
	\caption{\HEB, Leaf Replacement}
	\label{def:SimpleLeafReplacement}
    \Description{HEB, Leaf replacement}
\end{figure}

\begin{theorem}[\HEB\ Subject Reduction]
	\label{thm:HEBElementarySubjectReduction}
	Suppose that $\comput t,\comput s\in \setComput$, that $\stJugComput{\emptyset}{\comput t}{\st E}$ and that $\comput t\reductArr \comput s$. Then, there exists $\Xi$ such that $\stJugComput[\Xi]{\emptyset}{\comput s}{\st E}$.
\end{theorem}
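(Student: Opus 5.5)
The proof follows the same pattern as \Cref{lem:ElementarySubjectReduction}, restricted to the forward direction. It has three parts: a context lemma, a substitution lemma, and a case analysis on the seven rewrite rules of \Cref{def:RewriteRules}.

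First I would prove an \HEB\ analogue of \Cref{lem:ContextTyping}. If $\stJugComput{\emptyset}{\ctxtEval<\comput t>}{\st E}$, then there is a type $\st F$ with $\stJugComput{\emptyset}{\comput t}{\st F}$, and every closed $\comput u$ of type $\st F$ satisfies $\stJugComput{\emptyset}{\ctxtEval<\comput u>}{\st E}$. This goes by induction on $\ctxtEval$. The $\constructor{let}$ case keeps the leaf-replacement premise and the typing of the continuation unchanged. The $\constructor{with}$ case keeps the handler typing $\st F\Rightarrow\st E$ unchanged. With this lemma it suffices to treat the root rewrites $\comput t\reductArr'\comput s$.

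Next I would prove a standard \HEB\ substitution lemma: from $\Delta;x::\st M\vdash p::\st T$ and $\stJugValue{\emptyset}{\val v}{\st M}$, derive $\Delta\vdash p\sub{x}{\val v}::\st T$. Since \HEB\ is not multiplicative and uses a single type per variable, no splitting is needed, and the proof is by induction on the derivation. For \reductFix\ I would additionally need that $\fix{x}{\val v}$ of type $\st M\rightarrow\st E$ is typed by the \HEB\ fix rule with $x::\st M\rightarrow\st E$. Then $\val v\sub{x}{\fix{x}{\val v}}$ has the same type by substitution.

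The root cases then follow. \reductBeta, \reductLetRet, \reductHdlRet\ and \reductCase\ come from inversion plus substitution. For \reductLetRet\ and \reductHdlRet, the leaf replacement of $\itEffectReturn{\st M}$ just yields the type of the continuation, and the return clause typed at $\itEffectReturn{\st M}\Rightarrow\st E$ gives the body's type. The \reductLetEff\ case rearranges the derivation. From $\effect{\val v}{y}{\comput t}::\itEffect{\sigma}{\st M}{\st N\rightarrow\st F}$ and $\itEffect{\sigma}{\st M}{\st N\rightarrow\st F}\replaceLeaf{\st L\rightarrow\st G}\st E'$, inversion of leaf replacement gives $\st E'=\itEffect{\sigma}{\st M}{\st N\rightarrow\st F'}$ with $\st F\replaceLeaf{\st L\rightarrow\st G}\st F'$. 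We then type $\letin{x}{\comput t}{\comput u}$ at $\st F'$ under $y::\st N$ by the let rule, after weakening $\comput u$'s derivation with $y$ (fresh). Finally we reapply the eff rule.

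I expect \reductHdlEff\ to be the main obstacle. The handler is typed at $\itEffect{\sigma}{\st M}{\st N\rightarrow\st F}\Rightarrow\st E$. Inverting the \HEB\ handler rule (the analogue of \ruleNameHandlerSigma) gives three things: the clause body $\comput t$ typed under $x::\st M; r::\st N\rightarrow\st G$; the handler $\handler h$ itself typed at $\st F\Rightarrow\st G$; and compatibility, which guarantees that the clause found in $\handler h$ by the reduction rule is exactly this $\comput t$. With $\comput u$ typed at $\st F$ under $y::\st N$, the hdl rule gives $\handle{\handler h}{\comput u}::\st G$, so $\abs{y}{\handle{\handler h}{\comput u}}::\st N\rightarrow\st G$. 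Two substitutions then conclude. The delicate points are checking that the handler typing for the continuation is really available as a premise, which is what the handler rule is designed for, and that compatibility pins down the clause. Since uses of $r$ in \HEB\ share one type, a single premise suffices.
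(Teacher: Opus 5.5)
Your decomposition is the one the paper uses: an \HEB\ context-typing lemma, a substitution lemma, and a case analysis on the seven root rules. The cases $\reductBeta$, $\reductFix$, $\reductCase$, $\reductLetRet$, $\reductHdlRet$ and $\reductLetEff$ go through as you describe. The gap is in $\reductHdlEff$, exactly at the point you call delicate. The \HEB\ handler rule types $\{\sigma(y;r)\mapsto t\}\cup h'$ from a premise that types the \emph{residual} handler $h'$ at $\st F\Rightarrow\st G$. The two handlers may coincide, but nothing forces the given derivation to use an $h'$ that already contains the $\sigma$-clause; typically $h'$ is the handler with that clause removed. Inversion therefore gives you only $h'::\st F\Rightarrow\st G$. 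The reduct, however, re-installs the full handler $H=\{\sigma(y;r)\mapsto t\}\cup h'$ in the continuation $\lambda y.\,\mathsf{with}\ H\ \mathsf{handle}\ u$. So, contrary to your claim, the typing of this continuation at $\st N\rightarrow\st G$ is not available from the premises, and your argument as written does not produce it.

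What is missing is a handler weakening lemma (the paper's \Cref{lem:HEBHandlerWeakening}): if $\Delta\vdash h'::\st F\Rightarrow\st G$ and $\psi\mathrel{\#}h'$, then $\Delta\vdash\psi\cup h'::\st F\Rightarrow\st G$. It is proved by induction on the handler derivation. Suppose the last rule has clause $\phi$ and residual $h_0$. Keep the clause premise. In the $\mathtt{hdlr}_\sigma$ case, use the induction hypothesis to type $\psi\cup h_0$; this is allowed since $\psi\mathrel{\#}h_0$ follows from $h_0\subseteq h'$. Then check $\phi\mathrel{\#}(\psi\cup h_0)$. This follows from $\phi\mathrel{\#}h_0$ and $\psi\mathrel{\#}h'$, because $\phi\in h'$ forces $\psi=\phi$ whenever the two handle the same operation. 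Apply the lemma with $\psi=\{\sigma(y;r)\mapsto t\}$ and the compatibility premise of the handler rule to get $H::\st F\Rightarrow\st G$. So compatibility plays a second role beyond identifying the clause fired by the reduction: it is what lets that clause be put back. Finally, weaken this closed derivation by $y::\st N$, since the \HEB\ handle rule shares its context between premises. Your handle step and the two substitutions then conclude the case as in the paper.
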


{We also state a Progress Lemma, similarly to what we have already done for \HEBI.}

\begin{restatable}[\HEB\ Progress]{lemma}{hebProgress}
	 \label{lem:HEBProgress}
    Let $\comput t\in \setComput$ be a closed term such that $\stJugComput[\Pi]{\emptyset}{\comput t}{\st E}$. Then either there exists $\comput s \in \setComput$ such that $\comput t \reductArr \comput s$, or $\comput t$ is a normal form computation, that is, either $\comput t = \ret{\val v}$ or $\comput t = \effect{\val v}{x}{\comput u}$ for some $\val v \in \setVal$ and $\comput u \in \setComput$.
\end{restatable}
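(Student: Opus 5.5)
We prove \Cref{lem:HEBProgress} by induction on the structure of the derivation $\Pi$ (equivalently, on the structure of the closed computation $\comput t$), following the same pattern as \Cref{lem:HEBIProgress}. First we need a canonical forms lemma for closed \HEB\ values, which we obtain by inspecting the value rules of \Cref{def:HEBTypingRules}. If $\stJugValue{\emptyset}{\val v}{\typeInt n}$, then $\val v = \int m$ for some $0<m\le n$. If $\stJugValue{\emptyset}{\val v}{\st M \rightarrow \st E}$, then $\val v$ is either an abstraction $\abs{x}{\comput u}$ or a fixpoint $\fix{x}{\val w}$. Variables are excluded because the context is empty, and integers cannot receive arrow types. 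This is a direct case analysis on the last rule.

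We then do a case analysis on the last rule of $\Pi$.
\begin{itemize}
\item For the return and effect rules, $\comput t$ is already of the form $\ret{\val v}$ or $\effect{\val v}{x}{\comput u}$, so it is a normal form computation.
\item For application $\app{\val v}{\val w}$, canonical forms makes $\val v$ an abstraction or a fixpoint. Then $\reductBeta$ or $\reductFix$ fires with the empty context.
\item For $\case{\val v}{\comput t_1,\ldots,\comput t_n}$, the premise types $\val v$ with an integer type, bounded by $n$ as in the case rule. So $\val v=\int m$ with $0<m\le n$, and $\reductCase$ applies.
\item For $\letin{x}{\comput t'}{\comput u}$, we apply the induction hypothesis to the closed subterm $\comput t'$. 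If $\comput t'\reductArr\comput s'$ through some context $\ctxtEval$, then $\letin{x}{\ctxtEval}{\comput u}$ is again an evaluation context, and $\comput t$ reduces. If $\comput t'$ is $\ret{\val v}$, $\reductLetRet$ applies; if it is an effect operation, $\reductLetEff$ applies.
\item The handling case $\handle{\handler h}{\comput t'}$ is analogous. If $\comput t'$ reduces, we close under the context $\handle{\handler h}{\ctxtEval}$. If $\comput t'=\ret{\val v}$, every handler contains a return clause by the grammar, so $\reductHdlRet$ fires.
\end{itemize}

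The main obstacle is the remaining subcase $\comput t'=\effect{\val v}{y}{\comput u}$. Here $\reductHdlEff$ needs $\handler h$ to contain a clause for $\sigma$, and this is not guaranteed syntactically. We plan to extract it from the typing. Inverting the handling rule gives $\stJugHandler{\emptyset}{\handler h}{\st F\Rightarrow\st E}$ and $\stJugComput{\emptyset}{\comput t'}{\st F}$. Inverting the effect rule forces $\st F = \itEffect{\sigma}{\st M}{\st N\rightarrow\st F'}$. The only \HEB\ handler rule whose conclusion has a source type rooted at $\sigma$ is the $\sigma$-rule, the analogue of {\ruleNameHandlerSigma}. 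That rule types a handler of the form $\{\effectClause{y}{r}{\comput s}\}\cup\handler h'$, so $\handler h$ does contain a $\sigma$-clause. The compatibility premise $\handlerCompat$ ensures this clause is the unique one for $\sigma$, which matches the side condition of $\reductHdlEff$.

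Care is needed here only to confirm that the handler typing rules of \Cref{def:HEBTypingRules} have exactly this syntax-directed shape, with the source type's root operation determining the rule. We must also check that no subsumption-like rule breaks the inversion. Given the uniformized form of the \HEBI\ rules, we expect both checks to hold, and the proof is then complete.
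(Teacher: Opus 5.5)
Your proof is correct and essentially the paper's own argument: the paper proves \HEB\ progress by mirroring its \HEBI\ progress proof, which is exactly an induction on $\Pi$ with canonical forms for closed values obtained by inverting the value rules. In the key subcase of a handled effect operation, it likewise inverts the handler typing to see that the $\sigma$-clause rule was used, so the handler contains a compatible $\sigma$-clause. The only cosmetic difference is that for a handled $\mathsf{return}$ you use the grammar's mandatory return clause rather than inverting the return-handler rule, which is equally valid for the existence of a reduct.
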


The \HEB\ typing can be regarded as a mechanism for ensuring the absence of type errors in presence of handlers. Unlike the situation for \HEBI, however, this verification technique may yield false negatives, namely, it is \emph{in}complete with respect to reachability. For example, each term that terminates but has an handler that uses the continuation in two or more ways which are incompatible type-theoretically, would not be typable in \HEB, while it would be in \HEBI. That said, given a term typable in \HEB\ with an integer type, would it be possible to \emph{automatically} determine the value to which it reduces, if any? The remainder of this section is devoted to answering this question in the affirmative. This might seem surprising, given the corresponding negative result for {\HEPCF}~\cite{Dal-Lago/Ghyselen_2024_POPL}, but the positive result in {\HEB} relies on the fact that the {\HEB} type system can still capture the computation trees produced by terms, unlike {\HEPCF} (the class of capturable trees is more restrictive than that in {\HEBI}, though). 
{A natural question to ask is then whether type checking is itself decidable for \HEB. Although we do not study this question in the present work, we conjecture that it is, particularly in a setting where the programmer is required to annotate variables with their types. Although computation types in \HEB\ have an inherently behavioral nature, checking that a handler is compatible with a given computation type does not seems to be too difficult: even when such compatibility may require assigning multiple types to the same handler, the behavioral type gives an explicit bound on the number of distinct typing. } 

{We conclude this section with a brief discussion on the expressive power of \HEB. In particular, this system seems well-suited for scenarios in which it is important to track the order in which operations are executed, as well as to assign different types to the same handler depending on the context. For example, consider a program that relies on a polymorphic algebraic operation for printing a value that may be either a string or a numeric type. In such a case, \HEB\ could help capture both the sequencing constraints and the variation in handler behavior through its typing discipline.}

\subsection{Behavioural Refinement}
This section introduces a refinement relation, which certifies that the {\HEBI} typing provides strictly more fine-grained information about terms that are typable in {\HEB}.
We start by defining a refinement relation $\refrel{\st M}{\it M}$ on pairs of an {\HEB} type $\st M$ and an {\HEBI} type $\it M$.
This relation identifies the set of {\HEBI} types $\it M$ at which a term of an {\HEB} type $\st M$ may be typable.
\begin{definition}[\HEB, Type Refinement]
	Let $\st T$ be an \HEB\ type and $\it \tau$ an \HEBI\ type. We define the refinement relation between them in \Cref{def:HEBRefinementTypes}. Let $\Delta$ be an \HEB\ type context  and $\Gamma$ be an \HEBI\ type context. We write $\refrel{\Delta}{\Gamma}$ if, for each $x:\it M\in \Gamma$, there exists $x::\st M\in\Delta$ such that $\refrel{\st M}{\it M}$.
\end{definition}
It is worth noting that when $\refrel{\st M}{\it M}$, the two types $\st M$ and $\it M$ have \emph{exactly} the same structure, but they differ in the fact that whenever intersections occur in $\it M$, \emph{all} such intersections must refine the corresponding type in $\st M$.
We define $\Refn(\st M)$ to be the set of \HEBI\ types $\it M$ such that $\refrel{\st M}{\it M}$.
Notice that $\Refn(\typeInt{m}) = \{ \itSetEmpty \} \cup \{ \itSet{\typeInt{n}} \mid 0<n \leq m \}$.
This indicates that, for any positive integer $m$ and value $\val v$, if $\stJugValue{}{\val v}{\typeInt{m}}$ holds in {\HEB}, $\itJugValue{}{\val v}{\itSetEmpty}$ or $\itJugValue{}{\val v}{\itSet{\typeInt{n}}}$ also holds in {\HEBI} for some $n$, and, furthermore, in the latter case, $0<n \leq m$ must hold.
The condition $0<n \leq m$ allows us to \emph{bound} the number of the {\HEBI} types that can be assigned to $\val v$, by $m$.
This enables computing an {\HEBI} typing derivation from an {\HEB} typing derivation (\Cref{prop:RefinementComputability}).
\begin{figure}
	\centering
	\input{Definitions/NewRefinementTypes.tex}
	\caption{\HEB, Refinement on Types.}
    \Description{HEB, refinements on types}
	\label{def:HEBRefinementTypes}
\end{figure}

We can now extend the refinement relation from types and typing contexts to typing \emph{derivations}, as shown in \Cref{def:HEBDerivationRefinement}: $\refrel{\Pi}{\Xi}$ states that an \HEBI\ derivation $\Xi$ refines an \HEB\ derivation $\Pi$. This definition is given inductively on the structure of type derivations: $\Xi$ refines $\Pi$ if the derivations of their respective premises are themselves refinements of each other, and this must hold at all levels. For integer values, the constraint imposed by the \HEBI\ derivation ensures that the only admissible refinements are those that correspond exactly to the value $\int n$ being typed. Moreover, the \ruleNameFix\ rule can be refined by each of the two {\HEBI} tying rules for fixpoints.

\begin{figure}
\centering
\input{Definitions/NewRefinementRelation.tex}
\caption{\HEB\ Refinement for Rules, Excerpt.}
\label{def:HEBDerivationRefinement}
\Description{HEB, Refinement for rules}
\end{figure}

\subsection{Behavioural Refinement Expansion}
\label{sect:HEBRedExp}
The refinement relation between type derivations we have just introduced was defined in order to restrict the search space of \HEBI\ derivations whenever the term under consideration is typable in \HEB. But is all this correct? In this section, we will prove that this is indeed the case.

Suppose we are dealing with a closed term $\comput t$ typable via the \HEB\ type derivation $\Pi$ with an integer type $\typeInt{n}$, and suppose that we wish to study to which specific value $m\in[1,n]$, if any, such a term reduces. Each such possible value can be seen as a judgement $J$ of \HEBI\ refining the conclusion of $\Pi$. Thus, in the spirit of Theorem \ref{th:EffectTreeReachability}, we might attempt to construct a type derivation for the underlying term having conclusion $J$. However, the search space remains too large if we somehow forget that $\comput t$ is typable in \HEB. We may instead restrict ourselves to searching for one such derivation $\Xi$ \emph{refining} $\Pi$. If such a derivation $\Xi$ existed, the property in question would certainly hold, thanks to Theorem~\ref{th:EffectTreeReachability}. Is it also the case, however, that whenever the reachability property is satisfied, the search can be restricted to \HEBI\ derivations refining $\Pi$? In this section we prove this property, by a relative form of subject expansion property.



It is convenient, on the \HEB\ side, to lift the reduction relation $\reductArr'$ from terms to typing derivations. We give here only a brief intuition of the construction.
The goal is to mimick the transformation induced by \Cref{thm:HEBElementarySubjectReduction} on a derivation $\Pi$ typing a computation term $\comput t$.
We start by defining a notion of substitution directly on type derivations. 
Let $\stJug[\Pi]{\Delta;x::\st M}{p}{\st T}$ and $\stJugValue[\Xi]{\emptyset}{\val v}{\st M}$ be two \HEB\ derivations and $\val v\in \setVal$ a closed term. We define a derivation $\stJug[\Pi\sub{x}{\Xi}]{\Delta}{p\sub{x}{\val v}}{\st T}$ by induction on $\Pi$. Intuitively, this is the derivation obtained by replacing each instance of a \ruleNameVar\ rule involving $x$ in $\Pi$ with the entire derivation $\Xi$, mirroring the term substitution $p\sub{x}{\val v}$ at the level of derivations. 

More formally, we have that: 
\[
\Bigl(\begin{prooftree}
     \typeRuleVar{\stJugValue[\Pi]{\Delta;x::\st M}{x}{\st M}}
\end{prooftree}    
\Bigl)\sub{x}{\Xi}=\Pi^s
\]
Where $\Pi^s$ is obtained from $\Xi$ thanks to the admissibility of weakening in the system. In the inductive cases, the substitution is propagated to the premises of the rule, \emph{e.g.}
	\[
	\begin{prooftree}
		\hypo{\stJugComput[\Pi_\comput t]
			{\Delta;x::\st M; y :: \st N}
			{\comput t}{\st E}}
		\typeRuleAbs{\stJugValue
			{\Delta;x::\st M}
			{\abs{y}{\comput t}}
			{\st N \rightarrow \st E}}
	\end{prooftree}
	\sub{x}{\Xi}=
	\begin{prooftree}
		\hypo{\stJugComput[\Pi_\comput t\sub{x}{\Xi}]
			{\Delta; y :: \st N}
			{\comput t\sub{x}{\val v}}{\st E}}
		\typeRuleAbs{\stJugValue
			{\Delta}
			{\abs{y}{\comput t}\sub{x}{\val v}}
			{\st N \rightarrow \st E}}
	\end{prooftree}
	\]

\begin{figure}
    \scalebox{0.87}{\input{Definitions/HEBReductions.tex}}
    \caption{Example of reduction between two \HEB\ derivations.}
    \label{def:HEBDerivationReduction}
    \Description{HEB, example of reduction between derivations.}
\end{figure}

Let $\comput t\in \setComput $ be a closed term and $\stJugComput[\Pi]{\emptyset}{\comput{t}}{\st E}$ an \HEB\ derivation. We can now define a reduction on derivations, written $\Pi\rightsquigarrow \Pi'$, by minimicking the rules in \Cref{def:RewriteRules}. Two examples are given in \Cref{def:HEBDerivationReduction}.

\begin{lemma}[Relative Subject Expansion] 
	\label{lem:NewElementarySubjectExpansion-ref}%
	Let $\comput t, \comput s \in \setComput$ be two closed term such that 
	$\comput t \reductArr \comput s$. Let
	$\itJugComput[\Xi']{\emptyset}{\comput s}{\it E}$ and $\stJugComput[\Pi']{\emptyset}{\comput s}{\st E}$ be such that $\refrel{\Pi'}{\Xi'}$ and $\stJugComput[\Pi]{\emptyset}{\comput t}{\st E}$ be such that $\Pi\rightsquigarrow \Pi'$, 
	then there exists
	$\itJugComput[\Xi]{\emptyset}{\comput t}{\it E}$ such that $\refrel{\Pi}{\Xi}$.
\end{lemma}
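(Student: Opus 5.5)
The plan is to replay the proof of subject expansion for \HEBI\ (the backward direction of Lemma~\ref{lem:ElementarySubjectReduction}), while carrying the \HEB\ derivation along and checking at each step that the \HEBI\ derivation being built refines it. First, reduce to the case of a root redex. Write $\comput t=\ctxtEval<\comput u>$ and $\comput s=\ctxtEval<\comput r>$ with $\comput u\reductArr'\comput r$. The derivation reduction $\Pi\rightsquigarrow\Pi'$ is defined by mimicking the rewrite rules, so $\Pi$ and $\Pi'$ share the part typing $\ctxtEval$. They differ only in sub-derivations $\Pi_{\comput u}\rightsquigarrow\Pi_{\comput r}$ for the redex and its contractum. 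For $\Xi'$ we use a refined form of Lemma~\ref{lem:ContextTyping}: decomposing $\Xi'$ along $\ctxtEval$ follows the decomposition of $\Pi'$, because both derivations are syntax-directed on the $\constructor{let}$ and $\constructor{with}$ nodes and refinement is defined rule by rule. Hence $\Xi'$ splits into a context part refining that of $\Pi'$ and sub-derivations $\Xi'_{\comput r}$ for $\comput r$ refining $\Pi_{\comput r}$. In \HEBI\ there may be several such sub-derivations, one for each leaf branch in a $\constructor{let}$ or each handler use. If for each of them we build $\Xi_{\comput u}$ with $\refrel{\Pi_{\comput u}}{\Xi_{\comput u}}$, we can plug them back and obtain $\Xi$ with $\refrel{\Pi}{\Xi}$. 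Because refinement is compositional, this plugging step is routine.

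For the root case we go through the seven rules of \Cref{def:RewriteRules}. The rules $\reductCase$, $\reductLetEff$ and $\reductHdlRet$ involve no duplication, except that in $\reductLetEff$ the \HEBI\ side rearranges the branches $\it N_i\rightarrow\it E_i$ through leaf replacement. Here the \HEB\ side has a single branch, and each \HEBI\ branch refines it. The expanded derivation is built by inverting the $\ruleNameEff$ and $\ruleNameLetin$ rules and reassembling them, and every rule used refines its \HEB\ counterpart. For $\reductBeta$, $\reductLetRet$, $\reductFix$ and $\reductHdlEff$ we need a \emph{refining} version of the anti-substitution lemma (Lemma~\ref{lem:antiSubstLemma}). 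Suppose $\refrel{\Pi_p\sub{x}{\Pi_{\val v}}}{\Xi}$, with $\Pi_{\val v}$ closed. Then there exist $\it M$, $\Xi_p$ and $\Xi_{\val v}$ with $\refrel{\Pi_p}{\Xi_p}$, $\refrel{\Pi_{\val v}}{\Xi_{\val v}}$ and $\refrel{\st M}{\it M}$. This is proved by induction on $p$, exactly as the original lemma. The new point is that every type collected for $\val v$ comes from a sub-derivation of $\Xi$ sitting above a copy of $\Pi_{\val v}$, so it refines $\st M$. A union of refinements of $\st M$ again refines $\st M$, since intersections are allowed at each arrow level in \Cref{def:HEBRefinementTypes}. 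The refined intersection lemma (Lemma~\ref{lem:IntersectionLemma}) then gives $\refrel{\Pi_{\val v}}{\Xi_{\val v}}$. The compatibility premise of Lemma~\ref{lem:IntersectionLemma} is met because $\val v$ is closed. For $\reductFix$ we instead split the occurrences of $\fix{x}{\val v}$ and use $\ruleNameFixBase$ or $\ruleNameFixRec$, both of which refine $\ruleNameFix$.

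I expect $\reductHdlEff$ to be the main obstacle. Anti-substitution applied twice gives a type $\it M$ for $\val v$ refining the parameter type, a derivation for the clause body $\comput t$, and a set $\itSet{\it N_i\rightarrow\it G_i}_{\rangeI}$ for the continuation $\abs{y}{\handle{\handler h}{\comput u}}$. All of these refine the corresponding \HEB\ types in $\Pi_{\comput u}$. Inverting $\ruleNameAbs$ and $\ruleNameHandle$ for each $i$ gives $\it F_i$ and derivations for $\handler h:\it F_i\Rightarrow\it G_i$ and for $y:\it N_i\vdash\comput u:\it F_i$. We then rebuild the left-hand side with $\ruleNameEff$ and $\ruleNameHandlerSigma$. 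The delicate point is that the \HEB\ rule for $\comput u$ has a \emph{single} premise typing $\comput u$ at one type $\st N\rightarrow\st F$, so every $\it F_i$ must refine the same $\st F$. This holds because each $\it F_i$ arises from inverting a sub-derivation that refines the unique \HEB\ handle node inside the substituted continuation derivation. One also has to check that the handler derivation placed in the second premise of $\ruleNameHandlerSigma$ refines the \HEB\ handler derivation, which is the same derivation appearing inside the continuation in $\Pi'$. I would settle this by first proving a small inversion lemma that refinement commutes with inverting $\ruleNameHandle$, and then discharge the compatibility side condition $\handlerCompat$ trivially, since the handler is syntactically unchanged.
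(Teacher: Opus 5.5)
Your overall architecture is the paper's: a root-case lemma (\Cref{lem:NewElementarySubjectExpansion-refEl}) proved rule by rule with a refining anti-substitution lemma, then lifted through evaluation contexts by a refined context-typing lemma (\Cref{lem:RefinementContextTyping}). The gap is in the $\reductHdlEff$ case, at exactly the point you pass over quickly. You say the handler derivation for the second premise of \ruleNameHandlerSigma\ is ``the same derivation appearing inside the continuation in $\Pi'$'', and that the handler is ``syntactically unchanged''. Neither claim holds. In $\Pi$ the handler is $\{\effectClause{x}{r}{\comput s}\}\cup\handler h'$, and the premise of the \ruleNameHandlerSigma\ node is a derivation $\Pi_{\handler h'}$ of $\handler h'$. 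The contractum, however, contains $\abs{y}{\handle{\handler h}{\comput u}}$ with the \emph{whole} handler $\handler h$. So $\Pi'$ types it with $\Pi_{\handler h'}^{(y::\st N)(\sigma)}$, which is obtained from $\Pi_{\handler h'}$ by handler weakening (Lemma~\ref{lem:HEBHandlerWeakening}) and context weakening. Inverting \ruleNameAbs\ and \ruleNameHandle\ in the \HEBI\ derivation of the continuation therefore yields derivations $\Xi^i_{\handler h}$ of $\handler h$, not of $\handler h'$. These refine the weakened derivation, not $\Pi_{\handler h'}$. Refinement is defined node by node between derivations of the same term. Hence, since in general $\handler h'\neq\handler h$, a \HEBI\ \ruleNameHandlerSigma\ node built from the $\Xi^i_{\handler h}$ does not refine the corresponding node of $\Pi$. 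An inversion lemma for \ruleNameHandle\ does not fix this, because the mismatch is between two different handler terms. What you need is the paper's handler elimination (\Cref{lem:HEBHandlerElim}): if $\psi\handlerCompat\handler h'$ and $\refrel{\Pi_{\handler h'}^{(\psi)}}{\Xi^i_{\handler h}}$, then some $\Psi^i_{\handler h'}$ types $\handler h'$ at the same type with $\refrel{\Pi_{\handler h'}}{\Psi^i_{\handler h'}}$. It must be combined with two further facts. Weakening refinement (\Cref{prop:HEBClosedRefinement}) discards $y::\st N$. Context simplification (\Cref{rem:ContextSimplification}) removes $y$ from the \HEBI\ context, since $y$ is not free in $\handler h$.

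You skip the same weakening mismatch elsewhere. In $\reductLetEff$, $\Pi'$ contains $\Pi_{\comput u}^{(y::\st N)}$ whereas $\Pi$ contains $\Pi_{\comput u}$. So the $\Xi^i_k$ you extract from $\Xi'$ refine the weakened derivation, and you need \Cref{prop:HEBClosedRefinement} to obtain $\refrel{\Pi_{\comput u}}{\Xi^i_k}$; ``every rule used refines its \HEB\ counterpart'' does not cover this. The copies of $\Pi_{\val v}$ inside $\Pi_p\sub{x}{\Pi_{\val v}}$ are also weakened, so your refined anti-substitution lemma relies on the same fact. Finally, $\reductHdlRet$ is not a substitution-free case: its contractum is $\comput u\sub{x}{\val v}$, so it needs the refined anti-substitution lemma just like $\reductLetRet$. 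A minor inaccuracy: the hole of $\ctxtEval$ always sits under a single premise of \ruleNameLetin\ or \ruleNameHandle. So $\Xi'$ has exactly one sub-derivation for the contractum, not one per leaf branch.
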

\noindent 
The proof of Lemma \ref{lem:NewElementarySubjectExpansion-ref} goes by cases on $\Pi\rightsquigarrow \Pi'$, and exploits some auxiliary results along the way. But why is reduction on derivations actually needed? In \Cref{lem:NewElementarySubjectExpansion-ref} we have two derivations $\Pi'$ and $\Xi'$ typing $\comput s$ such that $\refrel{\Pi'}{\Xi'}$. In general, subject expansion fails for \HEB\,, therefore we have to suppose the existence of a derivation $\Pi$ typing $\comput t$. Our goal now is to construct a refinement $\Xi$ of $\Pi$. However, $\Pi$ need not bear any relationship with $\Pi'$ and may be completely different from it. The condition $\Pi\rightsquigarrow\Pi'$ ensures that $\Pi$ is suitably close to $\Pi'$, allowing the construction of a derivation $\Xi$ from $\Xi'$ and such that $\refrel{\Pi}{\Xi}$. By the way, a similar approach has already been used in~\cite{dudenhefner2024mechanized} to prove a form of Relative Subject Expansion.

\subsection{Finite Refinements and Decidability}
\label{sect:refinementsDecidability}
Thanks to the results obtained in the previous subsection, and in particular to Relative Subject Expansion, we can say that restricting the search for an \HEBI\ derivation to those derivations that refine a given \HEB\ derivation does not result in any loss of generality. However, it is not immediate that this can be done \emph{automatically}. This is indeed the case, and the key observation from which to start to show it is that every  \HEB\ type, by construction, admits a finite number of refinements in \HEBI:
\begin{restatable}[Finite Refinement Property]{lemma}{finiteRefinement}
	\label{lem:FiniteRefinement}
    Let $\st T$ be an \HEB\ type. There exists a finite number of \HEBI\ types $\tau$ such that $\refrel{\st T}{\tau}$.
\end{restatable}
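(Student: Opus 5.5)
We prove the statement by structural induction on the \HEB\ type $\st T$, treating value, computation and handler types simultaneously. The invariant is that $\Refn(\st T)$ is a finite set of \HEBI\ types. We rely on the observation made after the definition of refinement: when $\refrel{\st T}{\tau}$, the two types have exactly the same shape, except that every position where $\st T$ has a single arrow or a single continuation branch may be matched in $\tau$ by a \emph{set}, each of whose elements refines the corresponding \HEB\ component. Since \HEBI\ value types are sets (no multiplicities), each such position ranges over the subsets of a finite set. This keeps the count finite.

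The cases, read off \Cref{def:HEBRefinementTypes}, are as follows.
\begin{itemize}
\item \emph{Integer type $\typeInt{m}$.} We already have $\Refn(\typeInt{m}) = \{\itSetEmpty\}\cup\{\itSet{\typeInt n}\mid 0<n\le m\}$, which has $m+1$ elements.
\item \emph{Arrow type $\st M\rightarrow\st E$.} A refinement has the form $\itSet{\it M_i\rightarrow\it E_i}_{\rangeI}$ with $\it M_i\in\Refn(\st M)$ and $\it E_i\in\Refn(\st E)$. By induction both sets are finite, so the set $P$ of possible single arrows $\it M_i\rightarrow\it E_i$ is finite. A refinement is then a finite subset of $P$ (the empty one being $\itSetEmpty$, by the paper's convention). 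Hence $|\Refn(\st M\rightarrow\st E)|\le 2^{|P|}$.
\item \emph{Return type $\itEffectReturn{\st M}$.} The refinements are exactly the $\itEffectReturn{\it M}$ with $\it M\in\Refn(\st M)$, so they are finitely many.
\item \emph{Operation type $\itEffect{\sigma}{\st M}{\st N\rightarrow\st E}$.} A refinement is $\itEffect{\sigma}{\it M}{\it N_i\rightarrow\it E_i}_{\rangeI}$ with $\it M\in\Refn(\st M)$ and each pair $(\it N_i,\it E_i)\in\Refn(\st N)\times\Refn(\st E)$. The operation $\sigma$ is fixed by $\st T$. So the number of refinements is bounded by $|\Refn(\st M)|\cdot 2^{|\Refn(\st N)|\cdot|\Refn(\st E)|}$.
\item \emph{Handler type $\st E\Rightarrow\st F$.} The refinements are the $\it E\Rightarrow\it F$ with $\it E\in\Refn(\st E)$ and $\it F\in\Refn(\st F)$, so there are at most $|\Refn(\st E)|\cdot|\Refn(\st F)|$ of them.
\end{itemize}
The induction is well-founded because every recursive call is on a proper syntactic subterm of $\st T$.

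The main obstacle is justifying the "subset of a finite set" step in the arrow and operation cases. There are two ways it could fail.
\begin{itemize}
\item \emph{Repeated branches.} The families $(\it N_i\rightarrow\it E_i)_{\rangeI}$ might be genuinely indexed multisets rather than sets. The grammar writes value types with set braces, and $\cup$ is used as set union, so for arrow sets this is fine. For the continuation families inside $\itEffect{\sigma}{\cdot}{\cdot}$, I would first check that they are also identified up to set equality. If they are not, I would quotient by that identification and verify it is harmless: duplicate branches carry the same typing information. Since the statement speaks of finitely many types, I expect the intended reading to be sets.
\item \emph{A recursive refinement definition.} I would check that the rules of \Cref{def:HEBRefinementTypes} are syntax-directed in $\st T$: each rule's premises mention only immediate components of $\st T$. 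If so, the induction above applies directly and the proof reduces to the cardinality bounds listed.
\end{itemize}
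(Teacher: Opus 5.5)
Your proof is correct and follows the paper's route: structural induction on the \HEB\ type, where the arrow and operation cases close because the powerset of a finite set of candidate branches is finite. Your explicit cardinality bounds, and your observation that the continuation families must be read as sets rather than multisets, spell out what the paper's proof leaves implicit (the paper silently assumes the set reading).
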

\noindent
This is an easy induction on the structure of $\st T$. Noticeably, in the inductive cases one exploits the fact that the powerset of a finite set is finite. As a consequence, the number of refinements of $\st T$ can be exponential on the size of $\st T$.

The existence of a refinement with conclusion $J$ for a given \HEB\ derivation $\Pi$ can be expressed by means of a logical formula $\algo{\Pi}{J}$, defined by induction on $\Pi$, some of whose defining clauses are given in Figure \ref{def:AlgorithmLogic}.
\begin{figure}
    \centering
	\scalebox{0.89}{\input{Visuals/AlgorithmLogic.tex}}
	\caption{Description of the function $\mathcal{A}$.}
	\label{def:AlgorithmLogic}
    \Description{Description of the function A}
\end{figure}
When the underlying term is, for example, a $\lambda$-abstraction, the formula $\algo{\Pi}{J}$ is naturally obtained as the logical conjunction of some  formulas produced by applying the same algorithm to the subderivations of $\Pi$. When, instead, the term is an application, a fixed point, or a handled term, it is necessary to existentially quantify over the refinements of the type that such typing rule ``hides'' and that is not visible in the conclusion of $\Pi$. It is worth noting that the formula $\algo{\Pi}{J}$ has been constructed so as to be interpretable solely on the basis of the objects $\Pi$ and $J$ and, therefore, is a closed formula. It thus makes sense to write $\models\algo{\Pi}{J}$, to mean that such a formula is true.

The first thing we prove about the function $\mathcal{A}$ is that the produced logical condition indeed captures the informal specification around which $\mathcal{A}$  was constructed. This is expressed by the following result.
\begin{proposition}[Correctness of $\mathcal{A}$]
    \label{prop:AlgoCorrectness}
	Let $\Pi$ be an \HEB\ type derivation with conclusion $J_s$ and $J_r$ an \HEBI\ type judgement such that $\refrel{J_s}{J_r}$. It holds that $\models\algo{\Pi}{J_r}$ if and only if there exists an \HEBI\ type derivation $\Xi$, with conclusion $J_r$, such that $\refrel{\Pi}{\Xi}$.
\end{proposition}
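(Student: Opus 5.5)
The proof goes by induction on the structure of $\Pi$. Both directions are proved together, by cases on the last rule of $\Pi$. For each rule we compare the clause of $\algo{\Pi}{J_r}$ in Figure~\ref{def:AlgorithmLogic} with the \HEBI\ rules whose instances can refine that \HEB\ rule, as listed in Figure~\ref{def:HEBDerivationRefinement}. The invariant is that $\algo{\Pi}{J_r}$ is, by construction, a Boolean combination of formulas $\algo{\Pi_k}{J_k}$ for the immediate subderivations $\Pi_k$ of $\Pi$. Existential quantifiers may range only over finite sets of refinements, which Lemma~\ref{lem:FiniteRefinement} makes finite. We then check that this combination is exactly the side condition under which some \HEBI\ rule instance with conclusion $J_r$ and premises $J_k$ refines the last rule of $\Pi$. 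The hypothesis $\refrel{J_s}{J_r}$ is needed so that the premise judgements we guess satisfy $\refrel{J_s^k}{J_k}$, which lets us apply the induction hypothesis to each $\Pi_k$.

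Base cases. For \ruleNameVar, a refinement exists iff the context assigns $x$ exactly the type in $J_r$ and all other variables $\itSetEmpty$; this is what $\mathcal{A}$ checks. For \ruleNameInt, the refinements are $\itSetEmpty$ and $\itSet{\typeInt n}$ for the value $\int n$ actually typed; $\mathcal{A}$ checks membership in this set.

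Inductive cases. For rules whose premises are determined by the conclusion, such as \ruleNameRet, we directly use the induction hypothesis in both directions. Rules with multiple premises (\ruleNameApp, \ruleNameLetin, \ruleNameEff, \ruleNameHandle, \ruleNameCase) split the context multiplicatively. Here $\mathcal{A}$ existentially quantifies over decompositions $\Gamma = \Gamma_1 + \Gamma_2$ with $\Gamma_i\in Sub(\Gamma)$. It also quantifies over the hidden types: the argument type in \ruleNameApp, the intermediate computation type $\it F$ in \ruleNameHandle, and the index family together with the leaf-replacement data in \ruleNameLetin. Each hidden type is restricted to refine the corresponding \HEB\ type appearing in $\Pi$. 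Soundness is immediate: a witness of the formula gives premise derivations by the induction hypothesis, and the rule reassembles them. For completeness, take a refining derivation $\Xi$. Its last rule exposes concrete hidden types and contexts, and these refine the corresponding data in $\Pi$ because $\refrel{\Pi}{\Xi}$. They are therefore among the finitely many candidates, and the induction hypothesis turns the premises of $\Xi$ into true instances of $\algo{\Pi_k}{J_k}$.

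The main obstacle is the rules where \HEBI\ uses a family $(\cdot)_{i\in I}$ refining a single \HEB\ premise: \ruleNameAbs, \ruleNameEff and \ruleNameLetin, and especially \ruleNameFix and \ruleNameHandler. Two problems arise there.

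The first is that \ruleNameHandlerSigma requires derivations $\itJugHandler{\Gamma_i}{\handler h}{\it F_i \Rightarrow \it G_i}$ for the \emph{same} handler. These are not subderivations of the \HEB\ handler derivation in the obvious sense. They correspond to typing the same \HEB\ handler derivation repeatedly at smaller types, since the \HEB\ handler type $\itEffect{\sigma}{\st M}{\st N\rightarrow\st E}\Rightarrow\st F$ is a tree. I would therefore define $\mathcal{A}$ on handler derivations by an auxiliary recursion on the refining computation type, which decreases structurally at each \ruleNameHandlerSigma step. The correctness proof for this case is then a nested induction on that type. The family $I$ can be taken as a set of distinct hidden type pairs $(\it N_i,\it F_i)$, hence finite by Lemma~\ref{lem:FiniteRefinement}.

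The second is \ruleNameFixRec, where unboundedly many unfoldings might seem to be needed. I would handle it by making $\mathcal{A}$ quantify over refinements $\it N$ of the recursive variable's type that are strictly smaller in the finite subset order than the target $\itSet{\it M_i\rightarrow\it E_i}_{\rangeI}$. This requires showing that any derivation can be normalised so that the recursive type strictly decreases. That follows from idempotency of set union and the Splitting/Intersection Lemmas~\ref{lem:Splitting} and~\ref{lem:IntersectionLemma}, which let us drop redundant repetitions. Well-foundedness of this order on $\Refn$ then keeps the formula finite, and the induction goes through on the pair (derivation $\Pi$, refinement) ordered lexicographically.
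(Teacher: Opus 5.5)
\emph{Gap in the fixpoint case.} Your overall plan is the paper's: induction on $\Pi$, matching each clause of $\mathcal A$ against the refinement rules, with $\refrel{J_s}{J_r}$ keeping the induction hypothesis applicable. But the one genuinely nontrivial case, {\ruleNameFix}, rests on a false claim. You assert that every {\ruleNameFixRec} derivation can be normalised so that the type $\it N$ of the recursive variable is strictly contained in the target $\itSet{\it M_i\rightarrow\it E_i}_{\rangeI}$. This fails because the types met along an unfolding are arbitrary refinements of $\st M\rightarrow\st E$, unrelated to the target.

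Take $\val u=\fix{f}{\abs{z}{\case{z}{\ret{\int 1},\app{f}{\int 1}}}}$, which is \HEB-typable at $\typeInt 2\rightarrow\itEffectReturn{\typeInt 1}$. In any \HEBI\ derivation of $\emptyset\vdash\val u:\itSet{\itSet{\typeInt 2}\rightarrow\itEffectReturn{\itSet{\typeInt 1}}}$, the argument $z$ receives $\itSet{\typeInt 2}$, so {\ruleNameCase} must type the branch $\app{f}{\int 1}$. This forces the last rule to be {\ruleNameFixRec}, with the recursive variable typed $\it N=\itSet{\it L\rightarrow\itEffectReturn{\itSet{\typeInt 1}}}$, where $\it L$ types $\int 1$ and hence $\it L\neq\itSet{\typeInt 2}$. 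Such an $\it N$ is never contained in the target, and $\it L=\itSet{\typeInt 1}$ does yield a derivation. The only strict subset of the target, $\itSetEmpty$, cannot type $\app{f}{\int 1}$. So your modified formula rejects a judgement that has a refining derivation, and your lexicographic measure does not decrease. The chain here, $\itSetEmpty,\ \itSet{\itSet{\typeInt 1}\rightarrow\cdots},\ \itSet{\itSet{\typeInt 2}\rightarrow\cdots}$, has no repetitions at all. Dropping repetitions via \Cref{lem:Splitting,lem:IntersectionLemma} therefore cannot give the monotonicity you need.

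\emph{What works instead.} The paper bounds the length of the chain rather than forcing a decrease. A derivation refining a {\ruleNameFix} step for $\fix{x}{\val w}$ is a linear chain $\itSetEmpty=\it N_0,\dots,\it N_{n+1}=\it N$. Each step $\Gamma_j;x:\it N_j\vdash\val w:\it N_{j+1}$ refines the premise $\Pi'$, and the accumulated contexts are $C_{j+1}=C_j+\Gamma_j\subseteq\Gamma$. If $(C_a,\it N_a)=(C_b,\it N_b)$ with $a<b$, the segment between them can be cut without changing any later context or the conclusion. Pairs must be compared, not types alone, because \HEBI\ contexts are exact. Hence one may assume a simple path in the finite graph $\graph{\Pi'}{\Gamma}{\st M\rightarrow\st E}$ on $Sub(\Gamma)\times\Refn(\st M\rightarrow\st E)$. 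This gives the bound $n\le|\Refn(\st M\rightarrow\st E)|\cdot|Sub(\Gamma)|$ built into $\mathcal A$ (\Cref{lem:RefinementFixCompleteness,lem:RefinementFixSoundness}).

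\emph{The handler case.} Your other obstacle is a misdiagnosis. In \HEB\ the rule {\ruleNameHandlerSigma} has $\stJugHandler{\Delta}{\handler h}{\st F\Rightarrow\st G}$ as a premise. So all the \HEBI\ handler derivations in the family refine a single strict subderivation $\Pi_2$ of $\Pi$. The ordinary induction hypothesis on $\Pi_2$, applied to each of the finitely many judgements, suffices, just as for {\ruleNameAbs} and {\ruleNameEff}. More generally, the proposition concerns $\mathcal A$ as defined in the figures. You should prove that formula correct rather than redefine its fixpoint and handler clauses.
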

The function $\mathcal{A}$, in addition to returning a formula whose truth value is the expected one, is  computable, which can be established by observing that all the quantifiers appearing in the produced formula range over sets which, thanks to Lemma~\ref{lem:FiniteRefinement}, we know to be finite. Consequently:
\begin{restatable}[Computability of $\mathcal{A}$]{proposition}{refinementComputability}
 \label{prop:RefinementComputability}
	The function $\mathcal{A}$ is computable, i.e., there exists an algoritm that given in input a type derivation $\Pi$ and
	a judgement $J$ determine whether the formula $\algo{\Pi}{J}$ holds.
\end{restatable}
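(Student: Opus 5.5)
The plan is to prove computability by structural induction on the \HEB\ derivation $\Pi$, showing that the closed formula $\algo{\Pi}{J}$ can be evaluated by a terminating procedure. The formula is built from atomic conditions, conjunctions and disjunctions, and existential quantifications over refinements of the \HEB\ types hidden by the rule at the root of $\Pi$, with recursive calls $\algo{\Pi_k}{J_k}$ on the immediate subderivations. The algorithm evaluates the formula recursively. For each existential quantifier over \HEBI\ types refining a fixed \HEB\ type $\st T$, it enumerates the whole of $\Refn(\st T)$. For quantifiers over \HEBI\ contexts $\Gamma$ with $\refrel{\Delta}{\Gamma}$, it enumerates, for each variable of $\Delta$, the refinements of its type.

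The first step is to make Lemma~\ref{lem:FiniteRefinement} effective. Besides knowing that $\Refn(\st T)$ is finite, we need to be able to list it. I would define $\Refn$ by recursion on $\st T$. For the base case, $\Refn(\typeInt m)=\{\itSetEmpty\}\cup\{\itSet{\typeInt n}\mid 0<n\le m\}$. An arrow type $\st M\rightarrow\st E$ is refined by exactly the finite subsets of the finite set $\{\it M'\rightarrow\it E' \mid \it M'\in\Refn(\st M),\ \it E'\in\Refn(\st E)\}$, which is a computable powerset. For computation types $\itEffect{\sigma}{\st M}{\st N\rightarrow\st E}$, the refinements are $\sigma$-nodes with parameter in $\Refn(\st M)$ and a finite set of branches taken from $\Refn(\st N)\times\Refn(\st E)$. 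Handler types are pairs of refinements. Each clause is an explicit finite construction from recursively computed finite lists, so the map $\st T\mapsto\Refn(\st T)$ is computable. Deciding membership $\refrel{\st T}{\tau}$ for a given $\tau$ is then trivial.

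The second step is to check that the atomic side conditions appearing in $\mathcal{A}$ are decidable on the finite objects involved. These conditions are equalities of types, definedness of the partial operations $\cup$ and $+$, the context condition $\refrel{\Delta}{\Gamma}$, the leaf-replacement relation $\it E\replaceLeaf{\it M_i\rightarrow\it G_i}[\rangeI]\it F$, handler compatibility $\handlerCompat$ (syntactic equality of clauses), and $0<n\le m$. Leaf replacement is decided by recursion on the finite tree $\it E$. At each node, the continuations are matched against $\it F$ by trying every finite partition and assignment of the index set.

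The main obstacle is the case where a rule requires a \emph{family} of premises indexed by a set $I$ whose cardinality is not fixed by $\Pi$: \ruleNameAbs, \ruleNameEff, \ruleNameLetin, \ruleNameHandlerSigma, and repeated applications of \ruleNameFixRec. In these cases, $\algo{\Pi}{J}$ quantifies over decompositions of the conclusion's context into a sum $+_{\rangeI}\Gamma_i$. I would bound $I$ by the finite refinement sets: the index set is in bijection with elements of a refined intersection (a subset of a finite $\Refn$). Because refinements are sets, duplicated premises contribute nothing new, so we may assume that the family $(\Gamma_i,\it M_i,\it E_i)$ ranges injectively over the finite product of refinement sets. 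Every sum decomposition then also ranges over a finite set. For \ruleNameFixRec, the unbounded unfolding is handled by evaluating the relevant quantifier as a least fixpoint over the finite set $\Refn(\st M\rightarrow\st E)$. We compute iteratively the set of refinements $\it N$ for which $\fix{x}{\val v}$ is derivably typed: start from those given by \ruleNameFixBase and close under \ruleNameFixRec. By finiteness, this iteration stabilizes after at most $|\Refn|$ rounds. With these bounds, every quantifier ranges over an effectively enumerable finite set, and the recursion on $\Pi$ terminates, which yields the algorithm.
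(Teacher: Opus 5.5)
Your overall plan matches the paper's: evaluate $\algo{\Pi}{J}$ recursively and check that every quantifier ranges over a finite, enumerable set. Your explicit recursive enumeration of $\Refn(\st T)$ is a useful point the paper leaves implicit. The problem is the paragraph you call the main obstacle. It misreads where $\mathcal{A}$ actually quantifies, and two of its steps fail as written.

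\textbf{Index sets and the let case.} For {\ruleNameAbs}, {\ruleNameEff} and {\ruleNameHandlerSigma}, $I$ is not quantified at all. It is read off the input judgement $J$: it indexes the elements of $\itSet{\it M_i\rightarrow\it E_i}_{\rangeI}$, or the branches of the computation type. The context splits range over $Sub(\Gamma)$ for the given $\Gamma$, which is finite by \Cref{lem:FiniteSubsets}. The only genuinely existential index set is in the {\ruleNameLetin} case, and there your injectivity argument does not hold at the level of the rule. Leaf replacement consumes exactly one index per leaf of $\it E$, because the inductive rule takes the disjoint family indexed by pairs $(k,i)$. Suppose $\it E$ has two leaves $\itEffectReturn{\it A}$, below distinct edges $\it N_1\neq\it N_2$, and both are continued by the same $\it G$ in the same context. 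Then the let rule needs two identical premises, and merging them breaks the leaf-replacement premise. The bound you need is $|I|\le|\it F|$ (\Cref{lem:LeafReplaceBound}), and $\mathcal{A}$ already states it explicitly.

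\textbf{The fixpoint case.} Iterating over refinements $\it N$ alone does not work. Whether $\fix{x}{\val v}$ receives $\it N$ depends on the context, and {\ruleNameFixRec} sums contexts along the unfolding. The conclusion must carry exactly the $\Gamma$ of $J$, and a set of types alone cannot check $\Gamma=+_{j}\Gamma_j$. So ``$|\Refn|$ rounds'' is also the wrong bound. You must iterate over pairs $(\Gamma',\it N)$ with $\Gamma'\in Sub(\Gamma)$ and $\it N\in\Refn(\st M\rightarrow\st E)$. The edges must be decided by the recursive calls $\algo{\Pi'}{\Gamma_j;x:\it N\vdash\val v:\it N'}$ on the premise $\Pi'$ of the {\ruleNameFix} rule in $\Pi$, not by unrestricted \HEBI\ derivability. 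These pairs are the nodes of $\graph{\Pi'}{\Gamma}{\st M\rightarrow\st E}$, and they give the bound $|\Refn(\st M\rightarrow\st E)|\cdot|Sub(\Gamma)|$.

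Note also that $\mathcal{A}$ contains no unbounded unfolding. Its {\ruleNameFix} clause already quantifies over chains of length $n\le|\Refn(\st M\rightarrow\st E)|\cdot|Sub(\Gamma)|$, so direct evaluation is a finite search. If you prefer the more efficient reachability computation, you must also show it agrees with the bounded formula. The argument: a chain longer than the number of pairs revisits some (accumulated context, type) pair, and the loop can be cut. This is the simple-path argument behind \Cref{lem:RefinementFixCompleteness}.

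With these two corrections, or simply by reading the bounds off \Cref{def:AlgorithmLogic,def:AlgorithmContd}, your argument goes through and coincides with the paper's.
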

We finally reached our target result.
\begin{restatable}[Decidability]{theorem}{decidability}
	The reachability problem for terms typable in \HEB\ is decidable.
\end{restatable}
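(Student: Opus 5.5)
The plan is to reduce reachability to finitely many instances of the decidable predicate $\models\algo{\Pi}{J}$. Take a closed computation $\comput t$ with an \HEB\ derivation $\stJugComput[\Pi]{\emptyset}{\comput t}{\st E}$; this derivation is part of the input, or can be recovered by a search if one prefers. A value is reached iff $\comput t \reductArr* \ret{\int n}$ for some $n$, so we restrict to integer return types $\st E=\itEffectReturn{\typeInt m}$, the case relevant for reachability of ground values. The candidate judgements are then $J_n = \emptyset\vdash\comput t : \itEffectReturn{\itSet{\typeInt n}}$ for $0<n\le m$. Each $J_n$ refines the conclusion of $\Pi$, since $\Refn(\typeInt m)$ contains $\itSet{\typeInt n}$. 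The algorithm computes $\algo{\Pi}{J_n}$ for each $n\le m$, using \Cref{prop:RefinementComputability}, and answers yes iff one of them holds.

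Soundness is the easy direction. If $\models\algo{\Pi}{J_n}$, then \Cref{prop:AlgoCorrectness} gives an \HEBI\ derivation of $J_n$. \Cref{th:EffectTreeReachability} then yields $\comput t\reductArr*\ret{\int n}$.

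Completeness is the main step. Suppose $\comput t=\comput t_0\reductArr\cdots\reductArr\comput t_k=\ret{\int n}$. We first use \Cref{thm:HEBElementarySubjectReduction} to build \HEB\ derivations $\Pi=\Pi_0\rightsquigarrow\Pi_1\rightsquigarrow\cdots\rightsquigarrow\Pi_k$. This requires that derivation reduction $\rightsquigarrow$ is defined whenever term reduction happens, i.e.\ that its definition mirrors the subject reduction proof, which we accept as intended by the construction. Next, $\Pi_k$ types $\ret{\int n}$ with $\itEffectReturn{\typeInt m}$. It ends in \ruleNameRet\ over \ruleNameInt, possibly via weakening, and we refine it directly by the \HEBI\ derivation $\Xi_k$ using $\itSet{\typeInt n}$, which is legitimate since $n\le m$. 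Finally, a backward induction applying \Cref{lem:NewElementarySubjectExpansion-ref} at each step yields $\Xi_0$ with conclusion $J_n$ and $\refrel{\Pi}{\Xi_0}$. By \Cref{prop:AlgoCorrectness}, $\models\algo{\Pi}{J_n}$.

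The obstacle I expect is in matching up hypotheses. The relative expansion lemma refers to $\rightsquigarrow$, which is defined on derivations of closed terms, so we must check that the chain $\Pi_i$ it requires is exactly what subject reduction produces. We must also check that all $\Xi_i$ share the fixed conclusion type $\itEffectReturn{\itSet{\typeInt n}}$. Both checks are routine inspections of the definitions. Termination of the procedure is immediate because only $m$ candidates are tried.
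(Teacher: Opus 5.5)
Your proposal is correct and follows the paper's proof. Fix the \HEB\ derivation $\Pi$ of type $\itEffectReturn{\typeInt m}$, form the finitely many refined judgements $J_n$ with $0<n\le m$, decide each $\algo{\Pi}{J_n}$ by \Cref{prop:RefinementComputability}, and justify positive answers via \Cref{prop:AlgoSoundness} and \Cref{th:EffectTreeReachability}. The paper's written proof stops at this soundness direction, so your completeness argument is a genuine contribution: it builds a chain of derivation reductions, starts from the refinement of $\ret{\int n}$, applies \Cref{lem:NewElementarySubjectExpansion-ref} backwards, and concludes with \Cref{prop:AlgoCompleteness}. This makes explicit what \Cref{sect:HEBRedExp} is designed to supply, and it is exactly what guarantees that negative answers are correct.
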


%% file: Definitions/estType.tex
$
        \begin{array}{c}
           \hspace{0.4cm} \begin{array}{ccccc}
                \begin{prooftree}
                    \hypo{0<n\le m}
                    \typeRuleInt[1]{\stJugValue{\Delta}{\int n}{\typeInt{m}}}
                \end{prooftree}
                &\hspace{0.87cm} &
                \begin{prooftree}
                    \typeRuleVar{\stJugValue{\Delta;x :: \st M}{x}{\st M}}
                \end{prooftree}
           
             & \hspace{0.87cm} &
                \begin{prooftree}
                        \hypo{\stJugComput{\Delta; x :: \st M}{\comput t}{\st E}}
                    \typeRuleAbs{\stJugValue{\Delta}{\abs{x}{\comput t}}{\st M \rightarrow \st E}}
                \end{prooftree}\hspace{0.4cm}
                \end{array}\\[0.5cm]
            \begin{array}{ccc}
            \begin{prooftree}
                    \hypo{
                        \stJugValue
                            {\Delta; x :: \st M \rightarrow \st E}
                            {\val v}
                            {\st M \rightarrow \st E}}
                    
                \typeRuleFix[1]{
                    \stJugValue
                        {\Delta }
                        {\fix{x}{\val v}}
                        {\st M \rightarrow \st E}}
            \end{prooftree}

            &\hspace{0.9cm} &

                \begin{prooftree}
                        \hypo{\stJugValue{\Delta}{\val v}{\st M \rightarrow \st E}}
                        \hypo{\stJugValue{\Delta}{\val w}{\st M}}
                    \typeRuleApp{\stJugComput{\Delta}{\app{\val v}{\val w}}{\st E}}
                \end{prooftree}

            \end{array}
        
            \\[0.5cm]
            
                \begin{prooftree}
                        \hypo{\stJugComput{\Delta}{\comput t}{\st E}}
                        \hypo{
                            \st E
                                \replaceLeaf{\st M \rightarrow \st G}
                            \st F}
                        \hypo{
                            \stJugComput
                                {\Delta; x :: \st M}
                                {\comput u}
                                {\st G}
                            }
                    \typeRuleLetin{
                        \stJugComput
                            {\Delta}
                            {\letin{x}{\comput t}{\comput u}}
                            {\st F}
                    }
                \end{prooftree}\\[0.5cm]
           
            \begin{array}{ccc}
                \begin{prooftree}
                        \hypo{\stJugValue{\Delta}{\val v}{\typeInt{n}}}
                        \hypo{(\stJugComput{\Delta}{\comput t_i}{\st E}
                            )_{ 0<i\le n}}
                    \typeRuleCase[2]{\stJugComput{\Delta }{\case{\val v}{\comput t_1, \ldots, \comput t_n}}{\st E}}
                \end{prooftree}
                &\hspace{0.9cm}&
                \begin{prooftree}
                        \hypo{\stJugValue{\Delta}{\val v}{\st M}}
                    \typeRuleRet{\stJugComput{\Delta}{\ret{\val v}}{\itEffectReturn{\st M}}}
                \end{prooftree}
            \\[0.5cm]
            \end{array}
        \\
            \begin{array}{c}
                \begin{prooftree}
                        \hypo{
                            \stJugValue
                                {\Delta}
                                {\val v}
                                {\st M}
                        }
                        \hypo{
                            \stJugComput
                                {\Delta; x :: \st N}
                                {\comput t}
                                {\st E}
                        }
                    \typeRuleEff{
                        \stJugComput
                            {\Delta}
                            {\effect{\val v}{x}{\comput t}}
                            {\itEffect{\sigma}{\st M}{\st N \rightarrow \st E}}
                        }
                \end{prooftree}
            \qquad\qquad\quad
                \begin{prooftree}
                    \hypo{\stJugHandler{\Delta}{\handler h}{\st F \Rightarrow \st E}}
                    \hypo{\stJugComput{\Delta}{\comput t}{\st F}}
                    \typeRuleHandle{\stJugComput{\Delta}{\handle{\handler h}{\comput t}}{\st E}}
                \end{prooftree}
            \\[0.5cm]
          \begin{prooftree}
                \hypo{\stJugComput
                    {\Delta;
                        y :: \st M}
                    {\comput t}
                    {\st E}}
                \hypo{\{\retClause{y}{\comput t}\} \handlerCompat \handler h}%
            \typeRuleHandlerRet[2]{\stJugHandler
                {\Delta}
                {\{\retClause{y}{\comput t}\}\cup \handler h}%
                {\itEffectReturn{\st M} \Rightarrow \st E}
            }%
        \end{prooftree}\\[0.5cm]

            \begin{prooftree}
                        \hypo{%
                            \stJugComput%
                                {\Delta;
                                    y :: \st M;
                                    r :: \st N \rightarrow \st G}%
                                {\comput t}%
                                {\st E}%
                            }%
                        \hypo{%
                            \stJugHandler%
                                {\Delta}%
                                {\handler h}%
                                {\st F \Rightarrow \st G}%
                        }%
                        \hypo{\{\effectClause{y}{r}{\comput t}\} \handlerCompat \handler h}%
                    \typeRuleHandlerSigma[3]{%
                        \stJugHandler%
                            {\Delta}%
                            {\{\effectClause{y}{r}{\comput t}\} \cup \handler h}%
                            {
                                \itEffect{\sigma}{\st M}
                                    {\st N
                                        \rightarrow 
                                    \st F
                                }
                                    \Rightarrow 
                                \st E
                            }%
                    }%
                \end{prooftree}

            \end{array}
        \end{array}
    $

%% file: Definitions/SimpleLeafReplacement.tex
$
    \begin{array}{ccc}
        \begin{prooftree}
            \infer0{
                \itEffectReturn{\st M}
                    \replaceLeaf{\st M \rightarrow \st E}
                \st E
            }
        \end{prooftree}
        &\hspace{0.5cm}&
        \begin{prooftree}
                \hypo{
                    \st F
                        \replaceLeaf{
                            \st M \rightarrow \st G
                            }
                    \st E
                }
            \infer1{
                \itEffect{\sigma}{\st M'}
                        {\st N \rightarrow \st F}
                \replaceLeaf{
                    \st M \rightarrow \st G
                    }
                \itEffect{\sigma}{\st M'}
                    {\st N \rightarrow \st E}}
        \end{prooftree}
    \end{array}
$

%% file: Definitions/NewRefinementTypes.tex
$
        \begin{array}{c}
            \begin{array}{ccccccc}
                \begin{prooftree}
                    \infer0{
                        \refrel{\typeInt n}{\itSetEmpty}
                    }
                \end{prooftree}
                &\hspace{0.6cm}&
                \begin{prooftree}
                    \hypo{0<n \le m}
                    \infer1{\refrel{\typeInt{m}}{\itSet{\typeInt{n}}}
                    }
                \end{prooftree}
                  &\hspace{0.6cm}&
                   \begin{prooftree}

                    \hypo{\refrel{\st{M}}{\it M} }

                    \hypo{\left(
                        \refrel{\st{N}}{\it N_i} 
                    \right)_{\rangeI}}

                    \hypo{\left(
                        \refrel{\st E}{\it E_i} 
                    \right)_{\rangeI}}
                \infer3{\refrel{\itEffect{\sigma}{\st M}{\st N \rightarrow \st E}}{\itEffect{\sigma}{\it M}{\it N_i \rightarrow \it E_i}_{\rangeI}}
                }
            \end{prooftree}
            \end{array}
        \\[0.5cm]

        \begin{array}{ccccc}

                \begin{prooftree}
                    \hypo{\left(
                        \refrel{\st{M}}{\it M_i}
                    \right)_{\rangeI}}

                    \hypo{\left(
                        \refrel{\st{U}}{\it E_i} 
                    \right)_{\rangeI}}
                \infer2{\refrel{\st{M} \rightarrow \st{U}}{\itSet{\it M_i \rightarrow \it E_i}_{\rangeI}}
                }
            \end{prooftree}
             & \hspace{0.6cm}&
                \begin{prooftree}
                        \hypo{\refrel{\st{M}}{\it M} }
                    \infer1{\refrel{\itEffectReturn{\st M}}{\itEffectReturn{\it M}}
                    }
                \end{prooftree}
        &\hspace{0.6cm}&
                \begin{prooftree}
                        \hypo{\refrel{\st S}{\it E} }
                        \hypo{\refrel{\st U}{\it F}}
                    \infer2{
                        \refrel{\st S \Rightarrow \st U}{\it E \Rightarrow \it F}
                    }
                \end{prooftree}
       
            \end{array}
        \end{array}
    $

%% file: Definitions/NewRefinementRelation.tex
$
    \begin{array}{c}
        \begin{array}{ccc}
            \begin{prooftree}
            \hypo{0<n \le m}
            \typeRuleInt[1]{\itJugValue{\Delta}{\int n}{\typeInt{m}}}
        \end{prooftree}
        \refrelsym
        \begin{prooftree}
                \hypo{\it M \in \{\itSetEmpty, \itSet{\typeInt{n}}\}}
            \typeRuleInt[1]{\itJugValue{\emptyset}{\int n}{\it M}}
        \end{prooftree}
        &\hspace{0.1cm} &
        \begin{prooftree}
            \typeRuleVar{\stJugValue{\Delta;x :: \st M}
                {x}{\st M}}
        \end{prooftree}
        \refrelsym
        \begin{prooftree}
            \typeRuleVar{\itJugValue{x : \it M }
                {x}{\it M}}
        \end{prooftree}\\
        & & 
         \text{if } \refrel{\st M}{\it M}
        \end{array}
    
    \\[0.7cm]
        \begin{prooftree}
                \hypo{
                    \stJugValue[\Pi_1]
                        {\Delta; x::\st M\rightarrow \st E}
                        {\val v}
                        {\st M \rightarrow \st E}}
            \typeRuleFix[1]{
                \stJugValue
                    {\Delta}
                    {\fix{x}{\val v}}
                    {\st M \rightarrow \st E}}
        \end{prooftree}
        \refrelsym 
         \begin{prooftree}
                \hypo{
                    \itJugValue[\Xi_1]
                        {\Gamma_1; x:\itSetEmpty}
                        {\val v}
                        {\itSet{\it M_i \rightarrow \it E_i}_{\rangeI}}}
            \typeRuleFixBase{
                \stJugValue
                    {\Gamma_1 }
                    {\fix{x}{\val v}}
                    {\itSet{\it M_i \rightarrow \it E_i}_{\rangeI}}}
        \end{prooftree}
        \\[0.5cm]
        \text{if } \refrel{\Pi_1}{\Xi_1}\\[0.3cm]

        \begin{prooftree}
                \hypo{
                    \stJugValue[\Pi_1]
                        {\Delta; x::\st M\rightarrow \st E}
                        {\val v}
                        {\st M \rightarrow \st E}}
            \typeRuleFix[1]{
                \stJugValue[\Pi]
                    {\Delta}
                    {\fix{x}{\val v}}
                    {\st M \rightarrow \st E}}
        \end{prooftree}
        \\[0.5cm]
        \refrelsym 
         \begin{prooftree}
                \hypo{
                    \itJugValue[\Xi_1]
                        {\Gamma_1; x:\it N}
                        {\val v}
                        {\itSet{\it M_i \rightarrow \it E_i}_{\rangeI}}}
                \hypo{
                    \itJugValue[\Xi_2]
                        {\Gamma_2}
                        {\fix{x}{\val v}}
                        {\it N}}
            \typeRuleFixRec{
                \itJugValue
                    {\Gamma_1 + \Gamma_2}
                    {\fix{x}{\val v}}
                    {\itSet{\it M_i \rightarrow \it E_i}_{\rangeI}}}
        \end{prooftree}
        \\[0.5cm]
        \text{if } \refrel{\Pi_1}{\Xi_1}  \text{ and } \refrel{\Pi}{\Xi_2}\\[0.4cm]

        \begin{prooftree}
                \hypo{\stJugComput[\Pi_1]{\Delta}{\comput t}{\st E}}
            \hypo{
                    \st E
                        \replaceLeaf{\st M \rightarrow \st G}
                    \st F}
                \hypo{\stJugComput[\Pi_2]
                        {\Delta; x :: \st M}
                        {\comput u}
                        {\st G}}
            \typeRuleLetin[3]{
                \stJugComput
                    {\Delta}
                    {\letin{x}{\comput t}{\comput u}}
                    {\st F}
            }
        \end{prooftree}
    \\[0.5cm]
    \refrelsym 
    \begin{prooftree}
                \hypo{\itJugComput[\Xi]{\Gamma}{\comput t}{\it E}}
                \hypo{
                    \it E
                        \replaceLeaf{\it M_i \rightarrow \it G_i}[\rangeI]
                    \it F}
                \hypo{
                    (\itJugComput[\Xi_i]
                        {\Gamma_i; x : \it M_i}
                        {\comput u}
                        {\it G_i}
                    )_{\rangeI}}
            \typeRuleLetin{
                \itJugComput
                    {\Gamma +_{\rangeI} \Gamma_i}
                    {\letin{x}{\comput t}{\comput u}}
                    {\it F}
            }
        \end{prooftree}\\[0.5cm]
        \text{if } \refrel{\Pi_1}{\Xi} \text{ and } \refrel{\Pi_2}{\Xi_i} \text{ for each } \rangeI
        \\[0.4cm]
     \begin{prooftree}
                \hypo{\stJugValue[\Pi_1]
                    {\Delta}
                    {\val v}
                    {\st M}}
                \hypo{
                    \stJugComput[\Pi_2]
                        {\Delta; x :: \st N}
                        {\comput t}
                        {\st E}}

            \typeRuleEff[2]{
                \stJugComput
                    {\Delta }
                    {\effect{\val v}{x}{\comput t}}
                    {\itEffect{\sigma}{\st M}{\st N \rightarrow \st E}}
                }
        \end{prooftree}
    \\[0.5cm]
    \refrelsym 
     \begin{prooftree}
                \hypo{\itJugValue[\Xi]
                    {\Gamma}
                    {\val v}
                    {\it M}}
                \hypo{(
                    \itJugComput[\Xi_i]
                        {\Gamma_i; x : \it N_i}
                        {\comput t}
                        {\it E_i}
                )_{\rangeI}}
            \typeRuleEff{
                \itJugComput
                    {\Gamma +_{\rangeI} \Gamma_i}
                    {\effect{\val v}{x}{\comput t}}
                    {\itEffect{\sigma}{\it M}{\it N_i \rightarrow \it E_i}_{\rangeI}}
                }
        \end{prooftree}\\[0.5cm]
        \text{if } \refrel{\Pi_1}{\Xi} \text{ and } \refrel{\Pi_2}{\Xi_i} \text{ for each }\rangeI
        \\[0.4cm]
        \begin{prooftree}
    \hypo{
        \stJugComput[\Pi_1]
            {\Delta; y :: \st M;r :: \st N \rightarrow \st G}
            {\comput t}
            {\st E}
            }
            \hypo{\begin{matrix}
                \{\effectClause{y}{r}{\comput t}\} \handlerCompat \handler h\\
               \stJugHandler[\Pi_2]
                    {\Delta}%
                    {\handler h}%
                    {\st F \Rightarrow \st G}
            \end{matrix}}
            \typeRuleHandlerSigma[2]{\stJugHandler
                {\Delta }
                {\{\effectClause{y}{r}{\comput t}\} \cup \handler h}
                {\itEffect{\sigma}{\st M}
                                    {\st N
                                        \rightarrow 
                                    \st F
                                }
                                    \Rightarrow 
                                \st E}
            }
            \end{prooftree}
        \\[0.6cm] \refrelsym
        \begin{prooftree}
                        \hypo{%
                            \itJugComput[\Xi]
                                {\Gamma;
                                    y : \it M;
                                    r : \itSet*{\it N_i \rightarrow \it G_i}_{\rangeI}}%
                                {\comput t}%
                                {\it E}%
                            }%

                    \hypo{\begin{matrix}
                \{\effectClause{y}{r}{\comput t}\} \handlerCompat \handler h\\
               (
                            \itJugHandler[\Xi_i]
                                {\Gamma_i}%
                                {\handler h}%
                                {\it F_i \Rightarrow \it G_i}%
                        )_{\rangeI}
            \end{matrix}}

                    \typeRuleHandlerSigma[2]{%
                        \itJugHandler%
                            {\Gamma +_{\rangeI} \Gamma_i}%
                            {\{\effectClause{y}{r}{\comput t}\} \cup \handler h}%
                            {
                                \itEffect{\sigma}{\it M}
                                    {\it N_i
                                        \rightarrow 
                                    \it F_i
                                }_{\rangeI}
                                    \Rightarrow 
                                \it E
                            }%
                    }%
        \end{prooftree}\\[0.7cm]
        \text{if } \refrel{\Pi_1}{\Xi} \text{ and } \refrel{\Pi_2}{\Xi_i} \text{ for each }\rangeI
    \end{array}
$

%% file: Definitions/HEBReductions.tex
$
\begin{array}{ccc}
\begin{prooftree}
			\hypo{\stJugComput[\Pi_{\comput u}]{ x :: \st M}{\comput u}{\st E}}
			\typeRuleAbs{\stJugValue{\emptyset}{\abs{x}{\comput u}}{\st M\rightarrow \st E}}
			\hypo{\stJugValue[\Pi_{\val v}]{\emptyset}{\val v}{\st M}}
			\typeRuleApp{\stJugComput{\emptyset }{\app{(\abs{x}{\comput u})}{\val v}}{\st E}}
		\end{prooftree}
		& \rightsquigarrow & \stJugComput[\Pi_{\comput u}\sub{x}{\Pi_\val v}]{\emptyset}{\comput u\sub{x}{\val v}}{\st E}\\ \\

		\begin{prooftree}
			\hypo{
				\stJugValue[\Pi_p]
				{x :: \st M \rightarrow \st E}
				{\val v}
				{\st M \rightarrow \st E}}
			\typeRuleFix[1]{
				\stJugValue[\Pi_\val v]
				{\emptyset}
				{\fix{x}{\val v}}
				{\st M \rightarrow \st E}}
			\hypo{\stJugValue{\emptyset}{\val w}{\st M}}
			\typeRuleApp{\stJugComput{\emptyset}{(\fix{x}{\val v})\val w}{\st E}}
			
		\end{prooftree} 
		& \rightsquigarrow & 
		\begin{prooftree}
            \hypo{
                \begin{matrix}
			\hypo{\stJugValue{\emptyset}{\val w}{\st M}}\\
            \hypo{\stJugValue[\Pi_{p}\sub{x}{\Pi_\val v}]{\emptyset}{\val v\sub{x}{\fix{x}{\val v}}}{\st M \rightarrow \st E}}
                \end{matrix}
            }

			\typeRuleApp[1]{\stJugComput{\emptyset}{\app{\val v\sub{x}{\fix{x}{\val v}}}{\val w}}{\st E}}
			
		\end{prooftree} 
    \end{array}
$

%% file: Visuals/AlgorithmLogic.tex
$
\begin{array}{ccl} 
    \\[-0.2cm]   
    \algosym\left(
        \begin{prooftree}
            \hypo{0<n\le m}
                \infer1{\stJugValue{\Delta}{\int n}{\typeInt{m}}}
        \end{prooftree}
    ,\,
    \itJugValue{\Gamma}{\int n}{\it M\in \itSet{\itSetEmpty, \itSet{\typeInt i}}}
    \right)
    &=&
    (\Gamma=\emptyset) \wedge ((\it M=\itSetEmpty) \lor (i=n))\\[0.5cm]
    
    \algosym\left(
        \begin{prooftree}
            \infer0{\stJugValue{\Delta;x::\st M}{x}{\st M}}
        \end{prooftree}
    ,\,
    \itJugValue{\Gamma;x:\it N}{x}{\it M}
    \right)
    &=&
   (\Gamma=\emptyset) \land (\it M=\it N) \land (\refrel{\st M}{\it M})\\[0.5cm]

    \algosym\left(
        \begin{prooftree}
                        \hypo{\stJugComput[\Pi']{\Delta; x :: \st M}{\comput t}{\st E}}
                    \infer1{\stJugValue{\Delta}{\abs{x}{\comput t}}{\st M \rightarrow \st E}}
                \end{prooftree}
    ,\,
    \itJugValue{\Gamma}{\abs{x}{\comput t}}{\itSet*{\it M_i \rightarrow \it E_i}_{\rangeI}}
    \right)
    &=&\begin{array}{c}
        \forall \rangeI.\, \exists \Gamma_i\subseteq \Gamma.\, (\Gamma=+_{\rangeI}\Gamma_i)\\
        \bigwedge_{i\in I}\algo{\Pi'}{\itJugComput{\Gamma_i;x:\it M_i} {\comput t}{\it E_i}} \\
    \end{array}\\[0.5cm]
      
    \algosym\left(
        \begin{array}{c}
                    \begin{prooftree}
                    \hypo{
                        \stJugValue[\Pi_1]
                            {\Delta; x :: \st M \rightarrow \st E}
                            {\val v}
                            {\st M \rightarrow \st E}}
                    
                \infer1{
                    \stJugValue
                        {\Delta}
                        {\fix{x}{\val v}}
                        {\st M \rightarrow \st E}}
            \end{prooftree}
    ,\\[0.3cm]
   \qquad\qquad\quad\qquad\itJugValue{\Gamma}{\fix{x}{\val v}}{\it N}
        \end{array}
    \right)
    &=&\begin{array}{c}
        \exists n\leq |\Refn(\st M\rightarrow\st E)|\cdot |Sub(\Gamma)|.\\
        \forall j\leq n.\, \exists \Gamma_j\subseteq \Gamma.\, (\Gamma=+_{j\leq n}\Gamma_i).\,\\
        \land \forall k\leq n+1.\,\exists \refrel{\st M}{\it M_k}.\\
        \it M_0=\itSetEmpty \land \it M_{n+1}= \it N\\
        \bigwedge_{j\leq n}\algo{\Pi_1}{\itJugValue{\Gamma_j;x:\it M_j}{\val v}{\it M_{j+1}}} \\
    \end{array}\\[1cm]

    \algosym\left(
        \begin{array}{c}
            \begin{prooftree}   
                        \hypo{\stJugValue[\Pi_1]{\Delta}{\val v}{\st M \rightarrow \st E}}         

            \hypo{\stJugValue[\Pi_2]{\Delta}{\val w}{\st M}}
            \infer2{\stJugComput{\Delta}{\app{\val v}{\val w}}{\st E}}
                \end{prooftree}
    ,\\[0.3cm]
   \qquad\qquad\qquad\qquad\quad\qquad\itJugValue{\Gamma}{\app{\val v}{\val w}}{ \it E}
        \end{array}\right)
    &=&\begin{array}{c}
        \exists \refrel{\st M}{\it M}.\,\exists\Gamma_1,\Gamma_2\subseteq \Gamma .\, (\Gamma=\Gamma_1+\Gamma_2)\\
        \wedge\algo{\Pi_1}{\itJugValue{\Gamma_1}{\val v}{\it M \rightarrow\it E}} \\
         \wedge\algo{\Pi_2}{\itJugValue{\Gamma_2}{\val w}{\it M}} \\
        
    \end{array}\\[0.8cm]

    \algosym\left(
        \begin{array}{c}
        \begin{prooftree}
                    \hypo{\stJugHandler[\Pi_1]{\Delta}{\handler h}{\st F \Rightarrow \st E}}
                    \hypo{\stJugComput[\Pi_2]{\Delta}{\comput t}{\st F}}
                    \infer2{\stJugComput{\Delta}{\handle{\handler h}{\comput t}}{\st E}}
                \end{prooftree}
    ,\\[0.3cm]
   \qquad\qquad\qquad\itJugComput{\Gamma}{\handle{\handler h}{\comput t}}{\it E}
        \end{array}
    \right)
    &=&\begin{array}{c}
        \exists \refrel{\st F}{\it F} .\,\exists\Gamma_1,\Gamma_2\subseteq \Gamma .\, (\Gamma=\Gamma_1+\Gamma_2)\\
        \wedge\algo{\Pi_1}{\itJugHandler{\Gamma_1}{\handler h}{\it F \Rightarrow \it E}}\\
        \wedge \algo{\Pi_2}{\itJugComput{\Gamma_2}{\comput t}{\it F}}\\
    \end{array}\\[0.8cm]

    \algosym\left(
        \begin{array}{c}
        \begin{prooftree}

                        \hypo{
                            \begin{matrix}
                                \stJugComput[\Pi_1]%
                                {\Delta;
                                    y :: \st M;
                                    r :: \st N \rightarrow \st G}%
                                {\comput t}%
                                {\st E}\\
                            \stJugHandler[\Pi_2]%
                                {\Delta}%
                                {\handler h}%
                                {\st F \Rightarrow \st G}%
                            \end{matrix}
                        }
                    \infer1{%
                        \stJugHandler%
                            {\Delta}%
                            {\{\effectClause{y}{r}{\comput t}\} \cup \handler h}%
                            {
                                \itEffect{\sigma}{\st M}
                                    {\st N
                                        \rightarrow 
                                    \st F
                                }
                                    \Rightarrow 
                                \st E
                            }%
                    }%
                \end{prooftree}
    ,\\[0.5cm]
   \itJugHandler
                {\Gamma}
                {\{\effectClause{y}{r}{\comput t}\} \cup \handler h}%
                {\itEffect{\sigma}{\it M}
                    {\it N_i\rightarrow \it F_i
                        }_{\rangeI}
                    \Rightarrow \it E}
        \end{array}
    \right)
    \hspace{-0.25cm}&=\hspace{-0.25cm}&\hspace{-0.10cm}\begin{array}{c}
        \exists \Gamma'\subseteq \Gamma .\, \forall \rangeI.\\
        \exists \refrel{\st G}{\it G_i}.\,
         \exists \Gamma_i\subseteq \Gamma.\, (\Gamma=\Gamma'+_{\rangeI}\Gamma_i)\\
        \wedge\algo{\Pi_1}{\itJugComput{\Gamma';y:\it M;r:\itSet{\it N_i \rightarrow \it G_i}_{\rangeI}}{\comput t}{\it E}}\\
        \bigwedge_{\rangeI}\algo{\Pi_2}{\itJugHandler{\Gamma_i}{\handler h}{\it F_i\Rightarrow\it G_i}}\\
    \end{array}
\end{array}
$

%% file: backtohepcf.tex
In the previous section, we have seen how \HEBI\ can be regarded not only as a type system complete for termination and reachability, but also that such completeness properties are preserved, although in a relative sense, when \HEBI\ is viewed as a refinement language for \HEB. In this section, we will show how the very same technique can also be applied to \HEPCF.

In \Cref{def:HEPCFRefinementTypes}, we define a refinement relation between \HEPCF\ and \HEBI\ types, as we already did for \HEB. 
\begin{figure}[h!]
	\centering
	\input{Definitions/RefinementTypes.tex}
	\caption{\HEPCF\, Type Refinement}
	\label{def:HEPCFRefinementTypes}
    \Description{HEPCF, Type Refinement}
\end{figure}
Following exactly the same recipe we used for \HEB, the refinement relation can be lifted from types to judgements and derivations in a very natural way. Moreover, reduction itself can be given on derivations rather than terms, following the recipe we used in \Cref{sect:HEBRedExp} for \HEB. All this leads quite quickly to the following:
\begin{lemma}[Relative Subject Expansion for \HEPCF] 
	\label{lem:ElementarySubjectExpansion-ref}%
	Let $\comput t, \comput s \in \setComput$ be two closed terms such that 
	$\comput t \reductArr \comput s$. If there exists
	$\itJugComput[\Xi']{\emptyset}{\comput s}{\it E}$ and $\stJugComput[\Pi']{\emptyset}{\comput s}{\st U}$ such that $\refrel{\Pi'}{\Xi'}$ and also there exist $\stJugComput[\Pi]{\emptyset}{\comput t}{\st U}$ such that $\Pi\rightsquigarrow \Pi'$, 
	then there exists
	$\itJugComput[\Xi]{\emptyset}{\comput t}{\it E}$ such that $\refrel{\Pi}{\Xi}$.
\end{lemma}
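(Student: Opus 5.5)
We follow the same strategy as for \Cref{lem:NewElementarySubjectExpansion-ref}: a case analysis on the derivation-level reduction step $\Pi\rightsquigarrow\Pi'$, now for \HEPCF\ derivations. Each case corresponds to one of the seven rewriting rules of \Cref{def:RewriteRules}, possibly under an evaluation context $\ctxtEval$.

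The first step handles contexts. By induction on $\ctxtEval$ we show that if $\Pi$ types $\ctxtEval<\comput u>$ and $\Pi\rightsquigarrow\Pi'$, then $\Pi'$ coincides with $\Pi$ outside a subderivation $\Pi_0\rightsquigarrow\Pi_0'$ for the redex. Symmetrically, any refinement $\Xi'\sqsupset\Pi'$ decomposes, in the spirit of \Cref{lem:ContextTyping}, into a context part and possibly \emph{several} refinements $\Xi'_j\sqsupset\Pi'_0$. There may be several because a \ruleNameLetin\ or \ruleNameHandle\ node in \HEBI\ may type the hole once but its continuation many times. Since the hole of an evaluation context is in head position, it is typed exactly once. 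So it suffices to expand a single refinement of the redex and reassemble the context part unchanged, which preserves $\refrelsym$ node by node.

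The second step proves derivation-level versions of the substitution and anti-substitution lemmas. Take $\Pi_p$ typing $p$ under $x::\st M$, $\Pi_{\val v}$ typing $\val v$, and a refinement $\Xi'\sqsupset\Pi_p\sub{x}{\Pi_{\val v}}$ typing $p\sub{x}{\val v}$. Then there exist an \HEBI\ type $\it M$ with $\refrel{\st M}{\it M}$, a derivation $\Xi_p\sqsupset\Pi_p$ with $x:\it M$ in its context, and $\Xi_{\val v}\sqsupset\Pi_{\val v}$ typing $\val v$ with $\it M$. The proof follows \Cref{lem:antiSubstLemma}, by induction on $\Pi_p$. Each occurrence of $\Pi_{\val v}$ inside $\Pi_p\sub{x}{\Pi_{\val v}}$ is refined by some $\Xi^k$, and we merge these with \Cref{lem:IntersectionLemma}. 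We must check that this merging preserves refinement of the single derivation $\Pi_{\val v}$: the union of refinements of the same \HEPCF\ type is again a refinement, and the rules producing the union are applied to the same \HEPCF\ subderivation. This gives the cases $\reductBeta$, $\reductLetRet$, $\reductHdlRet$ and $\reductFix$. For $\reductFix$ we collect the refinements of the copies of $\fix{x}{\val v}$, rebuild them with \ruleNameFixRec, or with \ruleNameFixBase\ when there are none, and note that both refine the \HEPCF\ \ruleNameFix\ rule. The case $\reductCase$ is immediate, keeping only the selected branch as \HEBI\ requires. The case $\reductLetEff$ amounts to commuting \ruleNameEff\ with \ruleNameLetin\ via the leaf-replacement rules of \Cref{def:HandlerCompatAndLeafReplacement}.

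The main obstacle is $\reductHdlEff$. Here the value $\abs{y}{\handle{\handler h}{\comput u}}$ is substituted for $r$, and the \HEPCF\ derivation $\Pi$ types the whole handler $\handler h$ with the single rule \ruleNameHandler\ over a fixed signature $\effectCtxt E$. From the anti-substitution step we obtain refinements $\Xi_i$ typing $\handle{\handler h}{\comput u}$ under $y:\it N_i$ with result $\it E_i$. Each of these splits, by inversion of \ruleNameHandle, into a handler refinement typing $\handler h$ as $\it F_i\Rightarrow\it E_i$ and a refinement typing $\comput u$ with $\it F_i$. We then build the \HEBI\ term typing with \ruleNameEff\ and the handler typing with \ruleNameHandlerSigma. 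The difficulty is that \ruleNameHandlerSigma\ together with \ruleNameHandlerRet\ yields an \HEBI\ tree whose \HEPCF\ counterpart is one monolithic \ruleNameHandler\ node. So the refinement relation on handler derivations has to be defined so that an \HEPCF\ \ruleNameHandler\ node is refined by an entire \HEBI\ handler subderivation, each of whose clause typings refines the corresponding \HEPCF\ premise. This also requires every $\it N_i$, $\it F_i$ and $\it M$ to refine the types $\st N$, $\st L$ fixed by $\effectCtxt E$ for $\sigma$, and we have to check that the $\it F_i$ refine the \HEPCF\ type of $\comput u$.

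I would first fix this definition of handler-derivation refinement, following \Cref{def:HEPCFRefinementTypes}, so that it is closed under the \HEBI\ handler rules. I would then verify the $\reductHdlEff$ case by induction on the number of uses of $r$. Once that invariant is in place, the remaining cases are routine adaptations of the \HEB\ proof.
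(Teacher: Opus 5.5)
Your route is the paper's. You do a case analysis on the rewrite rule and prove a refinement-preserving anti-substitution lemma, used twice in the $\reductHdlEff$ case (first for $r$, then for $x$). You invert the \ruleNameAbs/\ruleNameHandle{} typing of the continuation, reassemble with \ruleNameHandlerSigma, and use a refinement version of context typing. Your guess for how an \HEPCF\ \ruleNameHandler\ node is refined also matches \Cref{def:RefinementDerivationsHandler}.

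However, one step fails as stated: the anti-substitution lemma cannot be proved by induction on $\Pi_p$. If $p=\fix{y}{\val w}$, a refinement of $\Pi_p\sub{x}{\Pi_{\val v}}$ may end with \ruleNameFixRec. Its second premise refines the \emph{same} \HEPCF\ derivation $\Pi_p\sub{x}{\Pi_{\val v}}$, not a proper subderivation. Similarly, in \Cref{def:RefinementDerivationsHandler} the handler premises $\Xi_i$ of \ruleNameHandlerSigma\ refine the whole \HEPCF\ \ruleNameHandler\ derivation, not one of its premises. In exactly these cases your induction hypothesis is unavailable. The paper inducts on the \HEBI\ derivation instead (\Cref{lem:HEPCFantiSubstLemma-ref}). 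With that measure your argument goes through, including your correct check that merging via \Cref{lem:IntersectionLemma} preserves refinement of $\Pi_{\val v}$.

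Two smaller points about the $\reductHdlEff$ case. First, no induction on the number of uses of $r$ is needed. Anti-substitution on $r$ already collects them into one type $\mathcal{L}=\{\mathcal{N}_i\rightarrow\mathcal{G}_i\}_{i\in I}$, and inversion gives one handler refinement $\Xi^i_{h}$ per $i\in I$. These become the handler premises of \ruleNameHandlerSigma\ directly. Second, there are facts you do need but omit:
\begin{itemize}
\item \Cref{prop:HEPCFClosedRefinement}, to pass from the weakened derivation $\Pi_h^{(y::\mathsf{N}_k)}$ to $\Pi_h$;
\item \Cref{rem:ContextSimplification}, to remove $y$ from the context of $\Xi^i_h$;
\item \Cref{lem:CompatImpliesClashFree}, for the compatibility premise.
\end{itemize}
The conditions on $\mathcal{M}$, $\mathcal{N}_i$, $\mathcal{F}_i$ that you list are not extra obligations. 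They follow from $\refrel{\Pi_{\val v}}{\Xi_{\val v}}$ and $\refrel{\Pi_{\comput u}}{\Xi^i_{\comput u}}$. Finally, unlike for \HEB, no handler-elimination lemma is needed. The \HEPCF\ derivation types the full handler, including the $\sigma_k$ clause, and the handler inside the continuation is that same handler.
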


We can therefore conclude that, for the purposes of the reachability problem, the search space of \HEBI\ derivations for a given \HEPCF\ program typed through $\Pi$ can be restricted to those \HEBI\ derivations that refine $\Pi$. Why does this \emph{not} contradict the undecidability of the reachability problem for \HEPCF? Simply because the finite refinement property cannot hold for \HEPCF, as can be easily verified by inspecting the rules in \Cref{def:HEPCFRefinementTypes}. Indeed, let us consider the \HEPCF\ computation type $\stEffect{\st 1}$ where the effect context $\effectCtxt E$ attributes $\sigma$ the type $\effectSpec{\st{1}}{\st{1}}$. 
It is immediate to check that all the following \HEBI\ types refine $\stEffect{\st 1}$:

\[
\begin{array}{cc}
    \it E_0\coloneq\itEffectReturn{\itSet{\typeInt{1}}} & 
    \it E_{n+1}\coloneq\itEffect{\sigma}{\itSet{\typeInt{1}}}{\itSet{\typeInt{1}}\rightarrow \it E_n}
\end{array}
\]

%% file: Definitions/RefinementTypes.tex
$
        \begin{array}{c}
            \begin{array}{ccccccc}
                \begin{prooftree}
                    \infer0{
                        \refrel{\typeInt n}{\itSetEmpty}
                    }
                \end{prooftree}
                &\hspace{0.3cm}&
                \begin{prooftree}
                    \hypo{0<n \le m}
                    \infer1{\refrel{\typeInt{m}}{\itSet{\typeInt{n}}}
                    }
                \end{prooftree}
            
        &\hspace{0.3cm}&  
            \begin{prooftree}
                    \hypo{\effectSpec{\st{M}}{\st{N}} \in \effectCtxt E}

                    \hypo{\refrel{\st{M}}{\it M} }

                    \hypo{\left(
                        \refrel{\st{N}}{\it N_i} 
                    \right)_{\rangeI}}

                    \hypo{\left(
                        \refrel{\stEffect{\st L}}{\it E_i} 
                    \right)_{\rangeI}}
                \infer4{\refrel{\stEffect{\st L}}{\itEffect{\sigma}{\it M}{\it N_i \rightarrow \it E_i}_{\rangeI}}
                }
            \end{prooftree}
            \end{array}
                \\[0.5cm]

        \begin{array}{ccccc}
            \begin{prooftree}
                    \hypo{\left(
                        \refrel{\st{M}}{\it M_i}
                    \right)_{\rangeI}}

                    \hypo{\left(
                        \refrel{\st{U}}{\it E_i} 
                    \right)_{\rangeI}}
                \infer2{\refrel{\st{M} \rightarrow \st{U}}{\itSet{\it M_i \rightarrow \it E_i}_{\rangeI}}
                }
            \end{prooftree}
            & \hspace{0.5cm}&
                \begin{prooftree}
                        \hypo{\refrel{\st{M}}{\it M} }
                    \infer1{\refrel{\stEffect{\st{M}}}{\itEffectReturn{\it M}}
                    }
                \end{prooftree}
        &\hspace{0.5cm}&
                \begin{prooftree}
                        \hypo{\refrel{\st S}{\it E} }
                        \hypo{\refrel{\st U}{\it F}}
                    \infer2{
                        \refrel{\st S \Rightarrow \st U}{\it E \Rightarrow \it F}
                    }
                \end{prooftree}
            \end{array}
        \end{array}
    $

%% file: relatedwork.tex
\subsection{Intersection Type Systems}
Since their introduction~\cite{coppo1978new} fifty years ago, intersection types have not only been a type-theoretic foundation of ad-hoc polymorphism~\cite{Reynolds_1988_TR}, but also a powerful tool for analyzing the semantic properties of higher-order programs, when the latter are expressed as $\lambda$-terms or related calculi. 
In languages as diverse as Java~\cite{bracha1998making}, Scala~\cite{odersky2004overview}, and Typescript~\cite{bierman2014understanding}, variations on the theme of intersection types have been adopted in order to provide type systems with a degree of flexibility that is incompatible with classical parametric polymorphism.

The original idea of intersection types (namely, to allow a single term to be assigned a type capturing different and completely unrelated types) has been generalized in various ways and over calculi that extend the $\lambda$-calculus along heterogeneous directions. 
It is worth recalling, in this context, the so-called non-idempotent intersection types~\cite{decarvalho2007}, which make it possible to capture quantitative refinements of the usual termination property, such as the number of steps that possibly different abstract machines would take to evaluate the term being typed~\cite{decarvalho2007,AccattoliDalLagoVanoni}.
Concerning the axis of programming languages, it is also worth mentioning the extension of intersection types to languages with control operators, such as the $\lambda\mu$-calculus~\cite{vanBakel/Barbanera/deLiguoro_2018_LMCS}, and, more recently, their extension to a calculus with algebraic effects~\cite{gavazzo2024monadic}. 
The latter, however, as already observed, is fundamentally different from the system introduced in this paper, in that effects are pre-interpreted and given meaning through the notion of monad~\cite{moggi1991notions}, whereas in the present work the role traditionally played by monads is instead taken by handlers. This means that there is no need to capture, say, probabilities, non-determinism, or exceptions in a native way, and the main difficulty becomes the interaction between operations and handlers.

The use of intersection types is seen as a decision algorithm in so-called higher-order model checking (HOMC) was pioneered by Kobayashi \cite{kobayashi2009types}, and has subsequently been studied in various directions, from its connection with the relational model of linear logic \cite{melliesgrellois} to its link with linearity \cite{clairambaultmurawski}. To the authors' knowledge, the systems described in this article are the first examples of applying intersection types to HOMC within the setting of calculi with handlers.

\subsection{Reasoning about Programs Handling Effects}
Algebraic effect handlers allow the user to implement effects within programs
in a structured, modular, and composable manner.
Their expressivity is underpinned by the ability to reify delimited
continuations as first-class values.
On the other hand, from the denotational perspective, algebraic effect handlers
can also be viewed as transformers of computation trees (or, equivalently,
\emph{free models}) over algebraic operations that underlying terms
perform~\cite{Plotkin/Pretnar_2009_ESOP,Plotkin/Pretnar_2013_LMCS,Pretnar_2015_MFPS}.
This view enables embedding algebraic effects and handlers in a
language~\cite{Kiselyov/Ishii_2015_Haskell,Schrijvers/Pirog/Wu/Jaskelioff_2019_Haskell}.

A major approach to reasoning about effectful higher-order programs is
\emph{(type-and-)effect systems}, and effect systems for algebraic effects and
handlers have been studied
extensively~\cite{Plotkin/Pretnar_2009_ESOP,Plotkin/Pretnar_2013_LMCS,Bauer/Pretnar_2014_LMCS,Hillerstrom/Lindley_2016_TyDe,Leijen_2017_POPL,Yoshioka/Sekiyama/Igarashi_ICFP_2024,Tang/Lindley_2026_POPL}.
However, these systems only keep track of what algebraic operations may
perform; in other words, they only approximate the algebraic operations
appearing in the computation tree produced by a program.
By contrast, our intersection type system {\HEBI} can precisely represent the
computation tree of a terminating program (non-terminating programs are ill
typed there).
This empowers us not only to characterize the termination of programs but also
to allow fine-grained typing.
For instance, as shown in \cref{sect:highlevel}, {\HEBI}
allows different calls to the same algebraic operation to take arguments of
different types.
This typing is disallowed in the aforementioned effect systems, because those
systems have to typecheck all the calls to the same algebraic operation in a
uniform way, thereby requiring the arguments to have the same type.
Effect systems aware of the order of performed effects are called
\emph{sequential}~\cite{Skalka/Smith_2004_APLAS,Skalka/Smith/VanHorn_2008_JFP,Gordon_2020_ECOOP,Gordon_2021_TOPLAS}
or
\emph{temporal}~\cite{Koskinen/Terauchi_2014_CSL-LICS,Nanjo/Unno/Koskinen/Terauchi_2018_LICS,Sekiyama/Unno_2023_POPL,Sekiyama/Unno_2025_POPL},
depending on whether they reason about only the finite behavior or even the
infinite behavior of effectful programs.
These effect systems are interested in what \emph{unhandled}---namely,
built-in---effects are performed in what order, whereas {\HEBI} can
track the order---and even branching---of \emph{handled} effects as well as how it affects the
transformation of computations along with effect handlers.

%
\emph{Dijkstra
monads}~\cite{Swamy/Weinberger/Schlesinger/Chen/Livshits_2013_PLDI} are one of
the approaches that utilize the free model view, enabling the verification of
higher-order effectful programs by monadically transforming specifications to be
verified against the programs.
Their aim is at verification rather than typability, so the typing is not
as fine-grained as our type system and no work on Dijkstra monads considers
typing properties specific to intersection type systems, such as subject
expansion.

Higher-order model checking is also extended to support algebraic effects and
handlers by \citet{Dal-Lago/Ghyselen_2024_POPL} with the undecidability result,
followed by the attempts to identify fragments where the HOMC problem is
decidable~\cite{Sekiyama/Unno_2024_OOPSLA,Sekiyama/DalLago/Unno_2025_OOPLSA,Endo/Terauchi_2025_APLAS}.
These results on the decidable fragments rely on the CPS transformations that
translate a program with algebraic effect handlers to one without them.
%
%
Our work is also aligned with this direction, providing a more semantic viewpoint
than the justification with the CPS transformations.
This provides a benefit that {\HEB} accommodates terms that are not in the
fragments identified by the prior work, although they are incomparable:
only {\HEB} typecheckes computation $\effect[\sigma]{\val v_1}{x}{\effect[\sigma]{\val v_2}{y}{\comput t}}$ with values $\val v_1, \val v_2$ of different types, whereas
only {\HEB} cannot typecheck $\case{x}{\effect[\sigma_1]{x_1}{\val v_1}{\comput t_1}, \effect[\sigma_2]{x_2}{\val v_2}{\comput t_2}}$ for $\sigma_1 \neq \sigma_2$.

\subsection{Effect Handlers and Session Types}
{Several works in the literature have already investigated the strong connections between effect handlers and session types (e.g., \cite{orchard2016effects,tangsession}). In our setting, this relationship is reflected in the resemblance between session types and the tree of possible effects associated with computations in our systems.
However, there are important differences. In particular, session types place a central emphasis on notions of \emph{choices} and their duality, which fundamentally structure the interaction. In \HEBI, this emphasis is not present: the distinction between branches is not primarily captured through dual choice constructs, but rather arises through intersections.}

%% file: conclusion.tex
This work introduces \HEBI, the first example of an intersection type system specifically designed for a calculus with handlers and algebraic effects. Noticeably, \HEBI\ is not merely an intersection type system, but it is also \emph{behavioral}, in the sense that from the type of any computation one can deduce not only which algebraic operation will be executed next, but also what the type(s) of the underlying continuation must be. On the one hand, this makes it possible to retain all relevant information inside the types. On the other hand, this guarantees \emph{compositionality}: computations and the handlers responsible for managing them can be typed independently, aggregating the corresponding type information only when needed.

What is remarkable about \HEBI\ is not only its precise characterization of terminating computations, but also the fact that it naturally suggests a new monomorphic type system (meaning that the same handler can not use the continuation in type-theoretically incompatible ways), namely \HEB, for which the reachability problem is decidable, contrary to what happens in similar type systems.
{\HEPCF\ for example, derives its expressive power primarily from \emph{not} tracking the types of algebraic operations' continuations within the computation type, thereby allowing the latter to become arbitrarily complex. This phenomenon is ultimately responsible for the undecidability of the HOMC problem in \HEPCF, as shown in \Cref{sect:backtopcf}. The use of behavioral forms of computation types in \HEB\ prevents this phenomenon to happen. At the same time, \HEB\ supports a form of polymorphism in the parameter and return types of algebraic operations that is not available in \HEPCF .}
The decidability result on \HEB\ is established by making essential use of \HEBI\ itself, employing it as a refinement of \HEB, in the style of the results by Kobayashi and coauthors concerning the use of intersection types in the context of the HOMC problem.

There are certainly several promising directions for future research that the introduction of \HEBI\,naturally suggests.
First of all, it is worth mentioning that the previously discussed fragments of \HEPCF\ for which the reachability problem is decidable were shown to be so by defining suitable CPS translations and proving that these translations are type preserving. Studying such type systems through the refinement induced by \HEBI\ would undoubtedly lead to a deeper understanding of their expressive power, something which at present is not possible. Another important direction concerns the expressive power of the calculus \HEB, which in this article has been introduced as a kind of proof of concept, but is certainly of independent interest, especially if the behavioral types it comprises were to become recursive, similarly to what happens with session types. 
Finally, similarly to what happens with intersection types and the $\lambda$-calculus, \HEBI\ types seem to induce a denotational model for \HEB\ that is worth studying in its own right. In particular, it would be worthwhile to study its relationship with the standard denotational semantics of effect handlers, which is based on trees in the free monad. They are in fact in close relationship, since an \HEB\ computation type can be inhabited only by a denotational tree in which all branches perform the same effects in the same order and finally return values of the same type. Note that the argument values of the effects performed in and the values returned by the branches may be different from each other: the branches only share which effect is performed in which order, and that, after the aforementioned sequence of effects, some value will be returned.
\\

%% file: appendix.tex
\section{Proofs of Section \ref{sect:behaviorintersection}}
In this appendix, we provide detailed proofs of all the statements presented in the main text. The organization of this appendix closely follows that of the main article, with sections arranged accordingly.

We begin by defining the notion of \emph{clash} in the \HE\ calculus. Clashes correspond to terms that are not in normal form, yet no further computation step can be performed. Most of these situations arise from typing errors.
The clashes are defined with the following form (\cref{def:Clashes}).

\begin{figure}[h!]\centering
\begin{tabular}{c}\framebox{$
    \begin{array}{r l}
        \app{\int n}{\val w}
    \\[0.1cm]
        \case{\abs{x}{\comput u}}{\comput t_1, \ldots, \comput t_m}
    \\[0.1cm]
        \case{\fix{x}{\val v}}{\comput t_1, \ldots, \comput t_m}
    \\[0.1cm]
        \case{\int n}{\comput t_1, \ldots, \comput t_m}
            &\text{ where } n \leq 0 \text{ or } n > m
    \\[0.1cm]
        \handle{\handler h}{\effect{\val v}{x}{\comput t}}
            &\text{ where } \{\effectClause{y}{r}{\comput s}\} \not\in \handler h 
    \\[0.1cm]
        \handle
            { \{\effectClause{y}{r}{\comput s}\} \cup \handler h}
            {\effect{\val v}{x}{\comput t}}
            &\text{ where } 
            \{\effectClause{y}{r}{\comput u}\} \in 
            \handler h 
            \\ & \text{ and } \comput s \neq \comput u
    \end{array}$}
\end{tabular}
\caption{The different kinds of clashes in \HE\ terms}
\label{def:Clashes}
\end{figure}

Both the definitions of clashes for $\handle
            {\handler h}
            {\effect{\val v}{x}{\comput t}}$ extends easily in the case of $\constructor{return}$ clauses. 

\subsection{Section \ref{sect:HEBIElementaryProperties}}
\label{sect:apx4-3}
\begin{remark}[Type Context Simplification] ~\\
    \label{rem:ContextSimplification}%
    Let $p \in \setTerm$ be term and $x$ a variable that is not free in $p$.
    If $\itJug{\Gamma}{p}{\it E}$ and $x : \it M \in \Gamma$, then $\it M = \itSetEmpty$. In particular, if $p$ is closed, then $\Gamma$ is empty.
\end{remark}

\begin{lemma}[\HEBI, Typability of Values] ~\\
    \label{lem:TypabilityValues}%
    For $\val v \in \setVal$, there exists
    $\itJugValue[\Pi]{\emptyset}{\val v}{\itSetEmpty}$.
\end{lemma}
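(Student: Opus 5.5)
The plan is a case analysis on the shape of $\val v$. The empty type $\itSetEmpty$ coincides with $\itSet{\it M_i \rightarrow \it E_i}_{i\in\emptyset}$, so rules whose premises are indexed by $I$ become premise-free once $I=\emptyset$. This is what makes every value typable at $\itSetEmpty$, with no information needed about its body. In each case the context $\Gamma$ ends up being the empty sum $+_{i\in\emptyset}\Gamma_i$, i.e.\ the context mapping every variable to $\itSetEmpty$. This is exactly the $\emptyset$ of the statement, once we recall that contexts are total functions defaulting to $\itSetEmpty$.

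\textbf{Integers.} If $\val v=\int n$, rule \ruleNameInt{} directly gives $\itJugValue{\emptyset}{\int n}{\itSetEmpty}$, since its side condition allows $\it M\in\{\itSetEmpty,\itSet{\typeInt n}\}$.

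\textbf{Variables.} If $\val v=x$, rule \ruleNameVar{} gives $\itJugValue{x:\itSetEmpty}{x}{\itSetEmpty}$. Since a binding to $\itSetEmpty$ is the same as no binding, the context $x:\itSetEmpty$ is literally the empty context $\emptyset$. This is the one place where the convention on contexts has to be invoked explicitly.

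\textbf{Abstractions.} If $\val v=\abs{x}{\comput t}$, we apply \ruleNameAbs{} with $I=\emptyset$. There are no premises to discharge, so nothing about $\comput t$ has to be typed. The conclusion is $\itJugValue{+_{i\in\emptyset}\Gamma_i}{\abs{x}{\comput t}}{\itSet{\,}}$, which is the desired judgement.

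\textbf{Fixpoints.} If $\val v=\fix{x}{\val w}$, we use \ruleNameFixBase{} with $I=\emptyset$, which requires a derivation of $\itJugValue{\Gamma_1;x:\itSetEmpty}{\val w}{\itSetEmpty}$ with $\Gamma_1$ empty. Because this premise is again a typing of a value at $\itSetEmpty$, the argument must go by structural induction on $\val v$ rather than by a flat case split, and $\val w$ is handled by the induction hypothesis. The hypothesis yields $\itJugValue{\emptyset}{\val w}{\itSetEmpty}$, and this context already equals $\emptyset;x:\itSetEmpty$. Applying \ruleNameFixBase{} then concludes.

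I expect no real obstacle. The only points needing care are the identification of $\itSetEmpty$ with the empty family of arrows, which lets \ruleNameAbs{} and \ruleNameFixBase{} apply with zero premises, and the totality convention on contexts, which is needed in the variable case.
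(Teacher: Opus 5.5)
Your proof is correct and matches the paper's, which says only that the lemma goes by induction on $v$. Your case analysis spells that induction out: abstractions are typed by \texttt{(abs)} with $I=\emptyset$, and only the fixpoint case needs the induction hypothesis, through \texttt{(fixb)}.
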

\begin{proof}
    By induction on $\val v$.
\end{proof}

\begin{corollary}
    \label{Col:TypabilityOfValuesOptmialSize}%
    For any value $\val v \in \setVal$, the type derivation 
    $\itJugValue[\Pi]{\emptyset}{\val v}{\itSetEmpty}$
    defined in the previous lemma (\cref{lem:TypabilityValues})
    is the smallest possible derivation for $\val v$.
    In other words, for any $\itJugValue[\Pi']{\Gamma}{\val v}{\it M}$,
    $\Pi'$ has at least the size of $\Pi$.
\end{corollary}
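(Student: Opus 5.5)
The plan is to first make explicit the shape of the derivation $\Pi$ built in the proof of \Cref{lem:TypabilityValues}, and then compare its size with that of an arbitrary derivation $\itJugValue[\Pi']{\Gamma}{\val v}{\it M}$. The comparison goes by induction on the structure of $\val v$, or equivalently on $\Pi'$. Throughout, I take the size of a derivation to be the number of rule instances it contains.

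\textbf{Step 1: the canonical derivation.} Unfolding the induction of \Cref{lem:TypabilityValues}, I expect the following shapes, according to the form of $\val v$:
\begin{itemize}
\item if $\val v=\int n$, then $\Pi$ is a single instance of \ruleNameInt\ with $\it M=\itSetEmpty$;
\item if $\val v=x$, then $\Pi$ is a single instance of \ruleNameVar\ with $x:\itSetEmpty$, which is the empty context since contexts are total and default to $\itSetEmpty$;
\item if $\val v=\abs{x}{\comput t}$, then $\Pi$ is a single instance of \ruleNameAbs\ with $I=\emptyset$, hence with no premises (recall $\itSet{\it M_i\rightarrow\it E_i}_{i\in\emptyset}=\itSetEmpty$);
\item if $\val v=\fix{x}{\val w}$, then $\Pi$ is an instance of \ruleNameFixBase\ with $I=\emptyset$, whose unique premise is the canonical derivation for $\val w$ of type $\itSetEmpty$ under $x:\itSetEmpty$.
\end{itemize}
So the size $|\Pi_{\val v}|$ of the canonical derivation is $1$ in the first three cases and $1+|\Pi_{\val w}|$ in the fixpoint case. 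This needs \Cref{lem:TypabilityValues} for open values too, which the statement allows, since $\val v$ ranges over all of $\setVal$.

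\textbf{Step 2: minimality.} Now let $\Pi'$ be any derivation for $\val v$ and proceed by induction on $\val v$.
\begin{itemize}
\item If $\val v$ is an integer, a variable or an abstraction, then $\Pi'$ contains at least its final rule instance, so $|\Pi'|\geq 1=|\Pi_{\val v}|$.
\item If $\val v=\fix{x}{\val w}$, the last rule of $\Pi'$ is either \ruleNameFixBase\ or \ruleNameFixRec. In both cases the (first) premise is a derivation $\Pi''$ typing the value $\val w$ in some context with some type. By the induction hypothesis on $\val w$, which is quantified over all contexts and types, $|\Pi''|\geq|\Pi_{\val w}|$. Hence $|\Pi'|\geq 1+|\Pi''|\geq 1+|\Pi_{\val w}|=|\Pi_{\val v}|$.
\end{itemize}

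\textbf{Expected obstacle.} The only point needing care is the fixpoint case. There one must check that both fixpoint rules really have a premise typing the body $\val w$, with no constraint on context or type that would block the induction hypothesis. One must also check that the canonical derivation for $\fix{x}{\val w}$ uses \ruleNameFixBase\ with exactly one premise. Both facts are immediate from \Cref{def:EffectTreeTypingRules}, so I expect the whole argument to be a routine induction.
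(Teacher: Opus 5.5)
Your proof is correct and takes essentially the same route as the paper. You reconstruct the derivation of the Typability of Values lemma as one rule instance for integers, variables and abstractions (the last with $I=\emptyset$), and as a (fixb) instance over the canonical derivation of the body for $\mathsf{fix}\,x.w$; minimality then follows by induction on $v$, because both (fixb) and (fixr) have a first premise typing the body $w$ in an arbitrary context and type. The paper states the corollary without proof, relying on the same induction on $v$ it uses for that lemma, so your argument just spells out what the paper leaves implicit.
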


\hebiSplitting*
\begin{proof}
    \input{Proofs/SplittingLemma2.tex}
\end{proof}

\hebiIntersection*
\begin{proof}
    \input{Proofs/IntersectionLemma.tex}
\end{proof}

Since $\cup$ is a partial relation, the following lemma ensures that it is always well defined for the types arising from derivations typing closed values.
\begin{lemma}[Union of Closed Values]
    \label{lem:SumClosedValues}
    Let $\val v\in \setVal$ a closed value, $\Pi$ and $\Xi$ two derivations such that $\itJugValue[\Pi]{\emptyset}{\val v}{\it M}$ and $\itJugValue[\Xi]{\emptyset}{\val v}{\it N}$ then $\it M\cup\it N$ exists.
\end{lemma}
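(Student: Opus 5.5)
The plan is a direct case analysis on the syntactic shape of the closed value $\val v$, together with inversion on the last rule of $\Pi$ and $\Xi$. The point is that $\cup$ on value types can only fail in two ways. One is when both operands are integer types $\itSet{\typeInt{n}}$ and $\itSet{\typeInt{m}}$ with $n\neq m$. The other is when one operand is an integer type and the other is a non-empty set of arrow types, which is not a value type at all. The union of two sets of arrow types is just set-theoretic union, and the empty type $\itSetEmpty$ is a unit for $\cup$; both facts use the convention that $\itSetEmpty$ and $\itSet{\it M_i\rightarrow\it E_i}_{i\in\emptyset}$ coincide. So it suffices to show that the \emph{shape} of any type assigned to $\val v$ in the empty context is fixed by $\val v$ itself, and that in the integer case the integer is fixed too.

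Since $\val v$ is closed, it is not a variable, so it is one of $\int n$, $\abs{x}{\comput t}$, or $\fix{x}{\val w}$. If $\val v=\int n$, the only rule whose conclusion has subject $\int n$ is \ruleNameInt. Hence both $\it M$ and $\it N$ belong to $\{\itSetEmpty,\itSet{\typeInt{n}}\}$, and every union of two such types is defined, since $\itSet{\typeInt{n}}\cup\itSet{\typeInt{n}}=\itSet{\typeInt{n}}$. If $\val v=\abs{x}{\comput t}$, the last rule of both derivations must be \ruleNameAbs. So $\it M=\itSet{\it M_i\rightarrow\it E_i}_{\rangeI}$ and $\it N=\itSet{\it N_j\rightarrow\it F_j}_{\rangeJ}$, possibly with $I$ or $J$ empty, and their union is the set of arrows indexed by $I\uplus J$. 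If $\val v=\fix{x}{\val w}$, the last rule is either \ruleNameFixBase{} or \ruleNameFixRec. Both rules conclude a set of arrow types, so the same argument as for abstractions applies.

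No induction is required: only the last rule of each derivation matters, and the two derivations need not be related to each other. The only delicate step is the bookkeeping about which value types make $\cup$ undefined. I would state explicitly that $\cup$ is defined on any pair of arrow sets, including empty ones, and on any pair drawn from $\{\itSetEmpty,\itSet{\typeInt{n}}\}$ for a fixed $n$. I would also note that $\Gamma=\emptyset$ is not even needed beyond guaranteeing that $\val v$ is closed; \Cref{rem:ContextSimplification} confirms that the context is empty anyway.
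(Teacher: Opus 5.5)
Your proof is correct and is the same argument as the paper's. You do a case analysis on the closed value ($\int n$, an abstraction, or a fixpoint) and invert the last typing rule. In the integer case both types lie in $\{\itSetEmpty,\itSet{\typeInt{n}}\}$, and otherwise both are sets of arrow types, where union is always defined. Your integer case is slightly more careful than the paper's, which states the union is $\itSet{\typeInt{n}}$ even when both types are $\itSetEmpty$.
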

\begin{proof}
    By cases on $\val v$. If $\val v=\int n$ then $\it M,\it N\in \itSet{\itSetEmpty,\itSet{\typeInt{n}}}$, thus $\it M\cup \it N=\itSet{\typeInt{n}}$ is defined.
    Otherwise, being closed, we have either $\val v=\abs{x}{\comput t}$ or $\val v=\fix{x}{\val w}$. In both cases the types must have the form $\it M=\itSet{\it M_i\rightarrow\it E_i}_{\rangeI}$ and $\it N=\itSet{\it M_j\rightarrow\it E_j}_{\rangeJ}$. Therefore $\it M\cup \it N=\itSet{\it M_i\rightarrow\it E_i}_{i\in I\cup J}$. 
\end{proof}

Finally, we must prove that the $\#$ relation is well behaved with respect to substitution.
\begin{remark}[Compatibility of Substitution]
    \label{lem:HandlerComputIndipendence}
    Let $\handler h\in \setHandler$, $\val v\in \setVal$ and $x$ a variable. Moreover, let $\psi$ be either $\{\retClause{x}{\comput t}\}$ or $\{\effectClause{y}{r}{\comput t}\}$, then $\psi \handlerCompat \handler h$ if and only if $\psi\sub{x}{\val v}\handlerCompat \handler h\sub{x}{\val v}$.
\end{remark}

\hebiSub*
\begin{proof}
    \completedProof{%
    \input{Proofs/SubstitutionLemma.tex}

    }
\end{proof}

\hebiAntiSub*
\begin{proof}
    \completedProof{%
        \input{Proofs/AntiSubstitutionLemma.tex}
    }
\end{proof}

\hebiContextTyping*
\begin{proof}
    \completedProof{
        \input{Proofs/ContextTyping.tex}
    }
\end{proof}

Before proving Subject Reduction and Subject Expansion, we first establish the following two lemmas. The first states that any effect clause may be added to a handler, provided that it satisfies the $\#$ relation. The second shows that this relation characterizes precisely the absence of clashes in the presence of handlers.
\begin{lemma}[\HEBI, Handler Weakening] ~\\
    \label{lem:HandlerWeakening}%
    Let $\itJugHandler[\Pi]{\Gamma}{\handler h}{\it F
    \Rightarrow \it E}$ and $\psi$ be either $\{\retClause{x}{\comput t}\}$ or $\{\effectClause{y}{r}{\comput t}\}$. If
    $\psi \handlerCompat \handler h$,
    then there exists
    $\itJugHandler[\Pi']{\Gamma}{\psi
    \cup \handler h}{\it F
    \Rightarrow \it E}$.
\end{lemma}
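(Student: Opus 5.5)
We proceed by induction on the derivation $\Pi$ of $\itJugHandler{\Gamma}{\handler h}{\it F \Rightarrow \it E}$. Only two rules can conclude a handler judgement, {\ruleNameHandlerRet} and {\ruleNameHandlerSigma}, so the induction has two cases. In each, the last rule has the form $\handler h = \phi \cup \handler h'$, where $\phi$ is the clause actually typechecked (either $\{\retClause{y}{\comput s}\}$ or $\{\effectClause[\delta]{y}{r}{\comput s}\}$), and there is a side condition $\phi \handlerCompat \handler h'$. The idea is to keep the distinguished clause $\phi$ and its premise untouched, and to absorb $\psi$ into the ``rest'' of the handler. Concretely, we rewrite $\psi \cup \handler h = \phi \cup (\psi \cup \handler h')$.

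The key step is checking that the compatibility side condition is preserved: $\phi \handlerCompat \psi \cup \handler h'$. Compatibility with a union splits clause-wise, so we need two facts.
\begin{itemize}
\item $\phi \handlerCompat \handler h'$, which we already have.
\item $\phi \handlerCompat \psi$. If $\phi$ and $\psi$ concern different operations this is vacuous. If they concern the same operation (or are both $\constructor{return}$ clauses), then $\phi \in \handler h$ and the hypothesis $\psi \handlerCompat \handler h$ forces the bodies to coincide, up to $\alpha$-renaming of the bound $y,r$; compatibility is symmetric in this situation.
\end{itemize}

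\textbf{Case {\ruleNameHandlerRet}.} The premise $\itJugComput{\Gamma; y:\it M}{\comput s}{\it E}$ is reused verbatim. The rule then yields $\itJugHandler{\Gamma}{\phi \cup (\psi\cup\handler h')}{\itEffectReturn{\it M} \Rightarrow \it E}$, which is the required judgement since $\phi \cup \psi \cup \handler h' = \psi \cup \handler h$ as sets of clauses.

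\textbf{Case {\ruleNameHandlerSigma}.} The premises $\itJugHandler{\Gamma_i}{\handler h'}{\it F_i \Rightarrow \it G_i}$ are derivations about $\handler h'$, not $\handler h$. Before applying the induction hypothesis to them we need $\psi \handlerCompat \handler h'$, which follows from $\psi \handlerCompat \handler h$ because $\handler h' \subseteq \handler h$. (The rule does not require $\phi\notin\handler h'$, so $\handler h'$ may equal $\handler h$; this is harmless.) The induction hypothesis then gives $\itJugHandler{\Gamma_i}{\psi\cup\handler h'}{\it F_i \Rightarrow \it G_i}$ with the same contexts. Reapplying {\ruleNameHandlerSigma} with the unchanged first premise and the compatibility fact established above gives the conclusion, with the same context $\Gamma + (+_{\rangeI}\Gamma_i)$.

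The main obstacle is the bookkeeping of $\handlerCompat$ when $\psi$ shares its operation with $\phi$. There we must argue that $\psi$ and $\phi$ are literally the same clause up to $\alpha$-equivalence, so the set union does not create two distinct clauses for one operation. Handled that way, the resulting handler is well formed and $\handlerCompat$ is never violated. Contexts never change, because $\psi$ contributes no typing premise.
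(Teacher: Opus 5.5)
Your proof is correct and follows the approach the paper intends. The paper states this \HEBI\ lemma without a written proof, but proves its \HEB\ counterpart by induction on $\Pi$; that is your argument: re-apply the last handler rule with $\psi$ absorbed into the untyped remainder of the handler, using the induction hypothesis on the handler premises in the $\sigma$-clause case, and checking that the compatibility side condition still holds when $\psi$ and the typechecked clause concern the same operation.
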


\begin{lemma}[Clashes and Compatibility]
    \label{lem:CompatImpliesClashFree}
    The term $\handle{\handler h}{\effect{\val v}{x}{\comput u}}$, with $\{\effectClause{y}{r}{\comput t}\} \in \handler h$, is not a clash if and only if $\{\effectClause{y}{r}{\comput t}\} \handlerCompat \handler h$. Similarly,  $\handle{\handler h}{\ret{\val v}}$, with $\{\retClause{x}{\comput t}\} \in \handler h$, is not a clash if and only if $\{\retClause{x}{\comput t}\}\handlerCompat \handler h$.
\end{lemma}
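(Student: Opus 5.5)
Since both directions of the equivalence are about a single redex $\handle{\handler h}{\effect{\val v}{x}{\comput u}}$ with a clause for $\sigma$ present in $\handler h$, I would prove the two implications separately, using the definition of $\handlerCompat$ directly and the fact that $\handler h$ is a set of clauses.

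For the ``if'' direction, I would assume $\{\effectClause{y}{r}{\comput t}\} \handlerCompat \handler h$ and check each row of \Cref{def:Clashes}. Only rows concerning a handled effect operation can match the shape $\handle{\handler h}{\effect{\val v}{x}{\comput u}}$; the application and case rows are excluded syntactically. The row requiring the absence of a $\sigma$-clause is excluded because $\effectClause{y}{r}{\comput t} \in \handler h$. For the last row, I would suppose $\handler h = \{\effectClause{y}{r}{\comput s}\} \cup \handler h'$ with some $\effectClause{y}{r}{\comput u'} \in \handler h'$ and $\comput s \neq \comput u'$. Both of these are $\sigma$-clauses of $\handler h$. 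Compatibility forces each of them to equal $\comput t$ (working up to $\alpha$-renaming of $y, r$), so $\comput s = \comput u'$, which is a contradiction.

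For the ``only if'' direction, I would argue by contraposition. If compatibility fails, there is a clause $\effectClause{y}{r}{\comput u'} \in \handler h$ with $\comput u' \neq \comput t$. Then $\handler h$ contains two distinct $\sigma$-clauses, and I can write $\handler h = \{\effectClause{y}{r}{\comput t}\} \cup \handler h'$ where $\handler h'$ still contains $\effectClause{y}{r}{\comput u'}$. This is exactly the last clash pattern, taking $\comput s = \comput t$. The $\constructor{return}$ case is handled verbatim with return clauses and the extension of \Cref{def:Clashes} to $\constructor{return}$ mentioned after that figure.

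The only delicate point is the bookkeeping of the decomposition $\psi \cup \handler h'$. Because $\cup$ is set union, $\handler h'$ may or may not contain $\psi$ itself. I would handle this by always choosing $\handler h'$ to keep the other clause, together with an explicit appeal to $\alpha$-equivalence of clause bodies. Everything else is immediate from the definitions.
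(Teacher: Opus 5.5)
Your proof is correct, and it is the right one. The paper states this lemma without proof, treating it as immediate from the definition of $\handlerCompat$ and the clash patterns in \Cref{def:Clashes}. Your row-by-row check is exactly that unfolding: the missing-clause row is excluded by hypothesis, the conflicting-clauses row is refuted by compatibility, and when compatibility fails you realise that row by keeping the offending clause in $\handler h'$ (e.g.\ $\handler h' = \handler h$). Your treatment of the decomposition $\psi \cup \handler h'$ and of comparing clause bodies up to $\alpha$-renaming of $y, r$ covers the only points needing care.
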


We split the proof of \cref{lem:ElementarySubjectReduction} into separate proofs of Elementary Subject Reduction and Elementary Subject Expansion. Here, by elementary we refer to the fact that we consider only the basic reduction rules of \cref{def:RewriteRules}. We can easily extend these results to the full reduction using \cref{lem:ContextTyping}, finally obtaining \cref{lem:ElementarySubjectReduction}.

\begin{lemma}[\HEBI, Elementary Subject Expansion] ~\\
    \label{lem:ElementarySubjectExpansion}%
    Let $\comput t, \comput t' \in \setComput$ such that 
    $\comput t \reductArr' \comput t'$. If there exists
    $\itJugComput[\Pi]{\emptyset}{\comput t'}{\it E}$
    then there exists
    $\itJugComput[\Pi']{\emptyset}{\comput t}{\it E}$.
\end{lemma}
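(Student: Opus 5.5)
The plan is a case analysis on the seven rewrite rules of \Cref{def:RewriteRules}. In each case we start from $\itJugComput[\Pi]{\emptyset}{\comput t'}{\it E}$ and rebuild a derivation for the redex $\comput t$. The main tool is the Anti-substitution Lemma (\Cref{lem:antiSubstLemma}). Closed values can always be typed with $\itSetEmpty$ (\Cref{lem:TypabilityValues}), which disposes of arguments that do not occur in the reduct.

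For $\reductBeta$, apply \Cref{lem:antiSubstLemma} to $\comput u\sub{x}{\val v}$. This gives $\it M$ with $\itJugValue{}{\val v}{\it M}$ and $\itJugComput{x:\it M}{\comput u}{\it E}$. Rule \ruleNameAbs\ then yields $\itSet{\it M\rightarrow \it E}$ for the abstraction, and \ruleNameApp\ closes the case; contexts are empty by \Cref{rem:ContextSimplification}.

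For $\reductFix$, invert the \ruleNameApp\ node of $\Pi$ to get $\itJugValue{}{\val v\sub{x}{\fix{x}{\val v}}}{\itSet{\it M\rightarrow\it E}}$ and a typing of $\val w$. Anti-substitution gives $\it N$ with $\itJugValue{x:\it N}{\val v}{\itSet{\it M\rightarrow\it E}}$ and $\itJugValue{}{\fix{x}{\val v}}{\it N}$. Rule \ruleNameFixRec\ (or \ruleNameFixBase\ when $\it N=\itSetEmpty$) gives the fixpoint type, and \ruleNameApp\ finishes.

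For $\reductCase$, type $\int n$ with $\itSet{\typeInt n}$ and apply \ruleNameCase. For $\reductLetRet$, anti-substitution gives $\it M$ with $\itJugComput{x:\it M}{\comput t}{\it E}$. We use $\itEffectReturn{\it M}\replaceLeaf{\it M\rightarrow\it E}\it E$ together with \ruleNameRet\ and \ruleNameLet. For $\reductHdlRet$, anti-substitution gives the premise of \ruleNameHandlerRet. The compatibility premise holds because the redex is not a clash (\Cref{lem:CompatImpliesClashFree}); here one should check that clashes are excluded, since the reduction rule presupposes a well-defined clause. We then apply \ruleNameHandle.

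For $\reductLetEff$, the reduct $\effect{\val v}{y}{\letin{x}{\comput t}{\comput u}}$ is typed by \ruleNameEff. This gives $\it E=\itEffect{\sigma}{\it M}{\it N_i\rightarrow\it E_i}_{\rangeI}$ with derivations $\itJugComput{\Gamma_i;y:\it N_i}{\letin{x}{\comput t}{\comput u}}{\it E_i}$. Inverting \ruleNameLet\ in each branch yields $\it F_i$ for $\comput t$, replacements $\it F_i\replaceLeaf{\it M^i_k\rightarrow\it G^i_k}[k\in K_i]\it E_i$, and typings of $\comput u$. We reassemble them as follows. Use \ruleNameEff\ to type $\effect{\val v}{y}{\comput t}$ with $\itEffect{\sigma}{\it M}{\it N_i\rightarrow\it F_i}$. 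The second leaf-replacement rule, indexed over $k\in K_i, i\in I$, gives a replacement into $\it E$. Finally apply \ruleNameLet\ with the combined family of typings for $\comput u$. The contexts add up correctly because the sums are associative and commutative.

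The main obstacle is $\reductHdlEff$. Here the reduct is $\comput t\sub{x}{\val v}\sub{r}{\val w}$ with $\val w=\abs{y}{\handle{\handler h}{\comput u}}$, typed $\it E$. Apply \Cref{lem:antiSubstLemma} twice: first to $r$, using that $\val w$ is closed once $\val v$ is; then to $x$. This yields $\itJugComput{x:\it M;r:\it R}{\comput t}{\it E}$, $\itJugValue{}{\val v}{\it M}$ and $\itJugValue{}{\val w}{\it R}$. Inverting \ruleNameAbs\ on $\val w$ gives $\it R=\itSet{\it N_i\rightarrow\it G_i}_{\rangeI}$ and derivations $\itJugComput{y:\it N_i}{\handle{\handler h}{\comput u}}{\it G_i}$. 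Inverting \ruleNameHandle\ in each gives $\itJugHandler{}{\handler h}{\it F_i\Rightarrow\it G_i}$ and $\itJugComput{y:\it N_i}{\comput u}{\it F_i}$.

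From these, \ruleNameEff\ types the handled computation with $\itEffect{\sigma}{\it M}{\it N_i\rightarrow\it F_i}_{\rangeI}$. Rule \ruleNameHandlerSigma\ types $\handler h=\{\effectClause{x}{r}{\comput t}\}\cup\handler h$ against that type $\Rightarrow\it E$, with the compatibility premise again given by \Cref{lem:CompatImpliesClashFree}. Rule \ruleNameHandle\ concludes. The delicate points are the order of the two anti-substitutions, since $\val v$ may occur inside $\val w$ and the continuation involves substitution under binders, and the degenerate case $I=\emptyset$ when $r$ is unused. In that case \ruleNameHandlerSigma\ still applies with no handler premises, which is exactly why handler typing only checks the first clause.
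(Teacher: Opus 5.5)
Your proposal is correct and matches the paper's proof essentially step for step. It proceeds by case analysis on the seven rewrite rules, driven by the Anti-substitution Lemma, with the compatibility premises obtained from \Cref{lem:CompatImpliesClashFree}; for $\reductHdlEff$ it applies anti-substitution first on $r$ and then on $x$, inverts \ruleNameAbs\ and \ruleNameHandle, and reassembles with \ruleNameEff, \ruleNameHandlerSigma\ and \ruleNameHandle. The only steps you leave implicit are the appeals to \Cref{rem:ContextSimplification} (the bound $y$ is not free in $\comput u$, resp.\ in $\handler h$) that the paper uses to clear $y$ from those typings, and, since $\val v$ and $\val w$ are both closed, the order of the two anti-substitutions is not actually delicate.
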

\begin{proof}
    \completedProof{
        \input{Proofs/ElementarySubjectExpansion.tex}
    }
\end{proof}

\begin{lemma}[\HEBI, Elementary Subject Reduction] ~\\
    \label{lem:ElementaryElSubjectReduction}%
    Let $\comput t, \comput t' \in \setComput$ such that $\comput t$
    is typed and $\comput t \reductArr' \comput t'$. If there
    exists
    $\itJugComput[\Pi]{\emptyset}{\comput t}{\it E}$
    then there exists
    $\itJugComput[\Pi']{\emptyset}{\comput t'}{\it E}$.
\end{lemma}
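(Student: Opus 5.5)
The proof goes by case analysis on the elementary rewrite rule used in $\comput t \reductArr' \comput t'$. In each case we invert the last rules of $\Pi$ to recover the typings of the subterms, then rebuild a derivation for $\comput t'$ with the same type $\it E$. All terms are closed, so by \Cref{rem:ContextSimplification} every context in $\Pi$ at the top level is empty. Hence the sums of contexts of the subderivations are defined, and the closed values being substituted are typed in the empty context.

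For $\reductBeta$: $\Pi$ ends with \ruleNameApp, with premises $\itJugValue{\emptyset}{\abs{x}{\comput t}}{\itSet{\it M\rightarrow\it E}}$ and $\itJugValue{\emptyset}{\val v}{\it M}$. Inverting \ruleNameAbs on a singleton set gives $\itJugComput{x:\it M}{\comput t}{\it E}$, and the Substitution Lemma (\Cref{lem:SubstitutionLemma}) concludes.

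For $\reductFix$: inverting the fixpoint rules yields either \ruleNameFixBase, where $x:\itSetEmpty$ and we substitute using an empty-type derivation from \Cref{lem:TypabilityValues}, or \ruleNameFixRec. In the latter case we have $x:\it N\vdash\val v$ and $\vdash\fix{x}{\val v}:\it N$, and substitution gives $\val v\sub{x}{\fix{x}{\val v}}$ at the same arrow type; we then reapply \ruleNameApp.

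For $\reductCase$: \ruleNameCase already carries the typing of the selected branch $\comput t_m$. Since $\val v=\int n$ has type $\itSet{\typeInt m}$, inverting \ruleNameInt forces $m=n$.

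For $\reductLetRet$: \ruleNameRet gives $\itJugValue{\emptyset}{\val v}{\it M}$. Leaf replacement on $\itEffectReturn{\it M}$ forces the index set to be a singleton with $\it M_i=\it M$ and $\it G_i=\it F$, so substitution applies.

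For $\reductHdlRet$: \ruleNameHandle splits into a handler typing and a computation typing. The computation type must be $\itEffectReturn{\it M}$, so the handler derivation ends in \ruleNameHandlerRet. Since $\{\retClause{y}{\comput s}\}\handlerCompat\handler h$, the clause it types is exactly the return clause of $\handler h$; substitution then finishes.

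For $\reductLetEff$: the let derivation types $\effect{\val v}{y}{\comput t}$ with $\itEffect{\sigma}{\it M}{\it N_j\rightarrow\it E_j}_{\rangeJ}$, and the leaf replacement splits into replacements $\it E_j\replaceLeaf{}\it F_j$ on each child. To type $\effect{\val v}{y}{\letin{x}{\comput t}{\comput u}}$ we rebuild, for each $j$, a \ruleNameLetin derivation of $\letin{x}{\comput t}{\comput u}$ under $y:\it N_j$. The premises $\comput u$ needed for child $j$ are those indexed by the leaves of $\it E_j$, and we redistribute them accordingly. The contexts recombine because $+$ is associative and commutative and $y$ does not occur in $\comput u$.

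The main obstacle is $\reductHdlEff$. Here $\Pi$ ends in \ruleNameHandle. Its computation premise ends in \ruleNameEff, typing $\val v$ with $\it M$ and $\comput u$ under $y:\it N_i$ with $\it F_i$ for each $i\in I$. The handler premise, given the type $\itEffect{\sigma}{\it M}{\it N_i\rightarrow\it F_i}_{\rangeI}\Rightarrow\it E$, must end in \ruleNameHandlerSigma. That yields $\itJugComput{x:\it M;r:\itSet{\it N_i\rightarrow\it G_i}_{\rangeI}}{\comput t}{\it E}$ and $\itJugHandler{}{\handler h}{\it F_i\Rightarrow\it G_i}$ for each $i$.

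Two points need care. First, the clause typed by \ruleNameHandlerSigma must coincide with the clause actually selected by the reduction. This follows from the compatibility premise $\handlerCompat$ and \Cref{lem:CompatImpliesClashFree}: if the handler is written $\{\effectClause{x}{r}{\comput s}\}\cup\handler h'$ with a clause for $\sigma$ in $\handler h'$, the two clauses are equal. Second, we must type the continuation. For each $i$ we combine the $i$-th handler derivation with the $i$-th typing of $\comput u$ via \ruleNameHandle and \ruleNameAbs, obtaining $\itJugValue{\emptyset}{\abs{y}{\handle{\handler h}{\comput u}}}{\itSet{\it N_i\rightarrow\it G_i}_{\rangeI}}$. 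One subtlety is that the handler in the premises of \ruleNameHandlerSigma is $\handler h'$, whereas the reduct uses the full handler; \Cref{lem:HandlerWeakening}, justified by compatibility, bridges this gap. Finally, two applications of \Cref{lem:SubstitutionLemma}, for $x$ and then $r$, produce a derivation of $\comput t\sub{x}{\val v}\sub{r}{\abs{y}{\handle{\handler h}{\comput u}}}$ of type $\it E$.
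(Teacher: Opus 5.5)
Your proposal is correct and follows the paper's proof essentially step for step. Both proceed by case analysis on the rewrite rule and inversion of $\Pi$. Both use the Substitution Lemma in the $\reductBeta$, $\reductFix$, $\reductLetRet$, $\reductHdlRet$ and $\reductHdlEff$ cases, invoking \Cref{lem:TypabilityValues} in the {\ruleNameFixBase} subcase. Both redistribute the leaf-replacement premises for $\reductLetEff$. Both apply \Cref{lem:HandlerWeakening} to pass from $\handler h'$ to the full handler before typing the abstracted continuation. The only small difference is that you make explicit why the clause typed by {\ruleNameHandlerSigma} is the one selected by the reduction; this already follows from its $\handlerCompat$ premise, without needing \Cref{lem:CompatImpliesClashFree}.
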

\begin{proof}
    \completedProof{
        \input{Proofs/ElementarySubjectReduction.tex}
    }
\end{proof}

To conclude this section, we prove the Progress Lemma for \HEBI.

\hebiProgress*
\begin{proof}
    \input{Proofs/HEBIProgress.tex}

\end{proof}

\subsection{Section \ref{sect:reducibility}}
\reducibilityExpansion*
\begin{proof}   
    \completedProof{%
        \input{Proofs/ReductionExpansionReduction.tex}
    }
\end{proof}

\typableReducibility*
\begin{proof}
    \completedProof{%
        \input{Proofs/TypableReducibility.tex}
    }
\end{proof}

\section{Proofs of Section \ref{sect:simplebehavior}}

\subsection{Section \ref{sect:hebsystem}}
The aim of this section is to prove Subject Reduction for \HEB. Since the arguments are very similar to those already used in \cref{sect:apx4-3}, we only present in detail the proof of \cref{thm:HEBSubjectReductionEl}.

\begin{lemma}[\HEB\ Substitution Lemma] ~\\
    \label{lem:SimpleSubstitutionLemma}%
    Let $p \in \setTerm$ and $\val v \in
    \setVal$ such that
    $\stJug[\Pi_p]{\Delta; x : \st M}{p}{\st T}$ and
    $\stJugValue[\Pi_{\val v}]{\emptyset}{\val v}{\st M}$.
    Then, there exists
    $\stJug[\Pi]{\Delta}
        {p \sub{x}{\val v}}{\st T}$.
\end{lemma}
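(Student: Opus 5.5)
The plan is to proceed by induction on the structure of the derivation $\Pi_p$, equivalently on the structure of $p$, building $\Pi$ alongside. Since \HEB\ is a simply typed, additive-context system (every premise shares the same context $\Delta$), no splitting or summing of contexts is needed. This makes the argument more direct than the proof of \Cref{lem:SubstitutionLemma}, which relied essentially on \Cref{lem:Splitting}.

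First I would establish two auxiliary facts.
\begin{itemize}
\item \emph{Weakening.} If $\stJug{\Delta}{p}{\st T}$, then $\stJug{\Delta;\Delta'}{p}{\st T}$ for fresh variables. This is a routine induction, since the rules for variables and integers already carry an arbitrary context.
\item \emph{Compatibility under substitution.} The relation $\handlerCompat$ is stable under substitution, exactly as in \Cref{lem:HandlerComputIndipendence}. This lemma is purely syntactic, so it transfers verbatim to \HEB.
\end{itemize}
Since $\val v$ is closed, I would also freely assume, by $\alpha$-conversion, that bound variables of $p$ differ from $x$ and do not occur free in $\val v$.

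The base cases are as follows.
\begin{itemize}
\item If $p = x$, then $\st T = \st M$, and the required derivation is $\Pi_{\val v}$ weakened to $\Delta$.
\item If $p = y \neq x$, or $p = \int n$, substitution is the identity, and the same rule applies with $\Delta$ in place of $\Delta; x :: \st M$.
\end{itemize}

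The inductive cases are handled as follows.
\begin{itemize}
\item \textbf{Rules without binders} (\ruleNameApp, \ruleNameCase, \ruleNameRet, \ruleNameHandle). Apply the induction hypothesis to each premise and reapply the same rule.
\item \textbf{Rules with binders} (\ruleNameAbs, \ruleNameFix, \ruleNameLetin, \ruleNameEff). The premise has context $\Delta; x::\st M; y::\st N$. I would permute it to $(\Delta; y::\st N); x::\st M$, use the induction hypothesis, and rebind. The leaf-replacement side condition of \ruleNameLetin\ concerns only types and is unchanged.
\item \textbf{Handler rules} \ruleNameHandlerRet\ and \ruleNameHandlerSigma. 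The induction hypothesis gives the typings of the clause body and of $\handler h\sub{x}{\val v}$. The substituted side condition $\{\effectClause{y}{r}{\comput t\sub{x}{\val v}}\}\handlerCompat \handler h\sub{x}{\val v}$ follows from the compatibility fact above.
\end{itemize}

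The only step I expect to need care is the handler case. There the handler appearing in the premise and the handler in the conclusion may overlap syntactically, so I must check that substitution commutes with the union $\{\ldots\}\cup\handler h$. It does, because substitution acts clausewise on handlers. Everything else is standard bookkeeping.
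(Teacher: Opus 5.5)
Your proposal is correct and follows the paper's proof, which is just induction on $\Pi_p$, with the variable case closed by weakening $\Pi_{\val v}$ (the paper appeals to the same admissible weakening when it defines substitution on derivations). One correction: induction on the structure of $p$ is \emph{not} equivalent, because in {\ruleNameHandlerSigma} the premise handler may already contain the clause, i.e.\ $\{\effectClause{y}{r}{\comput t}\}\cup\handler h=\handler h$, so the premise types the very same handler; this, rather than commuting substitution with $\cup$ (which is immediate), is the point in the handler case that needs care, and it is why the induction must be on the derivation.
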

\begin{proof}
    By induction on $\Pi_p$.
\end{proof}

\begin{lemma}[\HEB\ Handler Weakening] ~\\
    \label{lem:HEBHandlerWeakening}%
    Let $\itJugHandler[\Pi]{\Gamma}{\handler h}{\st F
    \Rightarrow \st E}$. If
    $\{\effectClause{y}{r}{\comput t}\} \handlerCompat \handler h$,
    then there exists a derivation, that we call $\Pi^{(\sigma)}$, such that 
    $\itJugHandler[\Pi^{(\sigma)}]{\Gamma}{\{\effectClause{y}{r}{\comput t}\}
    \cup \handler h}{\st F
    \Rightarrow \st E}$.
\end{lemma}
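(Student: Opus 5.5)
The plan is to prove the statement by induction on the structure of the derivation $\Pi$ of $\stJugHandler{\Delta}{\handler h}{\st F \Rightarrow \st E}$, as in the analogous \HEBI\ result (\Cref{lem:HandlerWeakening}). Handler judgements in \HEB\ are concluded only by \ruleNameHandlerRet\ and \ruleNameHandlerSigma, so there are two cases. In each case $\handler h$ is written as $\psi \cup \handler h'$, where $\psi$ is the clause typed by the last rule, and the rule has side condition $\psi \handlerCompat \handler h'$. Write $\phi$ for the new clause $\{\effectClause{y}{r}{\comput t}\}$, with $\phi \handlerCompat \handler h$ by hypothesis. We must produce a derivation for $\phi \cup \handler h = \psi \cup (\phi \cup \handler h')$.

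\textbf{Case \ruleNameHandlerRet.} Here $\psi = \{\retClause{z}{\comput s}\}$, and the premise $\stJugComput{\Delta; z :: \st M}{\comput s}{\st E}$ does not mention $\handler h'$. We reuse this premise unchanged and only need $\psi \handlerCompat \phi \cup \handler h'$. This holds because $\phi$ is an operation clause, not a return clause, so it adds no conflicting return clause; compatibility with $\handler h'$ is inherited.

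\textbf{Case \ruleNameHandlerSigma.} Here $\psi = \{\effectClause[\delta]{z}{k}{\comput s}\}$, and there are two premises. The clause body premise is kept unchanged. The sub-handler premise $\stJugHandler{\Delta}{\handler h'}{\st F' \Rightarrow \st G}$ is handled by the induction hypothesis applied to $\handler h'$ and $\phi$. For this we need $\phi \handlerCompat \handler h'$, which follows from $\phi \handlerCompat \handler h$ because $\handler h' \subseteq \handler h$. It remains to check $\psi \handlerCompat \phi \cup \handler h'$. If $\delta \neq \sigma$ this reduces to $\psi \handlerCompat \handler h'$. If $\delta = \sigma$, the clause $\psi$ lies in $\handler h$, so $\phi \handlerCompat \handler h$ forces the bodies of $\psi$ and $\phi$ to coincide, up to $\alpha$-renaming of the bound variables. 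Hence $\psi$ is compatible with $\phi$ as well.

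The only delicate point is this bookkeeping of compatibility, which holds because $\handlerCompat$ is stable under the set operations used. Notably, the induction is on the derivation rather than on the handler, because in \ruleNameHandlerSigma\ the sub-handler $\handler h'$ may equal $\handler h$ itself. Since \HEB\ contexts are not split, the context $\Delta$ is preserved throughout and no summing of contexts arises. The resulting derivation is the one named $\Pi^{(\sigma)}$.
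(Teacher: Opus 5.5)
Your proof is correct and takes the same approach as the paper, whose entire proof is ``By induction on $\Pi$''. Your case split on \ruleNameHandlerRet\ and \ruleNameHandlerSigma\ supplies the details the paper leaves implicit, in particular the compatibility bookkeeping. This includes the case $\delta=\sigma$, where $\handlerCompat$ forces the two $\sigma$-clauses to coincide.
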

\begin{proof}
    By induction on $\Pi$.
\end{proof}

\begin{theorem}[\HEB\ Elementary Subject Reduction]
    \label{thm:HEBSubjectReductionEl}
    Let $\comput t,\comput t'\in \setComput$, $\stJugComput[\Pi]{\emptyset}{\comput t}{\st E}$ and $\comput t\reductArr'\comput t'$ then it exists $\Pi'$ such that $\stJugComput[\Pi']{\emptyset}{\comput t'}{\st E}$.
\end{theorem}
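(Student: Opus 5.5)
We argue by case analysis on the elementary rewrite rule $\comput t\reductArr'\comput t'$ of \Cref{def:RewriteRules}, inverting the (syntax-directed) last rules of $\Pi$ in each case and rebuilding a derivation for $\comput t'$ with the same type $\st E$. The tools are the \HEB\ Substitution Lemma (\Cref{lem:SimpleSubstitutionLemma}), \HEB\ Handler Weakening (\Cref{lem:HEBHandlerWeakening}), and weakening of \HEB\ contexts, which is admissible because every \HEB\ rule carries an arbitrary context $\Delta$. We also prove by induction on leaf replacement an auxiliary \emph{commutation} fact. If $\itEffect{\sigma}{\st M'}{\st N\rightarrow\st F}\replaceLeaf{\st M\rightarrow\st G}\st F'$, then $\st F'=\itEffect{\sigma}{\st M'}{\st N\rightarrow\st F''}$ with $\st F\replaceLeaf{\st M\rightarrow\st G}\st F''$. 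Moreover, $\itEffectReturn{\st M'}\replaceLeaf{\st M\rightarrow\st G}\st F'$ forces $\st M=\st M'$ and $\st F'=\st G$. Both follow immediately from the two rules of \Cref{def:SimpleLeafReplacement}.

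The cases are as follows. For $\reductBeta$, invert \ruleNameApp\ and \ruleNameAbs\ to get $\stJugComput{x::\st M}{\comput t}{\st E}$ and $\stJugValue{\emptyset}{\val v}{\st M}$, then apply \Cref{lem:SimpleSubstitutionLemma}. For $\reductFix$, invert \ruleNameFix\ to obtain $\stJugValue{x::\st M\rightarrow\st E}{\val v}{\st M\rightarrow\st E}$; substituting the whole fixpoint derivation for $x$ gives $\val v\sub{x}{\fix{x}{\val v}}$ the same arrow type, and we close with \ruleNameApp. For $\reductCase$, the premise for $\comput t_n$ is already available, since $\int n$ has type $\typeInt m$ and the case rule demands all branches. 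For $\reductLetRet$, inversion of \ruleNameRet\ gives $\itEffectReturn{\st M'}$, so the commutation fact yields $\st M'=\st M$ and $\st F=\st G$, and substitution concludes. For $\reductLetEff$, the first premise is typed by \ruleNameEff\ as $\itEffect{\sigma}{\st M'}{\st N\rightarrow\st E_0}$ with the continuation $\comput t$ typed under $y::\st N$. Commutation decomposes $\st F$, and we rebuild $\letin{x}{\comput t}{\comput u}$ under $y::\st N$ by \ruleNameLetin, weakening the derivation for $\comput u$ by $y$ (after $\alpha$-renaming so that $y\notin\termFreeVar{\comput u}$). We then reapply \ruleNameEff. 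For $\reductHdlRet$, inverting \ruleNameHandle\ and \ruleNameHandlerRet\ gives $\stJugComput{y::\st M}{\comput t}{\st E}$, and substitution finishes.

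The main case, and the one we expect to be the obstacle, is $\reductHdlEff$. Here $\Pi$ ends with \ruleNameHandle, with premises $\stJugHandler{\emptyset}{\handler h}{\itEffect{\sigma}{\st M}{\st N\rightarrow\st F}\Rightarrow\st E}$ and $\stJugComput{\emptyset}{\effect{\val v}{y}{\comput u}}{\itEffect{\sigma}{\st M}{\st N\rightarrow\st F}}$. The handler premise cannot immediately be inverted by \ruleNameHandlerSigma, because $\handler h$ is written as $\{\effectClause{x}{r}{\comput s}\}\cup\handler h'$ and the clause selected by the typing must be the one used by the reduction. The compatibility premise $\handlerCompat$, together with \Cref{lem:CompatImpliesClashFree}, guarantees that the selected clause is the unique $\sigma$-clause, hence the one fired. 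Inversion then yields $\stJugComput{x::\st M;r::\st N\rightarrow\st G}{\comput s}{\st E}$ and $\stJugHandler{\emptyset}{\handler h'}{\st F\Rightarrow\st G}$. \Cref{lem:HEBHandlerWeakening} upgrades the latter to the full handler $\handler h$. Combining it with the continuation derivation $\stJugComput{y::\st N}{\comput u}{\st F}$ via \ruleNameHandle\ and \ruleNameAbs\ gives $\stJugValue{\emptyset}{\abs{y}{\handle{\handler h}{\comput u}}}{\st N\rightarrow\st G}$. Two applications of \Cref{lem:SimpleSubstitutionLemma} then type $\comput s\sub{x}{\val v}\sub{r}{\abs{y}{\handle{\handler h}{\comput u}}}$ at $\st E$. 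The delicate points are the inversion modulo the set-union presentation of handlers and the closedness side-conditions of the substitution lemma; both are settled by the remarks above, since all substituted values are closed.
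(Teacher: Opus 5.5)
Your proposal is correct and follows the paper's proof essentially step for step. Both argue by case analysis on the rewrite rules, invert the last typing rules, and use the \HEB\ Substitution Lemma, Handler Weakening on the residual handler $\handler h'$, and context weakening (the paper's $\Pi^{(y::\st N)}$); your explicit leaf-replacement inversion just spells out what the paper does ``by case analysis''. The only superfluous ingredient is \Cref{lem:CompatImpliesClashFree} in the $\reductHdlEff$ case: the $\handlerCompat$ premise of {\ruleNameHandlerSigma} by itself already forces every $\sigma$-clause of the handler to coincide with the typed one.
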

\begin{proof}
    \completedProof{%
    \input{Proofs/ElementarySimpleSubjectReduction.tex}
    }
\end{proof}

To extend this result to the full reduction, thus obtaining \cref{thm:HEBElementarySubjectReduction}, we need to prove the following lemma.

\begin{lemma}[\HEB\ Context Typing]~\\
    \label{lem:HEBSimpleContextTyping}%
    Let $\comput t \in \setComput$ and 
    $\ctxtEval$ a computation context such that 
    $\stJugComput[\Pi]{\emptyset}{\ctxtEval<\comput t>}{\st E}$.
    Then there is an \HEB\ type $\st F$ 
    such that $\stJugComput{\emptyset}{\comput t}{\st F}$ and 
    for every $\comput u$ such that $\itJugComput{\emptyset}{\comput u}{\st F}$, then there exist $\Pi'$ such that 
    $\stJugComput[\Pi']{\emptyset}{\ctxtEval<\comput u>}{\st E}$.
\end{lemma}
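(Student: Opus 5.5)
We prove the \HEB\ Context Typing lemma by induction on the structure of the evaluation context $\ctxtEval$, following the pattern of \Cref{lem:ContextTyping} for \HEBI\ but without intersections, contexts sums, or splitting arguments. Since \HEB\ is monomorphic, each rule for a context former has exactly one premise typing the subterm in the hole position. We can therefore read $\st F$ off the derivation $\Pi$ and rebuild the derivation around any $\comput u$ of type $\st F$ by reusing the unchanged side premises.

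The base case $\ctxtEval = \Hole$ is immediate. Take $\st F = \st E$; the first claim is $\Pi$ itself, and the second is the hypothesis on $\comput u$.

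For $\ctxtEval = \letin{x}{\ctxtEval'}{\comput s}$, the last rule of $\Pi$ must be \ruleNameLetin, since the system is syntax-directed on computations. Its premises are
\begin{itemize}
\item $\stJugComput[\Pi_1]{\emptyset}{\ctxtEval'<\comput t>}{\st E'}$,
\item $\st E' \replaceLeaf{\st M \rightarrow \st G} \st E$,
\item $\stJugComput[\Pi_2]{x :: \st M}{\comput s}{\st G}$.
\end{itemize}
The induction hypothesis applied to $\Pi_1$ yields $\st F$ with $\stJugComput{\emptyset}{\comput t}{\st F}$. It also ensures that every $\comput u$ of type $\st F$ has a derivation of $\stJugComput{\emptyset}{\ctxtEval'<\comput u>}{\st E'}$. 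Applying \ruleNameLetin\ to this new derivation, the same leaf-replacement premise, and $\Pi_2$ gives $\stJugComput{\emptyset}{\ctxtEval<\comput u>}{\st E}$.

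For $\ctxtEval = \handle{\handler h}{\ctxtEval'}$, the last rule is \ruleNameHandle, with premises $\stJugHandler{\emptyset}{\handler h}{\st E' \Rightarrow \st E}$ and $\stJugComput{\emptyset}{\ctxtEval'<\comput t>}{\st E'}$. We apply the induction hypothesis to the second premise and recombine with the untouched handler derivation.

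The only point needing care is that the side premises are independent of the hole's contents. In the let case, $\comput s$ and the leaf-replacement relation mention only $\st E'$, $\st M$, $\st G$ and $\st E$, never the term placed in the hole. In the handle case, the handler typing likewise mentions only types. The contexts are all $\emptyset$, or $x :: \st M$ for the let body, and \HEB\ uses additive contexts, so no context arithmetic is required. The step most likely to need attention is inversion: we must check that \ruleNameLetin\ and \ruleNameHandle\ are the unique rules concluding judgments for let and handle terms. This holds by inspection of \Cref{def:HEBTypingRules}, because there is no subsumption or weakening rule on computations.
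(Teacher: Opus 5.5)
Your proof is correct and follows the paper's argument: the paper proves this by induction on the evaluation context, mirroring the \HEBI\ context-typing lemma. As you do, the $\Hole$ case is immediate, and in the let and handle cases one inverts the syntax-directed last rule, applies the induction hypothesis to the premise typing the hole, and reassembles the derivation with the unchanged side premises.
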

\begin{proof}
    By induction on the context $\ctxtEval$.
\end{proof}

To conclude, we can prove a Progress Lemma similar to the one proved for \HEBI.
\hebProgress*
\begin{proof}
    The proof is similar to the one of \cref{lem:HEBIProgress}.
\end{proof}

\subsection{Section \ref{sect:HEBRedExp}}
\label{sec:apx5-3}
In this section, we prove Relative Subject Expansion for refinements between \HEBI\ and \HEB\ type derivations. We begin by stating two lemmas showing that the notion of refinement between types is well-behaved with respect to inclusion, both for individual value types and for contexts.

\begin{lemma}[Subset Refinement]
    \label{lem:RefinementSubsetType}
    Let $\st M$ be a \HEB\ value type and $\it M$ an \HEBI\  value type such that $\refrel{\st M}{\it M}$. If $\it N\subseteq \it M$ then $\refrel{\st M}{\it N}$.
\end{lemma}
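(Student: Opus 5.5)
The claim follows from a direct inspection of the refinement rules for value types in \Cref{def:HEBRefinementTypes}, by case analysis on the last rule used to derive $\refrel{\st M}{\it M}$. No induction on types is needed: every rule concluding a refinement of a value type imposes its conditions elementwise on the members of the set $\it M$. Removing members therefore cannot break any premise.

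\emph{Integer case.} Suppose $\st M = \typeInt{m}$. Then $\it M$ is either $\itSetEmpty$ or $\itSet{\typeInt{n}}$ with $0<n\le m$. The subsets of $\it M$ are $\itSetEmpty$ and, in the second case, $\it M$ itself. Both are refinements of $\typeInt{m}$: the empty set by the axiom $\refrel{\typeInt n}{\itSetEmpty}$, and $\it M$ by hypothesis.

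\emph{Arrow case.} Suppose $\st M = \st M' \rightarrow \st U$ and $\it M = \itSet{\it M_i \rightarrow \it E_i}_{\rangeI}$. Inverting the arrow refinement rule gives $\refrel{\st M'}{\it M_i}$ and $\refrel{\st U}{\it E_i}$ for every $\rangeI$. A subset $\it N \subseteq \it M$ has the form $\itSet{\it M_j \rightarrow \it E_j}_{\rangeJ}$ for some $J \subseteq I$. The premises indexed by $J$ are a subfamily of the premises indexed by $I$, so applying the arrow rule with index set $J$ yields $\refrel{\st M' \rightarrow \st U}{\it N}$. The case $J=\emptyset$ gives $\itSetEmpty$. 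This uses the paper's convention that $\itSetEmpty$ coincides with the empty family of arrows, so the arrow rule with an empty index set, which has no premises, covers it.

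\emph{Subtlety.} The only point needing care is that value types are sets, so $\it N \subseteq \it M$ must be read as selecting a subfamily of indices. Two members of $\it M$ with the same type are identified, so choosing a subset of indices is well defined and matches set inclusion. I do not expect any real obstacle here; the lemma is essentially immediate from the shape of the rules.
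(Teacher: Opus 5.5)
Your proof is correct: the only value-type refinement rules are the two integer rules and the arrow rule, and the arrow rule's premises constrain each member of $\it M$ separately, so reapplying it to the subfamily picked out by $\it N$ (including the empty one) is all that is needed. The paper states this lemma without proof, treating it as immediate from the rules of \Cref{def:HEBRefinementTypes}, and your case analysis is essentially the intended argument.
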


\begin{corollary}[Context Refinement]
    \label{lem:ContextRefinement}
    Let $\Delta$ be a \HEB\ types context and $\Gamma$ be an \HEBI\ types one such that $\refrel{\Delta}{\Gamma}$. 
    \begin{itemize}
        \item Let $\st M$ and $\it M$ be two types such that $\refrel{\st M}{\it M}$, then $\refrel{\Delta;x::\st M}{\Gamma;x:\it M}$.
        \item For each $\Gamma'\subseteq\Gamma$ we have $\refrel{\Delta}{\Gamma'}$.
    \end{itemize}      
\end{corollary}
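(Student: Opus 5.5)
Both items follow by unfolding the definition of $\refrel{\Delta}{\Gamma}$ pointwise, one variable at a time. The second item additionally uses \Cref{lem:RefinementSubsetType}.

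One preliminary observation will be used to handle variables that a context does not mention explicitly. Such a variable is mapped to $\itSetEmpty$, and $\itSetEmpty$ refines every \HEB\ value type. For an integer type this is the axiom $\refrel{\typeInt n}{\itSetEmpty}$. For an arrow type $\st M\rightarrow\st U$ we use the tacit identification of $\itSetEmpty$ with $\itSet{\it M_i\rightarrow\it E_i}_{i\in\emptyset}$, and apply the arrow refinement rule with $I=\emptyset$, whose premises are then vacuous. Consequently, implicit entries of the form $y:\itSetEmpty$ never obstruct refinement, and it suffices to check the explicitly listed entries.

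For the first item, take an arbitrary entry $y:\it N$ of $\Gamma;x:\it M$ and split on whether $y=x$. If $y=x$, the entry is $x:\it M$, since the extension overrides any earlier binding of $x$. The context $\Delta;x::\st M$ contains $x::\st M$, and $\refrel{\st M}{\it M}$ holds by hypothesis. If $y\neq x$, then $y:\it N$ is already an entry of $\Gamma$. From $\refrel{\Delta}{\Gamma}$ we get some $y::\st N\in\Delta$ with $\refrel{\st N}{\it N}$, and this entry survives unchanged in $\Delta;x::\st M$. Hence $\refrel{\Delta;x::\st M}{\Gamma;x:\it M}$.

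For the second item, let $\Gamma'\subseteq\Gamma$ and take $y:\it N'\in\Gamma'$. By the definition of $\subseteq$ on contexts, there is $y:\it N\in\Gamma$ with $\it N'\subseteq\it N$. If this entry is explicit in $\Gamma$, then $\refrel{\Delta}{\Gamma}$ yields $y::\st N\in\Delta$ with $\refrel{\st N}{\it N}$. \Cref{lem:RefinementSubsetType} then gives $\refrel{\st N}{\it N'}$, which is what is required. If instead $y$ is only implicitly mapped to $\itSetEmpty$ in $\Gamma$, then $\it N'\subseteq\itSetEmpty$ forces $\it N'=\itSetEmpty$, and this case is covered by the preliminary observation.

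The only delicate point is exactly this treatment of implicit empty entries, which relies on the identification of $\itSetEmpty$ with the empty family of arrows. Everything else is direct unfolding of the definitions.
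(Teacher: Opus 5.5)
Your proposal is correct and is the intended argument: the paper gives no separate proof and presents the statement as an immediate consequence of the pointwise definition of $\refrel{\Delta}{\Gamma}$ together with \Cref{lem:RefinementSubsetType}, which is exactly what you use for the second item. One small caveat on the implicit entries, which the paper itself ignores: the fact that $\itSetEmpty$ refines every \HEB\ value type only helps when $\Delta$ actually binds the variable; when $\Delta$ does not bind it, you are really relying on the convention that entries of type $\itSetEmpty$ impose no constraint, not on the refinement fact.
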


In \cref{def:HEBDerivationRefinementContd}, we present the missing refinement rules between \HEBI\ and \HEB\ derivations that were not included in \cref{def:HEBDerivationRefinement}. Moreover, in \cref{def:HEBDerivationReductionContd}, we introduce the reduction rules between derivations that are missing in \cref{def:HEBDerivationReduction}. We then state several properties of this refinement system that will be useful in the proof of \cref{lem:NewElementarySubjectExpansion-refEl}.

\begin{figure}
\centering
\input{Definitions/NewRefinementRelationContd.tex}
\caption{\HEB\ refinement for rules, cont'd.}
\label{def:HEBDerivationRefinementContd}
\end{figure}

\begin{figure}
    \scalebox{0.86}{\input{Definitions/HEBReductionsContd.tex}}
    \caption{Reductions between two \HEB\ derivations, cont'd.}
    \label{def:HEBDerivationReductionContd}
\end{figure}

\begin{lemma}[Conlcusion Refinement]
    \label{rem:NewRefineConclusion}
    If $\refrel{\stJug[\Pi]{\Delta}{p}{\st U}}{\itJug[\Xi]{\Gamma}{p}{\it E}}$, then $\refrel{\Delta}{\Gamma}$ and $\refrel{\st U}{\it E}$.
\end{lemma}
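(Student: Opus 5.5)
The plan is an induction on the structure of the derivation $\Pi$ or, equivalently, on the pair $(\Pi,\Xi)$ related by $\refrelsym$. The refinement relation on derivations (\Cref{def:HEBDerivationRefinement,def:HEBDerivationRefinementContd}) is defined rule by rule. Each clause relates an \HEB\ rule instance to one \HEBI\ rule instance with the same term constructor; for fixpoints there are two such instances, \ruleNameFixBase\ and \ruleNameFixRec. Each clause requires the premises to be related pointwise. So in every case we can read off the shape of both conclusions and invoke the induction hypothesis on related premises.

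The base cases are handled directly. For \ruleNameInt, the \HEB\ conclusion is $\Delta \vdashValue \int n :: \typeInt m$ with $0<n\le m$. The \HEBI\ conclusion has empty context, so $\refrel{\Delta}{\emptyset}$ holds vacuously. Its type is $\itSetEmpty$ or $\itSet{\typeInt n}$, and both refine $\typeInt m$ by \Cref{def:HEBRefinementTypes}. For \ruleNameVar, the clause itself carries the side condition $\refrel{\st M}{\it M}$. The context $x:\it M$ is refined by $\Delta;x::\st M$, since every other variable is mapped to $\itSetEmpty$, which refines any type (an empty family of premises in the arrow rule, or the empty rule for integers).

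In the inductive cases we apply the induction hypothesis to every pair of related premises. This yields $\refrel{\Delta_k}{\Gamma_k}$ and type refinements for the premises, and we then rebuild the conclusion using the type refinement rules. For \ruleNameAbs, the premises give $\refrel{\Delta;x::\st M}{\Gamma_i;x:\it M_i}$ and $\refrel{\st E}{\it E_i}$ for each $i$. Hence $\refrel{\st M}{\it M_i}$ for each $i$, and the arrow rule gives $\refrel{\st M\rightarrow\st E}{\itSet{\it M_i\rightarrow\it E_i}_{\rangeI}}$. The same pattern handles \ruleNameRet, \ruleNameEff, the handler rules, \ruleNameHdl\ and \ruleNameFix, where the conclusion type is literally shared with a premise.

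The remaining point is the context. The \HEBI\ conclusion has a sum $\Gamma_1+\Gamma_2$ or $\Gamma+(+_{\rangeI}\Gamma_i)$ of premise contexts, while the \HEB\ context is shared, up to the bound variables removed. So we need a lemma that refinement of a single \HEB\ value type is closed under (defined) unions: if $\refrel{\st M}{\it M}$ and $\refrel{\st M}{\it N}$ and $\it M\cup\it N$ is defined, then $\refrel{\st M}{\it M\cup\it N}$. For integer types this holds because the union is defined only when the types coincide or one is empty. For arrow types the union is the concatenation of the families, and the arrow rule's premises are pointwise. This gives $\refrel{\Delta}{\Gamma_1+\Gamma_2}$. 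Removing bound variables from both contexts, as in \ruleNameAbs, \ruleNameLetin\ and the handler clauses, preserves refinement.

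The step I expect to be the main obstacle is \ruleNameLetin, which is the only place where the \HEBI\ type of the conclusion is not assembled from premise types by a refinement rule. There, $\it F$ is obtained by leaf replacement $\it E\replaceLeaf{\it M_i\rightarrow\it G_i}[\rangeI]\it F$, while $\st F$ comes from $\st E\replaceLeaf{\st M\rightarrow\st G}\st F$. We know $\refrel{\st E}{\it E}$, $\refrel{\st M}{\it M_i}$ and $\refrel{\st G}{\it G_i}$. For this case I would prove an auxiliary lemma by induction on the \HEBI\ leaf-replacement derivation, following the structure of $\st E$. In the return case, $\it E=\itEffectReturn{\it M_i}$ and the replaced type is $\it G_i$, which refines $\st G=\st F$. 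In the $\sigma$ case, each child is handled by the inner induction, and the $\sigma$ refinement rule reassembles the result, since the parameter and edge labels are untouched. The same lemma, with handlers in place of let-bindings, covers nothing else, so once it is in place the induction closes.
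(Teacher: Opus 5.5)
The paper states this lemma without proof, and your structural induction is the right way to supply it, apart from one correction to the induction measure.

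\textbf{The induction cannot be on $\Pi$ alone.} Your ``equivalently'' is false. The \ruleNameFixRec\ clause of the derivation refinement requires $\refrel{\Pi}{\Xi_2}$ for the \emph{same} \HEB\ derivation $\Pi$, namely the whole \ruleNameFix\ instance. You need the induction hypothesis on the pair $(\Pi,\Xi_2)$ to obtain $\refrel{\Delta}{\Gamma_2}$, and hence $\refrel{\Delta}{\Gamma_1+\Gamma_2}$. An induction on $\Pi$ does not provide it. Induct instead on $\Xi$, or on the derivation of $\refrel{\Pi}{\Xi}$. There $\Xi_2$ is a proper subderivation, and every other clause relates strict subderivations on both sides, so nothing else changes.

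\textbf{The rest is correct.} Your two auxiliary facts are exactly what the paper leaves implicit.
\begin{itemize}
\item Closure under defined unions ($\refrel{\st M}{\it M}$ and $\refrel{\st M}{\it N}$ imply $\refrel{\st M}{\it M\cup\it N}$) handles the summed contexts.
\item Your leaf-replacement compatibility lemma for \ruleNameLetin\ goes through by induction on the \HEBI\ leaf-replacement derivation.
  \begin{itemize}
  \item The refinement $\refrel{\st E}{\it E}$ forces the \HEB\ replacement to use the matching rule. This gives $\st F=\st G$ in the return case, and a single child $\st N\rightarrow\st F'$ in the $\sigma$ case.
  \item Every $\it M^i_k,\it G^i_k$ in the sub-families belongs to the let rule's family, so it refines $\st M$ and $\st G$.
  \item The case $I=\emptyset$ is harmless, since the $\sigma$ refinement rule then only requires refinement of the parameter type.
  \end{itemize}
\end{itemize}

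Your summary sentence lumps these cases together, but they pose no difficulty:
\begin{itemize}
\item In \ruleNameApp\ and \ruleNameHandle\ the conclusion type is not assembled by a refinement rule. You extract it by inverting the arrow and handler type-refinement rules on the premise's refinement.
\item \ruleNameCase\ is routine.
\end{itemize}
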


\begin{proposition}[Weakening Refinement]
    \label{prop:HEBClosedRefinement}
    Let $p$ be a term, $\stJug[\Pi]{\Delta_1}{p}{\st U}$ be its derivation and $\Delta_2$ be a context such that for each $x::\st M\in \Delta_1$ and $x::\st N\in \Delta_2$ then $\st M=\st N$. Let $\itJug[\Xi]{\Gamma}{p}{\it E}$ be an \HEBI\ type derivation, then $\refrel{\Pi}{\Xi}$ if and only if $\refrel{\Pi^{(\Delta_2)}}{\Xi}$.
\end{proposition}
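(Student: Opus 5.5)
We must prove Proposition (Weakening Refinement): $\Pi^{(\Delta_2)}$ is presumably the \HEB\ derivation obtained from $\Pi$ by weakening its context with $\Delta_2$, i.e. the same derivation with every context extended by the compatible bindings of $\Delta_2$. The claim is that $\refrel{\Pi}{\Xi}$ iff $\refrel{\Pi^{(\Delta_2)}}{\Xi}$.

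The proof goes by induction on the structure of $\Pi$, simultaneously for both directions. First we make $\Pi^{(\Delta_2)}$ precise. It is defined by induction on $\Pi$: each rule instance keeps its name and premises, and each context $\Delta$ is replaced by $\Delta \cup \Delta_2$. Binders such as $x$ in \ruleNameAbs, the variables of \ruleNameLetin\ and \ruleNameEff, and $y,r$ in handler clauses are $\alpha$-renamed away from $\mathrm{dom}(\Delta_2)$. The compatibility hypothesis on $\Delta_1$ and $\Delta_2$ guarantees that this union is a well-defined context. Consequently $\Pi$ and $\Pi^{(\Delta_2)}$ have the same shape, the same term, and the same types at every node. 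They differ only in the \HEB\ contexts.

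The key observation is that the refinement relation on derivations (\Cref{def:HEBDerivationRefinement} and \Cref{def:HEBDerivationRefinementContd}) never constrains the \HEB\ context itself. Each clause relates a rule of $\Pi$ to the corresponding rule of $\Xi$, requiring refinement of the premises and, at leaves, a side condition on types only. The \ruleNameInt\ clause only needs $0<n\le m$ and $\it M\in\{\itSetEmpty,\itSet{\typeInt n}\}$. The \ruleNameVar\ clause only needs $\refrel{\st M}{\it M}$, where $\st M$ is the type of the typed variable $x$. Since weakening does not change the binding of $x$ ($\Delta_2$ agrees with $\Delta_1$ on shared variables), that side condition is identical in $\Pi$ and $\Pi^{(\Delta_2)}$. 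The base cases therefore hold in both directions immediately.

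In the inductive cases (\ruleNameAbs, both fix refinements, \ruleNameApp, \ruleNameLetin, \ruleNameCase, \ruleNameRet, \ruleNameEff, \ruleNameHandle, and the two handler rules), the refinement of the conclusion is by definition equivalent to the refinement of the corresponding premises. For the \ruleNameFixRec\ clause it additionally requires $\refrel{\Pi}{\Xi_2}$ for the whole fixpoint derivation. The premises of $\Pi^{(\Delta_2)}$ are exactly $(\Pi_k)^{(\Delta_2')}$, where $\Delta_2'$ is $\Delta_2$ with the renamed binders, which still satisfies the compatibility hypothesis with the extended contexts. The induction hypothesis then gives the equivalence premise by premise.

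The main obstacle is the \ruleNameFixRec\ clause, because it refers to $\Pi$ itself rather than to a strict subderivation. The induction therefore cannot be on $\Pi$ alone. The fix is to induct on the pair $(\Pi,\Xi)$, lexicographically with $\Xi$ decreasing, or simply on the size of $\Xi$. Each refinement clause, including the \ruleNameFixRec\ one, descends to strict subderivations of $\Xi$, so the induction hypothesis applies to $\refrel{\Pi}{\Xi_2}$ versus $\refrel{\Pi^{(\Delta_2)}}{\Xi_2}$. A secondary nuisance is checking that the binder renaming commutes with the refinement clauses. This holds because refinement is invariant under $\alpha$-conversion of the \HEB\ derivation, and we would state that invariance as a preliminary remark.
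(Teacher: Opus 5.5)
Your argument is correct. The paper states this proposition without proof, so there is no official argument to compare against. Your two observations are exactly what makes the statement go through.

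First, the refinement clauses of Figures~\ref{def:HEBDerivationRefinement} and~\ref{def:HEBDerivationRefinementContd} look only at rule names, subject terms and types, plus the side conditions at \textsf{int} and \textsf{var}. They never inspect the \HEB\ context, and weakening preserves all of these.

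Second, the \textsf{fixr} clause requires $\Pi \sqsubset \Xi_2$ with $\Pi$ the whole fixpoint derivation. So induction on $\Pi$ alone does not suffice. Induction on $\Xi$ does work, because every clause descends to strict subderivations of $\Xi$, provided the hypothesis is quantified over all $\Pi$ and all compatible $\Delta_2$.

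One detail needs tidying. Renaming binders only inside $\Pi^{(\Delta_2)}$ is not harmless. A clause such as the one for \textsf{abs} would then compare a premise of $\Pi^{(\Delta_2)}$ typing $t\{x\backslash x'\}$ with a premise of $\Xi$ typing $t$. So ``refinement is invariant under $\alpha$-conversion of the \HEB\ derivation'' is not true as stated. It also contradicts your own claim that the two derivations carry the same term at every node.

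There are two clean fixes:
\begin{itemize}
\item Choose, up to $\alpha$-conversion, a representative of $p$ whose bound variables avoid $\mathrm{dom}(\Delta_2)$, and read both $\Pi$ and $\Xi$ on that representative.
\item Let a binder shadow $\Delta_2$, and weaken the premise by $\Delta_2$ minus that variable. This seems to be what your $\Delta_2'$ is meant to be.
\end{itemize}
Either way, $\Pi^{(\Delta_2)}$ has literally the same terms and types as $\Pi$ at every node, and your induction applies verbatim.
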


\begin{lemma}[\HEB\,, Refinement Handler Elimination]
    \label{lem:HEBHandlerElim}
    Let $\stJugHandler[\Pi]{\Delta}{\handler h}{\st F \Rightarrow\st E}$ be a derivation, $\psi$ be either $\{\retClause{x}{\comput t}\}$ or $\{\effectClause{y}{r}{\comput t}\}$. If $\psi\handlerCompat\handler h$ and it exists an \HEBI\ derivation $\Xi$ such that $\refrel{\stJugHandler[\Pi^{(\psi)}]{\Delta}{\handler h\cup\psi}{\st F\Rightarrow \st E}}{\itJugHandler[\Xi]{\Gamma}{\handler h\cup \psi}{\it F\Rightarrow\it E}}$, then it exists a derivation $\itJugHandler[\Xi']{\Gamma}{\handler h}{\it F\Rightarrow\it E}$ such that $\refrel{\Pi}{\Xi}$.
\end{lemma}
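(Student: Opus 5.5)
The statement should be read with conclusion $\refrel{\Pi}{\Xi'}$, since $\Xi$ itself types $\handler h\cup\psi$. I would prove it by induction on the structure of $\Pi$. I would use the fact that $\Pi^{(\psi)}$, as built in the proof of Lemma~\ref{lem:HEBHandlerWeakening}, has exactly the same shape as $\Pi$. It is obtained by adding $\psi$ to the handler in the conclusion, and in the recursive premise of \ruleNameHandlerSigma, without touching any premise typing a clause body. Since the refinement relation on derivations (Figures~\ref{def:HEBDerivationRefinement} and~\ref{def:HEBDerivationRefinementContd}) is rule-by-rule, any $\Xi$ refining $\Pi^{(\psi)}$ must end with the rule that refines the last rule of $\Pi^{(\psi)}$. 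So $\Xi$ ends with \ruleNameHandlerRet\ or \ruleNameHandlerSigma, according to the last rule of $\Pi$. The plan in each case is to strip $\psi$ from the handler in $\Xi$ and keep all other subderivations unchanged.

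\emph{Case \ruleNameHandlerRet.} Here $\handler h=\{\retClause{y}{\comput t}\}\cup\handler h_0$ with premise $\Pi_1$ typing $\comput t$ and side condition $\{\retClause{y}{\comput t}\}\handlerCompat\handler h_0$. Then $\Xi$ has a premise $\Xi_1$ with $\refrel{\Pi_1}{\Xi_1}$ typing the same body $\comput t$ in context $\Gamma; y:\it M$. I set $\Xi'$ to be \ruleNameHandlerRet\ applied to $\Xi_1$, with handler $\handler h$ and the side condition inherited from $\Pi$. The conclusion is then $\itJugHandler{\Gamma}{\handler h}{\itEffectReturn{\it M}\Rightarrow\it E}$, and $\refrel{\Pi}{\Xi'}$ holds by definition.

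\emph{Case \ruleNameHandlerSigma.} Here $\handler h=\{\effectClause{y}{r}{\comput t}\}\cup\handler h_0$ with premises $\Pi_1$ (clause body) and $\Pi_2$ (typing $\handler h_0$ against $\st F'\Rightarrow\st G$). The weakened derivation $\Pi^{(\psi)}$ has the same $\Pi_1$ and the premise $\Pi_2^{(\psi)}$, provided $\psi\handlerCompat\handler h_0$. Then $\Xi$ consists of $\Xi_0$ refining $\Pi_1$ and a family $(\Xi_i)_{\rangeI}$ with $\refrel{\Pi_2^{(\psi)}}{\Xi_i}$. The induction hypothesis on $\Pi_2$ yields $\Xi_i'$ typing $\handler h_0$ with the same contexts $\Gamma_i$ and types, and with $\refrel{\Pi_2}{\Xi_i'}$. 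Reassembling $\Xi_0$ and the $\Xi_i'$ with \ruleNameHandlerSigma\ gives $\Xi'$. It has context $\Gamma+(+_{\rangeI}\Gamma_i)=\Gamma$, unchanged, and the same type, and it refines $\Pi$ clause by clause.

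The delicate point, which I expect to be the main obstacle, is the bookkeeping of the decomposition $\handler h\cup\psi=\{\text{clause}\}\cup\handler h_0'$ read off from $\Xi$. When $\psi$ concerns the same operation (or $\constructor{return}$) as the clause being typed, $\Xi$ may a priori pick the clause body from $\psi$ rather than from $\handler h$. Compatibility $\psi\handlerCompat\handler h$ rules this out, since it forces the two bodies to coincide, so the body typed in $\Xi$ is the one in $\Pi$. Similarly, I need $\psi\handlerCompat\handler h_0$ in order to apply the induction hypothesis. This follows from $\psi\handlerCompat\handler h$ because the clauses of $\handler h_0$ are among those of $\handler h$. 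Once this is settled, the two cases are routine reconstructions, and Lemma~\ref{rem:NewRefineConclusion} guarantees that the refinement of contexts and types is preserved.
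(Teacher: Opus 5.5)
Your proof is correct, including the repair of the conclusion to $\Pi\sqsubset\Xi'$. The paper states this lemma without giving a proof, and your argument is the natural one, in line with how the paper treats the neighbouring lemmas. You go by induction on $\Pi$, following the shape-preserving construction of $\Pi^{(\psi)}$ from the handler weakening lemma. The return-clause case is immediate, and in the $\sigma$-clause case you apply the induction hypothesis to the recursive handler premise, using $\psi\mathrel{\#}h_0$ inherited from $\psi\mathrel{\#}h$.
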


\begin{lemma}[\HEB\ Refinement Anti-substitution Lemma] ~\\
    \label{lem:antiSubstLemma-ref}%
    Let $p \in \setTerm$
    be any term, $x$ a variable occourring free in $p$ and $\val v\in\setVal$ a closed value.
    If $\itJug[\Xi]{\Gamma}{p\sub{x}{\val v}}{\tau}$ for some \HEBI\ type $\tau$ and there exists an \HEB\ type $\st M$ such that $\stJug[\Pi_{p}]{\Delta;x::\st M}{p}{\st T}$, $\stJug[\Pi_\val v]{\emptyset}{\val v}{\st M}$ and $\refrel{\Pi_p\sub{x}{\Pi_\val v}}{\Pi}$, then it exists an \HEBI\ type $\it M$ such that 
    $\itJugValue[\refrel{\Pi_\val v}{\Xi_\val v}]{\emptyset}{\val v}{\it M}$ and
    $\itJug[\refrel{\Pi_p}{\Xi_p}]{\Gamma; x: \it M }{p}{\tau}$.
\end{lemma}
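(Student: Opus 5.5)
The plan is to follow the proof of \Cref{lem:antiSubstLemma} (the unrefined Anti-substitution Lemma) and proceed by induction on the structure of $p$. At every step we carry along the extra invariant that the \HEBI\ derivation we build refines the corresponding piece of the \HEB\ derivation $\Pi_p$. The hypothesis $\refrel{\Pi_p\sub{x}{\Pi_\val v}}{\Xi}$ is essential here. Derivation substitution $\Pi_p\sub{x}{\Pi_\val v}$ has exactly the rule structure of $\Pi_p$, except that every \ruleNameVar\ leaf on $x$ is replaced by a weakened copy of $\Pi_\val v$. So inverting the refinement relation (\Cref{def:HEBDerivationRefinement,def:HEBDerivationRefinementContd}) on the last rule of $\Pi_p\sub{x}{\Pi_\val v}$ lines $\Xi$ up rule by rule with $\Pi_p$. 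This gives subderivations of $\Xi$ refining the immediate subderivations of $\Pi_p$ with $\Pi_\val v$ substituted in, which is exactly the form the induction hypothesis needs.

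In the base case $p=x$, the derivation $\Pi_p\sub{x}{\Pi_\val v}$ is (a weakening of) $\Pi_\val v$. By \Cref{prop:HEBClosedRefinement} we get $\refrel{\Pi_\val v}{\Xi}$, and since $\val v$ is closed, \Cref{rem:ContextSimplification} forces $\Gamma=\emptyset$. We take $\it M=\tau$ and $\Xi_\val v=\Xi$, and let $\Xi_p$ be the \ruleNameVar\ axiom; it refines the \HEB\ \ruleNameVar\ axiom because $\refrel{\st M}{\it M}$ by \Cref{rem:NewRefineConclusion}. If $x$ is not free in a subterm, we assign $x$ the empty type and use $\itJugValue{\emptyset}{\val v}{\itSetEmpty}$ from \Cref{lem:TypabilityValues}. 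This needs a small check that $\refrel{\Pi_\val v}{}$ this canonical derivation, which holds since $\itSetEmpty$ refines every \HEB\ value type. Strictly speaking the statement asks $x$ to occur free, so in the inductive cases we apply the induction hypothesis only to premises containing $x$ and treat the other premises this way.

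For the inductive cases we consider each rule: \ruleNameAbs, the two fix rules, \ruleNameApp, \ruleNameLetin, \ruleNameCase, \ruleNameRet, \ruleNameEff, \ruleNameHandle, and the handler rules. In each one, $\Xi$ has a family of premises $(\Xi_j)_{j}$, possibly indexed by an intersection set $I$, and each $\Xi_j$ refines the same \HEB\ premise with $\Pi_\val v$ substituted. The induction hypothesis then yields types $\it M_j$, derivations $\Xi^j_\val v$ refining $\Pi_\val v$, and derivations $\Xi^j_p$ refining the corresponding premise of $\Pi_p$. We set $\it M=\bigcup_j \it M_j$. This union is defined by \Cref{lem:SumClosedValues}, and \Cref{lem:IntersectionLemma} gives a derivation of $\itJugValue{\emptyset}{\val v}{\it M}$. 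We then rebuild $\Xi_p$ with the same rule, summing contexts so that $x$ receives $\bigcup_j\it M_j$. The refinement $\refrel{\st M}{\it M}$ follows from \Cref{lem:RefinementSubsetType} together with the structure of the refinement rules. For \ruleNameHandlerSigma\ and \ruleNameHandlerRet\ we also use \Cref{lem:HandlerComputIndipendence} to transport the compatibility side condition $\handlerCompat$ back across the substitution. For the fix rules, where $\Xi$ may be a \ruleNameFixRec\ stack all refining a single \HEB\ \ruleNameFix, we induct additionally on the number of stacked \ruleNameFixRec\ applications.

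The main obstacle is showing that the intersected derivation obtained from \Cref{lem:IntersectionLemma} still refines $\Pi_\val v$. That lemma is stated without reference to refinement, so I would first prove a refined version: if each $\Xi^j_\val v$ refines the same \HEB\ derivation $\Pi_\val v$, then the merged derivation refines $\Pi_\val v$ as well. The proof follows the induction of \Cref{lem:IntersectionLemma} on $\val v$. In the abstraction case, the merged \ruleNameAbs\ simply collects all the premises, and each of them already refines the unique premise of the \HEB\ \ruleNameAbs. In the fix case, merging a \ruleNameFixBase\ with \ruleNameFixRec\ stacks needs care, and I would first try normalising each stack so that the merge happens at the innermost body derivation.
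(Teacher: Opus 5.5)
\textbf{Gap: the induction measure.} Your overall strategy is the one the paper's one-line proof (``by induction on $\Xi$'') relies on. You invert the refinement against $\Pi_p\sub{x}{\Pi_{\val v}}$ to line $\Xi$ up with $\Pi_p$, apply the induction hypothesis to the premises, merge the resulting typings of $\val v$, and give premises without $x$ the empty type. The problem is that structural induction on $p$, patched only for \ruleNameFixRec\ stacks, fails in the \ruleNameHandlerSigma\ case. There the premise handler $\handler h$ may coincide with the conclusion handler $\{\effectClause{y}{r}{\comput t}\}\cup\handler h$, as the paper stresses. This is in fact forced whenever some $\it F_i$ again starts with $\sigma$, because $\handlerCompat$ then makes $\handler h$ contain the very same clause. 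In that situation the premises $\Xi_i$ of $\Xi$ type $\handler h\sub{x}{\val v}=p\sub{x}{\val v}$ itself. Such chains are as long as the $\sigma$-depth of the handled type, and no induction hypothesis on proper subterms of $p$ covers them. The repair is to induct on $\Xi$, as the paper does (its appendix proof of \Cref{lem:antiSubstLemma} also inducts on the derivation, despite the main-text sketch). Every induction hypothesis is then applied to a strictly smaller premise of $\Xi$. This handles fixpoint stacks and repeated \ruleNameHandlerSigma\ uniformly, and makes your separate induction on stacked \ruleNameFixRec\ unnecessary.

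\textbf{The refined Intersection Lemma.} You are right that a refined version of \Cref{lem:IntersectionLemma} is needed; the paper leaves it implicit. However, no normalisation of fixpoint stacks is required. Run the paper's induction on total size, and turn each \ruleNameFixBase\ derivation into a \ruleNameFixRec\ one whose second premise is the canonical $\itSetEmpty$-derivation of $\fix{y}{\val w}$ from \Cref{lem:TypabilityValues}. The one refinement-specific fact used there, and also in your treatment of premises without $x$, is that this canonical derivation refines \emph{every} \HEB\ derivation of the same value. That needs a short induction on the value: it is vacuous for abstractions, uses \ruleNameFixBase\ for fixpoints, and uses $\refrel{\st M}{\itSetEmpty}$ only at variable leaves. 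The fact that $\itSetEmpty$ refines every type is not enough on its own.

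\textbf{Minor points.} The refinement $\refrel{\st M}{\it M}$ for the merged type follows from \Cref{rem:NewRefineConclusion} applied to the merged derivation (refinement is closed under defined unions), not from \Cref{lem:RefinementSubsetType}. The $\handlerCompat$ side conditions can be read off $\Pi_p$ directly, so you do not need to transport them across the substitution.
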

\begin{proof}
    By induction on $\Xi$.
\end{proof}

\begin{lemma}[\HEB\ Relative Elementary Subject Expansion] ~\\
    \label{lem:NewElementarySubjectExpansion-refEl}%
    Let $\comput t, \comput t' \in \setComput$ be two closed term such that 
    $\comput t \reductArr' \comput t'$. If there exists
    $\itJugComput[\Xi']{\emptyset}{\comput t'}{\it E}$ and $\stJugComput[\Pi']{\emptyset}{\comput t}{\st E}$ such that $\refrel{\Pi'}{\Xi'}$ and also there exist $\stJugComput[\Pi]{\emptyset}{\comput t}{\st E}$ such that $\Pi\rightsquigarrow \Pi'$, 
    then there exists
    $\itJugComput[\Xi]{\emptyset}{\comput t}{\it E}$ such that $\refrel{\Pi}{\Xi}$.
\end{lemma}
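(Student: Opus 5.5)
The proof proceeds by case analysis on the elementary derivation reduction $\Pi\rightsquigarrow\Pi'$, one case for each of the seven rules of \Cref{def:RewriteRules}. In every case the plan mirrors the proof of \Cref{lem:ElementarySubjectExpansion} for \HEBI, with one extra invariant. Each \HEBI\ subderivation we build must refine the matching \HEB\ subderivation of $\Pi$. We check this rule by rule against \Cref{def:HEBDerivationRefinement,def:HEBDerivationRefinementContd}.

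Because $\Pi\rightsquigarrow\Pi'$, the shape of $\Pi'$ is fixed by $\Pi$. For instance, in the $\reductBeta$ case $\Pi'=\Pi_{\comput u}\sub{x}{\Pi_{\val v}}$, where $\Pi_{\comput u}$ and $\Pi_{\val v}$ are the subderivations of the $\ruleNameAbs$ and $\ruleNameApp$ nodes of $\Pi$.

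\textbf{Substitution-based cases} ($\reductBeta$, $\reductFix$, $\reductLetRet$, $\reductHdlRet$). Here we apply \Cref{lem:antiSubstLemma-ref} to $\Xi'$, using $\refrel{\Pi'}{\Xi'}$. This yields an \HEBI\ type $\it M$ together with $\refrel{\Pi_{\val v}}{\Xi_{\val v}}$ and $\refrel{\Pi_{\comput u}}{\Xi_{\comput u}}$.

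\begin{itemize}
\item For $\reductBeta$ we rebuild $\Xi$ with $\ruleNameAbs$ on a singleton set $\itSet{\it M\rightarrow\it E}$ and then $\ruleNameApp$. Refinement follows from the abs/app clauses and \Cref{rem:NewRefineConclusion}, which gives $\refrel{\st M}{\it M}$.
\item For $\reductFix$, anti-substitution on the body gives a type $\it N$ for the recursive occurrences. We then use $\ruleNameFixRec$ when $\it N\neq\itSetEmpty$ and $\ruleNameFixBase$ otherwise. Both are admissible refinements of $\ruleNameFix$, and the second premise refines $\Pi_{\val v}$ itself, exactly as \Cref{def:HEBDerivationRefinement} requires.
\item For $\reductLetRet$ we use the trivial leaf replacement $\itEffectReturn{\it M}\replaceLeaf{\it M\rightarrow\it E}\it E$.
\item For $\reductHdlRet$ we use $\ruleNameHandlerRet$, and we remove the redundant clauses with \Cref{lem:HEBHandlerElim} when needed.
\end{itemize}

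\textbf{Case $\reductCase$.} The type $\itSet{\typeInt n}$ for $\int n$ refines $\typeInt m$ because $0<n\le m$. The branch $\comput t_n$ keeps its refinement. The \HEB\ premises for the other branches impose no refinement obligation, since the \HEBI\ $\ruleNameCase$ rule types only the chosen branch.

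\textbf{Case $\reductLetEff$.} $\Xi'$ ends with $\ruleNameEff$ over premises $\itJugComput{y:\it N_i}{\letin{x}{\comput t}{\comput u}}{\it E_i}$. Each of these is a $\ruleNameLetin$ with its own leaf replacement. We regroup them into a single $\ruleNameEff$ for $\comput t$, followed by one $\ruleNameLetin$ whose leaf replacement concatenates all the families indexed by $i$. The second leaf-replacement rule allows exactly this. Because every piece is built from subderivations already refining the corresponding pieces of $\Pi$, the refinement carries over.

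\textbf{Main obstacle: $\reductHdlEff$.} Here $\Xi'$ types $\comput t\sub{y}{\val v}\sub{r}{\abs{z}{\handle{\handler h}{\comput u}}}$. We first apply \Cref{lem:antiSubstLemma-ref} twice. This extracts $\it M$ for $\val v$ and a set $\itSet{\it N_i\rightarrow\it G_i}_{\rangeI}$ for the continuation. Each component comes from an $\ruleNameAbs$ premise ending in $\ruleNameHandle$, which splits into $\itJugHandler{}{\handler h}{\it F_i\Rightarrow\it G_i}$ and $\itJugComput{z:\it N_i}{\comput u}{\it F_i}$.

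We then rebuild the LHS derivation:
\begin{itemize}
\item $\ruleNameEff$ on $\effect{\val v}{z}{\comput u}$ with type $\itEffect{\sigma}{\it M}{\it N_i\rightarrow\it F_i}_{\rangeI}$;
\item $\ruleNameHandlerSigma$ on $\handler h$;
\item $\ruleNameHandle$ to combine the two.
\end{itemize}

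The delicate points are these:
\begin{itemize}
\item The handler derivations for the continuation must all refine the single \HEB\ subderivation $\st F\Rightarrow\st G$ inside $\Pi$. This follows because the anti-substituted derivations refine $\Pi_{\val w}$, which in turn contains that \HEB\ handler derivation.
\item Contexts are closed, so they are empty throughout.
\item Compatibility $\handlerCompat$ follows from \Cref{lem:CompatImpliesClashFree}, because the redex is not a clash.
\end{itemize}

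I expect the bookkeeping in this last case, namely matching the $I$-indexed \HEBI\ premises against the single \HEB\ premise, to be the main effort.
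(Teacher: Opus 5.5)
Your case split, your use of the refined anti-substitution lemma, and your regrouping in $\reductLetEff$ all match the paper. However, the $\reductHdlEff$ case breaks exactly at the point you call delicate.

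Write the handler of the redex as $\handler h=\{\effectClause{x}{r}{\comput s}\}\cup\handler h'$. The {\ruleNameHandlerSigma} node of $\Pi$ then has a premise $\Pi_{\handler h'}$ typing $\handler h'$. The continuation's derivation inside $\Pi'$ does \emph{not} contain $\Pi_{\handler h'}$. By the definition of $\rightsquigarrow$, it contains $\Pi_{\handler h'}^{(y::\st N)(\sigma)}$, a copy weakened in the context and extended by re-adding the $\sigma$-clause, so that it types the full handler $\handler h$. Consequently, the derivations $\Xi^i_{\handler h}$ you extract type $\handler h$ rather than $\handler h'$, and they refine only that weakened copy. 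Since $\refrelsym$ relates derivations node by node on the same subject, using them as premises of your rebuilt {\ruleNameHandlerSigma} node gives a derivation of $\comput t$, but not one refining $\Pi$.

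The missing steps are these:
\begin{itemize}
\item drop $y::\st N$ on the \HEB\ side with \Cref{prop:HEBClosedRefinement}, and drop $y$ on the \HEBI\ side with \Cref{rem:ContextSimplification};
\item apply \Cref{lem:HEBHandlerElim} to each $\Xi^i_{\handler h}$ to get $\Psi^i_{\handler h'}$ typing $\handler h'$ with $\refrel{\Pi_{\handler h'}}{\Psi^i_{\handler h'}}$. These $\Psi^i_{\handler h'}$ are the correct premises.
\end{itemize}
You cite \Cref{lem:HEBHandlerElim} in the $\reductHdlRet$ case instead. There it has no role, because {\ruleNameHandlerRet} has no handler premise.

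The same oversight appears, more mildly, in $\reductLetEff$. The derivations of $\comput u$ inside $\Xi'$ refine the weakened $\Pi_{\comput u}^{(y::\st N)}$ that occurs in $\Pi'$, not $\Pi_{\comput u}$ itself. So your claim that every piece already refines the corresponding piece of $\Pi$ holds only after an appeal to \Cref{prop:HEBClosedRefinement}.
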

\begin{proof}
    \input{Proofs/NewRefinementElementarySubjectExpansion.tex}
\end{proof}

In order to prove \cref{lem:NewElementarySubjectExpansion-ref}, it only remains to prove the following lemma.
\begin{lemma}[\HEB\ Refinement Context Typing]~\\
    \label{lem:RefinementContextTyping}%
    Let $\comput t \in \setComput$ and 
    $\ctxtEval$ a computation context such that 
    $\refrel{\stJugComput[\Pi]{\emptyset}{\ctxtEval<\comput t>}{\st E}}{\itJugComput[\Xi]{\emptyset}{\ctxtEval<\comput t>}{\it E}}$.
    Then there exists an \HEBI\ type $\it F$ and an \HEB\ type $\st F$ 
    such that $\refrel{\stJugComput{\emptyset}{\comput t}{\st F}}{\stJugComput{\emptyset}{\comput t}{\it F}}$ and 
    for every $\comput u$ such that $\refrel{\itJugComput{\emptyset}{\comput u}{\st F}}{\itJugComput{\emptyset}{\comput u}{\it F}}$, then there exist $\Pi'$ and $\Xi'$ such that 
    $\refrel{\itJugComput[\Pi']{\emptyset}{\ctxtEval<\comput u>}{\st E}}{\itJugComput[\Xi']{\emptyset}{\ctxtEval<\comput u>}{\it E}}$.
\end{lemma}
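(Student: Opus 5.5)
We prove the statement by induction on the structure of the evaluation context $\ctxtEval$, following the same pattern as the proofs of \Cref{lem:ContextTyping} and \Cref{lem:HEBSimpleContextTyping}. The difference is that we carry the refinement relation along: we must extract a \emph{pair} of derivations for $\comput t$, related by refinement, and then rebuild a \emph{pair} of derivations for $\ctxtEval<\comput u>$, again related by refinement.

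In the base case $\ctxtEval = \Hole$ we take $\st F = \st E$ and $\it F = \it E$, with the given derivations $\Pi$ and $\Xi$ as witnesses. Given any refining pair for $\comput u$ at $\st F, \it F$, the same pair is the required $\Pi', \Xi'$.

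\textbf{Let case.} Suppose $\ctxtEval = \letin{x}{\ctxtEval_1}{\comput s}$. By the refinement rule for $\ruleNameLetin$ in \Cref{def:HEBDerivationRefinement}, $\Pi$ ends with $\ruleNameLetin$ and has premises $\Pi_1$ (for $\ctxtEval_1<\comput t>$ at $\st E_1$), $\st E_1 \replaceLeaf{\st M\rightarrow\st G}\st E$, and $\Pi_2$ (for $\comput s$). Correspondingly, $\Xi$ ends with $\ruleNameLetin$ and has premises $\Xi_1$, $\it E_1 \replaceLeaf{\it M_i\rightarrow\it G_i}[\rangeI]\it E$, and $(\Xi_{2,i})_{\rangeI}$, with $\refrel{\Pi_1}{\Xi_1}$ and $\refrel{\Pi_2}{\Xi_{2,i}}$ for each $i$. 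Since the whole term is closed, \Cref{rem:ContextSimplification} gives that the context of $\Xi_1$ is empty and each $\Xi_{2,i}$ has context $x:\it M_i$ only. The induction hypothesis applied to $\ctxtEval_1$ yields $\st F, \it F$ together with the refining pair for $\comput t$. For the rebuilding step, a refining pair for $\comput u$ gives, by the induction hypothesis, a refining pair $\Pi_1', \Xi_1'$ for $\ctxtEval_1<\comput u>$ at $\st E_1, \it E_1$. We reapply $\ruleNameLetin$ on both sides, reusing the unchanged $\Pi_2$, $\Xi_{2,i}$ and the same leaf-replacement premises. The refinement rule then gives $\refrel{\Pi'}{\Xi'}$.

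\textbf{Handle case.} Suppose $\ctxtEval = \handle{\handler h}{\ctxtEval_1}$. Both $\Pi$ and $\Xi$ end with $\ruleNameHandle$, with premises $\Pi_h, \Pi_1$ and $\Xi_h, \Xi_1$ satisfying $\refrel{\Pi_h}{\Xi_h}$ and $\refrel{\Pi_1}{\Xi_1}$; the handler rule for $\ruleNameHandle$ appears in \Cref{def:HEBDerivationRefinementContd}. We proceed as in the let case. We apply the induction hypothesis to $\Pi_1, \Xi_1$ at the hidden types $\st F', \it F'$. Closedness and \Cref{rem:ContextSimplification} again make every context empty, so the sum $\Gamma_1 + \Gamma_2$ is trivially defined. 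We then reassemble with the unchanged handler derivations.

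The main obstacle is a bookkeeping issue, not a conceptual one: we must check that the refinement relation on derivations is genuinely compositional. Concretely, replacing a subderivation pair $(\Pi_1, \Xi_1)$ by another refining pair $(\Pi_1', \Xi_1')$ with the same conclusions must preserve refinement of the whole. This holds because refinement is defined rule by rule, with side conditions that only involve the premises' conclusions and the types, and \Cref{rem:NewRefineConclusion} guarantees that these conclusions are consistent. A second point to handle carefully is that $\Pi$ and $\Xi$ must end with \emph{corresponding} rules. This is immediate here: the only rule that types a $\constructor{let}$ or $\constructor{with}$ term is the matching one, so inversion is unambiguous.
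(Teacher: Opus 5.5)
Your proposal is correct and takes the same route as the paper, which proves this lemma by induction on $\ctxtEval$. Your base, \texttt{let} and \texttt{with} cases fill in the details the paper leaves implicit: inverting the matching refinement rule, applying the induction hypothesis to the subderivation pair for the hole's position, and reassembling with the untouched sibling derivations and leaf-replacement premises.
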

\begin{proof}
    By induction on $\ctxtEval$.
\end{proof}

\subsection{Section \ref{sect:refinementsDecidability}}

We start this section giving some preliminary definitions and properties needed in order to prove the correctness and computability of $\mathcal A$.
Let $\itJugValue{\Gamma}{\fix{x}{\val v}}{\it M}$ be typable in \HEBI. By looking at the corresponding typing rule, we can notice that this implies the existence of a certain number $n$ of types $\it N_i$ and derivations $\itJugValue[\Xi_i]{\Gamma_i;x:\it N_i}{\val v}{\it N_{i+1}}$ such that $\it N_0=\itSetEmpty$, $\it N_n=\it M$ and $\Gamma=+_{i\in n}\Gamma_i$. We can see this sequence of derivations as a path in the graph $\mathcal G(\Pi,\Gamma,\st M\rightarrow \st E)$, with the additional requirements that $\refrel{\Pi}{\Xi_i}$ and $\refrel{\st M\rightarrow \st E}{\it N_i}$ for each $i\leq n$. In \cref{lem:RefinementFixCompleteness}, we prove that this path is simple, therefore the length of such sequences of derivations is bounded by the size of the graph itself.

\begin{definition}
    \label{def:GraphFixpoint}
    Let $\stJugValue[\Pi]{\Delta;x:\st M\rightarrow\st E}{\val v}{\st M\rightarrow\st E}$ be a \HEB\ type derivation and $\Gamma$ be an \HEBI\ typing context. We define a graph $\graph{\Pi}{\Gamma}{\st M\rightarrow \st E}$ having as nodes couples $(\Gamma',\it M)$, where $\Gamma'\subseteq\Gamma$ and $\refrel{\st M\rightarrow\st E}{\it M}$. There exist an arrow between two nodes $(\Gamma_1,\it M_1)$ and $(\Gamma_2,\it M_2)$ if and only if it exists an \HEBI\ type derivation $\itJugValue[\Xi]{\Gamma_\val v;x:\it M_1}{\val v}{\it M_2}$ such that $\refrel{\Pi}{\Xi}$ with $\Gamma_\val v\subseteq \Gamma$ and $\Gamma_2=\Gamma_1+\Gamma_\val v$.
\end{definition}

\begin{remark}[Finite Fixpoint]
    \label{rem:GraphFinite}
    The size of $\graph{\Pi}{\Gamma}{\st M\rightarrow \st E}$ is bounded by $|Sub(\Gamma)|\cdot|\Refn(\st M\rightarrow\st E)|$.
\end{remark}

\begin{remark}
    \label{rem:PathToDerivations}
    Let $\stJugValue[\Pi]{\Delta_s;x:\st M\rightarrow\st E}{\val v}{\st M\rightarrow\st E}$ be a \HEB\ type derivation and $\Gamma$ be an \HEBI\ typing context. If there exists a path $\phi$ between two nodes $(\Delta,\it N)$ and $(\Delta',\it N')$ in $\graph{\Pi}{\Gamma}{\st M\rightarrow \st E}$, then there exist some $\refrel{\st M}{\it M_i}$ for $i\leq |\phi|+1$, $\Gamma_i\subseteq \Gamma$ and some derivations $\refrel{\Pi}{\itJugValue[\Xi_i]{\Gamma_i;x:\it M_i}{\val v}{\it M_{i+1}}}$ for $i\leq |\phi|$ such that $\Delta=\Gamma_0$ and $\Delta'=+_{i\leq n}\Gamma_i$, $\it N=\it M_0$ and $\it N'=\it M_{|\phi|+1}$. 
\end{remark}

\begin{lemma}[Fixpoint Path]
    \label{lem:RefinementFixCompleteness}
    Let $\stJugValue[\Pi]{\Delta}{\fix{x}{\val v}}{\st M \rightarrow\st E}$ be a \HEB\ type derivation, $\Pi'$ its premise and $\itJugValue[\Xi]{\Gamma}{\fix{x}{\val v}}{\it M}$ such that $\refrel{\Pi}{\Xi}$, it exists a $\Gamma_0\subseteq \Gamma$ and a simple path $\phi$ from $(\Gamma_0,\itSetEmpty)$ to $(\Gamma,\it M)$ in $\graph{\Pi'}{\Gamma}{\st M\rightarrow \st E}$.
\end{lemma}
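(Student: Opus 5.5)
The plan is to unfold the derivation $\Xi$ along its chain of fixpoint rules, and then shorten the resulting path by cutting out cycles. Since $\refrel{\Pi}{\Xi}$ and $\Pi$ ends with \ruleNameFix, the refinement rules of \Cref{def:HEBDerivationRefinement} force $\Xi$ to end with either \ruleNameFixBase\ or \ruleNameFixRec.

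\textbf{Step 1: building a path.} By induction on $\Xi$, following the \ruleNameFixRec\ chain, we extract a finite sequence of derivations
$\itJugValue[\Xi_j]{\Gamma_j;x:\it M_j}{\val v}{\it M_{j+1}}$ for $j<n$, with the following properties:
\begin{itemize}
\item $\it M_0=\itSetEmpty$ and $\it M_n=\it M$;
\item $\refrel{\Pi'}{\Xi_j}$ for every $j$;
\item $\Gamma=+_{j<n}\Gamma_j$.
\end{itemize}
The base case is \ruleNameFixBase, giving a single step from $\itSetEmpty$. In the recursive case, the second premise types $\fix{x}{\val v}$ with $\it N$ and refines $\Pi$, so the induction hypothesis applies to it.

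By \Cref{rem:NewRefineConclusion} each $\it M_j$ satisfies $\refrel{\st M\rightarrow\st E}{\it M_j}$, and each partial sum $\Delta_k=+_{j<k}\Gamma_j$ is a subcontext of $\Gamma$ (by \Cref{lem:ContextRefinement}). Take the nodes to be $(\Delta_k,\it M_k)$, with $\Gamma_0:=\Delta_0$ the empty context, which is $\subseteq\Gamma$. By \Cref{def:GraphFixpoint} each consecutive pair of nodes is joined by an edge witnessed by $\Xi_k$. This gives a path from $(\Gamma_0,\itSetEmpty)$ to $(\Gamma,\it M)$ in $\graph{\Pi'}{\Gamma}{\st M\rightarrow\st E}$.

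\textbf{Step 2: making the path simple.} Any walk in a graph contains a simple path between the same endpoints, obtained by repeatedly excising the segment between two occurrences of the same node. This is purely graph-theoretic, because edges are defined independently of the rest of the walk. Shortcutting therefore directly yields a simple path from $(\Gamma_0,\itSetEmpty)$ to $(\Gamma,\it M)$.

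\textbf{Where the care is needed.} The edge relation must hold for the actual partial sums. This requires that the sums $+_{j<k}\Gamma_j$ are defined, which follows from $\Gamma$ being their total sum, and that they stay inside $Sub(\Gamma)$. Two further points need checking:
\begin{itemize}
\item The order in which $\Xi$ nests the \ruleNameFixRec\ premises matches the node sequence: the innermost derivation corresponds to $\it M_0=\itSetEmpty$.
\item The contexts of the \ruleNameFixRec\ conclusion, $\Gamma_1+\Gamma_2$, accumulate correctly.
\end{itemize}

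If the intended reading of the statement is instead that $\Gamma_0$ may be nonempty, the same argument works after starting the path at the first node. I expect this bookkeeping of contexts in Step 1 to be the only real obstacle; the cycle-removal in Step 2 is routine.
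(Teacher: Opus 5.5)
Your proof is correct, and it is the argument the paper itself sketches just before \Cref{def:GraphFixpoint}. You unfold the \ruleNameFixRec\ chain of $\Xi$, using that its recursive premise again refines $\Pi$, into derivations refining $\Pi'$. You then read the partial context sums as a walk from $(\emptyset,\itSetEmpty)$ to $(\Gamma,\it M)$ and take $\Gamma_0=\emptyset$.

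Your Step~2 is needed, not cosmetic. The walk read off $\Xi$ need not be simple: a \ruleNameFixRec\ whose recursive premise is a \ruleNameFixBase\ typing $\fix{x}{\val v}$ with $\itSetEmpty$ in the empty context already visits $(\emptyset,\itSetEmpty)$ twice. The paper's informal remark that ``this path is simple'' glosses over this. One small correction: the partial sums lie in $Sub(\Gamma)$ simply because $+$ is pointwise union, not by \Cref{lem:ContextRefinement}.
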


This lemma can be seen as a sort of dual of \cref{lem:RefinementFixCompleteness}: if we have a sequence of $\itJugValue{\Gamma_i;x:\it M_i}{\val v}{\it M_{i+1}}$ for some suitables $\it M_i$, we can merge them in a derivation of $\fix{x}{\val v}$.
\begin{lemma}
    \label{lem:RefinementFixSoundness}
    Let $\stJugValue[\Pi']{\Delta;x::\st M\rightarrow\st E}{\val v}{\st M \rightarrow\st E}$ be a \HEB\ type derivation, $n$ an integer, $\it M_i$ with $i\leq n$ be \HEBI\ types such that $\refrel{\st M \rightarrow \st E}{\it M_i}$ and $\it M_0=\itSetEmpty$. Furthermore, let $\itJugValue[\Xi_i]{\Gamma_i;x:\it M_i}{\val v}{\it M_{i+1}}$ with $i\leq n-1$ be a sequence of \HEBI\ type derivations such that $\refrel{\Pi'}{\Xi_i}$. Then, there exist $\stJugValue[\Pi]{\Delta}{\fix{x}{\val v}}{\st M\rightarrow\st E}$ and $\itJugValue[\Omega]{+_{i\leq n-1}\Gamma_i}{\fix{x}{\val v}}{\it M_n}$ such that $\refrel{\Pi}{\Omega}$.
\end{lemma}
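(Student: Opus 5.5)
We prove the statement by induction on $n$, the length of the sequence $(\Xi_i)_{i\le n-1}$. We build simultaneously the \HEBI\ derivation $\Omega$ and check that it refines a single fixed \HEB\ derivation $\Pi$, namely the \ruleNameFix\ rule applied to $\Pi'$:
\[
\Pi \;=\; \frac{\Pi'}{\stJugValue{\Delta}{\fix{x}{\val v}}{\st M\rightarrow\st E}}\,\ruleNameFix .
\]
The derivation $\Pi$ does not depend on $n$. The whole task is therefore to produce $\Omega_n$ with conclusion $\itJugValue{+_{i\le n-1}\Gamma_i}{\fix{x}{\val v}}{\it M_n}$ and $\refrel{\Pi}{\Omega_n}$.

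\textbf{Base case $n=1$.} There is a single derivation $\Xi_0$ with conclusion $\itJugValue{\Gamma_0;x:\itSetEmpty}{\val v}{\it M_1}$, since $\it M_0=\itSetEmpty$. We need $\it M_1$ to have the functional shape $\itSet{\it N_j\rightarrow\it E_j}_{j}$ required by \ruleNameFixBase. This holds because $\refrel{\st M\rightarrow\st E}{\it M_1}$, and the only rule of \Cref{def:HEBRefinementTypes} for an arrow type yields such a set; the empty set is covered by the convention identifying $\itSetEmpty$ with the empty family. Applying \ruleNameFixBase\ to $\Xi_0$ gives $\Omega_1$. The first fixpoint clause of \Cref{def:HEBDerivationRefinement} then yields $\refrel{\Pi}{\Omega_1}$ from $\refrel{\Pi'}{\Xi_0}$.

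\textbf{Inductive step $n\to n+1$.} The induction hypothesis applied to the first $n$ derivations gives $\Omega_n$ with conclusion $\itJugValue{+_{i\le n-1}\Gamma_i}{\fix{x}{\val v}}{\it M_n}$ and $\refrel{\Pi}{\Omega_n}$. The last derivation $\Xi_n$ has conclusion $\itJugValue{\Gamma_n;x:\it M_n}{\val v}{\it M_{n+1}}$, and as before $\it M_{n+1}$ has functional shape. We apply \ruleNameFixRec\ with premises $\Xi_n$ (with $\it N:=\it M_n$) and $\Omega_n$. This gives a derivation $\Omega_{n+1}$ of context $\Gamma_n+(+_{i\le n-1}\Gamma_i)=+_{i\le n}\Gamma_i$ and type $\it M_{n+1}$. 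By the second fixpoint clause of \Cref{def:HEBDerivationRefinement}, $\refrel{\Pi}{\Omega_{n+1}}$ holds because $\refrel{\Pi'}{\Xi_n}$ and $\refrel{\Pi}{\Omega_n}$. This works precisely because $\Pi$ is the same derivation at every stage, which is why we fixed it once at the start.

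\textbf{Expected obstacle.} The only delicate point is the definedness of the context sums $+_{i}\Gamma_i$, since $+$ is partial. The statement implicitly presupposes that the sum exists, as it appears in the conclusion. If it does not, we will restrict to open variables of integer type, where definedness is exactly the stated hypothesis, and note that all other variables have arrow types, whose unions always exist. A second small point is the degenerate case $n=0$: there is nothing to build, and we exclude it or treat it via \Cref{lem:TypabilityValues} together with a \ruleNameFixBase\ derivation of type $\itSetEmpty$.
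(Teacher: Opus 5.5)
Your proof is correct and follows the construction the paper has in mind. The paper states this lemma without a written proof and only describes it as merging the $\Xi_i$ into a single derivation of $\fix{x}{\val v}$; your version does this by applying \ruleNameFixBase\ to $\Xi_0$ and then \ruleNameFixRec\ repeatedly, keeping one fixed \HEB\ derivation $\Pi$ so that the second fixpoint refinement clause, which requires $\refrel{\Pi}{\Xi_2}$ for that same $\Pi$, applies at every step. Your ``obstacle'' paragraph can be replaced by a simpler remark: the statement presupposes that $+_{i\le n-1}\Gamma_i$ is defined, and definedness of a union of value types carries over to every subfamily, so all the partial sums your induction uses are defined as well.
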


As we said in the discussion about \HEBI\ computational types, they can be seen as trees.
The following measure counts the number of leaves in such trees.
\begin{definition}
    We define a measure on computation \HEBI\ types.
\[
    \begin{array}{lcr}
        |\itEffectReturn{\it M}|=1 && |\itEffect{\sigma}{\it M}{\it N_i \rightarrow \it E_i}_{\rangeI}|=\sum_{\rangeI}|\it E_i|
    \end{array}
\]
\end{definition}
\begin{remark}
    \label{rem:LeafMeasureFinite}
    The measure $|\it F|$ is finite for each $\it F$, since the set $I$ is finite.    
\end{remark}

The Leaf Replacement relation $\it E\replaceLeaf{\it M_i \rightarrow \it G_i}[\rangeI]\it F$ states that the type $\it F$ is created from $\it E$ by substituting each type $\itEffectReturn{\it M_i}$ in a leaf of $\it E$, with the corresponding type $\it G_i$. This imples that the size of $I$ is smaller then the number of leaves of $\it E$ (\emph{i.e.} $|I|\leq |\it E|$). Moreover $|\it E|\leq |\it F|$, since $|\it G_i|\geq 1$.  
\begin{lemma}[Finite Leaf Replacement]
    \label{lem:LeafReplaceBound}
    If $\it E\replaceLeaf{\it M_i \rightarrow \it G_i}[\rangeI]\it F$ then $|I|\leq |\it F|$.
\end{lemma}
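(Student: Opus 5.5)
The plan is to proceed by induction on the derivation of the leaf replacement relation $\it E\replaceLeaf{\it M_i \rightarrow \it G_i}[\rangeI]\it F$ given in \Cref{def:HandlerCompatAndLeafReplacement}, proving the slightly more informative pair of facts $|I|\leq|\it E|$ and $|\it E|\leq|\it F|$, from which $|I|\leq|\it F|$ follows by transitivity. The measure is finite by \Cref{rem:LeafMeasureFinite}, so these are genuine inequalities between naturals. It is convenient to strengthen the statement this way because the bound on $I$ comes from $\it E$, while the bound $|\it E|\leq|\it F|$ is what transports it to $\it F$.

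In the base case the rule is $\itEffectReturn{\it M}\replaceLeaf{\it M\rightarrow\it E'}\it E'$. Here $I$ is a singleton and $|\itEffectReturn{\it M}|=1$, so $|I|\leq|\it E|$ holds. For $|\it E|\leq|\it F|$ I need $|\it E'|\geq 1$, and this auxiliary fact is proved first by a separate induction on computation types. For $\itEffectReturn{\it M}$ the measure is $1$. For $\itEffect{\sigma}{\it M}{\it N_i\rightarrow\it E_i}_{\rangeI}$ the measure is $\sum_{\rangeI}|\it E_i|$, which is at least $1$ only when $I\neq\emptyset$. This empty-$I$ case is the main obstacle: a type such as $\itEffect{\delta}{\typeInt 3}{}$ has measure $0$ under the definition as literally written.

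To handle it, I would either amend the measure to count such nodes as leaves, setting $|\itEffect{\sigma}{\it M}{}|=1$, which changes no other clause, or argue directly. For the direct argument, note that the index set of a leaf replacement injects into the occurrences of $\itEffectReturn{\cdot}$ leaves of $\it E$. Each such leaf is replaced by some $\it G_i$, and distinct leaves give disjoint subtrees of $\it F$. So it suffices that every $\it G_i$ contributes at least one to $|\it F|$, and with the amended measure this holds in general by the auxiliary induction.

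In the inductive case,
\[
\itEffect{\sigma}{\it M}{\it N_j\rightarrow\it F_j}_{\rangeJ}\replaceLeaf{\it M^j_k\rightarrow\it G^j_k}[\rangeK_j,\rangeJ]\itEffect{\sigma}{\it M}{\it N_j\rightarrow\it E_j}_{\rangeJ},
\]
the index set is the disjoint union of the sets $K_j$. The induction hypothesis gives $|K_j|\leq|\it F_j|$ and $|\it F_j|\leq|\it E_j|$ for each $j$. Summing over $J$ yields $\sum_j|K_j|\leq|\text{source}|\leq|\text{target}|$, which closes the induction.
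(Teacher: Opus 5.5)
Your two-step route, $|I|\le|\it E|$ followed by $|\it E|\le|\it F|$, is exactly the paper's informal argument. The paper justifies the second step by ``$|\it G_i|\geq 1$''. You rightly spot that this fails for operation nodes with no children, but you understate what follows. This is not an obstacle to the proof: it is a counterexample to the lemma as stated. Take $\it G=\itEffect{\delta}{\typeInt 3}{}$, the very type used in Figure~1. The base rule gives $\itEffectReturn{\it M}\replaceLeaf{\it M\rightarrow\it G}\it G$ with $|I|=1>0=|\it G|$.

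Consequently your ``direct argument'' is not a second option. Its key step, that every $\it G_i$ contributes at least one to $|\it F|$, is precisely what fails, and you fall back on the amended measure to justify it. What you actually prove is the lemma for the amended measure. You should present that as a correction of the statement, not as a proof device. With the amendment your induction is sound. In the inductive rule with an empty family the sums no longer compute the amended measures, but that case is trivially $0\le 1\le 1$.

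The amendment is harmless where the lemma is used, namely to bound $\exists I.\,|I|\le|\it F|$ in the let-clause of $\algosym$. The amended measure is still finite, so computability of $\algosym$ is unaffected. The amended lemma is also exactly what completeness of $\algosym$ needs there, and the literal version genuinely breaks it. Consider $\handle{\handler h}{\letin{x}{\ret{\int 1}}{\effect[\delta]{\int 3}{y}{\ret{y}}}}$, where the $\delta$-clause of $\handler h$ discards its continuation. Every \HEBI\ derivation types that let with a singleton index set and a childless type $\itEffect{\delta}{\it M}{}$. Under the literal measure the let-clause of $\algosym$ is then unsatisfiable, so the decision procedure would miss this reachable value.

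An alternative repair keeps the measure unchanged. Your first inequality holds literally, in fact with equality, since $I$ is in bijection with the $\constructor{return}$-leaves of $\it E$. Because $\algosym$ quantifies over $\it E$ before $I$, the let-clause could simply use the bound $|I|\le|\it E|$.
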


We now define a new formula $\mathcal B$, in the same way as we did for $\mathcal A$. This formula checks whether some given types satisfy the Leaf Replacement relation.

\begin{definition}
    Let $I$ be a set, $\it E$, $\it F$, $\it M_i$ and $\it G_i$ for each $\rangeI$ be \HEBI\ types. We define a formula $\algoleaf{\it E}{\it M_1,\dots,\it M_{|I|}}{\it G_1,\dots,\it G_{|I|}}{\it F}$ that checks if the relation $\it E\replaceLeaf{\it M_i \rightarrow \it G_i}[\rangeI]\it F$ holds. If it does, we write $\models\algoleaf{\it E}{\it M_1,\dots,\it M_{|I|}}{\it G_1,\dots,\it G_{|I|}}{\it F}$. We describe this formula in \cref{def:AlgoLeafLogic} by induction on $\it E$.
\end{definition}

\begin{figure}
    \scalebox{0.9}{\input{Definitions/AlgoLeafLogic.tex}}
    \caption{Definition of $\mathcal B$}
    \label{def:AlgoLeafLogic}
\end{figure}

\begin{proposition}[Correctness of $\mathcal B$]
    \label{lem:AlgoLeafSoundness}
    \label{lem:AlgoLeafCompleteness}
    Let $I$ be a set, $\it E$, $\it F$, $\it M_i$ and $\it G_i$ for each $\rangeI$ be \HEBI\ types. It holds that $\models\algoleaf{\it E}{\it M_1,\dots,\it M_{|I|}}{\it G_1,\dots,\it G_{|I|}}{\it F}$ if and only if $\it E\replaceLeaf{\it M_i \rightarrow \it G_i}[\rangeI]\it F$.
\end{proposition}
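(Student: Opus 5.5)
We argue by structural induction on $\it E$, proving both directions of Proposition~\ref{lem:AlgoLeafSoundness} together. The definition of $\mathcal B$ in \cref{def:AlgoLeafLogic} mirrors the two rules of \cref{def:HandlerCompatAndLeafReplacement}. Each case therefore reduces to checking that the clause of $\mathcal B$ for the outermost constructor of $\it E$ is equivalent to the existence of an instance of the corresponding leaf-replacement rule.

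\emph{Base case.} Let $\it E=\itEffectReturn{\it M}$. The only applicable rule is the axiom, which forces $|I|=1$, $\it M_1=\it M$ and $\it F=\it G_1$. By construction, $\mathcal B$ on a return type is exactly the conjunction of these equalities between finite syntactic objects. Hence $\models\mathcal B(\cdots)$ holds if and only if the axiom applies. Both directions are immediate.

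\emph{Inductive case.} Let $\it E=\itEffect{\sigma}{\it M}{\it N_j \rightarrow \it F_j}_{j\in J}$. The clause of $\mathcal B$ asserts three things.
\begin{itemize}
\item $\it F$ has the shape $\itEffect{\sigma}{\it M}{\it N_j \rightarrow \it E_j}_{j\in J}$, with the same root and edge labels.
\item There is a partition of the index list $1,\dots,|I|$ into blocks $(K_j)_{j\in J}$.
\item For each $j$, $\mathcal B(\it F_j,(\it M_k)_{k\in K_j},(\it G_k)_{k\in K_j},\it E_j)$ holds.
\end{itemize}
For soundness, we apply the induction hypothesis to each $\it F_j$, a strict subterm, to get $\it F_j\replaceLeaf{\it M_k\rightarrow\it G_k}[k\in K_j]\it E_j$. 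The second rule then yields $\it E\replaceLeaf{\it M_i\rightarrow\it G_i}[\rangeI]\it F$, because the disjoint union of the $K_j$ is $I$. For completeness, inversion on the last rule of the leaf-replacement derivation provides the premises and the index sets $K_j$. These sets witness the existential in $\mathcal B$, and the induction hypothesis gives each conjunct.

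\emph{Main obstacle: index bookkeeping.} The rule is indexed by a disjoint union $\bigcup_j K_j$ of sets, while $\mathcal B$ takes ordered lists $\it M_1,\dots,\it M_{|I|}$. We therefore need to state once that $\mathcal B$ is invariant under permuting the list of pairs $(\it M_i,\it G_i)$, and that the existential quantifier in $\mathcal B$ ranges over all ways of splitting $I$. Since $\it E\replaceLeaf{}\it F$ only depends on $I$ as a set, this invariance is harmless. We also need that the recursion terminates and is well founded. This is guaranteed because it descends into the subtrees $\it F_j$ of $\it E$. By Lemma~\ref{lem:LeafReplaceBound} and Remark~\ref{rem:LeafMeasureFinite}, every partition considered is finite, so the quantifiers range over finite sets, as later needed for computability. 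With this invariance established, the two directions are routine inversion and reassembly.
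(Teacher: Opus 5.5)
The paper states this proposition without proof, and your route is the evident intended one: structural induction on $\it E$, matching each clause of $\mathcal B$ against the corresponding leaf-replacement rule, with inversion for completeness. You also correctly repaired the recursive call to use the sublists $(\it M_k)_{k\in K_j}$, $(\it G_k)_{k\in K_j}$ rather than the first $|K_j|$ entries. The problem is the step you call ``index bookkeeping''. You resolve it by permutation invariance, which is harmless under any reading. But you also silently replace the paper's clause ``$K_j\subseteq I$ with $I=\bigcup_{j\in J}K_j$'' (a cover, overlaps allowed) by a partition, and that is exactly where the proposition can fail. Whether a cover or a partition is correct depends on whether the label $\{\it M_i\rightarrow\it G_i\}_{\rangeI}$ is an indexed family or a set of pairs. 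Your remark that the relation ``only depends on $I$ as a set'' does not decide this.

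Here is a concrete example. Take $\it E=\itEffect{\sigma}{\it M}{\it N\rightarrow\itEffectReturn{\it P},\ \it N'\rightarrow\itEffectReturn{\it P}}$ with $\it N\neq\it N'$, take $\it F=\itEffect{\sigma}{\it M}{\it N\rightarrow\it G,\ \it N'\rightarrow\it G}$, and take a single pair $\it M_1=\it P$, $\it G_1=\it G$.
\begin{itemize}
\item With the paper's cover, $K_1=K_2=\{1\}$ makes $\mathcal B$ true. If labels are indexed families, an immediate induction shows that every derivation carries exactly one index per leaf of $\it E$, here two, so soundness fails.
\item If labels are sets of pairs (as the brace notation and the cover clause suggest), the relation does hold with $|I|=1$. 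Your partition version of $\mathcal B$ is then false: one block is empty, and the base clause demands $|I|=1$. So your completeness step, ``inversion provides the index sets $K_j$, which witness the existential'', breaks, because the premises' label sets may overlap.
\end{itemize}

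You must therefore state the convention and pair it with the matching clause. One option is indexed families: the conclusion of the second rule is indexed by $\bigsqcup_j K_j$, and $\mathcal B$ quantifies over partitions identified with $I$ via $(j,k)\mapsto k$. Under this option your argument is complete. The other option is sets of pairs with the cover. Then you must additionally assume that the list $\it M_1,\dots,\it M_{|I|}$, $\it G_1,\dots,\it G_{|I|}$ has no repeated pairs. Otherwise the base clause already fails on $\itEffectReturn{\it P}$ when the pair $(\it P,\it G)$ is listed twice.

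The same care applies to the children. The conditions $|J|=|K|$ and $\it N_j=\it N'_j$ presuppose a fixed identification of $K$ with $J$. For completeness, that identification must agree with the one produced by inversion; otherwise $\mathcal B$ should quantify over a bijection.

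Finally, Lemma~\ref{lem:LeafReplaceBound} is not needed here, since covers or partitions of the given finite $I$ already range over a finite set. That is fortunate, because its justification $|\it G_i|\ge 1$ fails for childless types such as $\itEffect{\sigma}{\it M}{}$.
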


\begin{lemma}[Computability of $\mathcal B$]
    \label{lem:AlgoLeafComputable}
    The function $\models\algoleaf{\it E}{\it M_1,\dots,\it M_{|I|}}{\it G_1,\dots,\it G_{|I|}}{\it F}$ is computable for each $\it E$, $\it F$, $\it M_i$ and $\it G_i$ with $\rangeI$.
\end{lemma}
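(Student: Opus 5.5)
The statement to prove is Lemma~\ref{lem:AlgoLeafComputable}: the truth of $\algoleaf{\it E}{\it M_1,\dots,\it M_{|I|}}{\it G_1,\dots,\it G_{|I|}}{\it F}$ is computable. I would argue by induction on the structure of $\it E$, following the inductive definition of $\mathcal B$ in \cref{def:AlgoLeafLogic}. By Proposition~\ref{lem:AlgoLeafSoundness}, this formula is equivalent to the leaf-replacement relation $\it E\replaceLeaf{\it M_i \rightarrow \it G_i}[\rangeI]\it F$. The key observation is that every type involved is a finite syntactic object, because all index sets in \HEBI\ types are finite. Hence equality of value and computation types, including set equality of intersections, is decidable by a straightforward structural recursion. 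I would record this first as a small auxiliary remark.

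In the base case, $\it E = \itEffectReturn{\it M}$. Here the formula reduces to a finite check: $I$ must be a singleton, $\it M_1 = \it M$, and $\it G_1 = \it F$. Each of these conditions is a decidable equality test on finite types.

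In the inductive case, $\it E = \itEffect{\sigma}{\it M}{\it N_j \rightarrow \it E_j}_{j\in J}$. The formula requires two things. First, $\it F$ must have the form $\itEffect{\sigma}{\it M}{\it N_j \rightarrow \it F_j}_{j\in J}$, with the same root and the same edge labels. Second, there must be a partition of the indices $\{1,\dots,|I|\}$ into blocks $(K_j)_{j\in J}$ such that $\algoleaf{\it E_j}{(\it M_k)_{k\in K_j}}{(\it G_k)_{k\in K_j}}{\it F_j}$ holds for each $j$. The existential quantifiers in this case range only over finite sets:
\begin{itemize}
\item the matchings of the children of $\it F$ with those of $\it E$, of which there are at most $|J|!$;
\item the partitions of the finite index set $I$ into $|J|$ blocks, of which there are at most $|J|^{|I|}$.
\end{itemize}
For each such choice, the conjunction is decidable by the induction hypothesis applied to the strictly smaller $\it E_j$. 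A finite disjunction of decidable formulas is decidable, so this case is settled.

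The main obstacle is making sure that no quantifier in $\mathcal B$ ranges over an unbounded set of types. This would happen, for instance, if the definition existentially guessed intermediate subtrees $\it F_j$ rather than reading them off $\it F$. If the formula does contain such a quantifier, I would bound it in one of two ways:
\begin{itemize}
\item by observing that any witness must be a syntactic subterm of the given $\it F$, so that one can range over the finitely many subterms of $\it F$;
\item by using Lemma~\ref{lem:LeafReplaceBound}, which gives $|I|\leq|\it F|$ together with $|\it E|\leq|\it F|$, to bound the partition search.
\end{itemize}
With every quantifier restricted to a finite, effectively enumerable domain, evaluation of $\mathcal B$ terminates. This gives the computability claim.
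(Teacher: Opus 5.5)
Your proposal is correct and matches the argument the paper intends. The paper states this lemma without a written proof, but the reasoning it makes explicit for $\mathcal A$ is exactly yours: induct along the definition of $\mathcal B$ on $\mathcal{E}$, use decidable equality tests on finite types, and note that every quantifier ranges over a finite set. The one cosmetic difference is that the paper's clause matches the children of $\mathcal{E}$ and $\mathcal{F}$ by index rather than by a matching, and quantifies over arbitrary $K_j \subseteq I$ with $\bigcup_{j\in J} K_j = I$ (covers, not only partitions); this just enlarges your finite search space to at most $2^{|I|\cdot|J|}$ choices and leaves the argument intact.
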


We can now finally prove the correctness of $\mathcal A$. In \cref{def:AlgorithmContd} are described the cases of $\mathcal A$ that were not included in \cref{def:AlgorithmLogic}. 
We procede by splitting the proof of \cref{prop:AlgoCorrectness} into two separate proofs of soundness and completeness.

\begin{figure}
    \centering
    \input{Definitions/AlgorithmLogic.tex}
    \caption{Description of the formula $\mathcal A$, cont'd.}
    \label{def:AlgorithmContd}
\end{figure}

\begin{proposition}[Soundness of $\mathcal A$]
    \label{prop:AlgoSoundness}
    Let $\stJug[\Pi]{\Delta}{p}{\st T}$ be a \HEB\ derivation, $J_s \coloneq \stJug{\Delta}{p}{\st T}$ its conclusion and $J_r\coloneq \itJug{\Gamma}{p}{\tau}$ an \HEBI\ type judgement such that $\refrel{J_s}{J_r}$. If $\models\algo{\Pi}{J_r}$ then it exists an \HEBI\ derivation $\Xi$, with conclusion $J_r$, such that $\refrel{\Pi}{\Xi}$.
\end{proposition}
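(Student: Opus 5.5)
The argument proceeds by induction on the structure of the \HEB\ derivation $\Pi$, following the clause of $\mathcal A$ selected by the last rule of $\Pi$. In each case I unfold $\models\algo{\Pi}{J_r}$. This formula is a finite combination of existential choices (refinements of hidden types, splittings $\Gamma=\Gamma_1+\Gamma_2$ with $\Gamma_i\subseteq\Gamma$, index sets) and conjunctions of recursive calls $\algo{\Pi_k}{J_k}$ on the immediate subderivations $\Pi_k$. From this I extract witnesses for all the existentials and then check the side condition $\refrel{J_{s,k}}{J_k}$ that the induction hypothesis needs. Corollary~\ref{lem:ContextRefinement} supplies $\refrel{\Delta}{\Gamma_i}$ for sub-contexts $\Gamma_i\subseteq\Gamma$ and for extensions by a refined binding $x:\it M$. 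The refinement of the hidden types holds by construction, since the existentials range over $\Refn(\cdot)$. The induction hypothesis then gives \HEBI\ derivations $\Xi_k$ with $\refrel{\Pi_k}{\Xi_k}$. Applying the \HEBI\ rule that corresponds to the last rule of $\Pi$ yields $\Xi$, and $\refrel{\Pi}{\Xi}$ follows directly from the rule-by-rule definition of refinement on derivations (Figures~\ref{def:HEBDerivationRefinement} and~\ref{def:HEBDerivationRefinementContd}).

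The base cases \ruleNameInt\ and \ruleNameVar\ are immediate. For \ruleNameInt, the formula forces $\Gamma=\emptyset$ and $\it M\in\{\itSetEmpty,\itSet{\typeInt n}\}$, which is exactly the premise of the \HEBI\ rule. For \ruleNameVar, we get $\Gamma=\emptyset$ and $\it M=\it N$ with $\refrel{\st M}{\it M}$. The multi-premise rules \ruleNameAbs, \ruleNameApp, \ruleNameHandle, \ruleNameEff, \ruleNameCase\ and \ruleNameHandlerSigma\ all follow the generic pattern above. For \ruleNameHandlerSigma\ the refinement relation requires the same $\Pi_2$ to be refined by every $\Xi_i$, $i\in I$, and this is exactly what the conjunction $\bigwedge_{\rangeI}\algo{\Pi_2}{\cdot}$ gives. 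The compatibility condition $\handlerCompat$ is shared by both systems. For \ruleNameLetin\ I additionally use Proposition~\ref{lem:AlgoLeafSoundness} (correctness of $\mathcal B$) to turn the truth of the $\mathcal B$-subformula into the leaf-replacement premise $\it E\replaceLeaf{\it M_i\rightarrow\it G_i}[\rangeI]\it F$. Lemma~\ref{lem:LeafReplaceBound} justifies that the index set quantified in $\mathcal A$ is large enough.

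The main obstacle is the fixpoint case, because there a single \HEB\ rule \ruleNameFix\ must be refined by a \emph{stack} of \ruleNameFixRec\ applications ending in a \ruleNameFixBase. The formula provides a number $n$, a splitting $\Gamma=+_{j\le n}\Gamma_j$, and types $\it M_0=\itSetEmpty,\dots,\it M_{n+1}=\it N$ refining $\st M\rightarrow\st E$, together with the truth of $\algo{\Pi_1}{\itJugValue{\Gamma_j;x:\it M_j}{\val v}{\it M_{j+1}}}$ for every $j$. To apply the induction hypothesis to $\Pi_1$ on each of these judgements, I check refinement of the judgement using Corollary~\ref{lem:ContextRefinement}; weakening via Proposition~\ref{prop:HEBClosedRefinement} may be needed to align the contexts. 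This yields a chain of derivations $\Xi_j$ with $\refrel{\Pi_1}{\Xi_j}$. Lemma~\ref{lem:RefinementFixSoundness} is designed exactly to fold such a chain into a single derivation $\Omega$ of $\fix{x}{\val v}$ with conclusion $\itJugValue{\Gamma}{\fix{x}{\val v}}{\it N}$ that refines the \ruleNameFix\ derivation. The lemma proceeds internally by induction on $n$: \ruleNameFixBase\ handles $\it M_0=\itSetEmpty$, and each subsequent step is an application of \ruleNameFixRec\ whose second premise is the previously built derivation. The only delicate points are the off-by-one indexing between $n$ and $n+1$ in the formula, and the fact that the refined derivation of \ruleNameFix\ is required to be the \emph{same} $\Pi$ at every level of the stack. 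I would settle both by reading Lemma~\ref{lem:RefinementFixSoundness} with the indices shifted appropriately.
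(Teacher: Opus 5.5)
Your proposal is correct and follows essentially the same route as the paper: induction on $\Pi$, unfolding the matching clause of $\mathcal A$ to extract witnesses, \Cref{lem:ContextRefinement} to get refinement of the sub-judgements before applying the induction hypothesis, \Cref{lem:AlgoLeafSoundness} for the let case, and \Cref{lem:RefinementFixSoundness} to fold the chain of derivations in the fixpoint case. Two of your ingredients are unnecessary in this direction: \Cref{lem:LeafReplaceBound} (the bound $|I|\le|\mathcal{F}|$ matters only for completeness and computability of $\mathcal A$, since here the formula already supplies the witness $I$) and \Cref{prop:HEBClosedRefinement}.
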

\begin{proof}
    \input{Proofs/AlgoSoundness.tex}
\end{proof}

\begin{proposition}[Completeness of $\mathcal A$]
    \label{prop:AlgoCompleteness}
    Let $\stJug[\Pi]{\Delta}{p}{\st T}$ be a \HEB\ type derivation, $\itJug[\Xi]{\Gamma}{p}{\tau}$ be an \HEBI\ type derivation such that $\refrel{\Pi}{\Xi}$ and $J_r$ the conclusion of $\Xi$. Then $\models\algo{\Pi}{J_r}$.
\end{proposition}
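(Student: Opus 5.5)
The plan is to proceed by induction on the structure of the \HEBI\ derivation $\Xi$, or equivalently on $\Pi$, since $\refrel{\Pi}{\Xi}$ forces the two derivations to have the same shape rule by rule, with \ruleNameFix\ matched by either \ruleNameFixBase\ or \ruleNameFixRec. At each node I will show that the clause of $\algosym$ for the last rule of $\Pi$ evaluates to true on the conclusion $J_r$ of $\Xi$. The witnesses for every existential quantifier in that clause will be read off directly from the immediate subderivations of $\Xi$. The conjuncts of the form $\algo{\Pi_k}{J_k}$ then follow from the induction hypothesis applied to the pairs $\refrel{\Pi_k}{\Xi_k}$, which the definition of refinement on derivations provides.

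The base cases are immediate. For \ruleNameInt, rule \ruleNameInt\ in \HEBI\ forces $\Gamma=\emptyset$ and $\it M\in\{\itSetEmpty,\itSet{\typeInt n}\}$. For \ruleNameVar, the context is $x:\it M$ and $\refrel{\st M}{\it M}$ holds by the side condition of the refinement.

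The inductive cases for abstraction, application, let, case, return, effect, handle and the two handler rules are uniform. In each case the context of $\Xi$ is a sum $\Gamma_1+\Gamma_2$ or $\Gamma+(+_{\rangeI}\Gamma_i)$ of the premise contexts, and each summand is a sub-context of $\Gamma$, as the formula requires. Any hidden type, such as $\it M$ in \ruleNameApp, $\it F$ in \ruleNameHandle, or the $\it G_i$ in \ruleNameHandlerSigma, is taken from $\Xi$. Its refinement of the corresponding \HEB\ type follows from \Cref{rem:NewRefineConclusion} applied to the relevant premise pair. For \ruleNameLetin\ I additionally invoke \Cref{lem:AlgoLeafSoundness} to turn the leaf-replacement premise of $\Xi$ into the truth of the $\mathcal B$-subformula, using \Cref{lem:LeafReplaceBound} to ensure that the index set is within the bound quantified over. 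The compatibility premises $\handlerCompat$ are identical on both sides.

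The fixpoint case is the main obstacle. The formula existentially quantifies over a chain length $n\leq|\Refn(\st M\rightarrow\st E)|\cdot|Sub(\Gamma)|$, whereas $\Xi$ may unfold \ruleNameFixRec\ an arbitrary number of times. I will first unfold $\Xi$ into its sequence of premises $\itJugValue[\Xi_j]{\Gamma_j;x:\it M_j}{\val v}{\it M_{j+1}}$ with $\it M_0=\itSetEmpty$ and final type $\it N$. Each $\Xi_j$ refines the premise $\Pi_1$ of $\Pi$, and each $\it M_j$ refines $\st M\rightarrow\st E$ by \Cref{rem:NewRefineConclusion}. I will then apply \Cref{lem:RefinementFixCompleteness} to obtain a simple path in $\graph{\Pi_1}{\Gamma}{\st M\rightarrow\st E}$ from some $(\Gamma_0,\itSetEmpty)$ to $(\Gamma,\it N)$. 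By \Cref{rem:GraphFinite}, the length of this path is at most the required bound. Finally, \Cref{rem:PathToDerivations} converts the path back into a chain of refining derivations of bounded length, with contexts summing to $\Gamma$ and endpoints $\itSetEmpty$ and $\it N$. This chain supplies the witnesses, and the induction hypothesis applied to each link, all of which refine $\Pi_1$, discharges the remaining conjuncts. One technical point needs care: the induction hypothesis must apply to derivations produced by the path construction rather than to subderivations of $\Xi$. For this reason I will phrase the induction on $\Pi$ and quantify over all refining $\Xi$, so that it covers any \HEBI\ derivation refining $\Pi_1$.
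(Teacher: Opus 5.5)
Your proposal is correct and matches the paper's proof essentially step for step. Like the paper, you induct on $\Pi$ and read the witnesses off $\Xi$, and you get the refinement side conditions from \Cref{rem:NewRefineConclusion}. For the let case you use \Cref{lem:LeafReplaceBound} together with the correctness of $\mathcal B$, and for the fixpoint bound you use \Cref{lem:RefinementFixCompleteness}, \Cref{rem:GraphFinite} and \Cref{rem:PathToDerivations}. Your final choice is also the paper's: induct on $\Pi$ with $\Xi$ universally quantified, rather than on $\Xi$, whose shape is not literally that of $\Pi$ because one $\Pi$-premise can be refined by several $\Xi$-premises. This is exactly what lets the induction hypothesis apply to the path-generated derivations in the fixpoint case.
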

\begin{proof}
    \input{Proofs/AlgoCompleteness.tex}

\end{proof}

In order to prove the computibility of $\mathcal A$, we must show that all quantifiers appearing in its definition range over finite sets. The following lemmas establish this property for the sets of refinements $\Refn(\st T)$ and of subcontexts $Sub(\Gamma)$, where $\st T$ is an \HEB\ type and $\Gamma$ is an \HEBI\ context.

\finiteRefinement*
\begin{proof}
    By induction on $\st T$.
    
    \paragraph*{Case Integers}
    By looking at \cref{def:HEBRefinementTypes}, in case $\st T= \typeInt{m}$ we have
    \[
    \begin{array}{cc}
         \begin{prooftree}
                    \infer0{
                        \refrel{\typeInt m}{\itSetEmpty}
                    }
                \end{prooftree}
        &
         \begin{prooftree}
                    \hypo{0<n \le m}
                    \infer1{\refrel{\typeInt{m}}{\itSet{\typeInt{n}}}
                    }
                \end{prooftree}
    \end{array}
    \]
    The number of possible refinement is then finite.

    \paragraph*{Case Arrow}
    If $\st T=\st M\rightarrow\st E$, then 
    \[
         \begin{prooftree}
                    \hypo{\left(
                        \refrel{\st{M}}{\it M_i}
                    \right)_{\rangeI}}

                    \hypo{\left(
                        \refrel{\st{E}}{\it E_i} 
                    \right)_{\rangeI}}
                \infer2{\refrel{\st{M} \rightarrow \st{E}}{\itSet{\it M_i \rightarrow \it E_i}_{\rangeI}}
                }
            \end{prooftree}
    \]
    By induction hypothesis, there exists a finite number of \HEBI\ types $\it N_i$ and $\it E_i$ such that $\refrel{\st M}{\it M_i}$ and $\refrel{\st E}{\it E_i$}. Therefore, there are only finitely many possible arrow types $\it M_i\rightarrow\it E_i$. Thus, the number of types of the form $\itSet{\it M_i\rightarrow \it E_i}_{\rangeI}$ is finite.

    \paragraph*{Case Computation Effect}
    If $\st T=\itEffect{\sigma}{\st M}{\st N\rightarrow \st E}$. We have
    \[
         \begin{prooftree}

                    \hypo{\refrel{\st{M}}{\it M} }

                    \hypo{\left(
                        \refrel{\st{N}}{\it N_i} 
                    \right)_{\rangeI}}

                    \hypo{\left(
                        \refrel{\st E}{\it E_i} 
                    \right)_{\rangeI}}
                \infer3{\refrel{\itEffect{\sigma}{\st M}{\st N \rightarrow \st E}}{\itEffect{\sigma}{\it M}{\it N_i \rightarrow \it E_i}_{\rangeI}}
                }
            \end{prooftree}
    \]
    As before by induction hypothesis, the number of \HEBI\ types such that $\refrel{\st M}{\it M}$, $\refrel{\st N}{\it N_i}$ and $\refrel{\st E}{\it E_i}$ are finite. Therefore, there are finitely many types of the form $\itEffect{\sigma}{\it M}{\it N_i \rightarrow \it E_i}_{\rangeI}$.
    
    The other two cases ($\st T=\itEffectReturn{\st M}$ and $\st T=\st E\Rightarrow \st F$) are similar.
\end{proof}

\begin{corollary}[Finite Refinement for Contexts]
    \label{coro:FiniteContextRefinement}
    Let $\Delta$ be an \HEB\ contex. There exists a finite number of \HEBI\ contexts $\Gamma$ such that $\refrel{\Delta}{\Gamma}$.
\end{corollary}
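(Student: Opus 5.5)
The plan is to reduce the corollary to \Cref{lem:FiniteRefinement}, applied pointwise. The bookkeeping is handled by the convention that an \HEBI\ context is a total map sending all but finitely many variables to $\itSetEmpty$.

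First fix an \HEB\ context $\Delta$ and let $\mathrm{dom}(\Delta)=\{x_1,\dots,x_k\}$ be the finitely many variables it declares, with $x_j::\st M_j$. By the definition of $\refrel{\Delta}{\Gamma}$, every binding $x:\it M\in\Gamma$ with $\it M\neq\itSetEmpty$ must have $x=x_j$ for some $j$ and satisfy $\refrel{\st M_j}{\it M}$.

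We would read the statement so that variables outside $\mathrm{dom}(\Delta)$ are forced to carry $\itSetEmpty$. This is the reading under which contexts are total maps and unmentioned variables are implicitly empty. Under it, $\Gamma$ is completely determined by the tuple $(\Gamma(x_1),\dots,\Gamma(x_k))$. The corresponding claim is that, for each $j$, either $\Gamma(x_j)=\itSetEmpty$ or $\refrel{\st M_j}{\Gamma(x_j)}$ holds.

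Next we bound each coordinate. By \Cref{lem:FiniteRefinement}, $\Refn(\st M_j)$ is finite for every $j$. Hence each $\Gamma(x_j)$ ranges over the finite set $\Refn(\st M_j)\cup\{\itSetEmpty\}$. In fact $\itSetEmpty\in\Refn(\st M_j)$ already, since both the integer and arrow refinement rules allow the empty family. So the map $\Gamma\mapsto(\Gamma(x_1),\dots,\Gamma(x_k))$ injects the set of refining contexts into $\prod_{j\le k}\Refn(\st M_j)$. A finite product of finite sets is finite, which gives the bound $\prod_j|\Refn(\st M_j)|$.

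The only delicate point is the treatment of variables not declared in $\Delta$. If the definition were read literally, a binding $x:\it M$ with $x\notin\mathrm{dom}(\Delta)$ would make $\refrel{\Delta}{\Gamma}$ fail unless $\it M=\itSetEmpty$. This is precisely what the convention on total contexts guarantees: only the empty type is admissible there. We would state this convention explicitly to close the argument.
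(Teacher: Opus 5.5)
Your argument is correct and is essentially the paper's, which says only ``by induction on $\Delta$''. That induction extends the context one binding at a time and applies \Cref{lem:FiniteRefinement} at each step, so it is your bound $\prod_j|\Refn(\st M_j)|$ unrolled; your explicit treatment of undeclared variables (forced to $\itSetEmpty$ under the total-map convention) is a useful clarification the paper leaves implicit.
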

\begin{proof}
    By induction on $\Delta$.
\end{proof}

\begin{lemma}[Finite Subsets]
    \label{lem:FiniteSubsets}
    Let $\it M$ an \HEBI\  value type and $\Gamma$ be an \HEBI\ typing context:
    \begin{itemize}
        \item The types $\it N$ such that $\it N\subseteq \it M$ are finite.
        \item The contexts $\Gamma'$ such that $\Gamma'\subseteq\Gamma$ are finite.
    \end{itemize}  
\end{lemma}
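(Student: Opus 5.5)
We prove the two items separately. In both, the key fact is that value types are finite sets and that type contexts are total maps with only finitely many non-empty entries.

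For the first item, a value type $\it M$ is one of three forms: $\itSetEmpty$, $\itSet{\typeInt{n}}$, or $\itSet{\it M_i \rightarrow \it E_i}_{\rangeI}$ with $I$ finite. The subset relation $\it N \subseteq \it M$ is the ordinary set-theoretic inclusion on these sets, where $\itSetEmpty$ is identified with the empty arrow family.
\begin{itemize}
\item If $\it M = \itSetEmpty$, the only subset is $\itSetEmpty$ itself.
\item If $\it M = \itSet{\typeInt{n}}$, the subsets are $\itSetEmpty$ and $\itSet{\typeInt{n}}$.
\item If $\it M$ is an arrow family indexed by $I$, every $\it N \subseteq \it M$ has the form $\itSet{\it M_j \rightarrow \it E_j}_{j \in J}$ for some $J \subseteq I$. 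So $\it N$ is determined by a subset of the finite set $I$, and there are at most $2^{|I|}$ of them.
\end{itemize}
No induction on the structure of $\it M$ is needed, since inclusion only looks at the top-level elements.

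For the second item, write $\Gamma = x_1 : \it M_1; \dots; x_n : \it M_n$, with every other variable mapped to $\itSetEmpty$. Take any $\Gamma' \subseteq \Gamma$. For each $y$ not among the $x_k$ we have $\Gamma'(y) \subseteq \Gamma(y) = \itSetEmpty$, hence $\Gamma'(y) = \itSetEmpty$. This holds both under the definition of $\Delta\subseteq\Gamma$ and because contexts are total maps. So $\Gamma'$ is determined by the tuple $(\Gamma'(x_1), \dots, \Gamma'(x_n))$, where each $\Gamma'(x_k) \subseteq \it M_k$. By the first item each coordinate ranges over a finite set. Therefore $|Sub(\Gamma)| \le \prod_{k} |\{\it N \mid \it N \subseteq \it M_k\}|$, which is finite.

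The only real care point is the second item: we must make sure the infinitely many variables outside the support of $\Gamma$ contribute no freedom. The argument above handles this by forcing them to $\itSetEmpty$. We also note that, unlike context sum, inclusion is total here, so the partiality of $\cup$ plays no role.
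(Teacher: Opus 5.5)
Your proof is correct. The paper states this lemma without proof, and your argument is the routine reasoning it relies on: subsets of a value type are bounded by subsets of its finite top-level index set (the same powerset observation used in the proof of the Finite Refinement Property), and the finite support of $\Gamma$ forces every $\Gamma'\subseteq\Gamma$ to be $\{\}$ outside that support, so $Sub(\Gamma)$ injects into a finite product.
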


\refinementComputability*
\begin{proof}
    The equality checks perfomed in base cases are clearly computable. In particular, in the \ruleNameVar\ case, we have to check whether $\refrel{\st M}{\it M}$ holds, given an \HEB\ type $\st M$ and an \HEBI\ type $\it M$. The relation $\refrelsym$ between these types is defined inductively over the structure of the types, and is clearly computable.
    In most inductive cases, such as \ruleNameApp, \ruleNameFix or \ruleNameLetin, there is an existential quantification over subcontexts of a given \HEBI\ context $\Gamma$. \cref{lem:FiniteSubsets} ensures that the set $Sub(\Gamma)$ is finite. In several similar cases, there is an existential quantification over all possible refinements of a given \HEB\ type. By \cref{lem:FiniteRefinement}, the set of such refinements is finite.
    In the case of \ruleNameFix, we search for an integer $n$, bounded by $|\Refn(\st M)|\cdot |Sub(\Gamma)|$, for some \HEB\ type $\st M$ and some \HEBI\ typing context $\Gamma$. As we already noticed, both these numbers are finite for each $\st M$ and $\Gamma$.
    Similarly, in the \ruleNameLetin\ case, the existential quantification is over a set $I$ whose size is bounded by $|\it F|$, for some \HEBI\ type $\it F$. According to \cref{rem:LeafMeasureFinite}, this measure is finite for each $\it F$. Finally, we use the algorithm $\mathcal B$, whose computability is established in \cref{lem:AlgoLeafComputable}. Therefore, all quantifiers appearing in \cref{def:AlgorithmLogic,def:AlgorithmContd} range over finite sets. We can conclude that there exists an algorithm that given an \HEB\ derivation $\Pi$ and an \HEBI\ judgement $J$ determines wheter $\algo{\Pi}{J}$ holds.
\end{proof}

\decidability*
\begin{proof}
    Let $\comput t\in \setComput$ be a closed term typable with $\itEffectReturn{\typeInt m}$ in \HEB, there exists an \HEB\ derivation $\stJugComput[\Pi]{\emptyset}{\comput t}{\itEffectReturn{\typeInt m}}$. By \cref{lem:FiniteRefinement}, there are only finitely many \HEBI\ types $\itSet{\typeInt {n}}$ such that $\refrel{\itEffectReturn{\typeInt m}}{\itEffectReturn{\itSet{\typeInt n}}}$, in particular one for each $0<n\leq m$. Let $J_s$ be the conclusion of $\Pi$, we can then construct a finite amount of \HEBI\ judgements $J_n\coloneq \itJugComput{\emptyset}{\comput t}{\itEffectReturn{\itSet{\typeInt n}}}$ such that $\refrel{J_s}{J_n}$. If $\models\algo{\Pi}{J_n$} for any of those, by \cref{prop:AlgoSoundness} $\itJugComput{\emptyset}{\comput t}{\itEffectReturn{\itSet{\typeInt n}}}$, and \cref{th:EffectTreeReachability} implies $\comput t\reductArr* \ret{\int n}$. Finally, this process is decidable because of \cref{prop:RefinementComputability}.
\end{proof}

\section{Proofs of Section \ref{sect:backtopcf}}
We omit here the full proof of Subject Reduction for \HEPCF. However, we state the transformations induced by the reduction on terms, as we did for \HEB. The proof proceeds using standard techniques, in which the derivation of $\comput t'$ is constructed according to the transformations described in \cref{def:HEPCFReduction}. Since several of the transformation rules for \HEPCF\ coincide with those for \HEB, we only present those that differ.

\begin{theorem}[\HEPCF\ Subject Reduction]
     Let $\comput t,\comput t'\in \setComput$, $\stJugComput[\Pi]{\emptyset}{\comput t}{\st U}$ and $\comput t\reductArr'\comput t'$ then it exists $\Pi'$ such that $\stJugComput[\Pi']{\emptyset}{\comput t'}{\st U}$.
\end{theorem}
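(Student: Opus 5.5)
We argue by case analysis on the elementary rewrite rule $\comput t\reductArr'\comput t'$, and then lift the result to full reduction through evaluation contexts. This is the same structure used for \cref{thm:HEBSubjectReductionEl}. The \HEPCF\ typing rules of \Cref{def:SimpleTypingRules} are syntax-directed. So in each case we invert the unique last rules of $\Pi$ to obtain derivations for the immediate subterms, and then rebuild a derivation of $\comput t'$ by the transformation of derivations described in \cref{def:HEPCFReduction}.

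The auxiliary lemmas are the \HEPCF\ versions of the \HEB\ ones, each proved by routine induction on derivations.
\begin{itemize}
\item Weakening: if $\stJug{\Delta}{p}{\st T}$, then $\stJug{\Delta;x::\st M}{p}{\st T}$.
\item Substitution, as in \cref{lem:SimpleSubstitutionLemma}: from $\stJug{\Delta;x::\st M}{p}{\st T}$ and $\stJugValue{\emptyset}{\val v}{\st M}$ we obtain $\stJug{\Delta}{p\sub{x}{\val v}}{\st T}$.
\end{itemize}

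Most rules are then straightforward.
\begin{itemize}
\item $\reductBeta$, $\reductLetRet$ and $\reductHdlRet$ follow directly from substitution. For $\reductHdlRet$, inverting rule \ruleNameHandler\ gives $\stJugComput{y::\st M}{\comput t}{\st U}$ for the return clause.
\item $\reductFix$ substitutes the derivation of $\fix{x}{\val v}$ itself into the premise of \ruleNameFix.
\item $\reductCase$ simply selects the $n$-th premise. Well-typedness of $\int n$ at $\typeInt m$ guarantees $n\le m$.
\item $\reductLetEff$ permutes the two rules. From $\stJugComput{}{\effect{\val v}{y}{\comput t}}{\stEffect{\st M}}$ we get the parameter typing, the premise $\sigma:\st L\rightsquigarrow\st N\in\effectCtxt E$, and $\stJugComput{y::\st N}{\comput t}{\stEffect{\st M}}$. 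Weakening $\comput u$'s derivation by $y::\st N$, rule \ruleNameLetin\ gives $\stJugComput{y::\st N}{\letin{x}{\comput t}{\comput u}}{\stEffect{\st N'}}$, and rule \ruleNameEff\ recloses it at the same effect context.
\end{itemize}
Unlike \HEB, no leaf replacement is involved here. This is exactly why the type $\stEffect{\st N'}$ is preserved unchanged.

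The main obstacle is $\reductHdlEff$. Assume $\handle{\handler h}{\effect{\val v}{y}{\comput u}}$ has type $\st U$, with $\stJugHandler{}{\handler h}{\stEffect{\st M}\Rightarrow\st U}$. Inverting \ruleNameEff\ gives:
\begin{itemize}
\item $\stJugValue{}{\val v}{\st L}$ and $\stJugComput{y::\st N}{\comput u}{\stEffect{\st M}}$;
\item $\sigma:\st L\rightsquigarrow\st N\in\effectCtxt E$.
\end{itemize}
Inverting \ruleNameHandler\ at the clause $\effectClause{x}{r}{\comput t}$ gives $\stJugComput{x::\st L';r::\st N'\rightarrow\st U}{\comput t}{\st U}$ with $\sigma:\st L'\rightsquigarrow\st N'\in\effectCtxt E$.

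We need $\st L=\st L'$ and $\st N=\st N'$. For this we take effect contexts to be signatures, i.e. functional in $\sigma$, which is implicit in their use as a specification. Then we build:
\begin{itemize}
\item the continuation typing $\stJugValue{}{\abs{y}{\handle{\handler h}{\comput u}}}{\st N\rightarrow\st U}$, by \ruleNameHandle\ (with $\handler h$'s derivation weakened by $y$) followed by \ruleNameAbs;
\item the reduct's typing, by two applications of substitution, one for $x$ and one for $r$, both of which are closed.
\end{itemize}
Handler compatibility is not an issue here, since \HEPCF\ types each clause once and the clause is read directly from $\handler h$.

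Finally, for full reduction, we prove \HEPCF\ context typing by induction on $\ctxtEval$, exactly as in \cref{lem:HEBSimpleContextTyping}. This extends the elementary case to $\ctxtEval<\comput t>\reductArr\ctxtEval<\comput t'>$ with the same type $\st U$.
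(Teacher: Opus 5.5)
Your proof is correct and takes the same route as the paper's sketch: case analysis on the rewrite rules, inversion of the syntax-directed \HEPCF\ rules, weakening and substitution, and reassembly of the derivation along the transformations of \Cref{def:HEPCFReduction}. Two minor remarks: the functionality of $\effectCtxt E$ in $\sigma$, which you make explicit in the $\reductHdlEff$ case, is exactly what that figure assumes silently (it uses the same $\st M'_k,\st N_k$ for both the clause and the operation); and the final lifting through evaluation contexts is unnecessary, since the statement only concerns $\reductArr'$.
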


To obtain the full subject reduction for \HEPCF, we need to prove the following lemma.

\begin{lemma}[\HEPCF\ Context Typing]~\\
    \label{lem:HEPCFSimpleContextTyping}%
    Let $\comput t \in \setComput$ and 
    $\ctxtEval$ a computation context such that 
    $\stJugComput[\Pi]{\emptyset}{\ctxtEval<\comput t>}{\st U}$.
    Then there is an \HEB\ type $\st S$ 
    such that $\stJugComput{\emptyset}{\comput t}{\st S}$ and 
    for every $\comput u$ such that $\itJugComput{\emptyset}{\comput u}{\st S}$, then there exist $\Pi'$ such that 
    $\stJugComput[\Pi']{\emptyset}{\ctxtEval<\comput u>}{\st U}$.
\end{lemma}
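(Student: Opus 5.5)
We argue by induction on the structure of the evaluation context $\ctxtEval$, mirroring the proof of \cref{lem:HEBSimpleContextTyping}. The statement asks for a type $\st S$ of the plugged subterm such that replacing $\comput t$ with any closed $\comput u$ of type $\st S$ preserves the typing of the whole term. In \HEPCF\ contexts are shared rather than summed. By the analogue of \cref{rem:ContextSimplification} for simple types, we may assume the context is empty throughout, since every term involved is closed.

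Base case, $\ctxtEval = \Hole$. We take $\st S := \st U$. The derivation $\Pi$ itself witnesses that $\comput t$ has type $\st U$. For any $\comput u$ of type $\st U$, the required $\Pi'$ is just the given derivation of $\comput u$.

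Case $\ctxtEval = \letin{x}{\ctxtEval'}{\comput s}$. The last rule of $\Pi$ must be \ruleNameLetin. Its premises are a derivation of $\ctxtEval'<\comput t>$ at type $\stEffect{\st M}$, and a derivation $\Pi_s$ of $x :: \st M \vdash \comput s :: \stEffect{\st N}$, where $\st U = \stEffect{\st N}$. By the induction hypothesis applied to $\ctxtEval'$ we obtain $\st S$, a typing of $\comput t$ at $\st S$, and the replacement property for $\ctxtEval'$. Given $\comput u$ of type $\st S$, that property yields a derivation of $\ctxtEval'<\comput u>$ at type $\stEffect{\st M}$. Reapplying \ruleNameLetin\ to this derivation and the unchanged $\Pi_s$ gives the derivation of $\ctxtEval<\comput u>$ at type $\st U$.

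Case $\ctxtEval = \handle{\handler h}{\ctxtEval'}$. The last rule must be \ruleNameHandle. Its premises are $\stJugHandler{\emptyset}{\handler h}{\st S' \Rightarrow \st U}$ and a derivation of $\ctxtEval'<\comput t>$ at type $\st S'$. The induction hypothesis on $\ctxtEval'$ gives $\st S$. Given $\comput u$ of type $\st S$, we rebuild a derivation of $\ctxtEval'<\comput u>$ at type $\st S'$ and then reapply \ruleNameHandle\ with the same handler derivation.

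We do not expect a real obstacle. Unlike the \HEB\ case, \HEPCF\ computation types are not behavioural, so \ruleNameLetin\ has no leaf-replacement side condition linking the type of $\comput t$ to the result type. The effect context $\effectCtxt E$ is the same in both premises of \ruleNameLetin, and we keep $\stEffect{\st M}$ fixed in the inductive step, so no check on effect signatures is needed. The only point needing care is a notational slip in the statement: it speaks of an ``\HEB\ type $\st S$'' and writes $\itJugComput{\emptyset}{\comput u}{\st S}$. We read both as \HEPCF\ notions, namely $\st S$ is an \HEPCF\ computation type and the hypothesis is $\stJugComput{\emptyset}{\comput u}{\st S}$.
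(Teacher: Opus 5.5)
Your proof is correct and follows the route the paper uses for its context-typing lemmas: induction on $\ctxtEval$, inverting {\ruleNameLetin} or {\ruleNameHandle} and reapplying it with the unchanged sibling premise. The paper spells this out for \HEBI, only announces it for \HEB, and leaves it implicit for \HEPCF. Your reading of the ``\HEB\ type'' slip in the statement as \HEPCF\ notions is also the right one.

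One step is superfluous: the appeal to an analogue of \cref{rem:ContextSimplification}. The hypothesis already has an empty context, and \HEPCF\ rules pass the conclusion's context unchanged to their premises, extended only by the bound $x$ in the body of the let. So the empty context propagates by inversion alone.
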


A Progress Lemma can also be proved in the exact same way as we already did in \cref{lem:HEBIProgress}.
\begin{lemma}[\HEPCF, Progress]
    \label{lem:HEPCFProgress}
    Let $\comput t\in \setComput$ be a closed term such that $\stJugComput[\Pi]{\emptyset}{\comput t}{\st U}$. Then either there exists $\comput s \in \setComput$ such that $\comput t \reductArr \comput s$, or $\comput t$ is a normal form computation, that is, either $\comput t = \ret{\val v}$ or $\comput t = \effect{\val v}{x}{\comput u}$ for some $\val v \in \setVal$ and $\comput u \in \setComput$.
\end{lemma}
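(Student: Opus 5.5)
The proof goes by induction on the derivation $\Pi$ of $\stJugComput{\emptyset}{\comput t}{\st U}$, after first establishing a canonical forms property for closed \HEPCF\ values. Progress itself is then a routine case analysis on the last rule of $\Pi$.

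\emph{Canonical forms.} Let $\val v$ be closed with $\stJugValue{\emptyset}{\val v}{\st M}$. Since $\val v$ is closed, it is not a variable.
If $\st M=\typeInt n$, the only rule that applies is \ruleNameInt, so $\val v=\int k$ with $0<k\le n$.
If $\st M=\st N\rightarrow\st S$, the only rules that apply are \ruleNameAbs\ and \ruleNameFix, so $\val v$ is an abstraction or a fixpoint.
This follows by inspecting the rules of \Cref{def:SimpleTypingRules}, since each of them is syntax-directed.

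\emph{Base-like cases.}
If the last rule is \ruleNameRet\ or \ruleNameEff, then $\comput t$ is already $\ret{\val v}$ or $\effect{\val v}{x}{\comput u}$, so it is a normal form.
If it is \ruleNameApp\ on $\app{\val v}{\val w}$, then $\val v$ is closed and has an arrow type. By canonical forms it is $\abs{x}{\comput s}$ or $\fix{x}{\val v'}$, and the rule $\reductBeta$ or $\reductFix$ fires in the empty context.
If it is \ruleNameCase\ on $\case{\val v}{\comput t_1,\ldots,\comput t_n}$, then $\val v:\typeInt n$ is closed. Hence $\val v=\int k$ with $0<k\le n$, and $\reductCase$ fires.

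\emph{Inductive cases.} These are the only places where the induction hypothesis is used. Since $\comput t$ is closed, the subterm in evaluation position is closed too, which makes the hypothesis applicable.

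For $\letin{x}{\comput s}{\comput u}$, apply the induction hypothesis to $\comput s$. If $\comput s\reductArr\comput s'$ via a context $\ctxtEval$, then $\letin{x}{\ctxtEval}{\comput u}$ is again an evaluation context, so $\comput t$ reduces. If $\comput s=\ret{\val v}$, rule $\reductLetRet$ applies. If $\comput s=\effect{\val v}{y}{\comput s'}$, rule $\reductLetEff$ applies.

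For $\handle{\handler h}{\comput s}$, the induction hypothesis again either propagates a reduction through the context $\handle{\handler h}{\ctxtEval}$ or leaves two subcases.
\begin{itemize}
\item If $\comput s=\ret{\val v}$: rule \ruleNameHandler\ forces $\handler h$ to contain exactly one return clause, so $\reductHdlRet$ fires.
\item If $\comput s=\effect{\val v}{y}{\comput s'}$: the premise of \ruleNameEff\ gives $\effectSpec{\st M}{\st N}\in\effectCtxt E$, where $\stEffect{\st M'}$ is the type of $\comput s$. Rule \ruleNameHandler\ relates the clauses of $\handler h$ to that same effect context $\effectCtxt E$.
\end{itemize}

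\emph{Main obstacle.} The last subcase is the delicate step. I need $\handler h$ to contain a clause for $\sigma$, and, since handlers are sets of clauses for distinct operations, that clause is then unique, so no clash of \Cref{def:Clashes} can arise. With that, $\reductHdlEff$ fires.

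Read literally, the side condition of \ruleNameHandler\ only states that each clause's operation appears in $\effectCtxt E$, not the converse. I would read it, as intended for \HEPCF, as requiring the operations $\sigma_i$ to exhaust the signature $\effectCtxt E$ of the handled computation. This is the standard convention, and without it progress fails: an unhandled $\sigma$ would give a stuck term.

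With this reading, every case yields either a reduction step or one of the two normal forms, which concludes the proof.
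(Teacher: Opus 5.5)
Your proof is correct and follows the same route as the paper, whose proof simply defers to the \HEBI\ progress argument: induction on the derivation, canonical forms in the \ruleNameApp\ and \ruleNameCase\ cases, and the induction hypothesis pushed through the $\constructor{let}$ and handling contexts. Your caveat about the handler case is well founded and makes explicit an assumption that the paper's one-line proof leaves tacit: unlike \ruleNameHandlerSigma\ in \HEBI, the \HEPCF\ rule \ruleNameHandler\ must be read as requiring the clauses to cover all of $\effectCtxt E$, since otherwise $\handle{\{\retClause{y}{\ret{y}}\}}{\effect{\int 1}{x}{\ret{x}}}$ is typable with $\effectCtxt E=\{\effectSpec{\typeInt{1}}{\typeInt{1}}\}$ and yet is a stuck clash.
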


The proof of Relative Subject Expansion for \HEPCF\ follows the same strategy already used in the case of \HEB, in particular relying on the definitions of refinement for types and derivations, as well as the notions of reduction and substitution between derivations. Due to these similarities, we omit the full proofs. We also observe that all lemmas concerning refinements for subsets and contexts stated in \cref{sec:apx5-3} remain valid in essentially the same form for \HEPCF, since they only depend on value types, which coincide in the two systems.

In \cref{def:RefinementDerivations,def:RefinementDerivationsHandler}, we present the refinement relations between \HEBI\ and \HEPCF\ derivations. Again, given that several rules of \HEPCF\ are identical to those of \HEB, we only present those that differ.

\begin{figure}
\centering
\input{Definitions/RefinementRelation.tex}

\caption{\HEPCF, Refinement for Computation Rule}
\label{def:RefinementDerivations}
\end{figure}

\begin{figure}
\centering
\input{Definitions/RefinementRelationHand.tex}
\caption{\HEPCF, Refinement for Handler Rules}
\label{def:RefinementDerivationsHandler}
\end{figure}

\begin{figure}
    \scalebox{0.93}{\input{Definitions/HEPCFReductions.tex}}
    \caption{\HEPCF, Reduction Rules.}
        \label{def:HEPCFReduction}
\end{figure}

\begin{proposition}[\HEPCF\ Weakening Refinement]
    \label{prop:HEPCFClosedRefinement}
    Let $p$ be a term, $\stJug[\Pi]{\Delta_1}{p}{\st U}$ be its derivation and $\Delta_2$ be a context such that for each $x::\st M\in\Delta_1$  and $x::\st N\in\Delta_2$ then $\st M=\st N$. Let $\itJug[\Xi]{\Gamma}{p}{\it E}$ be an \HEBI\ type derivation, then $\refrel{\Pi}{\Xi}$ if and only if $\refrel{\Pi^{(\Delta_2)}}{\Xi}$.
\end{proposition}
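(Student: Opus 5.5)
The plan is an induction on the structure of the \HEPCF\ derivation $\Pi$, mirroring the proof of the weakening refinement result for \HEB\ (Proposition~\ref{prop:HEBClosedRefinement}). Here $\Pi^{(\Delta_2)}$ is the derivation obtained from $\Pi$ by extending its context with the bindings of $\Delta_2$. The compatibility hypothesis guarantees that $\Delta_1;\Delta_2$ is a well-defined context, since on shared variables the two contexts agree. Both directions are proved together, because the refinement rules relating $\Pi$ to $\Xi$ and $\Pi^{(\Delta_2)}$ to $\Xi$ have exactly the same shape. The \HEPCF\ refinement rules (those shared with \HEB, plus those in \cref{def:RefinementDerivations,def:RefinementDerivationsHandler}) constrain the \HEB-side context only through the leaves. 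At every other node they require only that the premises refine pairwise and that the types in the conclusions are related by $\refrelsym$ as in \cref{def:HEPCFRefinementTypes}. None of these types involves the ambient \HEPCF\ context.

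First I handle the base cases. For \ruleNameInt\ the \HEBI\ side has empty context and the condition is just $0<n\le m$ together with the integer refinement, so it does not depend on $\Delta_1$ versus $\Delta_1;\Delta_2$. For \ruleNameVar, the refinement requires $\refrel{\st M}{\it M}$ where $x::\st M$ is the type assigned to $x$. The variable $x$ is typed with the same $\st M$ in $\Pi$ and in $\Pi^{(\Delta_2)}$, precisely by the agreement hypothesis on shared variables, so the two side conditions coincide. The \HEBI\ context $\Gamma$ (here $x:\it M$) is untouched in both directions.

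The inductive cases (abs, fix, app, let, case, ret, eff, hdl and the handler rules) are handled uniformly. Weakening commutes with each rule: the premises of $\Pi^{(\Delta_2)}$ are the premises of $\Pi$, each weakened by $\Delta_2$. Binders such as $x$ in abstractions, $y,r$ in handler clauses, and the recursion variable in \ruleNameFix\ are $\alpha$-renamed to avoid $\Delta_2$. This keeps the extended context $\Delta_2$ compatible with the premise contexts $\Delta_1;x::\st M$, so the induction hypothesis applies to each premise, and the side conditions ($\handlerCompat$, leaf replacement, refinement of effect types) are context-independent. Therefore each premise of $\Pi$ is refined by the corresponding premise of $\Xi$ iff the weakened premise is, and reassembling gives the claim.

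The main obstacle is the \ruleNameFix\ case, where the refinement rule for \ruleNameFixRec\ refers back to the whole derivation $\Pi$ itself ($\refrel{\Pi}{\Xi_2}$). There I would strengthen the induction to run on the \HEBI\ derivation $\Xi$ (or lexicographically on $(\Xi,\Pi)$). In that ordering $\Xi_2$ is strictly smaller, so the self-referential premise is covered by the induction hypothesis applied to the same $\Pi$ and the smaller $\Xi_2$. The \HEPCF-specific effect and handler rules need only a check that the effect context $\effectCtxt E$ appearing in the type refinements is unaffected by weakening, which is immediate.
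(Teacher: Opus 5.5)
The paper states this proposition, like its \HEB\ counterpart \cref{prop:HEBClosedRefinement}, without proof, so there is no argument of the paper to compare with. Your induction is the natural one and it goes through, but two points need correcting.

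First, set up the induction on the \HEBI\ derivation $\Xi$ (equivalently, on the inductive definition of $\refrelsym$) for \emph{all} cases. Quantify the statement over every $\Pi$ and every $\Delta_2$ compatible with the conclusion context of $\Pi$. Announcing an induction on $\Pi$ and then switching measure in one case does not work as written. The switch costs nothing: in every refinement rule the side conditions involve only strict subderivations of $\Xi$. Under a binder, apply the hypothesis to $\Delta_2$ minus the bound variables, or rename them consistently in both $\Pi$ and $\Xi$.

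Second, \ruleNameFixRec\ is not the only non-structural case. In \HEB, the rule \ruleNameHandlerSigma\ has a separate premise typing the residual handler, so its refinement is structural. In \HEPCF, by contrast, a handler is typed by a single rule covering all its clauses. Accordingly, the refinement rule of \cref{def:RefinementDerivationsHandler} for \ruleNameHandlerSigma\ requires an \HEPCF\ derivation of the handler $h$ to refine each premise $\Xi_i$. In the subject expansion proof, this derivation is the whole derivation $\Pi$ itself, not one of its premises.

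So your claim fails for this case: it is not true that the premises of $\Pi^{(\Delta_2)}$ are just the weakened premises of $\Pi$ with only $\effectCtxt E$ to check. An induction on $\Pi$ would break here exactly as it does at \ruleNameFixRec, which is why simply mirroring the \HEB\ argument is not enough. The repair is the device you already use for fixpoints. Each $\Xi_i$ is a strict subderivation of $\Xi$, so the induction hypothesis applied to $\Xi_i$ with the same $\Pi$ and $\Delta_2$ gives $\refrel{\Pi}{\Xi_i}$ iff $\refrel{\Pi^{(\Delta_2)}}{\Xi_i}$. The clause premise $\Pi_k$ is then handled structurally.
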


\begin{lemma}[\HEPCF\ Anti-substitution Lemma] ~\\
    \label{lem:HEPCFantiSubstLemma-ref}%
    Let $p \in \setTerm$
    be any term, $x$ a variable occourring free in $p$ and $\val v\in\setVal$ a closed value.
    If $\itJug[\Xi]{\Gamma}{p\sub{x}{\val v}}{\tau}$ for some \HEBI\ type $\tau$ and there exists an \HEPCF\ type $\st M$ such that $\stJug[\Pi_{p}]{\Delta;x::\st M}{p}{\st T}$, $\stJug[\Pi_\val v]{\emptyset}{\val v}{\st M}$ and $\refrel{\Pi_p\sub{x}{\Pi_\val v}}{\Pi}$, then it exists an \HEBI\ type $\it M$ such that 
    $\itJugValue[\refrel{\Pi_\val v}{\Xi_\val v}]{\emptyset}{\val v}{\it M}$ and
    $\itJug[\refrel{\Pi_p}{\Xi_p}]{\Gamma; x: \it M }{p}{\tau}$.
\end{lemma}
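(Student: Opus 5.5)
We prove the statement (the \HEPCF\ Refinement Anti-substitution Lemma) by induction on the structure of $p$. This follows the pattern of the unrelativized Anti-substitution Lemma (\Cref{lem:antiSubstLemma}) and of its \HEB\ analogue \cref{lem:antiSubstLemma-ref}. Throughout we carry along the \HEPCF\ derivation $\Pi_p$, whose last rule is determined by $p$. The hypothesis $\refrel{\Pi_p\sub{x}{\Pi_\val v}}{\Xi}$ means that $\Xi$ ends with a rule refining the last rule of $\Pi_p\sub{x}{\Pi_\val v}$, and that its premises refine the corresponding substituted subderivations.

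\textbf{Base cases.} If $p = x$, then $\Pi_p$ is a \ruleNameVar\ instance and $\Pi_p\sub{x}{\Pi_\val v}$ is (a weakening of) $\Pi_\val v$. By \Cref{prop:HEPCFClosedRefinement} we obtain $\refrel{\Pi_\val v}{\Xi}$. Since $\val v$ is closed, Remark~\ref{rem:ContextSimplification} gives $\Gamma = \emptyset$. We then take $\it M = \tau$, $\Xi_\val v = \Xi$ and $\Xi_p$ the \ruleNameVar\ rule. The latter refines $\Pi_p$ because \Cref{rem:NewRefineConclusion} yields $\refrel{\st M}{\it M}$. If $p$ is another variable or an integer, we take $\it M = \itSetEmpty$ and let $\Xi_\val v$ be the derivation of \Cref{lem:TypabilityValues}. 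We need this derivation to refine $\Pi_\val v$, which holds because the empty type refines every \HEPCF\ type and the derivation can be built along the shape of $\Pi_\val v$.

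\textbf{Inductive cases.} For every rule (\ruleNameAbs, \ruleNameApp, \ruleNameLetin, \ruleNameCase, \ruleNameEff, \ruleNameHandle, the handler rules, and both fixpoint refinements) we proceed uniformly. First, we apply the induction hypothesis to each premise of $\Xi$; an \HEBI\ rule may have many premises refining the same \HEPCF\ premise, for instance the $I$-indexed premises of \ruleNameAbs, \ruleNameLetin\ and \ruleNameHandlerSigma. This gives types $\it M_j$ and derivations $\refrel{\Pi_\val v}{\Xi^j_\val v}$ of $\emptyset \vdash \val v : \it M_j$, together with premise derivations under $x:\it M_j$. Second, we merge the value derivations with \Cref{lem:IntersectionLemma}: the union $\bigcup_j \it M_j$ exists by \Cref{lem:SumClosedValues}, since $\val v$ is closed. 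For premises in which $x$ does not occur we use $\itSetEmpty$ as above. Third, we reassemble $\Xi_p$ with the same rule, where the context $\Gamma; x:\bigcup_j\it M_j$ is the sum of the premise contexts. The handler cases also require Remark~\ref{lem:HandlerComputIndipendence} to transfer the compatibility side conditions $\handlerCompat$ through substitution. In the \ruleNameFixRec\ case, the second premise types $\fix{y}{\val w}\sub{x}{\val v}$ again and refines the whole $\Pi$, so we apply the induction on the derivation $\Xi$ (a smaller \HEBI\ derivation) rather than on $p$. For this reason the induction is best organised on $\Xi$, lexicographically with $p$.

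\textbf{Main obstacle.} The delicate step is showing that the merged value derivation still refines $\Pi_\val v$. This needs a refinement-preserving version of \Cref{lem:IntersectionLemma}: if $\refrel{\Pi_\val v}{\Xi^j}$ for all $j$, then the intersected derivation also refines $\Pi_\val v$. I would prove this by a separate induction on $\Pi_\val v$. For abstractions it is immediate, since the \ruleNameAbs\ premise families simply concatenate. For fixpoints, the \ruleNameFixRec\ chains must be recombined; I would try to reuse \Cref{lem:RefinementFixSoundness}, combining the paths of \Cref{lem:RefinementFixCompleteness}. The remaining step, the \HEPCF-specific refinement of computation types in the \ruleNameEff\ and \ruleNameHandle\ rules, should cause no extra difficulty, because refinement there depends only on the effect context and not on $x$.
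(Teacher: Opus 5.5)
Your proposal is correct and follows the paper's route. The paper's entire proof is ``by induction on $\Xi$'', transporting the case analysis of the unrelativised anti-substitution lemma, and your lexicographic order with $\Xi$ as the primary component amounts to the same thing; it also covers the $\sigma$-handler case, whose premise may type the whole handler again, not only the fixpoint case you mention.

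The auxiliary facts you isolate are left implicit in the paper: an empty-typed refinement of $\Pi_{\val v}$ for premises not containing $x$, and a refinement-preserving \Cref{lem:IntersectionLemma}. For the latter, your chain-recombination argument recurses into the open body of the fixpoint, so the lemma must be stated for open values with summable contexts. A simpler route is to observe that the paper's own size-induction proof of \Cref{lem:IntersectionLemma} already preserves refinement, once you know that the empty-typed derivation of \Cref{lem:TypabilityValues}, built along the shape of $\Pi$, refines every $\Pi$.
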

\begin{proof}
    By induction on $\Xi$.
\end{proof}

\begin{lemma}[Relative Elementary Subject Expansion for \HEPCF] ~\\
    \label{lem:ElementarySubjectExpansionrefEl}%
    Let $\comput t, \comput t' \in \setComput$ be two closed term such that 
    $\comput t \reductArr' \comput t'$. If there exists
    $\itJugComput[\Xi']{\emptyset}{\comput t'}{\it E}$ and $\stJugComput[\Pi']{\emptyset}{\comput t'}{\st U}$ such that $\refrel{\Pi'}{\Xi'}$ and also there exist $\stJugComput[\Pi]{\emptyset}{\comput t}{\st U}$ such that $\Pi\rightsquigarrow \Pi'$, 
    then there exists
    $\itJugComput[\Xi]{\emptyset}{\comput t}{\it E}$ such that $\refrel{\Pi}{\Xi}$.
\end{lemma}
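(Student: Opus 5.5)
The proof goes by case analysis on the elementary derivation reduction $\Pi\rightsquigarrow\Pi'$, which mirrors the seven rules of \Cref{def:RewriteRules}. This is the same scheme as \cref{lem:NewElementarySubjectExpansion-refEl} for \HEB, and it follows the unrelativised \cref{lem:ElementarySubjectExpansion}. In each case the shape of $\Pi'$ is fixed by $\Pi$ through \cref{def:HEPCFReduction}. Given $\refrel{\Pi'}{\Xi'}$, we rebuild an \HEBI\ derivation $\Xi$ for $\comput t$ that mirrors the subject expansion construction. We then check, rule by rule, against the \HEPCF\ refinement relation (\cref{def:RefinementDerivations,def:RefinementDerivationsHandler}) that $\refrel{\Pi}{\Xi}$. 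Throughout, \cref{rem:NewRefineConclusion} (in its \HEPCF\ form) lets us read off that the types and contexts appearing in $\Xi'$ refine those of $\Pi'$.

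\textbf{Substitution cases.} For $\reductBeta$ the derivation $\Pi'$ is $\Pi_{\comput u}\sub{x}{\Pi_{\val v}}$. The \HEPCF\ anti-substitution lemma (\cref{lem:HEPCFantiSubstLemma-ref}) gives $\it M$, $\Xi_{\val v}$ with $\refrel{\Pi_{\val v}}{\Xi_{\val v}}$, and $\Xi_{\comput u}$ with $\refrel{\Pi_{\comput u}}{\Xi_{\comput u}}$ typing $\comput u$ under $x:\it M$. We then rebuild the derivation with \ruleNameAbs\ (singleton $I$) and \ruleNameApp.

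If $x$ does not occur free, we take $\it M=\itSetEmpty$ and use \cref{lem:TypabilityValues}. Its canonical derivation must be shown to refine $\Pi_{\val v}$, which we do by induction on $\Pi_{\val v}$ using the $\itSetEmpty$ refinement rules and \ruleNameFixBase.

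The cases $\reductLetRet$ and $\reductHdlRet$ are handled the same way: anti-substitution on the continuation or return clause, then the leaf rule of leaf replacement or \ruleNameHandlerRet. The compatibility premise comes from \cref{lem:CompatImpliesClashFree}.

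For $\reductFix$, anti-substitution applied to $\Pi_p\sub{x}{\Pi_{\val v}}$ yields a type $\it N$ for $x$ and a refining derivation of $\fix{x}{\val v}$ at $\it N$. We then apply \ruleNameFixRec, or \ruleNameFixBase when $\it N=\itSetEmpty$. Both refine the \ruleNameFix\ step of $\Pi$.

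\textbf{Non-substitution cases.} The $\reductCase$ case is immediate: we reuse the branch derivation and a refined \ruleNameInt\ derivation. For $\reductLetEff$, we invert $\Xi'$. The \ruleNameEff\ at the root has premises $\Xi_i$ typing $\letin{x}{\comput t}{\comput u}$ with $\it E_i$. Inverting each \ruleNameLetin\ gives type $\it F_i$ for $\comput t$, together with leaf replacements and the typings of $\comput u$. We regroup these into one \ruleNameEff\ derivation for $\effect{\val v}{y}{\comput t}$ of type $\itEffect{\sigma}{\it M}{\it N_i\rightarrow\it F_i}$, combine the leaf replacements with the tree rule of \cref{def:HandlerCompatAndLeafReplacement}, and apply \ruleNameLetin. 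Refinement holds because each piece refines the unique corresponding \HEPCF\ premise, and the \HEPCF\ effect refinement rule allows arbitrary index sets.

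\textbf{Main obstacle: $\reductHdlEff$.} Here $\Pi'$ is $\Pi_t\sub{y}{\Pi_{\val v}}\sub{r}{\Pi_w}$, where $\Pi_w$ types $\abs{z}{\handle{\handler h}{\comput u}}$ via the \HEPCF\ handler derivation reused inside it. We apply anti-substitution twice. This gives $\it M$ for the parameter and $\itSet{\it N_i\rightarrow\it G_i}_{\rangeI}$ for $r$, with refining derivations of the abstraction. Inverting \ruleNameAbs\ and \ruleNameHandle\ yields, for each $i$, handler derivations at $\it F_i\Rightarrow\it G_i$ and derivations of $\comput u$ under $z:\it N_i$ at $\it F_i$, each refining the corresponding \HEPCF\ subderivations.

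We then assemble \ruleNameEff\ for $\effect{\val v}{z}{\comput u}$ and \ruleNameHandlerSigma\ for $\handler h$, using \cref{lem:CompatImpliesClashFree} to discharge $\handlerCompat$. The delicate point is that the \HEPCF\ handler rule types all clauses at once, whereas \HEBI\ types only the $\sigma$ clause plus $|I|$ copies of the whole handler. We must therefore verify that the \HEPCF\ handler refinement rule (\cref{def:RefinementDerivationsHandler}) matches exactly this shape. Where needed, we strip or re-add the $\sigma$ clause using \cref{lem:HEBHandlerElim} in its \HEPCF\ variant and weakening refinement (\cref{prop:HEPCFClosedRefinement}). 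Since \HEPCF\ typing of the $\sigma$ clause uses the effect context entry $\effectSpec{\st L}{\st N}$, we also need the refined $\it M$, $\it N_i$ to refine $\st L$, $\st N$. This follows from the refinement of $\Pi_{\val v}$ and of the abstraction's domain.
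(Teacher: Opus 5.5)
Your proposal is correct and follows essentially the same route as the paper: case analysis on $\Pi\rightsquigarrow\Pi'$ (the paper details only $\reductLetRet$, $\reductLetEff$, $\reductHdlRet$, $\reductHdlEff$ and defers the other cases to the \HEB\ proof), the \HEPCF\ anti-substitution lemma for the substitution cases, and inversion and regrouping for $\reductLetEff$. In that case you should also invoke \cref{prop:HEPCFClosedRefinement} to pass from $\Pi_{\comput u}^{(y::\st N)}$ to $\Pi_{\comput u}$, and for $\reductHdlEff$ the paper likewise uses a double anti-substitution and then checks the \ruleNameHandlerSigma\ step against the single \HEPCF\ handler derivation, but it needs no handler-elimination step there because the premises of \ruleNameHandlerSigma\ can type the whole handler $\handler h$ and are refined by $\Pi_{\handler h}$ directly via weakening refinement (your separate treatment of a substituted variable that does not occur free, via the canonical $\itSetEmpty$ derivation, usefully covers a hypothesis of \cref{lem:HEPCFantiSubstLemma-ref} that the paper passes over silently).
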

\begin{proof}
    \input{Proofs/RefinementElementarySubjectExpansion.tex}
\end{proof}

As usual, to obtain \cref{lem:ElementarySubjectExpansion-ref}, we need the following lemma.

\begin{lemma}[\HEPCF\ Refinement Context Typing]~\\
    \label{lem:HEPCFRefinementContextTyping}%
    Let $\comput t \in \setComput$ and 
    $\ctxtEval$ a computation context such that 
    $\refrel{\stJugComput[\Pi]{\emptyset}{\ctxtEval<\comput t>}{\st U}}{\itJugComput[\Xi]{\emptyset}{\ctxtEval<\comput t>}{\it U}}$.
    Then there exists an \HEBI\ type $\it S$ and an \HEPCF\ type $\st S$ 
    such that $\refrel{\stJugComput{\emptyset}{\comput t}{\st S}}{\stJugComput{\emptyset}{\comput t}{\it F}}$ and 
    for every $\comput u$ such that $\refrel{\itJugComput{\emptyset}{\comput u}{\st S}}{\itJugComput{\emptyset}{\comput u}{\it S}}$, then there exist $\Pi'$ and $\Xi'$ such that 
    $\refrel{\itJugComput[\Pi']{\emptyset}{\ctxtEval<\comput u>}{\st U}}{\itJugComput[\Xi']{\emptyset}{\ctxtEval<\comput u>}{\it U}}$.
\end{lemma}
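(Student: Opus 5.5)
We prove the statement by induction on the evaluation context $\ctxtEval$, mirroring the proofs of \cref{lem:ContextTyping} and \cref{lem:RefinementContextTyping}. The statement has two halves: an \emph{extraction} half, which produces a refined pair of derivations for the hole, and a \emph{plugging} half, which rebuilds refined derivations for $\ctxtEval<\comput u>$. We carry both through the induction simultaneously.

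\textbf{Base case $\ctxtEval = \Hole$.} We take $\st S := \st U$, $\it S := \it U$, and the given derivations $\Pi,\Xi$. For plugging, any refined pair for $\comput u$ is already a refined pair for $\ctxtEval<\comput u>$.

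\textbf{Case $\ctxtEval = \handle{\handler h}{\ctxtEval'}$.} The last rule of $\Pi$ is \ruleNameHandle, with premises $\Pi_1$ for $\handler h$ of type $\st S' \Rightarrow \st U$ and $\Pi_2$ for $\ctxtEval'<\comput t>$ of type $\st S'$. Since $\refrel{\Pi}{\Xi}$, the refinement rule for \ruleNameHandle\ (the \HEPCF\ version in \cref{def:RefinementDerivations}) gives $\Xi$ ending in \ruleNameHandle\ with premises $\Xi_1 \refrelsym \Pi_1$ and $\Xi_2 \refrelsym \Pi_2$, typing $\handler h$ with $\it S' \Rightarrow \it U$ and $\ctxtEval'<\comput t>$ with $\it S'$.

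Both contexts are empty because the terms are closed (\cref{rem:ContextSimplification}). The induction hypothesis on $(\Pi_2,\Xi_2)$ yields the hole pair $\st S,\it S$ together with plugging for $\ctxtEval'$. Given a refined pair for $\comput u$, we obtain a refined pair for $\ctxtEval'<\comput u>$ at $(\st S',\it S')$. We then reapply \ruleNameHandle\ with the unchanged $\Pi_1,\Xi_1$, and the refinement rule closes the case.

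\textbf{Case $\ctxtEval = \letin{x}{\ctxtEval'}{\comput s}$.} The last rule of $\Pi$ is \ruleNameLetin: its left premise types $\ctxtEval'<\comput t>$ at $\stEffect{\st M}$, and its right premise types $x::\st M \vdash \comput s : \stEffect{\st N}$. In \HEBI, \ruleNameLetin\ carries a leaf replacement $\it E \replaceLeaf{\it M_i \rightarrow \it G_i}[\rangeI] \it U$ and a family $(\Xi_i)_{\rangeI}$, each refining the single right premise of $\Pi$. The refinement relation for this rule (\cref{def:RefinementDerivations}) gives $\Xi'\refrelsym\Pi_1$ and $\Xi_i \refrelsym \Pi_2$. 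We apply the induction hypothesis to the left premises, keep $\Pi_2$ and all $\Xi_i$, and keep the leaf-replacement side condition, which mentions only $\it E$ and is unchanged because the plugged derivation for $\ctxtEval'<\comput u>$ again has type $\it E$. Reapplying the rule gives the required refined pair.

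\textbf{Main obstacle.} The one delicate point is that the plugging step changes the \HEPCF\ derivation $\Pi'$, while the refinement relation is defined rule by rule. We therefore need that refinement is compositional: replacing a premise pair by another refined pair with the same conclusion judgements preserves $\refrelsym$. This follows directly from the inductive definition of $\refrelsym$ on derivations, since the side conditions of each rule depend only on the conclusions of the premises. That the contexts stay empty, hence summable and refinable (\cref{lem:ContextRefinement}), follows from closedness. The statement as written mixes $\it F$ and $\it S$ in the hole judgement; we read the hole judgement uniformly as $\refrel{\stJugComput{\emptyset}{\comput t}{\st S}}{\itJugComput{\emptyset}{\comput t}{\it S}}$.
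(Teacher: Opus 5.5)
Your proof is correct and takes the same approach as the paper, which proves this lemma simply ``by induction on $\ctxtEval$''. Your case analysis (hole, handle, let) is that induction spelled out, resting on two facts: all contexts are empty, and derivation refinement is defined premise-wise, so refined subderivations for $\ctxtEval'<\comput u>$ can be swapped in. You are also right to read the $\mathcal{F}$ in the hole judgement as $\mathcal{S}$.
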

\begin{proof}
    By induction on $\ctxtEval$.
\end{proof}

%% file: Proofs/SplittingLemma2.tex
By case analysis on $\val v \in \setVal$, the
form of $\Xi$ can be deduced :

\paragraph*{Case Integers} For $\val v = \int n$, $\Xi$ has the following form :

 \[
    \begin{array}{ccc}
        \begin{prooftree}
                \hypo{\it M \in \{\itSetEmpty, \itSet{\typeInt{n}}\}}
            \typeRuleInt[1]{\itJugValue{\emptyset}{\int n}{\it M}}
        \end{prooftree}
    \end{array}
    \]

Then necessarily $\it M_i \in \{\itSetEmpty, \itSet{\typeInt{n}}\}$ for each $\rangeI$,
thus we can choose $\Pi_i$ as the following :

\begin{equation*}
    \begin{prooftree}
        \typeRuleInt[0]{\itJugValue{\emptyset}{\int n}{\it M_i}}
    \end{prooftree}
\end{equation*}
And conclude $\Gamma_i=\emptyset$ for each $\rangeI$.

\paragraph*{Case Variable} For $\val v = x$, $\Xi$ is as following :

 \[
    \begin{array}{ccc}
        \begin{prooftree}
            \typeRuleVar{\itJugValue{x:\it M}{x}{\it M}}
        \end{prooftree}
    \end{array}
    \]
By hypothesis $\it M_i\subseteq \it M$ for each $\rangeI$.
Thus we can choose each $\Pi_i$ as following :

\begin{equation*}
    \begin{prooftree}
        \typeRuleVar{\itJugValue{x : \it M_i}{x}{\it M_i}}
    \end{prooftree}
\end{equation*}

And conclude that $+_{\rangeI}\Gamma_i=+_{\rangeI}(x:\it M_i)=\Gamma=x:\it M$.

\paragraph*{Case Abstraction} For $\val v = \abs{x}{\comput t}$ :

\begin{equation*}
    \begin{array}{ccc}
    \begin{prooftree}
        \hypo{\left(
            \itJugComput[\Xi^k_\comput t]{\Gamma^k_\comput t; x : \it N_k}{\comput t}{\it E_k}
        \right)_{\rangeK}}
        \typeRuleAbs{\itJugValue[\Xi]{+_{\rangeK} \Gamma^k_\comput t}{\abs{x}{\comput t}}{\itSet{\it N_k \rightarrow \it E_k}_{\rangeK}}}
    \end{prooftree}
    \end{array}
\end{equation*}

We have $\it M = \itSet{\it M_k \rightarrow \it E_k}_{\rangeK}$ so for each $\rangeI$, $\it M_i = \itSet{\it M_j \rightarrow \it E_j}_{j\in J_i}$ with $J_i\subseteq K$ and $K=\bigcup\limits_{\rangeI}J_i$.
Let  $\Gamma_i = +_{j\in J_i} \Gamma^j_\comput t$, 
we can choose $\Pi_i$ as the following for each $\rangeI$:

\begin{equation*}
    \begin{prooftree}
        \hypo{\left(
            \itJugComput[\Xi^j_\comput t]{\Gamma^j_\comput t; x : \it M_j}{\comput t}{\it E_j}
        \right)_{j\in J_i}}
        \typeRuleAbs{\itJugValue[\Pi_i]{+_{j\in J_i} \Gamma^j_\comput t}{\abs{x}{\comput t}}{\itSet{\it M_j \rightarrow \it E_j}_{j\in J_i}}}
    \end{prooftree}
\end{equation*}
Therefore $\Gamma=+_{\rangeK}\Gamma^k_\comput t= +_{\rangeI}\Gamma_i =+_{\rangeI}+_{j\in J_i}\Gamma^j_\comput t$.

\paragraph*{Case Fixpoint} For $\val v = \fix{x}{\val w}$, $\Xi$ has two possible shapes:
\begin{itemize}
\item[\bltI] Base case :
    \begin{equation*}
    \begin{prooftree}
        \hypo{
            \itJugValue[\Xi_1]
                {\Gamma; x : \itSetEmpty}
                {\val w}
                {\itSet{\it M_k \rightarrow \it E_k}_{\rangeK}}}
        \typeRuleFixBase[1]{
            \itJugValue
                {\Gamma}
                {\fix{x}{\val w}}
                {\itSet{\it M_k \rightarrow \it E_k}_{\rangeK}}}
    \end{prooftree}
    \end{equation*}

\item[\bltI] Recursive case : For $\Gamma = \Gamma_1 + \Gamma_2$ :

    \begin{equation*}
    \begin{prooftree}
        \hypo{
            \itJugValue[\Xi_1]
                {\Gamma_1; x : \it M'}
                {\val w}
                {\itSet{\it M_k \rightarrow \it E_k}_{\rangeK}}}
        \hypo{
            \itJugValue[\Xi_2]
                {\Gamma_2}
                {\fix{x}{\val w}}
                {\it M'}}
        \typeRuleFixRec{
            \itJugValue
                {\Gamma_1 + \Gamma_2}
                {\fix{x}{\val w}}
                {\itSet{\it M_k \rightarrow \it E_k}_{\rangeK}}}
    \end{prooftree}
    \end{equation*}

\end{itemize}

In both cases we have $\it M = \itSet{\it M_k \rightarrow \it E_k}_{\rangeK}$ so for each $\rangeI$, $\it M_i = \itSet{\it M_j \rightarrow \it E_j}_{j\in J_i}$ with $J_i\subseteq K$ and $K=\bigcup\limits_{\rangeI}J_i$. By induction hypothesis on $\Xi_1$ we obtain derivations $\itJugValue[\Pi^i_\val w]{\Gamma^i_\val w;x:\it M'}{\val w}{\itSet{\it M_j\rightarrow E_j}_{j\in J_i}}$ such that $\Gamma_1=+_{\rangeI}\Gamma^i_\val w$. We can then build a derivation $\Pi_i$ for each $\rangeI$ as following, for the recursive case:

 \begin{equation*}
    \begin{prooftree}
        \hypo{
            \itJugValue[\Xi^i_\val w]
                {\Gamma^i_\val w; x : \it M'}
                {\val w}
                {\itSet{\it M_j \rightarrow \it E_j}_{j\in J_i}}}
        \hypo{ \itJugValue[\Xi_2]
                {\Gamma_2}
                {\fix{x}{\val w}}
                {\it M'}}
        \typeRuleFixRec[2]{
            \itJugValue[\Pi_i]
                {\Gamma^i_\val w+\Gamma_2}
                {\fix{x}{\val w}}
                {\itSet{\it M_j \rightarrow \it E_j}_{j\in J_i}}}
    \end{prooftree}
    \end{equation*}
With $\Gamma_i=\Gamma^i_\val w+\Gamma_2$.
In the base case the construction is similar. Therefore, $\Gamma=+_{\rangeI}\Gamma_i$.

%% file: Proofs/IntersectionLemma.tex
By induction on the sum size of the $\Pi_i$ (i.e. the sum of the size of each proof).
By case analysis on $\val v$, the form of the $\Pi_i$ can be deduced :

\paragraph*{Case Integers} 
If $\val v = \int n$, each $\Pi_i$ has the following form :
\begin{equation*}
    \begin{prooftree}
            \typeRuleInt[0]{\itJugValue{\emptyset}{\int n}{\it M_i}}
    \end{prooftree}
\end{equation*}
Where for every $i$, $\it M_i \in \{\itSetEmpty, \itSet{\typeInt{n}}\}$.
Then necessarily $\bigcup_{\rangeI} \it M_i \in \{\itSetEmpty, \itSet{\typeInt{n}}\}$,
thus one can choose $\Pi$ as the following :
\begin{equation*}
    \begin{prooftree}
        \typeRuleInt[0]{\itJugValue{\emptyset}{\int n}{\bigcup\limits_{\rangeI} \it M_i}}
    \end{prooftree}
\end{equation*}

\paragraph*{Case Variable} 
For $\val v = x$, each $\Pi_i$ has the following form :
\begin{equation*}
    \begin{prooftree}
        \typeRuleVar{\itJugValue{x : \it M_i}{x}{\it M_i}}
    \end{prooftree}
\end{equation*}
Where for every $i$, $\Gamma_i = x : \it M_i$. By hypothesis $\bigcup\limits_{\rangeI}\it M_i$ exists, thus one can choose $\Pi$ as the following :
\begin{equation*}
    \begin{prooftree}
        \typeRuleVar{\itJugValue{x : \bigcup\limits_{\rangeI} \it M_i}{x}{\bigcup\limits_{\rangeI} \it M_i}}
    \end{prooftree}
\end{equation*}

\paragraph*{Case Abstraction} 
For $\val v = \abs{x}{\comput t}$, each $\Pi_i$ has the following form :
\begin{equation*}
    \begin{prooftree}
        \hypo{\left(
            \itJugComput[\Pi_{i,j}]{\Gamma_{i, j}; x : \it N_{i,j}}{\comput t}{\it E_{i,j}}
        \right)_{\rangeJ_i}}
        \typeRuleAbs{\itJugValue{+_{\rangeJ_i} \Gamma_{i,j}}{\abs{x}{\comput t}}{\itSet{\it N_{i,j} \rightarrow \it E_{i,j}}_{\rangeJ_i}}}
    \end{prooftree}
\end{equation*}
Where $\it M_i = \itSet{\it N_{i,j} \rightarrow \it E_{i,j}}_{\rangeJ_i}$ and $\Gamma_i=+_{\rangeJ_i} \Gamma_{i,j}$.
Since $\bigcup_{\rangeI} \it M_i = \itSet{\it N_{i,j} \rightarrow \it E_{i,j}}_{\rangeJ_i, \rangeI}$ and $+_{\rangeI}\Gamma_i$ both exist by hypothesis,
let $\Pi$ be the following :
\begin{equation*}
    \begin{prooftree}
        \hypo{\left(
            \itJugComput[\Pi_{i,j}]{\Gamma_{i, j}; x : \it N_{i,j}}{\comput t}{\it E_{i, j}}
        \right)_{\rangeJ_i, \rangeI}}
        \typeRuleAbs{\itJugValue
            {+_{\rangeI,\rangeJ_i} \Gamma_{i, j}}
            {\abs{x}{\comput t}}
            {\itSet{\it N_{i,j} \rightarrow \it E_{i,j}}_{\rangeJ_i, \rangeI}}}
    \end{prooftree}
\end{equation*}

\paragraph*{Case Fixpoint} 
For $\val v = \fix{x}{\val w}$, each $\Pi_i$ has two possible forms :
\begin{itemize}
\item[\bltI] Base case :
    \begin{equation*}
    \begin{prooftree}
        \hypo{
            \itJugValue[\Pi_i^1]
                {\Gamma_i; x : \itSetEmpty}
                {\val w}
                {\itSet{\it M_j \rightarrow \it E_j}_{\rangeJ_i}}}
        \typeRuleFixBase[1]{
            \itJugValue
                {\Gamma_i}
                {\fix{x}{\val w}}
                {\itSet{\it M_j \rightarrow \it E_j}_{\rangeJ_i}}}
    \end{prooftree}
    \end{equation*}

\item[\bltI] Recursive case : 
For $\Gamma_i = \Gamma_i^1 + \Gamma_i^2$ :

    \begin{equation*}
    \begin{prooftree}
        \hypo{
            \itJugValue[\Pi_i^1]
                {\Gamma_i^1; x : \it N_i}
                {\val w}
                {\itSet{\it M_j \rightarrow \it E_j}_{\rangeJ_i}}}
        \hypo{
            \itJugValue[\Pi_i^2]
                {\Gamma_i^2}
                {\fix{x}{\val w}}
                {\it N_i}}
        \typeRuleFixRec{
            \itJugValue
                {\Gamma_i^1 + \Gamma_i^2}
                {\fix{x}{\val w}}
                {\itSet{\it M_j \rightarrow \it E_j}_{\rangeJ_i}}}
    \end{prooftree}
    \end{equation*}

\end{itemize}
We first study the case in which all the $\Pi_i$ are in base form.
By  hypothesis, $+_{\rangeI}\Gamma_i$ exists, therefore $+_{\rangeI}(\Gamma_i;x:\itSetEmpty)$ exists. Moreover, $\bigcup\limits_{\rangeI}\it M_i=\itSet{\it M_j \rightarrow \it E_j
        }_{\rangeJ_i, \rangeI}$ also exists.
Then by induction hypothesis
on the $\Pi_i^1$, there is $\itJugValue[\Pi_1]
    {+_{\rangeI}\Gamma_i; x : \itSetEmpty}
    {\val w}{\itSet{\it M_j \rightarrow \it E_j
        }_{\rangeJ_i, \rangeI}}$.
One can conclude this subcase with the following $\Pi$.
\begin{equation*}
\begin{prooftree}
    \hypo{
        \itJugValue[\Pi_1]
            {+_{\rangeI}\Gamma_i; x : \itSetEmpty}
            {\val w}
            {\itSet{\it M_j \rightarrow \it E_j
                }_{\rangeJ_i, \rangeI}}}
    \typeRuleFixBase[1]{
        \itJugValue
            {+_{\rangeI}\Gamma_i}
            {\fix{x}{\val w}}
            {\itSet{\it M_j \rightarrow \it E_j}_{\rangeJ_i, \rangeI}}}
\end{prooftree}
\end{equation*}
Otherwise, let's assume there is at least one $\Pi_i$ in recursive form. For each such derivation $\Pi_i$ it exists a premise $\Pi^2_i$ and its size is strictly less then the size of $\Pi_i$.
For every $\Pi_i$ in base form, we can take as $\Pi^2_i$ the derivation $\itJugValue[\Pi_i^2]{\emptyset}{\fix{x}{\val w}}{\itSetEmpty}$, that exists by typability of values (\Cref{lem:TypabilityValues}).
The proof size of $\Pi_i^2$ is at most the size of $\Pi_i$ (\Cref{Col:TypabilityOfValuesOptmialSize}).
Since at least one $\Pi_i$ is initially in recursive form, then the sum size of all the
$\Pi_i^2$ is strictly smaller than the sum size of all the $\Pi_i$.
Moreover, since each $\it N_i$ types the term $\fix{x}{\val w}$, either $\it N_i=\itSetEmpty$ or $\it N_i=\itSet{\it N'_j\rightarrow \it F_j}_{\rangeJ}$, therefore $\bigcup\limits_{\rangeI}\it N_i$ always exists.
Thus by induction hypothesis on the $\Pi_i^2$, there exists
$\itJugValue[\Pi_2]{+_{\rangeI}\Gamma_i^2}
    {\fix{x}{\val w}}{\bigcup_{\rangeI}\it N_i}$.
We now focus on the $\Pi_i^1$. As before, by hypothesis $+_{\rangeI}\Gamma_i^1$ exists, therefore $+_{\rangeI}(\Gamma_i^1;x:\it N_i)$ exists. Also, $\bigcup\limits_{\rangeI}\it M_i=\itSet{\it M_j \rightarrow \it E_j
        }_{\rangeJ_i, \rangeI}$ exists.
Threfore, by induction hypothesis on the $\Pi_i^1$,
there is $\itJugValue[\Pi_1]
    {+_{\rangeI}\Gamma_i^1, x : \bigcup_{\rangeI} \it N_i}
    {\val w}{\itSet{\it M_j \rightarrow \it E_j
        }_{\rangeJ_i, \rangeI}}$

    Therefore one can deduce the following $\Pi$ :
\begin{equation*}
\begin{prooftree}
        \hypo{\itJugValue[\Pi_1]
            {+_{\rangeI}\Gamma_i^1, x : \bigcup\limits_{\rangeI} \it N_i}
            {\val w}{\itSet{\it M_j \rightarrow \it E_j
                }_{\rangeJ_i, \rangeI}}}
        \hypo{\itJugValue[\Pi_2]
            {+_{\rangeI}\Gamma_i^2}
            {\fix{x}{\val w}}{\bigcup_{\rangeI}\it N_i}}
    \typeRuleFixRec{\itJugValue
        {+_{\rangeI}(\Gamma_i^1+\Gamma_i^2)}
        {\fix{x}{\val w}}
        {\itSet{\it M_j \rightarrow \it E_j
            }_{\rangeJ_i, \rangeI}}}
\end{prooftree}
\end{equation*}

%% file: Proofs/SubstitutionLemma.tex
By induction on $\Pi$. By case analysis on $p$, the
form of the $\Pi$ can be deduced :

\paragraph*{Case Integers} 
Let $p = \int n$, so $\int n\sub{x}{\val v}=\int n$ and $\Pi$ has the form:
 \[
     \begin{prooftree}
                        \hypo{\it N \in \{\itSetEmpty, \itSet{\typeInt{n}}\}}
                    \typeRuleInt[1]{\itJugValue{\emptyset}{\int n}{\it N}}
                \end{prooftree}
    \]

So necessarily, $\it M=\itSetEmpty$ and $\Gamma_p=\emptyset$. Therefore, we take $\Xi=\Pi$

\paragraph*{Case Variable}
\begin{itemize}
    \item If $p = x$, $x\sub{x}{\val v}=\val v$ and $\Pi$ has the following form:
    \[
        \begin{prooftree}
            \typeRuleVar{\itJugValue[\Pi]{x:\it M}{x}{\it M}}
        \end{prooftree}
    \]
    
    Being $\Gamma_p=\emptyset$ and $\tau = \it M$. We then take $\Xi=\Pi_\val v$.

    \item If $p = y \neq x$, $y\sub{x}{\val v}=y$ and $\Pi$ has the following form :
    \[
        \begin{prooftree}
            \typeRuleVar{\itJugValue[\Pi]{y:\it N}{y}{\it N}}
        \end{prooftree}
    \]
    Therefore, $\Gamma_p=\emptyset$ and $\it M=\itSetEmpty$. Finally, we take $\Xi=\Pi$
\end{itemize}

\paragraph*{Case Abstraction} 
For $p = \abs{y}{\comput t}$, $\Pi$ has the following form :

\begin{equation*}
     \begin{prooftree}
                \hypo{(\itJugComput[\Pi_i]
                    {\Gamma_i; x : \it M_i;y: \it N_i }
                    {\comput t}{\it E_i})_{\rangeI}}
            \typeRuleAbs{\itJugValue
                {+_{\rangeI} \Gamma_i;x:\bigcup\limits_{\rangeI}\it M_i}
                {\abs{y}{\comput t}}
                {\itSet{\it N_i \rightarrow \it E_i}_{\rangeI}}}
        \end{prooftree}
\end{equation*}

By \Cref{lem:Splitting} on $\Pi_\val v$ we can obtain some derivations $\itJugValue[\Pi^i_\val v]{\emptyset}{\val v}{\it M_i}$ for $\rangeI$. 
By induction hypothesis on each $\Pi_i$ with these $\Pi^i_\val v$, there exist derivations
$\itJugComput[\Pi'_i]{\Gamma_i;y:\it N_i}{\comput t\sub{x}{\val v}}{\it E_i}$.
Thus, one can derive the following $\Xi$.
\begin{equation*}
     \begin{prooftree}
                \hypo{(\itJugComput[\Pi'_i]{\Gamma_i;y:\it N_i}{\comput t\sub{x}{\val v}}{\it E_i})_{\rangeI}}
            \typeRuleAbs{\itJugValue
                {+_{\rangeI} \Gamma_i}
                {\abs{y}{\comput t\sub{x}{\val v}}}
                {\itSet{\it N_i \rightarrow \it E_i}_{\rangeI}}}
        \end{prooftree}
\end{equation*}

\paragraph*{Case Fixpoint} 
For $p = \fix{y}{\val w}$, $\Pi$ has two possible shapes:
\begin{itemize}
    
\item[\bltI] Base case :
    \begin{equation*}
    \begin{prooftree}
        \hypo{
            \itJugValue[\Pi_1]
                {\Gamma;x:\it M; y : \itSetEmpty }
                {\val w}
                {\itSet{\it N_i \rightarrow \it E_i}_{\rangeI}}}
        \typeRuleFixBase[1]{
            \itJugValue
                {\Gamma;\it M}
                {\fix{y}{\val w}}
                {\itSet{\it N_i \rightarrow \it E_i}_{\rangeI}}}
    \end{prooftree}
    \end{equation*}

    By induction hypothesis on $\Pi_1$ with $\Pi_\val v$, there exists a derivationr $\itJugValue[\Pi'_1]{\Gamma;y:\itSetEmpty}{\val w\sub{x}{\val v}}{\itSet{\it N\rightarrow \it E}_{\rangeI}}$.
    We can the construct the following derivation $\Xi$:
    \begin{equation*}
    \begin{prooftree}
            \hypo{\itJugValue[\Pi'_1]{\Gamma;y:\itSetEmpty}{\val w\sub{x}{\val v}}{\itSet{\it N\rightarrow \it E}_{\rangeI}}}
        \typeRuleFixBase[1]{\itJugValue{\Gamma;}
            {\fix{y}{\val w}\sub{x}{\val v}}{\itSet{\it N_i \rightarrow \it E_i}_{\rangeI}}}
    \end{prooftree}
    \end{equation*}

\item[\bltI] Recursive case :
    \begin{equation*}
    \begin{prooftree}
        \hypo{
            \itJugValue[\Pi_1]
                {\Gamma_1;x:\it M_1; y :: \it N'}
                {\val w}
                {\itSet{\it N_i \rightarrow \it E_i}_{\rangeI}}}
        \hypo{
            \itJugValue[\Pi_2]
                {\Gamma_2;x:\it M_2}
                {\fix{y}{\val w}}
                {\it N'}}
        \typeRuleFixRec{
            \itJugValue[\Pi]
                {\Gamma_1 + \Gamma_2;x:\it M_1\cup \it M_2}
                {\fix{y}{\val w}}
                {\itSet{\it N_i \rightarrow \it E_i}_{\rangeI}}}
    \end{prooftree}
    \end{equation*}

    By \Cref{lem:Splitting} we can split the derivation $\Pi_\val v$ in two derivations $\itJugValue[\Pi^1_\val v]{\emptyset}{\val v}{\it M_1}$ and $\itJugValue[\Pi^2_\val v]{\emptyset}{\val v}{\it M_2}$. 
    By induction hypothesis on $\Pi_1$ with $\Pi^1_\val v$, there exists some derivations $\itJugValue[\Pi'_1]{\Gamma_1;y:\it N'}{\val w\sub{x}{\val v}}{\itSet{\it N\rightarrow \it E}_{\rangeI}}$.
    Similarly, by induction hypothesis on $\Pi_2$ with $\Pi^2_\val v$, there exists some derivations $\itJugValue[\Pi'_2]{\Gamma_2}{\fix{y}{\val w}\sub{x}{\val v}}{\it N'}$.
    We can therefore construct the following derivation $\Xi$:
    \begin{equation*}
    \begin{prooftree}
            \hypo{\itJugValue[\Pi'_1]{\Gamma_1;y:\it N'}{\val w\sub{x}{\val v}}{\itSet{\it N\rightarrow \it E}_{\rangeI}}}
            \hypo{\itJugValue[\Pi'_2]{\Gamma_2}{\fix{y}{\val w}\sub{x}{\val v}}{\it N'}}
        \typeRuleFixRec{\itJugValue
            {\Gamma_1+\Gamma_2}
            {\fix{y}{\val w}\sub{x}{\val v}}{\itSet{\it N_i \rightarrow \it E_i}_{\rangeI}}}
    \end{prooftree}
    \end{equation*}
\end{itemize}

\paragraph*{Case Let-Binding}
For $p = \letin{y}{\comput t}{\comput u}$, $\Pi$ has the following form :
\begin{equation*}
\begin{prooftree}
                \hypo{\itJugComput[\Pi_\comput t]{\Gamma;x:\it M_\comput t }{\comput t}{\it E}}
                \hypo{
                    \it E
                        \replaceLeaf{\it N_i \rightarrow \it G_i}[\rangeI]
                    \it F}
                \hypo{
                    (\itJugComput[\Pi^i_\comput u]
                        {\Gamma_i; x : \it M_i;y:\it N_i}
                        {\comput u}
                        {\it G_i}
                    )_{\rangeI}}
            \typeRuleLetin{
                \itJugComput
                    {\Gamma +_{\rangeI} \Gamma_i;x: \it M_\comput t \bigcup\limits_{\rangeI} \it M_i}
                    {\letin{y}{\comput t}{\comput u}}
                    {\it F}
            }
        \end{prooftree}
\end{equation*}

By \Cref{lem:Splitting} on $\Pi_\val v$, there exist derivations $\itJugValue[\Pi^\val v_\comput t]{\emptyset}{\val v}{\it M_\comput t}$ and $\itJugValue[\Pi_{i}^{\val v}]{\emptyset}{\val v}{\it M_i}$.
By induction hypothesis on $\Pi_{\comput t}$ and $\Pi^\val v_\comput t$ we have a derivation $\itJugComput[\Pi'_{\comput t}]{\Gamma}{\comput t\sub{x}{\val v}}{\it E}$.
Moreover, by induction on each $\Pi^i_{\comput u}$ with $\Pi_i^{\val v}$, there exist derivations
and $\itJugComput[\Pi'^i_{\comput u}]{\Gamma_i;y:\it N_i}{\comput t\sub{x}{\val v}}{\it G_i}$. 
Thus, one can derive the following $\Xi$:
\begin{equation*}
\begin{prooftree}
        \hypo{\itJugComput[\Pi'_{\comput t}]{\Gamma_\comput t}{\comput t\sub{x}{\val v}}{\it E}}
        \hypo{
                    \it E
                        \replaceLeaf{\it N_i \rightarrow \it G_i}[\rangeI]
                    \it F}
        \hypo{(\itJugComput[\Pi'^i_{\comput u}]{\Gamma_i;y:\it N_i}{\comput t\sub{x}{\val v}}{\it G_i})_{\rangeI}}
    \typeRuleLetin{
        \itJugComput
            {\Gamma_\comput t }
            {\letin{y}{\comput t\sub{x}{\val v}}{\comput u\sub{x}{\val v}}}
            {\it F}
    }
\end{prooftree}
\end{equation*}

\paragraph*{Case Application}
For $p = \app{\val w_1}{\val w_2}$, $\Pi$ has the following form :
\begin{equation*}
\begin{prooftree}
        \hypo{\itJugValue[\Pi_1]{\Gamma_1;x:\it M_1}{\val w_1}{\itSet*{\it N \rightarrow \it E}}}
        \hypo{\itJugValue[\Pi_2]{\Gamma_2;x:\it M_2}{\val w_2}{\it N}}
    \typeRuleApp{\itJugComput{\Gamma_1 + \Gamma_2;x:\it M_1\cup \it M_2}{\app{\val w_1}{\val w_2}}{\it E}}
\end{prooftree}
\end{equation*}
By \Cref{lem:Splitting} we can split the derivation $\Pi_\val v$ in two derivations $\itJugValue[\Pi^1_\val v]{\emptyset}{\val v}{\it M_1}$ and $\itJugValue[\Pi^2_\val v]{\emptyset}{\val v}{\it M_1}$.
By induction hypothesis on $\Pi_1$ with $\Pi^1_\val v$  and on $\Pi_2$ with $\Pi^2_\val v$, we obtain the derivations
$\itJugValue[\Pi'_1]{\Gamma_1}{\val w_1\sub{x}{\val v}}{\it N \rightarrow \it E}$
and $\itJugValue[\Pi'_2]{\Gamma_2}{\val w_2\sub{x}{\val v}}{\it N}$.
We can then construct the following derivation $\Xi$:
\begin{equation*}
\begin{prooftree}
        \hypo{\itJugValue[\Pi'_1]{\Gamma_1}{\val w_1\sub{x}{\val v}}{\it N \rightarrow \it E}}
        \hypo{\itJugValue[\Pi'_2]{\Gamma_2}{\val w_2\sub{x}{\val v}}{\it N}}
    \typeRuleApp{\itJugComput{\Gamma_1 + \Gamma_2}{\app{\val w_1\sub{x}{\val v}}{\val w_2\sub{x}{\val v}}}{\it E}}
\end{prooftree}
\end{equation*}

\paragraph*{Case Case}
For $p = \case{\val w}{\comput t_1, \ldots, \comput t_n}$, $\Pi$ has the following form :
\begin{equation*}
\begin{prooftree}
        \hypo{\itJugValue[\Pi_1]{\Gamma_1;x:\it M_1}{\val w}{\itSet{\typeInt{m}}}}
        \hypo{\itJugComput[\Pi_2]{\Gamma_2;x:\it M_2}{\comput t_m}{\it E}}
    \typeRuleCase{\itJugComput{\Gamma_1 + \Gamma_2;x:\it M}
        {\case{\val w}{\comput t_1, \ldots, \comput t_n}}{\it E}}
\end{prooftree}
\end{equation*}
By \Cref{lem:Splitting} we can split the derivation $\Pi_\val v$ in two derivations $\Pi^1_\val v$ and $\Pi^2_\val v$.
By induction hypothesis on $\Pi_1$ with $\Pi^1_\val v$ and $\Pi_2$ with $\Pi^2_\val v$, there exist two derivations $\itJugValue[\Pi'_1]{\Gamma_1}{\val w\sub{x}{\val v}}{\itSet{\typeInt{m}}}$ and $\itJugComput[\Pi'_2]{\Gamma_2}{\comput t_m\sub{x}{\val v}}{\it E}$.
Therfore, one can derive the following $\Xi$.
\begin{equation*}
\begin{prooftree}
        \hypo{\itJugValue[\Pi'_1]{\Gamma_1}{\val w\sub{x}{\val v}}{\itSet{\typeInt{m}}}}
        \hypo{\itJugComput[\Pi'_2]{\Gamma_2}{\comput t_m\sub{x}{\val v}}{\it E}}
    \typeRuleCase{\itJugComput{\Gamma_1 + \Gamma_2}
        {\case{\val w\sub{x}{\val v}}{\comput t_1\sub{x}{\val v}, \ldots, \comput t_n\sub{x}{\val v}}}{\it E}}
\end{prooftree}
\end{equation*}

\paragraph*{Case Return}
For $p = \ret{\val w}$, $\Pi$ has the following form :
\begin{equation*}
\begin{prooftree}
        \hypo{\itJugValue[\Pi_1]{\Gamma;x:\it M}{\val w}{\it N}}
    \typeRuleRet{\itJugComput{\Gamma;x:\it M}
        {\ret{\val w}}{\itEffectReturn{\it N}}}
\end{prooftree}
\end{equation*}

By induction hypothesis on $\Pi_1$ with $\Pi_\val v$, there exist a derivation $\itJugValue[\Pi'_1]{\Gamma_1v}{\val w\sub{x}{\val v}}{\it N}$.
Therfore, one can derive the following $\Xi$.
\begin{equation*}
\begin{prooftree}
        \hypo{\itJugValue[\Pi'_1]{\Gamma_1}{\val w\sub{x}{\val v}}{\it N}}
    \typeRuleRet{\itJugComput{\Gamma}
        {\ret{\val w\sub{x}{\val v}}}{\itEffectReturn{\it N}}}
\end{prooftree}
\end{equation*}

\paragraph*{Case Effect}
For $p = \effect{\val w}{y}{\comput t}$, $\Pi$ has the following form :
\begin{equation*}
\begin{prooftree}
        \hypo{\itJugValue[\Pi_{\val w}]
            {\Gamma_\val w;x:\it M_\val w}
            {\val w}
            {\it N}}
        \hypo{(\itJugComput[\Pi_i]
            {\Gamma_i;x:\it M_i; y : \it N_i}
            {\comput t}
            {\it E_i}
        )_{\rangeI}}
    \typeRuleEff[2]{
        \itJugComput
            {\Gamma +_{\rangeI} \Gamma_i;x:\it M_\val w\bigcup\limits_{\rangeI}\it M_i}
            {\effect{\val w}{y}{\comput t}}
            {\itEffect{\sigma}{\it N}{\it N_i \rightarrow \it E_i}_{\rangeI}}}
\end{prooftree}
\end{equation*}

By \Cref{lem:Splitting} we can split the derivation $\Pi_\val v$ in derivations $\itJugValue[\Pi^{\val w}_\val v]{\emptyset}{\val v}{\it M_\val w}$ and $\itJugValue[\Pi^i_\val v]{\emptyset}{\val v}{\it M_i}$ for each $\rangeI$,.
By induction hypothesis on $\Pi_\val w$ with $\Pi^\val w_\val v$ it exists a derivation $\itJugValue[\Pi'_\val w]{\Gamma_\val w}{\val w\sub{x}{\val v}}{\it N}$.
Moreover, by induction hypothesis on each $\Pi_i$ with $\Pi^i_\val v$ for $\rangeI$, there exist derivations $\itJugComput[\Pi'_i]{\Gamma_i;\it N_i}{\comput t\sub{x}{\val v}}{\it E_i}$.
We can construct the following derivation $\Xi$:

\begin{equation*}
\begin{prooftree}
        \hypo{\itJugValue[\Pi'_\val w]{\Gamma_\val w}{\val w\sub{x}{\val v}}{\it N}}
        \hypo{(\itJugComput[\Pi'_i]{\Gamma_i}{\comput t\sub{x}{\val v}}{\it E_i}
        )_{\rangeI}}
    \typeRuleEff[2]{
        \itJugComput
            {\Gamma +_{\rangeI} \Gamma_i}
            {\effect{\val w\sub{x}{\val v}}{y}{\comput t\sub{x}{\val v}}}
            {\itEffect{\sigma}{\it N}{\it N_i \rightarrow \it E_i}_{\rangeI}}}
\end{prooftree}
\end{equation*}

\paragraph*{Case Handler}
For $p = \handle{\handler h}{\comput t}$, $\Pi$ has the following form :
\begin{equation*}
\begin{prooftree}
        \hypo{\itJugHandler[\Pi_1]{\Gamma_1;x:\it M_1}{\handler h}{\it E \Rightarrow \it F}}
        \hypo{\itJugComput[\Pi_2]{\Gamma_2;x:\it M_2}{\comput t}{\it E}}
    \typeRuleHandle{\itJugHandler{\Gamma_1 + \Gamma_2;x:\it M_1\cup\it M_2}{\handle{\handler h}{\comput t}}{\it F}}
\end{prooftree}
\end{equation*}
By \Cref{lem:Splitting} we can split the derivation $\Pi_\val v$ in two derivations $\itJugValue[\Pi^1_\val v]{\emptyset}{\val v}{\it M_1}$ and $\itJugValue[\Pi^2_\val v]{\emptyset}{\val v}{\it M_1}$.
By induction hypothesis on $\Pi_1$ with $\Pi^1_\val v$ and $\Pi_2$ with $\Pi^2_\val v$, there exist two derivations $\itJugHandler[\Pi'_1]{\Gamma_1}{\handler h\sub{x}{\val v}}{\it E \Rightarrow\it F}$ and $\itJugComput[\Pi'_2]{\Gamma_2}{\comput t\sub{x}{\val v}}{\it E}$.
Therfore, one can derive the following $\Xi$:
\begin{equation*}
\begin{prooftree}
        \hypo{\itJugHandler[\Pi'_1]{\Gamma_1}{\handler h\sub{x}{\val v}}{\it E \Rightarrow\it F}}
        \hypo{\itJugComput[\Pi'_2]{\Gamma_2}{\comput t\sub{x}{\val v}}{\it E}}
    \typeRuleHandle{\itJugHandler{\Gamma_1 + \Gamma_2}{\handle{\handler h\sub{x}{\val v}}{\comput t}\sub{x}{\val v}}{\it E}}
\end{prooftree}
\end{equation*}

\paragraph*{Case Handler Branching Return}
For $p = \{\retClause{y}{\comput t}\}\cup \handler h$, $\Pi$ has the following form :
\begin{equation*}
    \begin{prooftree}
                \hypo{\itJugComput[\Pi_1]
                    {\Gamma;x:\it M;
                        y : \it N}
                    {\comput t}
                    {\it E}}
                \hypo{\{\retClause{y}{\comput t}\}\handlerCompat \handler h}
            \typeRuleHandlerRet[2]{\itJugHandler
                {\Gamma;x:\it M}
                {\{\retClause{y}{\comput t}\} \cup \handler h}%
                {\itEffectReturn{\it N} \Rightarrow \it E}
            }%
        \end{prooftree}
\end{equation*}

By induction hypothesis on $\Pi_1$ with $\Pi_\val v$, there exist a derivation $\itJugComput[\Pi'_1]{\Gamma;y:\it N}{\comput t\sub{x}{\val v}}{\it E}$.
Therefore, one can derive the following $\Pi'$, using \Cref{lem:HandlerComputIndipendence}.
\begin{equation*}
    \begin{prooftree}
                \hypo{\itJugComput[\Pi'_1]{\Gamma;y : \it N}{\comput t\sub{x}{\val v}}{\it E}}
                \hypo{\{\retClause{y}{\comput t\sub{x}{\val v}}\} \handlerCompat \handler h}
            \typeRuleHandlerRet[2]{\itJugHandler
                {\Gamma}
                {\{\retClause{y}{\comput t\sub{x}{\val v}}\}  \cup \handler h}%
                {\itEffectReturn{\it N} \Rightarrow \it E}
            }%
        \end{prooftree}
\end{equation*}

\paragraph*{Case Handler Branching}
For $p = \{\effectClause[\sigma]{y}{r}{\comput s}\} \cup \handler h$, $\Pi$ has the following form :

\begin{equation*}
\begin{prooftree}
                \hypo{\itJugComput[\Pi_\comput t]
                    {\Gamma_\comput t; x:\it M_\comput t
                        y : \it N' ;
                        r : \itSet*{
                                \it N_i \rightarrow \it G_i
                                }_{\rangeI}}
                    {\comput t}
                    {\it E}}
                \hypo{
                    (\itJugHandler[\Pi_i]
                        {\Gamma_i;x:\it M_i}%
                        {\handler h}%
                        {\it F_i \Rightarrow \it G_i}
                    )_{\rangeI}
                }
                \hypo{(A)}
            \typeRuleHandler[3]{\itJugHandler
                {\Gamma +_{\rangeI} \Gamma_i;x:\it M_\comput t+_{\rangeI}\it M_i}
                {\{\effectClause{y}{r}{\comput t}\} \cup \handler h}%
                {\itEffect{\sigma}{\it N'}
                    {\it N_i\rightarrow \it F_i
                        }_{\rangeI}
                    \Rightarrow \it E}}%
        \end{prooftree}
\end{equation*}
\[
(A)= \begin{matrix}
                    \{\effectClause{y}{r}{\comput t}\} 
                    \handlerCompat \handler h
            \end{matrix}
\]
By \Cref{lem:Splitting} we can split the derivation $\Pi_\val v$ in derivations $\itJugValue[\Pi^{\comput t}_\val v]{\emptyset}{\val v}{\it M_\comput t}$ and $\itJugValue[\Pi^i_\val v]{\emptyset}{\val v}{\it M_i}$ for each $\rangeI$.
By induction hypothesis on $\Pi_\comput t$ with $\Pi^\comput t_\val v$ it exists a derivation $\itJugComput[\Pi'_\comput t]
                    {\Gamma_\comput t; 
                        y : \it N' ;
                        r : \itSet*{
                                \it N_i \rightarrow \it G_i
                                }_{\rangeI}}
                    {\comput t\sub{x}{\val v}}
                    {\it E}$.

Moreover, by induction hypothesis on each $\Pi_i$ with $\Pi^i_\val v$ for $\rangeI$, there exist derivations $\itJugHandler[\Pi'_i]{\Gamma_i}{\handler h\sub{x}{\val v}}{\it F_i \Rightarrow\it G_i}$.
We can construct the following derivation $\Xi$:
\begin{equation*}
\begin{prooftree}
        \hypo{\itJugComput[\Pi'_\comput t]
                    {\Gamma; 
                        y : \it N' ;
                        r : \itSet*{
                                \it N_i \rightarrow \it G_i
                                }_{\rangeI}}
                    {\comput t\sub{x}{\val v}}
                    {\it E}
            }%
        \hypo{(\itJugHandler[\Pi'_i]{\Gamma_i}{\handler h\sub{x}{\val v}}{\it F_i \Rightarrow\it G_i})_{\rangeI}}%
    \typeRuleHandler{%
        \itJugHandler%
            {\Gamma_\comput t
            +_{\rangeI} \Gamma_i}%
            {\{\effectClause{y}{r}{\comput t\sub{x}{\val v}}\} \cup \handler h\sub{x}{\val v}}%
            {
                \itEffect{\sigma}{\it N'}
                    {\it N_i\rightarrow \it F_i
                        }_{\rangeI}
                    \Rightarrow \it E
            }%
    }%
\end{prooftree}
\end{equation*}
With $\{\effectClause{y}{r}{\comput t\sub{x}{\val v}}\} 
                    \handlerCompat \handler h\sub{x}{\val v}$ by \Cref{lem:HandlerComputIndipendence}.

%% file: Proofs/AntiSubstitutionLemma.tex
By induction on $\Pi$. By case analysis on $p$, the
form of the $\Pi$ can be deduced :

\paragraph*{Case Integers} 
Similarly for $p = \int n$, $\Pi$ has the following form :

\begin{equation*}
    \begin{prooftree}
            \typeRuleInt[0]{\itJugValue{\emptyset}{\int n}{\tau}}
    \end{prooftree}
\end{equation*}

Where $\tau \in \{\itSetEmpty, \itSet{\typeInt{n}}\}$.
Then for $\it M = \itSetEmpty$, by typability of values (\Cref{lem:TypabilityValues})
$\itJugValue[\Pi_\val v]{\emptyset}{\val v}{\itSetEmpty}$ and
$\itJugValue[\Pi_p  = \Pi]{\emptyset}{\int n}{\tau}$

\paragraph*{Case Variable}
\begin{itemize}
    \item If $p = x$, then $x\sub{x}{\val v}=\val v$ and $\Pi$ has the following form:
    \begin{equation*}
        \begin{prooftree}
            \typeRuleVar{\itJugValue{\Gamma}{\val v}{\tau}}
        \end{prooftree}
    \end{equation*}
    Since $\val v$ is closed, by \Cref{rem:ContextSimplification}, $\Gamma=\emptyset$.
    Then for $\it M = \tau$, we take $\itJugValue[\Pi_\val v = \Pi]{\emptyset}{\val v}{\it M}$.
    Moreover, we take $\itJugValue[\Pi_p]{x : \it M}{x}{\it M}$.

    \item If $p = y \neq x$, then $\Pi$ has the following form :
    \begin{equation*}
        \begin{prooftree}
            \typeRuleVar{\itJugValue{y : \it N}{y}{\it N}}
        \end{prooftree}
    \end{equation*}
    Thus, for $\it M = \itSetEmpty$, by typability of values (\Cref{lem:TypabilityValues})
    $\itJugValue[\Pi_\val v]{\emptyset}{\val v}{\itSetEmpty}$ and
    $\itJugValue[\Pi_p  = \Pi]{y : \it N}{y}{\it N}$

\end{itemize}

\paragraph*{Case Abstraction} 
For $p = \abs{y}{\comput t}$, $\Pi$ has the following form :

\begin{equation*}
    \begin{prooftree}
        \hypo{\left(
            \itJugComput[\Pi_i]{\Gamma_i; y : \it N_i}{\comput t\sub{x}{\val v}}{\it E_i}
        \right)_i}
        \typeRuleAbs{\itJugValue{+_{\rangeI} \Gamma_i}{\abs{y}{\comput t\sub{x}{\val v}}}{\itSet{\it N_i \rightarrow \it E_i}_{\rangeI}}}
    \end{prooftree}
\end{equation*}

By induction hypothesis on each $\Pi_i$, there exist a type $\it M_i$ and derivations
$\itJugValue[\Pi_i^{\val v}]{\emptyset}{\val v}{\it M_i}$
and $\itJugComput[\Pi_i^{\comput t}]{\Gamma_i^{\comput t}; x : \it M_i}{\comput t}{\it E_i}$,
where $\Gamma_i^{\comput t} = \Gamma_i; y : \it N_i$ for every $i$.
Thus, one can derive the following $\Pi_{p}$.
\begin{equation*}
    \begin{prooftree}
        \hypo{\left(
            \itJugComput[\Pi_i^{\comput t}]
                {\Gamma_i; x : \it M_i; y : \it N_i}{\comput t}{\it E_i}
        \right)_{\rangeI}}
        \typeRuleAbs{\itJugValue
            {+_{\rangeI} \Gamma_i; x : \bigcup\limits_{\rangeI}\it M_i}
            {\abs{y}{\comput t}}
            {\itSet{\it N_i \rightarrow \it E_i}_{\rangeI}}}
    \end{prooftree}
\end{equation*}
Being $\val v$ closed, by \Cref{lem:SumClosedValues} $\it M=\bigcup\limits_{\rangeI}\it M_i$ exists.
Moreover, by intersection lemma (\Cref{lem:IntersectionLemma}) on the $\Pi_i^{\val v}$, one can derive
$\itJugValue[\Pi_\val v]{\emptyset}
    {\val v}{\bigcup\limits_{\rangeI} \it M_i}$.

\paragraph*{Case Fixpoint} 
For $p = \fix{y}{\val w}$, $\Pi$ has two possible forms :
\begin{itemize}
\item[\bltI] Base case :
    \begin{equation*}
    \begin{prooftree}
        \hypo{
            \itJugValue[\Pi']
                {\Gamma; y : \itSetEmpty}
                {\val w\sub{x}{\val v}}
                {\itSet{\it N_i \rightarrow \it E_i}_{\rangeI}}}
        \typeRuleFixBase[1]{
            \itJugValue
                {\Gamma}
                {\fix{y}{\val w\sub{x}{\val v}}}
                {\itSet{\it N_i \rightarrow \it E_i}_{\rangeI}}}
    \end{prooftree}
    \end{equation*}

    By induction hypothesis on $\Pi'$, there exist a type $\it M$
    and derivations $\itJugValue[\Pi'_{\val v}]{\emptyset}{\val v}{\it M}$
    and $\itJugValue[\Pi'_{\val w}]{\Gamma; x : \it M}{\val w}
        {\itSet{\it N_i \rightarrow \it E_i}_{\rangeI}}$.
    One can thus choose $\itJugValue[\Pi_{\val v} = \Pi'_{\val v}]
        {\emptyset}{\val v}{\it M}$. 
    Moreover, the following derivation $\Pi_{p}$ holds.

    \begin{equation*}
    \begin{prooftree}
            \hypo{\itJugValue[\Pi'_{\val w}]
                {\Gamma; x : \it M; y : \itSetEmpty}
                {\val w}{\itSet{\it N_i \rightarrow \it E_i}_{\rangeI}}}
        \typeRuleFixBase[1]{\itJugValue{\Gamma; x : \it M}
            {\fix{y}{\val w}}{\itSet{\it N_i \rightarrow \it E_i}_{\rangeI}}}
    \end{prooftree}
    \end{equation*}

    Therefore, this conculdes the current case.

\item[\bltI] Recursive case :
    \begin{equation*}
    \begin{prooftree}
        \hypo{
            \itJugValue[\Pi^1]
                {\Gamma^1; y : \it N'}
                {\val w\sub{x}{\val v}}
                {\itSet{\it N_i \rightarrow \it E_i}_{\rangeI}}}
        \hypo{
            \itJugValue[\Pi^2]
                {\Gamma^2}
                {\fix{y}{\val w\sub{x}{\val v}}}
                {\it N'}}
        \typeRuleFixRec{
            \itJugValue
                {\Gamma^1 + \Gamma^2}
                {\fix{y}{\val w\sub{x}{\val v}}}
                {\itSet{\it N_i \rightarrow \it E_i}_{\rangeI}}}
    \end{prooftree}
    \end{equation*}

    By induction hypothesis on $\Pi^1$, there exist a type $\it M_1$
    and derivations $\itJugValue[\Pi^1_{\val v}]{\emptyset}{\val v}{\it M_1}$
    and $\itJugValue[\Pi^1_{\val w}]{\Gamma^1; y : \it N'; x : \it M_1}{\val w}
        {\itSet{\it N_i \rightarrow \it E_i}_{\rangeI}}$.
    Similarly, by induction hypothesis on $\Pi^2$, there exist a type $\it M_2$
    and derivations $\itJugValue[\emptyset]{\Gamma^2_{\val v}}{\val v}{\it M_2}$
    and $\itJugValue[\Pi^2_{\fix{y}{\val w}}]
        {\Gamma^2; x : \it M_2}
        {\fix{y}{\val w}}{\it N'}$.

    Being $\val v$ closed, by \Cref{lem:SumClosedValues} $\it M=\it M_1\cup\it M_2$ exists.
    By intersection lemma (\Cref{lem:IntersectionLemma}) on $\Pi^1_{\val v}$ and $\Pi^2_{\val v}$,
    one can deduce $\itJugValue[\Pi_{\val v}]
        {\emptyset}
        {\val v}{\it M_1 \cup \it M_2}$.
    Moreover, the following derivation $\Pi_{p}$ holds.

    \begin{equation*}\hspace{-1cm}
    \begin{prooftree}
            \hypo{\itJugValue[\Pi^1_{\val w}]
                {\Gamma^1; x : \it M_1; y : \it N'}
                {\val w}{\itSet{\it N_i \rightarrow \it E_i}_{\rangeI}}}
            \hypo{\itJugValue[\Pi^2_{\fix{y}{\val w}}]
                {\Gamma^2; x : \it M_2}
                {\fix{y}{\val w}}{\it N'}}
        \typeRuleFixRec{\itJugValue
            {\Gamma^1 + \Gamma^2; x : \it M_1 \cup \it M_2}
            {\fix{y}{\val w}}{\itSet{\it N_i \rightarrow \it E_i}_{\rangeI}}}
    \end{prooftree}
    \end{equation*}

\end{itemize}

\paragraph*{Case Let-Binding}
For $p = \letin{y}{\comput t}{\comput u}$, $\Pi$ has the following form :
\begin{equation*}
\begin{prooftree}
        \hypo{\itJugComput[\Pi_{\comput t}]
            {\Delta}{\comput t\sub{x}{\val v}}{\it E}}
        \hypo{\varphi}
            \ellipsis{}{
        \hypo{\it E
                \replaceLeaf{\it N_i \rightarrow \it G_i}[\rangeI]
            \it F}}
        \hypo{
            (\itJugComput[\Pi_i]
                {\Gamma_i; y : \it N_i}
                {\comput u\sub{x}{\val v}}
                {\it G_i}
            )_{\rangeI}}
    \typeRuleLetin{
        \itJugComput
            {\Delta +_{\rangeI} \Gamma_i}
            {\letin{y}{\comput t\sub{x}{\val v}}{\comput u\sub{x}{\val v}}}
            {\it F}
    }
\end{prooftree}
\end{equation*}
By induction hypothesis on $\Pi_{\comput t}$, there exist types $\it M_{\comput t}$ and derivations
$\itJugValue[\Pi_{\comput t}^{\val v}]{\emptyset}{\val v}{\it M_{\comput t}}$
and $\itJugComput[\Pi_{\comput t}']{\Delta; x : \it M_{\comput t}}{\comput t}{\it E}$.
Similarly, by induction hypothesis on each $\Pi_i$, there exist types
$\it M_i$ and derivations
$\itJugValue[\Pi_i^{\val v}]{\emptyset}{\val v}{\it M_i}$
and $\itJugComput[\Pi_i^{\comput u}]
    {\Gamma_i;y:\it N_i; x : \it M_i}
    {\comput u}{\it G_i}$.
Thus, one can derive the following $\Pi_{p}$.
\begin{equation*}
\begin{prooftree}
        \hypo{\itJugComput[\Pi_{\comput t}']
            {\Delta; x : \it M_{\comput t}}{\comput t}{\it E}}
        \hypo{\varphi}
            \ellipsis{}{
        \hypo{\it E
                \replaceLeaf{\it N_i \rightarrow \it G_i}[\rangeI]
            \it F}}
        \hypo{
            (\itJugComput[\Pi_i^{\comput u}]
                {\Gamma_i; x : \it M_i; y : \it N_i}
                {\comput u}
                {\it G_i}
            )_{\rangeI}}
    \typeRuleLetin{
        \itJugComput
            {\Delta_{\comput t} +_{\rangeI} \Gamma_i;
            x : \it M_{\comput t} \bigcup\limits_{\rangeI} \it M_i}
            {\letin{y}{\comput t}{\comput u}}
            {\it F}
    }%
\end{prooftree}
\end{equation*}
Being $\val v$ closed, by \Cref{lem:SumClosedValues} $\it M=\it M_{\comput t}\bigcup\limits_{\rangeI}\it M_i$ exists.
Moreover, by intersection lemma (\Cref{lem:IntersectionLemma}) on the $\Pi_i^{\val v}$ and 
$\Pi_{\comput t}^{\val v}$, one can derive
$\itJugValue[\Pi_\val v]{\emptyset}
    {\val v}{\it M_{\comput t} \bigcup_{\rangeI} \it M_i}$.

\paragraph*{Case Application}
For $p = \app{\val w_1}{\val w_2}$, $\Pi$ has the following form :
\begin{equation*}
\begin{prooftree}
        \hypo{\itJugValue[\Pi_1]{\Gamma_1}{\val w_1\sub{x}{\val v}}{\itSet{\it N \rightarrow \it E}}}
        \hypo{\itJugValue[\Pi_2]{\Gamma_2}{\val w_2\sub{x}{\val v}}{\it N}}
    \typeRuleApp{\itJugComput{\Gamma_1 + \Gamma_2}{\app{\val w_1\sub{x}{\val v}}{\val w_2\sub{x}{\val v}}}{\it E}}
\end{prooftree}
\end{equation*}
By induction hypothesis on $\Pi_1$, there exist types $\it M_1$ and derivations
$\itJugValue[\Pi_1^{\val v}]{\emptyset}{\val v}{\it M_1}$
and $\itJugValue[\Pi_1^{\val w_1}]{\Gamma_1; x : \it M_1}{\val w_1}{\itSet{\it N \rightarrow \it E}}$.
Similarly, by induction hypothesis on $\Pi_2$, there exist types $\it M_2$ and derivations
$\itJugValue[\Pi_2^{\val v}]{\emptyset}{\val v}{\it M_2}$
and $\itJugValue[\Pi_2^{\val w_2}]{\Gamma_2; x : \it M_2}{\val w_2}{\it N}$.
Therfore, one can derive the following $\Pi_{p}$.
\begin{equation*}
\begin{prooftree}
        \hypo{\itJugValue[\Pi_1^{\val w_1}]{\Gamma_1; x : \it M_1}{\val w_1}{\itSet{\it N \rightarrow \it E}}}
        \hypo{\itJugValue[\Pi_2^{\val w_2}]{\Gamma_2; x : \it M_2}{\val w_2}{\it N}}
    \typeRuleApp{\itJugComput{\Gamma_1 + \Gamma_2; x : \it M_1 \cup \it M_2}{\app{\val w_1}{\val w_2}}{\it E}}
\end{prooftree}
\end{equation*}
Being $\val v$ closed, by \Cref{lem:SumClosedValues} $\it M=\it M_1\cup\it M_2$ exists.
Moreover, by intersection lemma (\Cref{lem:IntersectionLemma}) on $\Pi_1^{\val v}$ and 
$\Pi_2^{\val v}$, one can derive
$\itJugValue[\Pi_\val v]{\emptyset}
    {\val v}{\it M_1 \cup \it M_2}$.

\paragraph*{Case Case}
For $p = \case{\val w}{\comput t_1, \ldots, \comput t_n}$, $\Pi$ has the following form :
\begin{equation*}
\begin{prooftree}
        \hypo{\itJugValue[\Pi_1]{\Gamma_1}{\val w\sub{x}{\val v}}{\itSet{\typeInt{m}}}}
        \hypo{\itJugComput[\Pi_2]{\Gamma_2}{\comput t_m\sub{x}{\val v}}{\it E}}
    \typeRuleCase{\itJugComput{\Gamma_1 + \Gamma_2}
        {\case{\val w\sub{x}{\val v}}{\comput t_1\sub{x}{\val v}, \ldots, \comput t_n\sub{x}{\val v}}}{\it E}}
\end{prooftree}
\end{equation*}
By induction hypothesis on $\Pi_1$, there exist types $\it M_1$ and derivations
$\itJugValue[\Pi_1^{\val v}]{\emptyset}{\val v}{\it M_1}$
and $\itJugValue[\Pi_1^{\val w}]{\Gamma_1; x : \it M_1}{\val w}{\itSet{\typeInt{m}}}$.
Similarly, by induction hypothesis on $\Pi_2$, there exist types $\it M_2$ and derivations
$\itJugValue[\Pi_2^{\val v}]{\emptyset}{\val v}{\it M_2}$
and $\itJugComput[\Pi_2^{\comput t_m}]{\Gamma_2; x : \it M_2}{\comput t_m}{\it E}$.
Therfore, one can derive the following $\Pi_{p}$.

\begin{equation*}
\begin{prooftree}
        \hypo{\itJugValue[\Pi_1^{\val w}]{\Gamma_1; x : \it M_1}{\val w}{\itSet{\typeInt{m}}}}
        \hypo{\itJugComput[\Pi_2^{\comput t_m}]{\Gamma_2; x : \it M_2}{\comput t_m}{\it E}}
    \typeRuleCase{\itJugComput{\Gamma_1 + \Gamma_2; x : \it M_1 \cup \it M_2}{\case{\val v}{\comput t_1, \ldots, \comput t_n}}{\it E}}
\end{prooftree}
\end{equation*}
Being $\val v$ closed, by \Cref{lem:SumClosedValues} $\it M=\it M_1\cup\it M_2$ exists.
Moreover, by intersection lemma (\Cref{lem:IntersectionLemma}) on $\Pi_1^{\val v}$ and 
$\Pi_2^{\val v}$, one can derive
$\itJugValue[\Pi_\val v]{\emptyset}
    {\val v}{\it M_1 \cup \it M_2}$.

\paragraph*{Case Return}
For $p = \ret{\val w}$, $\Pi$ has the following form :
\begin{equation*}
\begin{prooftree}
        \hypo{\itJugValue[\Pi']{\Gamma}{\val w\sub{x}{\val v}}{\it N}}
    \typeRuleRet{\itJugComput{\Gamma}
        {\ret{\val w\sub{x}{\val v}}}{\itEffectReturn{\it N}}}
\end{prooftree}
\end{equation*}
By induction hypothesis on $\Pi'$, there exist types $\it M$ and derivations
$\itJugValue[\Pi'_{\val v}]{\emptyset}{\val v}{\it M}$
and $\itJugValue[\Pi'_{\val w}]{\Gamma; x : \it M}{\val w}{\it N}$.
Therfore, one can derive the following $\Pi_{p}$.
\begin{equation*}
\begin{prooftree}
        \hypo{\itJugValue[\Pi'_{\val w}]{\Gamma; x : \it M}{\val w}{\it N}}
    \typeRuleRet{\itJugComput{\Gamma; x : \it M}
        {\ret{\val w}}{\itEffectReturn{\it N}}}
\end{prooftree}
\end{equation*}
Thus, one can conclude this case with
$\Pi_\val v = \Pi'_{\val v}$.

\paragraph*{Case Effect}
For $p = \effect{\val w}{y}{\comput t}$, $\Pi$ has the following form :
\begin{equation*}
\begin{prooftree}
        \hypo{\itJugValue[\Pi_{\val w}]
            {\Delta}
            {\val w\sub{x}{\val v}}
            {\it N}
        }
        \hypo{(\itJugComput[\Pi_i]
            {\Gamma_i; y : \it N_i}
            {\comput t\sub{x}{\val v}}
            {\it E_i}
        )_{\rangeI}}
    \typeRuleEff{
        \itJugComput
            {\Delta +_{\rangeI} \Gamma_i}
            {\effect{\val w\sub{x}{\val v}}{y}{\comput t\sub{x}{\val v}}}
            {\itEffect{\sigma}{\it N}{\it N_i \rightarrow \it E_i}_{\rangeI}}
        }
\end{prooftree}
\end{equation*}
By induction hypothesis on $\Pi_{\val w}$, there exist types $\it M_{\val w}$ and derivations
$\itJugValue[\Pi_{\val w}^{\val v}]{\emptyset}{\val v}{\it M_{\val w}}$
and $\itJugValue[\Pi_{\val w}']{\Delta; x : \it M_{\val w}}{\val w}{\it N}$.
Similarly, by induction hypothesis on each $\Pi_i$, there exist types
$\it M_i$ and derivations
$\itJugValue[\Pi_i^{\val v}]{\emptyset}{\val v}{\it M_i}$
and $\itJugComput[\Pi_i^{\comput t}]
    {\Gamma_i; x : \it M_i}
    {\comput t}{\it E_i}$.
Thus, one can derive the following $\Pi_{p}$.
\begin{equation*}
\begin{prooftree}
        \hypo{\itJugValue[\Pi_{\val w}']
            {\Delta; x : \it M_{\val w}}
            {\val w}
            {\it N}
        }
        \hypo{(\itJugComput[\Pi_i^{\comput t}]
            {\Gamma_i; x : \it M_i; y : \it N_i}
            {\comput t}
            {\it E_i}
        )_{\rangeI}}
    \typeRuleEff{
        \itJugComput
            {\Delta +_{\rangeI} \Gamma_i; 
                x : \it M_{\val w} \bigcup\limits_{\rangeI}\it M_i}
            {\effect{\val w}{y}{\comput t}}
            {\itEffect{\sigma}{\it N}{\it N_i \rightarrow \it E_i}_{\rangeI}}
        }
\end{prooftree}
\end{equation*}
Being $\val v$ closed, by \Cref{lem:SumClosedValues} $\it M=\it M_\val w\bigcup\limits_{\rangeI}\it M_i$ exists.
Moreover, by intersection lemma (\Cref{lem:IntersectionLemma}) on $\Pi_{\val w}^{\val v}$ and 
the $\Pi_i^{\val v}$, one can derive
$\itJugValue[\Pi_\val v]{\emptyset}
    {\val v}{\it M_{\val w} \bigcup_{\rangeI}\it M_i}$.

\paragraph*{Case Handler}
For $p = \handle{\handler h}{\comput t}$, $\Pi$ has the following form :
\begin{equation*}
\begin{prooftree}
        \hypo{\itJugHandler[\Pi_1]{\Gamma_1}{\handler h\sub{x}{\val v}}{\it E \Rightarrow \it F}}
        \hypo{\itJugValue[\Pi_2]{\Gamma_2}{\comput t\sub{x}{\val v}}{\it E}}
    \typeRuleHandle{\itJugHandler{\Gamma_1 + \Gamma_2}{\handle{\handler h}{\comput t}\sub{x}{\val v}}{\it E}}
\end{prooftree}
\end{equation*}
By induction hypothesis on $\Pi_1$, there exist types $\it M_1$ and derivations
$\itJugValue[\Pi_1^{\val v}]{\emptyset}{\val v}{\it M_1}$
and $\itJugHandler[\Pi_1^{\handler h}]{\Gamma_1; x : \it M_1}{\handler h}{\it E \Rightarrow \it F}$.
Similarly, by induction hypothesis on $\Pi_2$, there exist types $\it M_2$ and derivations
$\itJugValue[\Pi_2^{\val v}]{\emptyset}{\val v}{\it M_2}$
and $\itJugComput[\Pi_2^{\comput t}]{\Gamma_2; x : \it M_2}{\comput t}{\it E}$.
Therfore, one can derive the following $\Pi_{p}$.
\begin{equation*}
\begin{prooftree}
        \hypo{\itJugHandler[\Pi_1^{\handler h}]{\Gamma_1; x : \it M_1}{\handler h}{\it E \Rightarrow \it F}}
        \hypo{\itJugValue[\Pi_2^{\comput t}]{\Gamma_2; x : \it M_2}{\comput t}{\it E}}
    \typeRuleHandle{\itJugHandler{\Gamma_1 + \Gamma_2}{\handle{\handler h}{\comput t}\sub{x}{\val v}}{\it E}}
\end{prooftree}
\end{equation*}
Being $\val v$ closed, by \Cref{lem:SumClosedValues} $\it M=\it M_1\cup\it M_2$ exists.
Moreover, by intersection lemma (\Cref{lem:IntersectionLemma}) on $\Pi_1^{\val v}$ and 
$\Pi_2^{\val v}$, one can derive
$\itJugValue[\Pi_\val v]{\emptyset}
    {\val v}{\it M_1 \cup \it M_2}$.
 
\paragraph{Case Handler Branching Return}
For $p = \{\retClause{y}{\comput t}\}\cup \handler h$, $\Pi$ has the following form :
\begin{equation*}
    \begin{prooftree}
                \hypo{\itJugComput[\Pi_\comput t]
                    {\Gamma;
                        y : \it N}
                    {\comput t}
                    {\it E}}
               \hypo{\{\retClause{y}{\comput t\sub{x}{\val v}}\} 
                    \handlerCompat \handler h\sub{x}{\val v}} 
            \typeRuleHandlerRet[2]{\itJugHandler
                {\Gamma}
                {\{\retClause{y}{\comput t\sub{x}{\val v}}\} \cup \handler h\sub{x}{\val v}}%
                {\itEffectReturn{\it N} \Rightarrow \it E}
            }%
        \end{prooftree}
\end{equation*}
By induction hypothesis on $\Pi_\comput t$, there exist type $\it M$ and derivations $\itJugValue[\Pi_\val v]{\emptyset}{\val v}{\it M}$ and $\itJugComput[\Xi'_\comput t]{\Gamma;y:\it N;x:\it M}{\comput t}{\it E}$.
Therefore, one can derive the following $\Pi_{p}$
\begin{equation*}
    \begin{prooftree}
                \hypo{\itJugComput[\Xi'_\comput t]
                    {\Gamma;x:\it M;
                        y : \it N}
                    {\comput t}
                    {\it E}}
                
            \typeRuleHandlerRet[1]{\itJugHandler
                {\Gamma;x:\it M}
                {\{\retClause{y}{\comput t}\} \cup \handler h}%
                {\itEffectReturn{\it N} \Rightarrow \it E}
            }%
        \end{prooftree}
\end{equation*}
With $\{\retClause{y}{\comput t}\} 
                    \handlerCompat \handler h$ by \Cref{lem:HandlerComputIndipendence}.

\paragraph*{Case Handler Branching}
For $p = \{\effectClause[\sigma]{y}{r}{\comput s}\} \cup \handler h$, $\Pi$ has the following form :
\begin{equation*}
\begin{prooftree}
        \hypo{%
            \itJugComput[\Pi_{\sigma}]
                {\Delta;
                    y : \it N;
                    r : \itSet*{\it N_i \rightarrow \it G_i}_{\rangeI}}%
                {\comput t\sub{x}{\val v}}%
                {\it E}%
            }%
        \hypo{(%
            \itJugHandler[\Pi_i]
                {\Gamma_i}%
                {\handler h\sub{x}{\val v}}%
                {\it F_i \Rightarrow \it G_i}%
        )_{\rangeI}}%
    \typeRuleHandler{%
        \itJugHandler%
            {\Delta +_{\rangeI} \Gamma_i}%
            {\{\effectClause{y}{r}{\comput t\sub{x}{\val v}}\} \cup \handler h\sub{x}{\val v}}%
            {
                \itEffect
                {\sigma}{\it N}
                    {\it N_i
                        \rightarrow 
                    \it F_i
                }_{\rangeI}
                    \Rightarrow 
                \it E
            }%
    }%
\end{prooftree}
\end{equation*}
With $\{\effectClause{y}{r}{\comput t\sub{x}{\val v}}\} 
                    \handlerCompat \handler h\sub{x}{\val v}$
By induction hypothesis on each $\Pi_i$, there exist types $\it M_i$ and derivations
$\itJugValue[\Pi_i^{\val v}]{\emptyset}{\val v}{\it M_i}$
and $\itJugHandler[\Pi_i^{\handler h}]{\Gamma_i, x : \it M_i}
    {\handler h}{\it F_i \Rightarrow \it G_i}$.
Similarly, by induction hypothesis on $\Pi_{\sigma}$, there exists a type 
$\it M_{\sigma}$ and derivations
$\itJugValue[\Pi_{\sigma}^{\val v}]{\emptyset}{\val v}{\it M_\sigma}$
and $\itJugComput[\Pi_{\sigma}^{\comput t}]
    {\Delta; x : \it M_{\sigma}}
    {\comput t}{\it E}$.
Thus, one can derive the following $\Pi_{p}$.
\begin{equation*}
\begin{prooftree}
        \hypo{%
            \itJugComput[\Pi_{\sigma}^{\comput t}]
                {\Delta;
                    x : \it M_{\sigma};
                    y : \it N;
                    r : \itSet*{\it N_i \rightarrow \it G_i}_{\rangeI}}%
                {\comput t}%
                {\it E}%
            }%
        \hypo{(%
            \itJugHandler[\Pi_i^{\handler h}]
                {\Gamma_i, x : \it M_i}%
                {\handler h}%
                {\it F_i \Rightarrow \it G_i}%
        )_{\rangeI}}%
    \typeRuleHandler{%
        \itJugHandler%
            {\Delta
            +_{\rangeI} \Gamma_i;
            x : \it M_{\sigma} \bigcup\limits_{\rangeI} \it M_i}%
            {\{\effectClause{y}{r}{\comput t}\} \cup \handler h}%
            {
                \itEffect
                {\sigma}{\it N}
                    {\it N_i
                        \rightarrow 
                    \it F_i
                }_{\rangeI}
                    \Rightarrow 
                \it E
            }%
    }%
\end{prooftree}
\end{equation*}
With $\{\effectClause{y}{r}{\comput t}\} 
                    \handlerCompat \handler h$ by \Cref{lem:HandlerComputIndipendence}. 
Being $\val v$ closed, by \Cref{lem:SumClosedValues} $\it M=\it M_\sigma \bigcup\limits_{\rangeI}\it M_i$ exists.
Moreover, by intersection lemma (\Cref{lem:IntersectionLemma}) on the $\Pi_i^{\val v}$ and 
$\Pi_{\sigma}^{\val v}$, one can derive
$\itJugValue[\Pi_\val v]{\emptyset}
    {\val v}{\it M_{\sigma} \bigcup_{\rangeI} \it M_i}$.

%% file: Proofs/ContextTyping.tex
By induction on the evaluation context.

\begin{itemize}
    \item If the context is of the form $\Hole$, then $\ctxtEval<\comput t> = \comput t$ and
    $\it F = \it E$.
    Thus, for every $\comput u$ such that $\itJug{emptyset}{\comput u}{\it F}$,
    $\itJug{\emptyset}{\ctxtEval<\comput u>}{\it E}$.

    \item If the context is of the form $\letin{x}{\ctxtEval}{\comput s}$, 
    then the typing derivation is of the following form.
    \begin{equation*}
    \begin{prooftree}
            \hypo{\itJugComput[\Pi_{\comput t}]{\emptyset}{\ctxtEval<\comput t>}{\it E'}}
            \hypo{
                \it E'
                    \replaceLeaf{\it M_i \rightarrow \it G_i}[\rangeI]
                \it E}
            \hypo{
                (\itJugComput[\Pi_i]
                    { x : \it M_i}
                    {\comput s}
                    {\it G_i}
                )_{\rangeI}}
        \typeRuleLetin{
            \itJugComput
                {\emptyset}
                {\letin{x}{\ctxtEval<\comput t>}{\comput s}}
                {\it E}
        }
    \end{prooftree}
    \end{equation*}
    By induction hypothesis on $\ctxtEval$, there is a type $\it F$
    such that $\itJugComput{\emptyset}{\comput t}{\it F}$
    and for every $\comput u$ such that $\itJugComput{\emptyset}{\comput u}{\it F}$ there is a derivation
    $\itJugComput[\Xi_{\comput u}]{\emptyset}{\ctxtEval<\comput u>}{\it E'}$.
    We deduce the following derivation $\Xi$.
    \begin{equation*}
    \begin{prooftree}
            \hypo{\itJugComput[\Pi_{\comput u}]{\emptyset}{\ctxtEval<\comput u>}{\it E'}}
            \hypo{
                \it E'
                    \replaceLeaf{\it M_i \rightarrow \it G_i}[\rangeI]
                \it E}
            \hypo{
                (\itJugComput[\Pi_i]
                    {x : \it M_i}
                    {\comput s}
                    {\it G_i}
                )_{\rangeI}}
        \typeRuleLetin{
            \itJugComput
                {\emptyset}
                {\letin{x}{\ctxtEval<\comput u>}{\comput s}}
                {\it E}
        }
    \end{prooftree}
    \end{equation*}

    \item If the context is of the form $\handle{\handler h}{\ctxtEval}$, 
    then the typing derivation is of the following form.
    \begin{equation*}
    \begin{prooftree}
        \hypo{\itJugHandler[\Pi']{\emptyset}{\handler h}{\it E' \Rightarrow \it E}}
        \hypo{\itJugComput[\Pi_{\comput t}]{\emptyset}{\ctxtEval<\comput t>}{\it E'}}
        \typeRuleHandle{\itJugComput{\emptyset}{\handle{\handler h}{\ctxtEval<\comput t>}}{\it E}}
    \end{prooftree}
    \end{equation*}
    By induction hypothesis on $\ctxtEval$, there is a type $\it F$
    such that $\itJugComput{\emptyset}{\comput t}{\it F}$
    and for every $\comput u$ such that $\itJugComput{\emptyset}{\comput u}{\it F}$ there is a derivation
    $\itJugComput[\Xi_{\comput u}]{\emptyset}{\ctxtEval<\comput u>}{\it E'}$.
    We deduce the following derivation $\Xi$.
    \begin{equation*}
    \begin{prooftree}
        \hypo{\itJugHandler[\Pi']{\emptyset}{\handler h}{\it E' \Rightarrow \it E}}
        \hypo{\itJugComput[\Pi_{\comput u}]{\emptyset}{\ctxtEval<\comput u>}{\it E'}}
        \typeRuleHandle{\itJugComput{\emptyset}{\handle{\handler h}{\ctxtEval<\comput u>}}{\it E}}
    \end{prooftree}
    \end{equation*}
\end{itemize}

%% file: Proofs/ElementarySubjectExpansion.tex
By case analysis on the different rewrite rules:

\paragraph*{Case $\reductBeta$.}
Let $\comput t, \comput t' \in \setComput$ such that $\comput t'$ is
typed and $\comput t \reductArr_\reductBeta \comput t'$. By definition,
$\comput t \coloneqq \app{(\abs{x}{\comput u})}{\val v}$ and $\comput t'
\coloneqq \comput u\sub{x}{\val v}$. Since $\comput t'$ is typable,
there exists a derivation
$\itJugComput[\Pi]{\emptyset}{\comput u\sub{x}{\val v}}{\it E}$
for some type $\it E$. By anti-substitution lemma
(\Cref{lem:antiSubstLemma}), since the free variables of $\val v$
are not bounded in $\comput u$, there exist a type $\it M$ and type derivations
$\itJugValue[\Pi_\val v]{\emptyset}{\val v}{\it M}$ and 
$\itJugComput[\Pi_\comput u]{ x : \it M}{\comput u}{\it E}$.
Therefore, the following $\Pi'$ concludes this case :

\begin{equation*}
    \begin{prooftree}
        \hypo{\itJugComput[\Pi_{\comput u}]{ x : \it M}{\comput u}{\it E}}
        \typeRuleAbs{\itJugValue{\emptyset}{\abs{x}{\comput t}}{\itSet{\it M \rightarrow \it E}}}
        \hypo{\itJugValue[\Pi_{\val v}]{\emptyset}{\val v}{\it M}}
        \typeRuleApp{\itJugComput{\emptyset}{\app{(\abs{x}{\comput u})}{\val v}}{\it E}}
    \end{prooftree}
\end{equation*}

\paragraph*{Case $\reductFix$.}
Let $\comput t, \comput t' \in \setComput$ such that $\comput t'$ is
typed and $\comput t \reductArr_\reductFix \comput t'$. By definition,
$\comput t \coloneqq \app{(\fix{x}{\val v})}{\val w}$ and
$\comput t'\coloneqq
\app{\val v\sub{x}{\fix{x}{\val v}}}{\val w}$. Since
$\comput t'$ is typable, there exists a derivation
$\itJugComput[\Pi]{\emptyset}
    {\app{\val v\sub{x}{\fix{x}{\val v}}}{\val w}}
    {\it E}$
for some type $\it E$. By
case analysis, we deduce that $\Pi$ is necessarily  the
following form:

\begin{equation*}
\begin{prooftree}
        \hypo{
            \itJugValue[\Pi_{\val v}]
                {\emptyset}
                {\val v\sub{x}{\fix{x}{\val v}}}
                {\itSet*{\it M \rightarrow \it E}}}
        \hypo{
            \itJugValue[\Pi_{\val w}]
                {\emptyset}
                {\val w}
                {\it M}}
    \typeRuleApp{
        \itJugComput
            {\emptyset}
            {\app{\val v\sub{x}{\fix{x}{\val v}}}{\val w}}
                {\it E}}
\end{prooftree}
\end{equation*}

By the anti-substitution lemma
(\Cref{lem:antiSubstLemma}) with $\Pi_{\val v}$, one deduces that
there exist a type $\it N$ and derivations
$\itJugValue[\Pi_{\val v}']
    { x : \it N}
    {\val v}{\itSet*{\it M\rightarrow \it E}}$
and $\itJugValue[\Pi_{\fix*{}{}}]
    {\emptyset}
    {\fix{x}{\val v}}{\it N}$.
Let $\Pi'$ be the following typing derivation, if $\it N\neq \itSetEmpty$:

\begin{equation*}
\begin{prooftree}
            \hypo{
                \itJugValue[\Pi_{\val v}']
                    {x : \it N}
                    {\val v}{\itSet*{\it M\rightarrow \it E}}}

            \hypo{
                \itJugValue[\Pi_{\fix*{}{}}]
                    {\emptyset}
                    {\fix{x}{\val v}}{\it N}}
        \typeRuleFixRec{
            \itJugValue
                {\emptyset}
                {\fix{x}{\val v}}
                {\itSet{\it M \rightarrow \it E}}}
        \hypo{
            \itJugValue[\Pi_{\val w}]
                {\emptyset}
                {\val w}
                {\it M}}
    \typeRuleApp{
        \itJugComput
            {\emptyset}
            {\app{(\fix{x}{\val v})}{\val w}}
            {\it E}}
\end{prooftree}
\end{equation*}
Otherwise, if $\it N=\itSetEmpty$, then we can use the rule \ruleNameFixBase.

\paragraph*{Case $\reductCase$.}
Let $\comput t, \comput t' \in \setComput$ such that $\comput t'$ is
typed and $\comput t \reductArr_\reductCase \comput t'$. By definition,
$\comput t \coloneqq \case{\int n}{\comput t_1, \ldots,
\comput t_m}$ and $\comput t' \coloneqq \comput t_n$ with $0 < n \leq m$.
Since $\comput t'$ is typable, there exists a derivation
$\itJugComput[\Pi]{\emptyset}{\comput t_n}{\it E}$
for some type $\it E$.
Let $\Pi'$ be the following typing derivation:

\begin{equation*}
    \begin{prooftree}
        \typeRuleInt{\itJugValue{\emptyset}{\int n}{\itSet{\typeInt{n}}}}
        \hypo{\itJugComput[\Pi]{\emptyset}{\comput t_n}{\it E}}
        \typeRuleCase{\itJugComput{\emptyset}{\case{\int n}{\comput t_1, \ldots, \comput t_m}}{\it E}}
    \end{prooftree}
\end{equation*}

\paragraph*{Case $\reductLetRet$.}
Let $\comput t, \comput t' \in \setComput$ such that $\comput t'$ is
typed and $\comput t \reductArr_\reductLetRet \comput t'$. By
definition, $\comput t \coloneqq
\letin{x}{\ret{\val v}}{\comput u}$ and $\comput t'
\coloneqq \comput u\sub{x}{\val v}$. Since $\comput t'$ is typable,
there exists a derivation
$\itJugComput[\Pi]{\emptyset}{\comput u\sub{x}{\val v}}{\it E}$
for some type $\it E$.
By anti-substitution lemma (\Cref{lem:antiSubstLemma}), there exist a type $\it M$
and derivations $\itJugValue[\Pi_\val v]{\emptyset}{\val v}{\it M}$
and $\itJugComput[\Pi_{\comput u}]{\emptyset}{\comput u}{\it E}$.
Thus, let $\Pi'$ be the following derivation :
\begin{equation*}
\begin{prooftree}
        \hypo{\itJugValue[\Pi_\val v]{\emptyset}{\val v}{\it M}}
        \typeRuleRet{\itJugComput{\emptyset}{\ret{\val v}}{\itEffectReturn{\it M}}}
        \infer0{\itEffectReturn{\it M} \replaceLeaf{\it M \rightarrow \it E} \it E}
        \hypo{\itJugComput[\Pi_{\comput u}]{\emptyset; x : \it M}{\comput u}{\it E}}
    \typeRuleLetin{\itJugComput{\emptyset}{\letin{x}{\ret{\val v}}{\comput u}}{\it E}}
\end{prooftree}
\end{equation*}

\paragraph*{Case $\reductLetEff$.}
Let $\comput t, \comput t' \in \setComput$ such that $\comput t'$ is
typed and $\comput t \reductArr_\reductLetEff \comput t'$. By
definition, $\comput t \coloneqq
\letin{x}{\effect{\val v}{y}{\comput s}}{\comput u}$ and
$\comput t' \coloneqq
\effect{\val v}{y}{\letin{x}{\comput s}{\comput t}}$. Since
$\comput t'$ is typable, there exists a derivation
$\itJugComput[\Pi]{\emptyset}{\effect{\val v}{y}{\letin{x}{\comput s}{\comput t}}}{\it E}$
for some type $\it E$. By
case analysis, we deduce that $\Pi$ is necessarily of the following
form:
\begin{equation*}
\begin{prooftree}
    \hypo{\itJugValue[\Pi_{\val v}]{\emptyset}{\val v}{\it M}}
    \hypo{A_i}
    \delims{\left(}{\right)_{\rangeI}}
\typeRuleEff{
    \itJugComput
        {\emptyset}
        {\effect{\val v}{y}{\letin{x}{\comput s}{\comput u}}}
        {\itEffect
                {\sigma}{\it M}
                {\it N_i \rightarrow \it E_i}_{\rangeI}}}
\end{prooftree}
\end{equation*}
\begin{equation*}
    (A_i) \coloneqq
    \begin{prooftree}
        \hypo{\itJugComput[\Pi_{\comput s}]{ y : \it N_i}{\comput s}{\it G_i}}
        \hypo{\varphi_i}
        \ellipsis{}{
            \it G_i
            \replaceLeaf{\it M_k^i \rightarrow \it F_k^i}[\rangeK_i]
            \it E_i}
        \hypo{(\itJugComput[\Pi_k^i]{ y : \it N_i; x : \it M_k^i}{\comput u}{\it F_k^i})_{\rangeK_i}}
    \typeRuleLetin{
        \itJugComput
            { y : \it N_i}
            {\letin{x}{\comput s}{\comput u}}
            {\it E_i}}
    \end{prooftree}
\end{equation*}

Since $y$ is not a free variable of $\comput u$, by context simplification (\Cref{rem:ContextSimplification})
on $\Pi_k^i$ one can derive 
$\itJugComput[\Pi_k^i]
    { x : \it M_k^i}
    {\comput u}{\it F_k^i}$.
Let $\Pi'$ be the following typing derivation:
\begin{equation*}
    \begin{prooftree}
                \hypo{
                    \itJugValue[\Pi_{\val v}]
                        {\emptyset}
                        {\val v}
                        {\it M}
                }
                \hypo{(
                    \itJugComput[\Pi_{\comput s}]
                        { y : \it N_i}
                        {\comput s}
                        {\it G_i}
                )_{\rangeI}}
            \typeRuleEff{
                \itJugComput
                    {\emptyset}
                    {\effect{\val v}{y}{\comput s}}
                    {\itEffect{\sigma}{\it M}{\it N_i \rightarrow \it G_i}_{\rangeI}}
            }
            \hypo{(B)}
            \hypo{(\itJugComput[\Pi_k^i]
                { x : \it M_k^i}
                {\comput u}{\it F_k^i})_{\rangeK_i, \rangeI}}
        \typeRuleLetin{
            \itJugComput
                {\emptyset}
                {\letin{x}{\effect{\val v}{y}{\comput s}}{\comput u}}
                {\itEffect
                    {\sigma}{\it M}
                    {\it N_i \rightarrow \it E_i}_{\rangeI}}
        }
    \end{prooftree}
\end{equation*}
\begin{equation*}
    (B) \coloneqq
    \begin{prooftree}
            \hypo{\varphi_i}
            \ellipsis{}{
                \it G_i
                \replaceLeaf{\it N_i \rightarrow \it F_k^i}[\rangeK_i]
                \it E_i}
            \delims{\left(}{\right)_{\rangeI}}
        \infer1{
            \itEffect
                    {\sigma}{\it M}
                    {\it N_i \rightarrow \it G_i}_{\rangeI}
        \replaceLeaf{
            \it M_k^i \rightarrow 
            \it F_k^i}[\rangeK_i, \rangeI]
            \itEffect
                {\sigma}{\it M}
                {\it N_i \rightarrow \it E_i}_{\rangeI}}
    \end{prooftree}
\end{equation*}
One concludes this case by observing that $\Pi'$ is typing
$\comput t$.

\paragraph*{Case $\reductHdlRet$.}
Let $\comput t, \comput t' \in \setComput$ such that $\comput t'$ is
typed and $\comput t \reductArr_\reductHdlRet \comput t'$. By
definition, $\comput t \coloneqq
\handle{\handler h}{\ret{\val v}}$ and $\comput t'
\coloneqq \comput u\sub{x}{\val v}$ with
$\handler h = \{\retClause{x}{\comput u}\} \cup \handler h'$. Since
$\comput t'$ is typable, there exists a derivation
$\itJugComput[\Pi]{\emptyset}{\comput u\sub{x}{\val v}}{\it E}$
for some type $\it E$.
By anti-substitution lemma (\Cref{lem:antiSubstLemma}), there exist
a type $\it M$ and derivations 
$\itJugValue[\Pi_\val v]{\emptyset}{\val v}{\it M}$
and $\itJugComput[\Pi_{\comput u}]{ x : \it M}{\comput u}{\it E}$.

Since $\comput t$ is not a clash, by \Cref{lem:CompatImpliesClashFree}
$\{\retClause{x}{\comput u}\} \handlerCompat \handler h'$.
Therefore, let $\Pi'$ be the following derivation :
\begin{equation*}
    \begin{prooftree}
            \hypo{\itJugComput[\Pi_{\comput u}]{x : \it M}{\comput u}{\it E}}
            \hypo{\{\retClause{x}{\comput u}\} \handlerCompat \handler h'}
            \typeRuleHandlerRet{\itJugHandler{\emptyset}{\{\retClause{x}{\comput u}\} \cup \handler h'}{\itEffectReturn{\it M} \Rightarrow \it E}}
            \hypo{\itJugValue[\Pi_\val v]{\emptyset}{\val v}{\it M}}
            \typeRuleRet{\itJugComput{\emptyset}{\ret{\val v}}{\itEffectReturn{\it M}}}
        \typeRuleHandle{\itJugComput{\emptyset}{\handle{\{\retClause{x}{\comput u}\} \cup \handler h'}{\ret{\val v}}}{\it E}}
    \end{prooftree}
\end{equation*}

Ones can conclude this case by observing that $\Pi'$ is typing $\comput t$.

\paragraph*{Case $\reductHdlEff$.}
Let $\comput t, \comput t' \in \setComput$ such that $\comput t'$ is
typed and $\comput t \reductArr_\reductHdlEff \comput t'$. By definition,
$\comput t \coloneqq
\handle{\handler h}{\effect{\val v}{x}{\comput u}}$ and
$\comput t' \coloneqq
\comput s\sub{x}{\val v}\sub{r}{\abs{y}{\handle{\handler h}{\comput u}}}$
with $\effectClause[\sigma]{x}{r}{\comput s} \in
\handler h$. Since $\comput t'$ is typable, there exists a derivation
$\itJugComput[\Pi]{\emptyset}{\comput s\sub{x}{\val v}\sub{r}{\abs{y}{\handle{\handler h}{\comput u}}}}{\it E}$
for some type $\it E$. 

By anti-substitution lemma (\Cref{lem:antiSubstLemma}) on $\comput s\sub{x}{\val v}$ and 
$\abs{y}{\handle{\handler h}{\comput u}}$, there exist a type $\it L$ and derivations 
$\itJugValue[\Pi_r]{\emptyset}{\abs{y}{\handle{\handler h}{\comput u}}}{\it L}$ and 
$\itJugComput[\Pi_{\comput s}]{\Gamma; r : \it L}{\comput s\sub{x}{\val v}}{\it E}$.
By case analysis, we deduce that $\Pi_r$ is necessarily of the following
form:

\begin{equation*}
    \begin{prooftree}
                \hypo{
                    \itJugHandler[\Pi_{\handler h}^i]
                        { y : \it N_i}
                        {\handler h}
                        {\it F_i
                            \Rightarrow
                        \it G_i}}
                \hypo{
                    \itJugComput[\Pi_{\comput u}^i]
                        { y : \it N_i}
                        {\comput u}
                        {\it F_i}}
            \typeRuleHandle{
                \itJugComput
                    { y : \it N_i}
                    {\handle{\handler h}{\comput u}}
                    {\it G_i}
            }
            \delims{\left(}{\right)_{\rangeI}}
        \typeRuleAbs{
            \itJugValue
                {\emptyset}
                {\abs{y}{\handle{\handler h}{\comput u}}}
                {\itSet*{\it N_i \rightarrow \it G_i}_{\rangeI}}
        }
    \end{prooftree}
\end{equation*}

where $\it L = \itSet{\it N_i \rightarrow \it G_i}_{\rangeI}$.
By anti-substitution lemma (\Cref{lem:antiSubstLemma}) on $\comput s$,
$\val v$ and $\Pi_{\comput s}$, there exist a type $\it M$ and derivations 
$\itJugValue[\Pi_\val v]{\emptyset}
    {\val v}{\it M}$
and $\itJugComput[\Pi_{\comput s}']
    {\Gamma, r : \itSet*{\it N_i \rightarrow \it G_i}_{\rangeI}; x : \it M}
    {\comput s}{\it E}$.
Since $y$ is not a free variable of $\handler h$, by context simplification
on $\Pi_{\handler h}^i$ one can derive 
$\itJugHandler[\Pi_{\handler h}^i]
    {\emptyset}
    {\handler h}
    {\it F_i \Rightarrow \it G_i}$.

Since $\comput t$ is not a clash, by \Cref{lem:CompatImpliesClashFree}
 $\{\effectClause{x}{r}{\comput s}\} \handlerCompat \handler h$.
Let $\Pi'$ be the following derivation :

\begin{equation*}
    \begin{prooftree}
            \hypo{(B)}
                \hypo{\itJugValue[\Pi_{\val v}]
                    {\emptyset}
                    {\val v}{\it M}}
                \hypo{\left(
                    \itJugComput[\Pi_{\comput u}^i]
                        { y : \it N_i}
                        {\comput u}
                        {\it F_i}
                \right)_{\rangeI}}
            \typeRuleEff{
                \itJugComput
                    {\emptyset}
                    {\effect{\val v}{y}{\comput u}}
                    {\itEffect
                        {\sigma}{\it M}
                        {\it N_i \rightarrow \it F_i}_{\rangeI}}}
        \typeRuleHandle{
            \itJugComput
                {\emptyset}
                {\handle
                    {\{\effectClause[\sigma]{x}{r}{\comput s}\} \cup \handler h}
                    {\effect{\val v}{y}{\comput u}}}
                {\it E}}
    \end{prooftree}
\end{equation*}

\begin{equation*}
    (B) \;\coloneqq\;
    \begin{prooftree}
            \hypo{
                \itJugComput[\Pi_{\comput s}']
                    {x : \it M; 
                        r : \itSet{\it N_i \rightarrow \it G_i}_{\rangeI}}
                    {\comput s}
                    {\it E}}

            \hypo{\begin{matrix}
                \{\effectClause{x}{r}{\comput s}\} \handlerCompat \handler h\\
                (\itJugHandler[\Pi_{\handler h}^i]
                    {\emptyset}
                    {\handler h}
                    {\it F_i \Rightarrow \it G_i}
                )_{\rangeI}
            \end{matrix}}
        \typeRuleHandler[2]{
            \itJugHandler
                {\emptyset}
                {\{\effectClause{x}{r}{\comput s}\} \cup \handler h}
                {\itEffect{\sigma}{\it M}
                        {\it N_i \rightarrow \it F_i}_{\rangeI}
                \Rightarrow \it E}}
    \end{prooftree}
\end{equation*}

One can conclude this case by noticing that 
$\itJugComput[\Pi']{\Gamma}{t}{\it E}$.

%% file: Proofs/ElementarySubjectReduction.tex
By case analysis on the different rewrite rules:
\paragraph*{Case $\reductBeta$.}
Let $\comput t, \comput t' \in \setComput$ such that $\comput t$ is
typed and $\comput t \reductArr'_\reductBeta \comput t'$. By definition,
$\comput t \coloneqq \app{(\abs{x}{\comput u})}{\val v}$ and $\comput t'
\coloneqq \comput u\sub{x}{\val v}$. Since $\comput t$ is typable,
there exists a derivation
$\itJugComput[\Pi]{\emptyset}{\app{(\abs{x}{\comput u})}{\val v}}{\it E}$
for some type $\it E$. By
case analysis, we deduce that $\Pi$ is necessarily of the following
form:
\begin{equation*}
    \begin{prooftree}
        \hypo{\itJugComput[\Pi_{\comput u}]{ x : \it M}{\comput u}{\it E}}
        \typeRuleAbs{\itJugValue{\emptyset}{\abs{x}{\comput u}}{\itSet{\it M \rightarrow \it E}}}
        \hypo{\itJugValue[\Pi_{\val v}]{\emptyset}{\val v}{\it M}}
        \typeRuleApp{\itJugComput{\emptyset}{\app{(\abs{x}{\comput u})}{\val v}}{\it E}}
    \end{prooftree}
\end{equation*}
By substitution lemma (\Cref{lem:SubstitutionLemma})
with $\Pi_{\comput u}$ and $\Pi_{\val v}$, there exists
$\itJugComput[\Pi']{\emptyset}{\comput u\sub{x}{\val v}}{\it E}$,
therefore concluding this case.

\paragraph*{Case $\reductFix$.}
Let $\comput t, \comput t' \in \setComput$ such that $\comput t$ is
typed and $\comput t \reductArr'_\reductFix \comput t'$. By definition,
$\comput t \coloneqq \app{(\fix{x}{\val v})}{\val w}$ and
$\comput t'
\coloneqq
\app{\val v\sub{x}{\fix{x}{\val v}}}{\val w}$. Since
$\comput t$ is typable, there exists a derivation
$\itJugComput[\Pi]{\emptyset}{\app{(\fix{x}{\val v})}{\val w}}{\it E}$
for some type $\it E$. By
case analysis, we deduce that $\Pi$ is necessarily in one of the two
following forms:
\begin{itemize}
\item[\bltI] Base case:
    \begin{equation*}
        \hspace{-1cm}
        \begin{prooftree}
                    \hypo{
                        \itJugValue[\Pi_{\val v}]
                            { x : \itSetEmpty}
                            {\val v}
                            {\itSet*{\it M \rightarrow \it E}}}
                \typeRuleFixBase[1]{
                    \itJugValue
                        {\emptyset}
                        {\fix{x}{\val v}}
                        {\itSet{\it M \rightarrow \it E}}}
                \hypo{
                    \itJugValue[\Pi_{\val w}]
                        {\emptyset}
                        {\val w}
                        {\it M}}
            \typeRuleApp{
                \itJugComput
                    {\emptyset}
                    {\app{(\fix{x}{\val v})}{\val w}}
                    {\it E}}
        \end{prooftree}
    \end{equation*}
    By typability of values (\Cref{lem:TypabilityValues}), one
    has that
    $\itJugValue[\Pi_{\itSetEmpty}]{\emptyset}{\fix{x}{\val v}}{\itSetEmpty}$,
    hence by the substitution lemma
    (\Cref{lem:SubstitutionLemma}) with $\Pi_{\val v}$
    and $\Pi_{\itSetEmpty}$, one deduces that there exists
    $\itJugValue[\Pi'_{\val v}]{\emptyset}{\val v\sub{x}{\fix{x}{\val v}}}{\itSet*{\it M
    \rightarrow \it E}}$. Let $\Pi'$ be the
    following typing derivation:
    \begin{equation*}
        \begin{prooftree}
                \hypo{
                    \itJugValue[\Pi'_{\val v}]
                        {\emptyset}
                        {\val v\sub{x}{\fix{x}{\val v}}}
                        {\itSet*{\it M \rightarrow \it E}}}
                \hypo{
                    \itJugValue[\Pi_{\val w}]
                        {\emptyset}
                        {\val w}
                        {\it M}}
            \typeRuleApp{
                \itJugComput
                    {\emptyset}
                    {\app{\val v\sub{x}{\fix{x}{\val v}}}{\val w}}
                    {\it E}}
        \end{prooftree}
    \end{equation*}
    One concludes this case by observing that $\Pi'$ is typing
    $\comput t'$.

\item[\bltI] Recursive case:
    \begin{equation*}
        \begin{prooftree}
                    \hypo{
                        \itJugValue[\Pi_{\val v}^1]
                            {x : \it N}
                            {\val v}
                            {\itSet*{\it M \rightarrow \it E}}}
                    \hypo{
                        \itJugValue[\Pi_{\val v}^2]
                            {\emptyset}
                            {\fix{x}{\val v}}
                            {\it N}}
                \typeRuleFixRec{
                    \itJugValue
                        {\emptyset}
                        {\fix{x}{\val v}}
                        {\itSet{\it M \rightarrow \it E}}}
                \hypo{
                    \itJugValue[\Pi_{\val w}]
                        {\emptyset}
                        {\val w}
                        {\it M}}
            \typeRuleApp{
                \itJugComput
                    {\emptyset}
                    {\app{(\fix{x}{\val v})}{\val w}}
                    {\it E}}
        \end{prooftree}
    \end{equation*}
\end{itemize}
By the substitution lemma (\Cref{lem:SubstitutionLemma})
with $\Pi_{\val v}^1$ and $\Pi_{\val v}^2$, there exists
$\itJugValue[\Pi'_{\val v}]{\emptyset}{\val v\sub{x}{\fix{x}{\val v}}}{\itSet*{\it M
\rightarrow \it E}}$. Let $\Pi'$ be the following
typing derivation:
\begin{equation*}
    \begin{prooftree}
            \hypo{
                \itJugValue[\Pi'_{\val v}]
                    {\emptyset}
                    {\val v\sub{x}{\fix{x}{\val v}}}
                    {\itSet*{\it M \rightarrow \it E}}}
            \hypo{
                \itJugValue[\Pi_{\val w}]
                    {\emptyset}
                    {\val w}
                    {\it M}}
        \typeRuleApp{
            \itJugComput
                {\emptyset}
                {\app{\val v\sub{x}{\fix{x}{\val v}}}{\val w}}
                {\it E}}
    \end{prooftree}
\end{equation*}
One concludes this case by observing that $\Pi'$ is typing
$\comput t'$.

\paragraph*{Case $\reductCase$.}
Let $\comput t, \comput t' \in \setComput$ such that $\comput t$ is
typed and $\comput t \reductArr'_\reductBeta \comput t'$. By definition,
$\comput t \coloneqq \case{\int n}{\comput t_1, \ldots,
\comput t_m}$ and $\comput t' \coloneqq \comput t_n$ with $0 < n \leq m$.
Since $\comput t$ is typable, there exists a derivation
$\itJugComput[\Pi]{\emptyset}{\case{\int n}{\comput t_1,
\ldots, \comput t_m}}{\it E}$ for some type $\it E$. By case analysis, we deduce that $\Pi$ is
necessarily of the following form:
\begin{equation*}
    \begin{prooftree}
        \typeRuleInt{\itJugValue{\emptyset}{\int n}{\itSet{\typeInt{n}}}}
        \hypo{\itJugComput[\Pi']{\emptyset}{\comput t_n}{\it E}}
        \typeRuleCase{\itJugComput{\emptyset}{\case{\int n}{\comput t_1, \ldots, \comput t_m}}{\it E}}
    \end{prooftree}
\end{equation*}
One concludes this case by observing that $\Pi'$ is typing
$\comput t'$.

\paragraph*{Case $\reductLetRet$.}
Let $\comput t, \comput t' \in \setComput$ such that $\comput t$ is
typed and $\comput t \reductArr'_\reductLetRet \comput t'$. By
definition, $\comput t \coloneqq
\letin{x}{\ret{\val v}}{\comput u}$ and $\comput t'
\coloneqq \comput u\sub{x}{\val v}$. Since $\comput t$ is typable,
there exists a derivation
$\itJugComput[\Pi]{\emptyset}{\letin{x}{\ret{\val v}}{\comput u}}{\it E}$
for some type $\it E$. By
case analysis, we deduce that $\Pi$ is necessarily of the following
form:
\begin{equation*}
    \begin{prooftree}
            \hypo{\itJugValue[\Pi_\val v]{\Gamma_1}{\val v}{\it M}}
            \typeRuleRet{\itJugComput{\emptyset}{\ret{\val v}}{\itEffectReturn{\it M}}}
            \infer0{\itEffectReturn{\it M} \replaceLeaf{\it M \rightarrow \it E} \it E}

            \hypo{\itJugComput[\Pi_\comput u]{ x : \it M}{\comput u}{\it E}}
        \typeRuleLetin{\itJugComput{\emptyset}{\letin{x}{\ret{\val v}}{\comput u}}{\it E}}
    \end{prooftree}
\end{equation*}
By substitution lemma (\Cref{lem:SubstitutionLemma})
with $\Pi_{\comput u}$ and $\Pi_{\val v}$, there exists
$\itJugComput[\Pi']{\emptyset}{\comput u\sub{x}{\val v}}{\it E}$,
therefore concluding this case.

\paragraph*{Case $\reductLetEff$.}
Let $\comput t, \comput t' \in \setComput$ such that $\comput t$ is
typed and $\comput t \reductArr'_\reductLetEff \comput t'$. By
definition, $\comput t \coloneqq
\letin{x}{\effect{\val v}{y}{\comput s}}{\comput u}$ and
$\comput t' \coloneqq
\effect{\val v}{y}{\letin{x}{\comput s}{\comput t}}$. Since
$\comput t$ is typable, there exists a derivation
$\itJugComput[\Pi]{\emptyset}{\letin{x}{\effect{\val v}{y}{\comput s}}{\comput u}}{\it E}$
for some type $\it E$. By
case analysis, we deduce that $\Pi$ is necessarily of the following
form:
\begin{equation*}
    \begin{prooftree}
                \hypo{
                    \itJugValue[\Pi_{\val v}]
                        {\emptyset}
                        {\val v}
                        {\it M}
                }
                \hypo{(
                    \itJugComput[\Pi_{\comput s}]
                        {y : \it N_i}
                        {\comput s}
                        {\it G_i}
                )_{\rangeI}}
            \typeRuleEff{
                \itJugComput
                    {\emptyset}
                    {\effect{\val v}{y}{\comput s}}
                    {\itEffect{\sigma}{\it M}{\it N_i \rightarrow \it G_i}_{\rangeI}}
            }
            \hypo{(A)}

            \hypo{(
                    \itJugComput[\Pi_k^i]
                    { x : \it M_k^i}
                    {\comput u}
                    {\it F_k^i}
                )_{\rangeK_i, \rangeI}}
        \typeRuleLetin{
            \itJugComput
                {\emptyset}
                {\letin{x}{\effect{\val v}{y}{\comput s}}{\comput u}}
                {\itEffect
                    {\sigma}{\it M}
                    {\it N_i \rightarrow \it E_i}_{\rangeI}}
        }
    \end{prooftree}
\end{equation*}
\begin{equation*}
    (A) \coloneqq
    \begin{prooftree}
            \hypo{\varphi_i}
            \ellipsis{}{
                \it G_i
                \replaceLeaf{
                    \it M_k^i 
                    \rightarrow 
                    \it F_k^i
                    }[\rangeK_i]
                \it E_i}
            \delims{\left(}{\right)_{\rangeI}}
        \infer1{
            \itEffect
                {\sigma}{\it M}
                {\it N_i \rightarrow \it G_i}_{\rangeI}
            \replaceLeaf{
                    \it M_k^i 
                    \rightarrow 
                    \it F_k^i
                }[\rangeK_i, \rangeI]
            \itEffect
                {\sigma}{\it M}
                {\it N_i \rightarrow \it E_i}_{\rangeI}}
    \end{prooftree}
\end{equation*}
Let $\Pi'$ be the following typing derivation:
\begin{equation*}
    \begin{prooftree}
        \hypo{\itJugValue[\Pi_{\val v}]{\emptyset}{\val v}{\it M}}
        \hypo{B_i}
        \delims{\left(}{\right)_{\rangeI}}
    \typeRuleEff{
        \itJugComput
            {\emptyset}
            {\effect{\val v}{y}{\letin{x}{\comput s}{\comput u}}}
            {\itEffect
                    {\sigma}{\it M}
                    {\it N_i \rightarrow \it E_i}_{\rangeI}}}
    \end{prooftree}
\end{equation*}
\begin{equation*}
    (B_i) \coloneqq
    \begin{prooftree}
            \hypo{\itJugComput[\Pi_{\comput s}]{ y : \it N_i}{\comput s}{\it G_i}}
            \hypo{\varphi_i}
            \ellipsis{}{
                \it G_i
                \replaceLeaf{
                    \it M_k^i 
                    \rightarrow 
                    \it F_k^i
                    }[\rangeK_i]
                \it E_i}
            \hypo{(\itJugComput[\Pi_k^i]
                {x : \it M_k^i}
                {\comput u}
                {\it F_k^i})_{\rangeK_i}}
        \typeRuleLetin{
            \itJugComput
                { y : \it N_i}
                {\letin{x}{\comput s}{\comput u}}
                {\it E_i}}
    \end{prooftree}
\end{equation*}
One concludes this case by observing that $\Pi'$ is typing
$\comput t'$.

\paragraph*{Case $\reductHdlRet$.}
Let $\comput t, \comput t' \in \setComput$ such that $\comput t$ is
typed and $\comput t \reductArr'_\reductHdlRet \comput t'$. By
definition, $\comput t \coloneqq
\handle{\handler h}{\ret{\val v}}$ and $\comput t'
\coloneqq \comput u\sub{x}{\val v}$ with
$\retClause{\val v}{\comput u} \in \handler h$. Since
$\comput t$ is typable, there exists a derivation
$\itJugComput[\Pi]{\emptyset}{\handle{\handler h}{\ret{\val v}}}{\it E}$
for some type $\it E$. By
case analysis, we deduce that $\Pi$ is necessarily of the following
form:
\begin{equation*}
    \begin{prooftree}
            \hypo{\itJugComput[\Pi_{\comput u}]{ x : \it M}{\comput u}{\it E}}
            \hypo{\{\retClause{x}{\comput u}\} \handlerCompat \handler h'}
            \typeRuleHandlerRet{\itJugHandler{\emptyset}{\{\retClause{x}{\comput u}\} \cup \handler h'}{\itEffectReturn{\it M} \Rightarrow \it E}}
            \hypo{\itJugValue[\Pi_\val v]{\emptyset}{\val v}{\it M}}
            \typeRuleRet{\itJugComput{\Gamma_2}{\ret{\val v}}{\itEffectReturn{\it M}}}
        \typeRuleHandle{\itJugComput{\emptyset}{\handle{\{\retClause{x}{\comput u}\} \cup \handler h'}{\ret{\val v}}}{\it E}}
    \end{prooftree}
\end{equation*}
By substitution lemma (\Cref{lem:SubstitutionLemma})
with $\Pi_{\comput u}$ and $\Pi_{\val v}$, there exists
$\itJugComput[\Pi']{\emptyset}{\comput u\sub{x}{\val v}}{\it E}$,
which concluding this case since $\comput t' =
\comput u\sub{x}{\val v}$.

\paragraph*{Case $\reductHdlEff$.}
Let $\comput t, \comput t' \in \setComput$ such that $\comput t$ is
typed and $\comput t \reductArr'_\reductHdlEff \comput t'$.

By definition,
$\comput t \coloneqq
\handle{\handler h}{\effect{\val v}{x}{\comput u}}$ and
$\comput t' \coloneqq
\comput s\sub{x}{\val v}\sub{r}{\abs{x}{\handle{\handler h}{\comput u}}}$
with $\effectClause[\sigma]{x}{r}{\comput s} \in
\handler h$. Since $\comput t$ is typable, there exists a derivation
$\itJugComput[\Pi]{\emptyset}{\handle{\handler h}{\effect{\val v}{x}{\comput u}}}{\it E}$
for some type $\it E$. By
case analysis, we deduce that $\Pi$ is necessarily of the following
form:
\begin{equation*}
    \begin{prooftree}
            \hypo{(B)}
                \hypo{\itJugValue[\Pi_{\val v}]{\emptyset}{\val v}{\it M}}
                \hypo{(
                    \itJugComput[\Pi_{\comput u}^i]
                        { x : \it N_i}
                        {\comput u}
                        {\it F_i}
                )_{\rangeI}}
            \typeRuleEff{
                \itJugComput
                    {\emptyset}
                    {\effect{\val v}{x}{\comput u}}
                    {\itEffect
                        {\sigma}{\it M}
                        {\it N_i \rightarrow \it F_i}_{\rangeI}}}
        \typeRuleHandle{
            \itJugComput
                {\emptyset}
                {\handle
                    {\{\effectClause[\sigma]{x}{r}{\comput s}\} \cup \handler h'}
                    {\effect{\val v}{x}{\comput u}}}
                {\it E}}
    \end{prooftree}
\end{equation*}

\begin{equation*}
    (B) \;\coloneqq\;
    \begin{prooftree}
            \hypo{
                \itJugComput[\Pi_{\comput s}]
                    { y : \it M; r : \itSet*{\it N_i
                        \rightarrow \it G_i}_{\rangeI}}
                    {\comput s}
                    {\it E}}
        %
            \hypo{(
                \itJugHandler[\varphi_i]
                    {\emptyset}
                    {\handler h'}
                    {\it F_i \Rightarrow \it G_i}
            )_{\rangeI}}
        %
        %
        \typeRuleHandler{
            \itJugHandler
                {\emptyset}
                {\{\effectClause{x}{r}{\comput s}\} \cup \handler h'}
                {\itEffect
                        {\sigma}{\it M}{\it N_i
                    \rightarrow \it F_i}_{\rangeI}
                    \Rightarrow
                \it E}}
    \end{prooftree}
\end{equation*}
where $\{\effectClause{y}{r}{\comput s}\} \handlerCompat
\handler h'$.
By substitution lemma (\Cref{lem:SubstitutionLemma})
with $\Pi_{\comput s}$ and $\Pi_{\val v}$, there exists
$\itJugComput[\Pi_{\comput s\sub{y}{\val v}}]{ r :
\itSet*{\it N_i
\rightarrow \it G_i}_{\rangeI}}
{\comput s\sub{y}{\val v}}{\it E}$.
Since $\{\effectClause{y}{r}{\comput s}\} \handlerCompat
\handler h'$, by handler weakening (\Cref{lem:HandlerWeakening})
for every $\varphi_i$, one has that
$\itJugHandler[\varphi'_i]{\emptyset}{\handler h}
{\it F_i \Rightarrow \it G_i}$.
Let $\Pi_{\handle*{}{}}$ be the following derivation:
\begin{equation*}
    \begin{prooftree}
                \hypo{
                    \itJugHandler[\varphi_i]
                        {\emptyset}
                        {\handler h}
                        {\it F_i
                            \Rightarrow
                        \it G_i}}
                \hypo{
                    \itJugComput[\Pi_{\comput u}^i]
                        { x : \it N_i}
                        {\comput u}
                        {\it F_i}}
            \typeRuleHandle{
                \itJugComput
                    {x : \it N_i}
                    {\handle{\handler h}{\comput u}}
                    {\it G_i}
            }
            \delims{\left(}{\right)_{\rangeI}}
        \typeRuleAbs{
            \itJugValue
                {\emptyset}
                {\abs{x}{\handle{\handler h}{\comput u}}}
                {\itSet*{\it N_i \rightarrow \it G_i}_{\rangeI}}
        }
    \end{prooftree}
\end{equation*}
We conclude using the substitution lemma
(\Cref{lem:SubstitutionLemma}) with
$\Pi_{\comput s\sub{y}{\val v}}$ and $\Pi_{\handle*{}{}}$
yielding $\itJugComput[\Pi']{\emptyset}{\comput s\sub{y}{\val v}\sub{r}{\abs{x}{\handle{\handler h}{\comput u}}}}{\it E}$.

%% file: Proofs/HEBIProgress.tex
By induction on $\Pi$. By case analysis on $\comput t$, the
form of the $\Pi$ can be deduced :

\paragraph*{Case Return}
In this case, necessarily $\comput t=\ret{\val v}$ and $\Pi$ has the following form.
\begin{equation*}
\begin{prooftree}
        \hypo{\itJugValue{\emptyset}{\val v}{\it N}}
    \typeRuleRet{\itJugComput{\emptyset}
        {\ret{\val v}}{\itEffectReturn{\it N}}}
\end{prooftree}
\end{equation*}

\paragraph*{Case Effect}
In this case, necessarily $\comput t=\effect{\val v}{y}{\comput u}$ and $\Pi$ has the following form.
\begin{equation*}
\begin{prooftree}
        \hypo{\itJugValue
            {\emptyset}
            {\val v}
            {\it N}
        }
        \hypo{(\itJugComput
            { y : \it N_i}
            {\comput u}
            {\it E_i}
        )_{\rangeI}}
    \typeRuleEff{
        \itJugComput
            {\emptyset}
            {\effect{\val v}{y}{\comput u}}
            {\itEffect{\sigma}{\it N}{\it N_i \rightarrow \it E_i}_{\rangeI}}
        }
\end{prooftree}
\end{equation*}

\paragraph*{Case Application}
For $\comput t=\val v\val w$, $\Pi$ has the following form.
\begin{equation*}
\begin{prooftree}
        \hypo{\itJugValue[\Pi_1]{\emptyset}{\val v}{\itSet{\it N \rightarrow \it E}}}
        \hypo{\itJugValue[\Pi_2]{\emptyset}{\val w}{\it N}}
    \typeRuleApp{\itJugComput{\emptyset}{\app{\val v}{\val w}}{\it E}}
\end{prooftree}
\end{equation*}
Being $\itJugValue[\Pi_1]{\emptyset}{\val v}{\itSet{\it N\rightarrow\it E}}$, by case analysis on $\Pi_1$ it is easy to that either $\val v=\lambda x.\comput u$ or $\val v=\fix{x}{\val v_1}$. In both cases there exists a $\comput s$ such that $\comput t\reductArr \comput s$.

\paragraph*{Case Case}
For $\comput t = \case{\val v}{\comput t_1, \ldots, \comput t_n}$, $\Pi$ has the following form.
\begin{equation*}
\begin{prooftree}
        \hypo{\itJugValue[\Pi_1]{\emptyset}{\val w}{\itSet{\typeInt{m}}}}
    \hypo{0<m\leq n}
        \hypo{\itJugComput[\Pi_2]{\emptyset}{\comput t_m}{\it E}}
    \typeRuleCase[3]{\itJugComput{\emptyset}
        {\case{\val w}{\comput t_1, \ldots, \comput t_n}}{\it E}}
\end{prooftree}
\end{equation*}
By case analysis on $\Pi_1$, we have that $\val w=\int m$. Therefore it exists a $\comput s$ such that $\comput t\reductArr\comput s$.

\paragraph*{Case Let-Binding}
For $\comput t = \letin{y}{\comput u_1}{\comput u_2}$, $\Pi$ has the following form :
\begin{equation*}
    \hspace{-0.5cm}
\begin{prooftree}
        \hypo{\itJugComput[\Pi_{\comput t}]
            {\emptyset}{\comput u_1}{\it E}}
        \hypo{\varphi}
            \ellipsis{}{
        \hypo{\it E
                \replaceLeaf{\it N_i \rightarrow \it G_i}[\rangeI]
            \it F}}
        \hypo{
            (\itJugComput[\Pi_i]
                { y : \it N_i}
                {\comput u_2}
                {\it G_i}
            )_{\rangeI}}
    \typeRuleLetin{
        \itJugComput
            {\emptyset}
            {\letin{y}{\comput u_1}{\comput u_2}}
            {\it F}
    }
\end{prooftree}
\end{equation*}
By induction hypothesis on $\Pi_\comput t$, we have that either $\comput u_1\reductArr \comput s_1$ or $\comput u_1$ is a computation normal form. In the first case, we have a context $\ctxtEval=\letin{y}{\Hole}{\comput u_2}$ such that $\comput t=\ctxtEval<\comput u_1>\reductArr\ctxtEval<\comput s_1>=\comput s$. Otherwise, either $\comput u_1 = \ret{\val v}$ or $\comput u_1=\effect{\val v}{x}{\comput u}$. In both cases, it exists a $\comput s$ such that $\comput t\reductArr\comput s$.

\paragraph*{Case Handler}
For $\comput t = \handle{\handler h}{\comput t}$, $\Pi$ has the following form :
\begin{equation*}
\begin{prooftree}
        \hypo{\itJugHandler[\Pi_1]{\emptyset}{\handler h}{\it E \Rightarrow \it F}}
        \hypo{\itJugComput[\Pi_2]{\emptyset}{\comput u}{\it E}}
    \typeRuleHandle{\itJugHandler{\emptyset}{\handle{\handler h}{\comput u}}{\it F}}
\end{prooftree}
\end{equation*}
By induction hypothesis on $\Pi_2$, we have that either $\comput u\reductArr \comput s_1$ or $\comput u$ is a computation normal form. In the first case, we have a context $\ctxtEval=\handle{\handler h}{\Hole}$ such that $\comput t=\ctxtEval<\comput u>\reductArr\ctxtEval<\comput s_1>=\comput s$. Otherwise, either $\comput u = \ret{\val v}$ or $\comput u=\effect{\val v}{x}{\comput u_1}$. We focus on the second case. The derivation $\Pi$ has then the following form.

\begin{equation*}
    \begin{prooftree}
            \hypo{(B)}

            \hypo{
                \itJugComput
                    {\emptyset}
                    {\effect{\val v}{x}{\comput u_1}}
                    {\itEffect
                        {\sigma}{\it M}
                        {\it N_i \rightarrow \it F_i}_{\rangeI}}}
        \typeRuleHandle{
            \itJugComput
                {\emptyset}
                {\handle
                    {\{\effectClause[\sigma]{x}{r}{\comput t_1}\} \cup \handler h'}
                    {\effect{\val v}{x}{\comput u_1}}}
                {\it E}}
    \end{prooftree}
\end{equation*}

\begin{equation*}
    (B) \;\coloneqq\;
    \begin{prooftree}
            \hypo{
                \itJugComput[\Pi_{\comput t_1}]
                    { y : \it M; r : \itSet*{\it N_i
                        \rightarrow \it G_i}_{\rangeI}}
                    {\comput t_1}
                    {\it E}}
        %
            \hypo{(
                \itJugHandler[\varphi_i]
                    {\emptyset}
                    {\handler h'}
                    {\it F_i \Rightarrow \it G_i}
            )_{\rangeI}}
        %
        %
        \typeRuleHandler{
            \itJugHandler
                {\emptyset}
                {\{\effectClause{x}{r}{\comput t_1}\} \cup \handler h'}
                {\itEffect
                        {\sigma}{\it M}{\it N_i
                    \rightarrow \it F_i}_{\rangeI}
                    \Rightarrow
                \it E}}
    \end{prooftree}
\end{equation*}
where $\{\effectClause{y}{r}{\comput t_1}\} \handlerCompat
\handler h'$. Therefore, $\{\effectClause{x}{r}{\comput t_1}\}\in \handler h$ and for each $\{\effectClause{x}{r}{\comput t_2}\}\in \handler h$, $\comput t_1=\comput t_2$, so there is no clash. Finally, it exists $\comput s$ such that $\comput t\reductArr\comput s$. The case $\comput t=\ret{\val v}$ is similar.

%% file: Proofs/ReductionExpansionReduction.tex
Let $\comput t,\, \comput t'$ such that $\comput t \reductArr \comput t'$.
    If $\comput t'\in\redCand{\itEffect{\sigma}{\it M}
        {\it N_i \rightarrow \it E_i}}_{\rangeI}$,
    then $\comput t'\reductArr* \effect{\val v}{x}{\comput u}$
    where $\val v\in\redCand{\it M}$ and 
    $\forall\;\rangeI,\;\forall \handler h \in 
        \redCand{\it E_i \Rightarrow \it F_i},\;
    \abs{x}{\handle{\handler h}{\comput u}} \in 
        \redCand{\itSet{\it N_i \rightarrow \it F_i}}$.
    Therefore, $\comput t\reductArr* \effect{\val v}{x}{\comput u}$ 
    and $\comput t\in\redCand{\itEffect{\sigma}{\it M}
        {\it N_i \rightarrow \it E_i}}_{\rangeI}$.

    Conversly, if If $\comput t\in\redCand{\itEffect{\sigma}{\it M}
        {\it N_i \rightarrow \it E_i}}_{\rangeI}$,
    then $\comput t\reductArr* \effect{\val v}{x}{\comput u}$
    where $\val v\in\redCand{\it M}$ and 
    $\forall\;\rangeI,\;\forall \handler h \in 
        \redCand{\it E_i \Rightarrow \it F_i},\;
    \abs{x}{\handle{\handler h}{\comput u}} \in 
        \redCand{\itSet{\it N_i \rightarrow \it F_i}}$.
    By unicity of the reduction,
    $\comput t\reductArr\comput t'
        \reductArr* \effect{\val v}{x}{\comput u}$
    and $\comput t'\in\redCand{\itEffect{\sigma}{\it M}
        {\it N_i \rightarrow \it E_i}}_{\rangeI}$.
    The case of $\redCand{\itEffectReturn{\it M}}$ is treated 
    similarly.

%% file: Proofs/TypableReducibility.tex
By induction on $\Pi$. By case analysis on $p$, the
form of the $\Pi$ can be deduced :

\paragraph*{Case Integers} 
For $p = \int n$, $\Pi$ has the following form :
\begin{equation*}
    \begin{prooftree}
            \typeRuleInt[0]{\itJugValue{\emptyset}{\int n}{\tau}}
    \end{prooftree}
\end{equation*}
Where $\tau \in \{\itSetEmpty, \itSet{\typeInt{n}}\}$.
In any case by definition, $\int n \in \redCand{\tau}$.

\paragraph*{Case Variable}
If $p = x$, then $\Pi$ has the following form :
\begin{equation*}
    \begin{prooftree}
        \typeRuleVar{\itJugValue{x : \it M}{x}{\it M}}
    \end{prooftree}
\end{equation*}
For $\val v \in \redCand{x : \it M} = \redCand{\it M}$,
$x\sub{x}{\val v} \in \redCand{\it M}$, therfore 
$x\in\redCand{x : \it M \vdash \it M}$.

\paragraph*{Case Abstraction} 
For $p = \abs{x}{\comput t}$, $\Pi$ has the following form :
\begin{equation*}
    \begin{prooftree}
        \hypo{\left(
            \itJugComput[\Pi_i]{\Gamma_i; x : \it M_i}{\comput t}{\it E_i}
        \right)_i}
        \typeRuleAbs{\itJugValue{+_{\rangeI} \Gamma_i}{\abs{x}{\comput t}}{\itSet{\it M_i \rightarrow \it E_i}_{\rangeI}}}
    \end{prooftree}
\end{equation*}
Let $\vec{\val w} \in \redCand{+_{\rangeI} \Gamma_i}$, 
by context reducibility (\Cref{rem:ContextReducibility}) 
$\vec{\val w} \in \redCand{\Gamma_i}$.
By induction hypothesis on the $\Pi_i$,
$\comput t' \coloneqq \comput t \sub{+_{\rangeI} \Gamma_i}{\vec{\val w}}
    \in \bigcap_{\rangeI} \redCand{x : \it M_i \vdash \it E_i}$
Let $\val v \in \redCand{\it M_i}$, for some fixed $\rangeI$.
By definition,
$(\abs{x}{\comput t'}) \val v
    \reductArr \comput t' \sub{x}{\val v}
    \in \redCand{\it E_i}$.
Thus by reducibility expansion (\Cref{lem:ReducibilityExpansionReduction}),
$(\abs{x}{\comput t'})\val v \in \redCand{\it E_i}$.
This being true for all $\val v\in\redCand{\it N_i}$, 
$\abs{x}{\comput t'} \in\redCand
    {\itSet{\it M_i \rightarrow \it E_i}_{\rangeI}}$ 
and therefore $\abs{x}{\comput t}\in 
    \redCand{+_{\rangeI}\Gamma_i \vdash \itSet{\it M_i \rightarrow \it E_i}_{\rangeI}
        }$.

\paragraph*{Case Fixpoint}
For $p = \fix{x}{\val v}$, $\Pi$ has two possible forms :
\begin{itemize}
\item[\bltI] Base case :
    \begin{equation*}
    \begin{prooftree}
        \hypo{
            \itJugValue[\Pi_1]
                {\Gamma; x : \itSetEmpty}
                {\val v}
                {\itSet{\it M_i \rightarrow \it E_i}_{\rangeI}}}
        \typeRuleFixBase[1]{
            \itJugValue
                {\Gamma}
                {\fix{x}{\val v}}
                {\itSet{\it M_i \rightarrow \it E_i}_{\rangeI}}}
    \end{prooftree}
    \end{equation*}

\item[\bltI] Recursive case :

    \begin{equation*}
    \begin{prooftree}
        \hypo{
            \itJugValue[\Pi_1]
                {\Gamma_1; x : \it M'}
                {\val v}
                {\itSet{\it M_i \rightarrow \it E_i}_{\rangeI}}}
        \hypo{
            \itJugValue[\Pi_2]
                {\Gamma_2}
                {\fix{x}{\val v}}
                {\it M'}}
        \typeRuleFixRec{
            \itJugValue
                {\Gamma_1 + \Gamma_2}
                {\fix{x}{\val v}}
                {\itSet{\it M_i \rightarrow \it E_i}_{\rangeI}}}
    \end{prooftree}
    \end{equation*}

\end{itemize}
Let $\vec{\val w} \in \redCand{\Gamma_1 + \Gamma_2}$, 
by context reducibility (\Cref{rem:ContextReducibility}) 
$\vec{\val w} \in \redCand{\Gamma_1+\Gamma_2}\cap\redCand{\Gamma_2}$.
By induction hypothesis on $\Pi_1$, 
$\val v' \coloneqq \val v\sub{\Gamma_1}{\vec{\val w}} 
    \in \redCand{x : \it M' \vdash \itSet{\it M_i \rightarrow \it E_i}_{\rangeI}
        }$.
Since $\fix{x}{\val v'} \in \redCand{\itSetEmpty}$
and $\itJugValue{}{\fix{x}{\val v'}}{\itSetEmpty}$
(by typability of values \Cref{lem:TypabilityValues}),
we can assume to be in the recursive case with the induction hypothesis on $\Pi_2$
asserting $\fix{x}{\val v'} \in \redCand{\it M'}$
(where $\it M'$ is possibly empty).

Let $\val w' \in \redCand{\it M_i}$, for some fixed $\rangeI$.
By Definition
$(\fix{x}{\val v'}) \val w'
    \reductArr (\val v' 
        \sub{x}{(\fix{x}{\val v})})
        \val w'
    \in \redCand{\it E_i}$.
Thus by reducibility expansion (\Cref{lem:ReducibilityExpansionReduction}),
$(\fix{x}{\val v'})
    \val w' \in \redCand{\it E_i}$.
This being true for all $\val w'\in\redCand{\it M_i}$, 
$\fix{x}{\val v'} 
    \in\redCand{\itSet{\it M_i \rightarrow \it E_i}_{\rangeI}}$ 
and therefore $\fix{x}{\val v}\in 
    \redCand{\Gamma_1 +\Gamma_2 \vdash \itSet{\it M_i \rightarrow \it E_i}_{\rangeI}
        }$.

\paragraph*{Case Let-Binding}
In order to prove this case, let us prove a small lemma before.
\begin{lemma}\label{lem:AUX_TypableReducibility}
    The following proprety $(P)$ is true for all $\it E, \it F, (\it M_i, \it G_i)_{\rangeI}$
    such that $\it E\replaceLeaf{\it M_i\rightarrow\it G_i}[\rangeI]\it F$.

    \[ \forall \comput t\in\redCand{\it E}, 
        \forall \comput u\in\bigcap\limits_{\rangeI} \redCand{x : \it M_i \vdash \it G_i
            }, \letin{x}{\comput t}{\comput u}\in\redCand{\it F}.\]
\end{lemma}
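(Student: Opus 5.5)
I plan to prove $(P)$ by induction on the derivation of the leaf replacement $\it E\replaceLeaf{\it M_i\rightarrow\it G_i}[\rangeI]\it F$, that is, essentially by induction on the tree structure of $\it E$. Throughout I rely on two facts already available: \Cref{lem:ReducibilityExpansionReduction}, which lets me move membership in a reducibility set along reduction and expansion; and the fact that $\letin{x}{\Hole}{\comput u}$ is an evaluation context, so every reduction step of $\comput t$ lifts to a step of $\letin{x}{\comput t}{\comput u}$.

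\emph{Base case.} Here $\it E=\itEffectReturn{\it M}$, $I$ is a singleton, $\it M_1=\it M$ and $\it F=\it G_1$. Take $\comput t\in\redCand{\itEffectReturn{\it M}}$. Then $\comput t\reductArr*\ret{\val v}$ for some $\val v\in\redCand{\it M}$. Lifting through the context gives $\letin{x}{\comput t}{\comput u}\reductArr*\letin{x}{\ret{\val v}}{\comput u}\reductArr\comput u\sub{x}{\val v}$. Since $\comput u\in\redCand{x:\it M\vdash\it G_1}$ and $\val v\in\redCand{\it M}$, we get $\comput u\sub{x}{\val v}\in\redCand{\it G_1}$. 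Repeated use of \Cref{lem:ReducibilityExpansionReduction} then yields $\letin{x}{\comput t}{\comput u}\in\redCand{\it F}$.

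\emph{Inductive case.} Here $\it E=\itEffect{\sigma}{\it M}{\it N_j\rightarrow\it F_j}_{\rangeJ}$ and $\it F=\itEffect{\sigma}{\it M}{\it N_j\rightarrow\it E_j}_{\rangeJ}$. For each $j$ we have a sub-replacement $\it F_j\replaceLeaf{\it M^j_k\rightarrow\it G^j_k}[k\in K_j]\it E_j$, and the family indexed by $I$ is the union of these families over $j$.
\begin{itemize}
\item From $\comput t\in\redCand{\it E}$ we obtain $\comput t\reductArr*\effect{\val v}{y}{\comput s}$ with $\val v\in\redCand{\it M}$ and $\comput s\in\bigcap_{\rangeJ}\redCand{y:\it N_j\vdash\it F_j}$.
\item Lifting through the context and applying $\reductLetEff$ gives $\letin{x}{\comput t}{\comput u}\reductArr*\effect{\val v}{y}{\letin{x}{\comput s}{\comput u}}$. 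I take $y$ fresh for $\comput u$, by $\alpha$-conversion.
\item By the definition of $\redCand{\it F}$ it suffices to show $\letin{x}{\comput s}{\comput u}\in\redCand{y:\it N_j\vdash\it E_j}$ for every $j$. Fix $\val w\in\redCand{\it N_j}$. Since $y\notin\mathtt{fv}(\comput u)$, the substituted term is $\letin{x}{\comput s\sub{y}{\val w}}{\comput u}$, and $\comput s\sub{y}{\val w}\in\redCand{\it F_j}$.
\item We also have $\comput u\in\bigcap_{k\in K_j}\redCand{x:\it M^j_k\vdash\it G^j_k}$, because $K_j$ is contained in the global index set $I$.
\item The induction hypothesis on the $j$-th premise then gives $\letin{x}{\comput s\sub{y}{\val w}}{\comput u}\in\redCand{\it E_j}$.
\end{itemize}
A final appeal to \Cref{lem:ReducibilityExpansionReduction} closes the case.

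The main obstacle is the bookkeeping rather than any conceptual step. First, the free variables of $\comput u$ must be kept apart from the continuation variable $y$, so the $\alpha$-renaming has to be done explicitly. Second, one must check that the global family $(\it M_i,\it G_i)_{\rangeI}$ restricts correctly to each branch's family $(\it M^j_k,\it G^j_k)_{k\in K_j}$. The point to get right is that the intersection over $I$ implies the intersection over $K_j$; otherwise the induction hypothesis cannot be fed.

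A subtlety to watch is the case $J=\emptyset$, where the condition on $\comput s$ and $\comput u$ is vacuous. In that case both $\it E$ and $\it F$ are $\sigma$-leaves with no children and the argument is immediate. Similarly, an empty $K_j$ causes no difficulty.
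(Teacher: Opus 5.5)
Your proposal is correct and follows essentially the same route as the paper. The paper also proceeds by induction on the leaf-replacement derivation. Its base case goes through the let/return step and reducibility expansion. Its inductive case goes through the let/effect commutation, uses that $y$ is not free in $u$, restricts the intersection over $I$ to the branch's index set, and then applies the induction hypothesis pointwise for each $w$.
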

\begin{proof}
    By induction on $\varphi \hepcfTr \it E\replaceLeaf{\it M_i\rightarrow\it G_i}[\rangeI]\it F$.
    \begin{itemize}
        \item \emph{Base case.} Let us assume $\varphi$ has the following form.
            \begin{equation*}
            \begin{prooftree}
                \infer0{
                    \itEffectReturn{\it M}
                        \replaceLeaf{\it M \rightarrow \it F}
                    \it F
                }
            \end{prooftree}
            \end{equation*}
            Let $\comput t\in\redCand{\itEffectReturn{\it M}}$ and 
            $\comput u\in\redCand{x : \it M \vdash \it F}$.
            There exists a value $\val v \in \redCand{\it M}$
            $\letin{x}{\comput t}{\comput u}\reductArr* 
                \letin{x}{\ret{\val v}}{\comput u} \reductArr
                \comput u\sub{x}{\val v} \in \redCand{\it F}$. 
            By reducibility expansion (\Cref{lem:ReducibilityExpansionReduction}),
            one can conclude $\letin{x}{\comput t}{\comput u} \in \redCand{\it F}$.\\

        \item \emph{Recursive case.} Let us assume $\varphi$ has the following form.
            \begin{equation*}
            \begin{prooftree}
                    \hypo{\varphi_i}
                        \ellipsis{}{
                            \it E_i
                            \replaceLeaf{\it M_k^i \rightarrow \it G_k^i}[\rangeK_i]
                            \it F_i}
                    \delims{\left(}{\right)_{\rangeI}}
                \infer1{
                    \itEffect{\sigma}{\it M}
                            {\it N_i \rightarrow \it E_i}_{\rangeI}
                    \replaceLeaf{
                        \it M_k^i \rightarrow \it G_k^i
                        }[\rangeK_i, \rangeI]
                    \itEffect{\sigma}{\it M}
                        {\it N_i \rightarrow \it F_i}_{\rangeI}}
            \end{prooftree}
            \end{equation*}
            Let $\comput t\in\redCand{\itEffect{\sigma}{\it M}
                    {\it N_i \rightarrow \it E_i}_{\rangeI}}$ and 
            $\comput u\in\bigcap_{\rangeK_i, \rangeI}
                \redCand{x : \it M_k^i \vdash \it G_k^i}$.
            By definition, there exists a value $\val v \in \redCand{\it M}$
            and $\comput s\in\bigcap_{\rangeI} \redCand{y : \it N_i \vdash \it E_i}$
            such that $\comput t \reductArr* \effect{\val v}{y}{\comput s}$.
            Therefore $\letin{x}{\comput t}{\comput u}\reductArr* 
                \letin{x}{\effect{\val v}{y}{\comput s}}{\comput u} \reductArr
                \effect{\val v}{y}{\letin{x}{\comput s}{\comput u}}$.
            
            Let $\rangeI$ and $\val w\in\redCand{\it N_i}$.
            Since $y$ is a free variable of $\comput u$, 
            $(\letin{x}{\comput s}{\comput u})\sub{y}{\val w} =
                \letin{x}{\comput s\sub{y}{\val w}}{\comput u}$.
            Therfore by induction hypothesis with $\comput s\sub{y}{\val w}\in\redCand{\it E_i}$ and 
            $\comput u\in\bigcap_{\rangeK_i}
                \redCand{x : \it M_k^i \vdash \it G_k^i}$,
            $\letin{x}{\comput s\sub{y}{\val w}}{\comput u} \in \redCand{\it F_i}$.
            This being true for all $\rangeI$ and $\val w\in\redCand{\it N_i}$,
            $\letin{x}{\comput s}{\comput u} \in 
                \bigcap_{\rangeI} \redCand{y : \it N_i \vdash \it F_i}$.
            And since $\val v\in\redCand{\it M}$, we can deduce
            $\effect{\val v}{y}{\letin{x}{\comput s}{\comput u}} 
                \in \redCand{\itEffect{\sigma}{\it M}
                    {\it N_i \rightarrow \it F_i}_{\rangeI}}$.

            Therfore by reducibility expansion (\Cref{lem:ReducibilityExpansionReduction}),
            one can conclude $\letin{x}{\comput t}{\comput u} 
                \in \redCand{\itEffect{\sigma}{\it M}
                    {\it N_i \rightarrow \it F_i}_{\rangeI}}$.
    \end{itemize}

\end{proof}
Let us now use this lemma to prove the following case. 
For $p = \letin{x}{\comput t}{\comput u}$, $\Pi$ has the following form :
\begin{equation*}
    \hspace{-0.5cm}
\begin{prooftree}
        \hypo{\itJugComput[\Pi_{\comput t}]
            {\Delta}{\comput t}{\it E}}
        \hypo{\varphi}
            \ellipsis{}{
        \hypo{\it E
                \replaceLeaf{\it M_i \rightarrow \it G_i}[\rangeI]
            \it F}}
        \hypo{
            (\itJugComput[\Pi_i]
                {\Gamma_i; x : \it M_i}
                {\comput u}{\it G_i}
            )_{\rangeI}}
    \typeRuleLetin{
        \itJugComput
            {\Delta +_{\rangeI} \Gamma_i}
            {\letin{x}{\comput t}{\comput u}}{\it F}
    }
\end{prooftree}
\end{equation*}
By induction hypothesis on the $\Pi_i$ (resp. on $\Pi_{\comput t}$ ), 
$\comput u \in \bigcap_{\rangeI} \redCand{\Gamma_i, x : \it M_i \vdash \it G_i}$
(resp. $\comput t\in\redCand{\Delta\vdash \it E}$). Let 
$\vec{\val w} \in \redCand{\Delta+_{\rangeI}\Gamma_i}$,
by the previous lemma (\Cref{lem:AUX_TypableReducibility}) on
$\it E \replaceLeaf{\it M_i \rightarrow \it G_i}[\rangeI] \it F$,
$\letin{x}{\comput t\sub{\Delta}{\vec{\val w}}}
    {\comput u\sub{+_{\rangeI}\Gamma_i}{\vec{\val w}}}\in \redCand{\it F}$.
Therefore, $\letin{x}{\comput t}{\comput u}\in
    \redCand{\Delta +_{\rangeI} \Gamma_i \vdash \it F}$.

\paragraph*{Case Application}
For $p = \app{\val v_1}{\val v_2}$, $\Pi$ has the following form :
\begin{equation*}
\begin{prooftree}
        \hypo{\itJugValue[\Pi_1]{\Gamma_1}{\val v_1}{\itSet{\it M \rightarrow \it E}}}
        \hypo{\itJugValue[\Pi_2]{\Gamma_2}{\val v_2}{\it M}}
    \typeRuleApp{\itJugComput{\Gamma_1 + \Gamma_2}{\app{\val v_1}{\val v_2}}{\it E}}
\end{prooftree}
\end{equation*}
Let $\vec{\val w} \in \redCand{\Gamma_1 + \Gamma_2}$.
By induction hypothesis on $\Pi_1$ (resp. on $\Pi_2$), 
$\val v_1'\coloneqq \val v_1\sub{\Gamma_1}{\vec{\val w}}
    \in\redCand{\itSet{\it M \rightarrow \it E}}$
(resp. $\val v_2'\coloneqq\val v_2\sub{\Gamma_2}{\vec{\val w}}
    \in\redCand{\it M}$).
By definition $\app{\val v_1'}{\val v_2'} \in \redCand{\it E}$
and so $\app{\val v_1}{\val v_2} \in \redCand{\Gamma_1+\Gamma_2 \vdash \it E}$.

\paragraph*{Case Case}
For $p = \case{\val v}{\comput t_1, \ldots, \comput t_n}$, $\Pi$ has the following form :
\begin{equation*}
\begin{prooftree}
        \hypo{\itJugValue[\Pi_1]{\Gamma_1}{\val v}{\itSet{\typeInt{m}}}}
        \hypo{\itJugComput[\Pi_2]{\Gamma_2}{\comput t_m}{\it E}}
    \typeRuleCase{\itJugComput{\Gamma_1 + \Gamma_2}
        {\case{\val v}{\comput t_1, \ldots, \comput t_n}}{\it E}}
\end{prooftree}
\end{equation*}
Let $\vec{\val w} \in \redCand{\Gamma_1 + \Gamma_2}$,
by context reducibility (\Cref{rem:ContextReducibility}) 
$\vec{\val w} \in \redCand{\Gamma_1}\cap\redCand{\Gamma_2}$.
By induction hypothesis on $\Pi_1$,
$\val v'\coloneqq\val v\sub{\Gamma_1}{\vec{\val w}}
    \in \redCand{\typeInt{m}}=\{\int m\}$,
and similarly by induction hypothesis on $\Pi_2$, 
$\comput t_m'\coloneqq\comput t_m\sub{\Gamma_2}{\vec{\val w}} 
    \in \redCand{\it E}$.
We also define $\comput t_k'\coloneqq\comput t_k\sub{\Gamma_1+\Gamma_2}{\vec{\val w}}$
for all $k \le n$.
Therfore, 
$\case{\val v'}{\comput t_1', \ldots, \comput t_n'}=
    \case{\int m}{\comput t_1', \ldots, \comput t_n'}
    \reductArr \comput t_m' \in \redCand{\it E}$.
Thus by reducibility expansion (\Cref{lem:ReducibilityExpansionReduction}),
$\case{\val v'}{\comput t_1', \ldots, \comput t_n'} \in \redCand{\it E}$,
and so $\case{\val v}{\comput t_1, \ldots, \comput t_n}\in
    \redCand{\Gamma_1+\Gamma_2 \vdash \it E}$.

\paragraph*{Case Return}
For $p = \ret{\val v}$, $\Pi$ has the following form :
\begin{equation*}
\begin{prooftree}
        \hypo{\itJugValue[\Pi']{\Gamma}{\val v}{\it M}}
    \typeRuleRet{\itJugComput{\Gamma}
        {\ret{\val v}}{\itEffectReturn{\it M}}}
\end{prooftree}
\end{equation*}
Let $\vec{\val w} \in \redCand{\Gamma}$,
by induction hypothesis on $\Pi'$,
$\val v\sub{\Gamma}{\vec{\val w}}\in\redCand{\it M}$.
Therfore, $\ret{\val v}\sub{\Gamma}{\vec{\val w}}
    \in\redCand{\itEffectReturn{\it M}}$
and so $\ret{\val v}
    \in\redCand{\Gamma \vdash \itEffectReturn{\it M}}$.

\paragraph*{Case Effect}
For $p = \effect{\val v}{y}{\comput t}$, $\Pi$ has the following form :
\begin{equation*}
\begin{prooftree}
        \hypo{\itJugValue[\Pi_{\val v}]
            {\Delta}
            {\val v}
            {\it M}
        }
        \hypo{(\itJugComput[\Pi_i]
            {\Gamma_i; y : \it N_i}
            {\comput t}
            {\it E_i}
        )_{\rangeI}}
    \typeRuleEff{
        \itJugComput
            {\Delta +_{\rangeI} \Gamma_i}
            {\effect{\val v}{y}{\comput t}}
            {\itEffect{\sigma}{\it M}{\it N_i \rightarrow \it E_i}_{\rangeI}}
        }
\end{prooftree}
\end{equation*}
Let $\vec{\val w} \in \redCand{\Delta +_{\rangeI} \Gamma_i}$
and let $\rangeI$.
By context reducibility (\Cref{rem:ContextReducibility}) 
$\vec{\val w} \in \redCand{\Gamma_i}$
and $\vec{\val w} \in \redCand{\Delta}$.
By induction hypothesis on $\Pi_i$, 
$\comput t'\coloneqq \comput t\sub{\Gamma_i}{\vec{\val w}}
    \in\redCand{\it N_i \vdash \it E_i}$,
similarly by induction hypothesis on $\Pi_{\val v}$,
$\val v'\coloneqq\val v\sub{\Delta}{\vec{\val w}}
    \in\redCand{\it M}$.
This being true for all $\rangeI$,
$\comput t'\in\bigcap_{\rangeI} \redCand{y : \it N_i \vdash \it E_i}$
and therefore
$\effect{\val v'}{y}{\comput t'}
    \in\redCand{\itEffect{\sigma}{\it M}
        {\it N_i \rightarrow \it E_i}_{\rangeI}}$.
Thus, $\effect{\val v}{y}{\comput t}
    \in\redCand{\Delta+_{\rangeI}\Gamma_i \vdash
    \itEffect{\sigma}{\it M}
        {\it N_i \rightarrow \it E_i}_{\rangeI}}$.

\paragraph*{Case Handle}
For $p = \handle{\handler h}{\comput t}$, $\Pi$ has the following form :
\begin{equation*}
\begin{prooftree}
        \hypo{\itJugHandler[\Pi_1]{\Gamma_1}{\handler h}{\it E \Rightarrow \it F}}
        \hypo{\itJugComput[\Pi_2]{\Gamma_2}{\comput t}{\it E}}
    \typeRuleHandle{\itJugHandler{\Gamma_1 + \Gamma_2}
        {\handle{\handler h}{\comput t}}{\it F}}
\end{prooftree}
\end{equation*}
Let $\vec{\val w} \in \redCand{\Gamma_1 + \Gamma_2}$,
by context reducibility (\Cref{rem:ContextReducibility}) 
$\vec{\val w} \in \redCand{\Gamma_1}\cap\redCand{\Gamma_2}$.
By induction hypothesis on $\Pi_1$ (resp. on $\Pi_2$), 
$\handler h'\coloneqq\handler h\sub{\Gamma_1}{\vec{\val w}}
    \in\redCand{\it E \Rightarrow \it F}$
(resp. $\comput t'\coloneqq\comput t\sub{\Gamma_2}{\vec{\val w}}
    \in\redCand{\it E}$).
Therfore, $\handle{\handler h'}{\comput t'}
    \in \redCand{\it F}$
and so $\handle{\handler h}{\comput t} \in \redCand{\Gamma_1+\Gamma_2 \vdash \it F}$.

\paragraph{Case Handler Return}

For $p=\{\retClause{y}{\comput t}\}\cup \handler h$, $\Pi$ has the following form:
   \[\begin{prooftree}
                \hypo{\itJugComput[\Pi']
                    {\Gamma;
                        y : \it M}
                    {\comput t}
                    {\it E}}
                \hypo{\{\retClause{y}{\comput t}\}\handlerCompat \handler h}
            \typeRuleHandlerRet[2]{\itJugHandler
                {\Gamma}
                {\{\retClause{y}{\comput t}\}\cup \handler h }%
                {\itEffectReturn{\it M} \Rightarrow \it E}
            }%
        \end{prooftree}\]
Let $\vec{\val w}\in \redCand{\Gamma}$. By induction hypothesis on $\Pi'$, $\comput t'\coloneq \comput t\sub{\Gamma}{\vec{\val w}}\in \redCand{\it M\vdash \it E}$. Let $\comput s\in \redCand{\itEffectReturn{\it M}}$, there exists $\val v\in \redCand{\it M}$ such that $\comput s\reductArr* \ret{\val v}$. Therefore, $\handle{\{\retClause{y}{\comput t}\}\cup \handler h}{\comput s}\reductArr* \handle{\{\retClause{y}{\comput t}\}\cup \handler h}{\ret{\val v}}\reductArr \comput t'\sub{y}{\val v}\in \redCand{\it E}$. By \Cref{lem:ReducibilityExpansionReduction}, $ \handle{\{\retClause{y}{\comput t'}\}\cup \handler h}{\ret{\val v}}\in \redCand{\it E}$. This being true for all $\comput s \in \redCand{\itEffectReturn{\it M}}$, $\{\retClause{y}{\comput t'}\cup \handler h\}\in \redCand{\itEffectReturn{\it M}\Rightarrow \it E}$ and therefore $\{\retClause{y}{\comput t}\cup \handler h\}\in \redCand{\Gamma\vdash\itEffectReturn{\it M}\Rightarrow \it E}$.

\paragraph*{Case Handler}
For $p = \{\effectClause[\sigma]{y}{r}{\comput s}\} \cup \handler h$, $\Pi$ has the following form :
\begin{equation*}
\begin{prooftree}
        \hypo{%
            \itJugComput[\Pi_{\sigma}]
                {\Delta;
                    y : \it M;
                    r : \itSet*{\it N_i \rightarrow \it G_i}_{\rangeI}}%
                {\comput t}%
                {\it E}%
            }%
        \hypo{
        \begin{matrix}
        \{\effectClause{y}{r}{\comput t}\} \cup \handler h\\
        (%
            \itJugHandler[\Pi_i]
                {\Gamma_i}%
                {\handler h}%
                {\it F_i \Rightarrow \it G_i}%
        )_{\rangeI}
        \end{matrix}}%

    \typeRuleHandler{%
        \itJugHandler%
            {\Delta +_{\rangeI} \Gamma_i}%
            {\{\effectClause{y}{r}{\comput t}\} \cup \handler h}%
            {\itEffect{\sigma}{\it M}
                {\it N_i\rightarrow \it F_i
            }_{\rangeI} \Rightarrow \it E
            }%
    }%
\end{prooftree}
\end{equation*}
Let $\vec{\val w} \in \redCand{\Delta+_{\rangeI} \Gamma_i}=
    \redCand{\Delta}\bigcap_{\rangeI}\redCand{\Gamma_i}$.
By induction hypothesis on the $\Pi_i$, 
$\handler h' := \handler h\sub{+_{\rangeI}\Gamma_i}{\vec{\val w}}
    \in\bigcap_{\rangeI} \redCand{\it F_i \Rightarrow \it G_i}$.
Similarly, by induction hypothesis on $\Pi_{\sigma}$,
$\comput t'\coloneqq \comput t\sub{\Delta}{\vec{\val w}}\in
    \redCand{
        y : \it M;
        r : \itSet*{\it N_i \rightarrow \it G_i}_{\rangeI}
    \vdash \it E}$.
Let $\comput s \in 
    \redCand{\itEffect{\sigma}{\it M}
        {\it N_i\rightarrow \it F_i}_{\rangeI}}$.
There exists $\val v\in\redCand{\it M}$ and
$\comput u\in\bigcap_{\rangeI} \redCand{x : \it N_i \vdash \it F_i}$
such that $\comput s\reductArr* \effect{\val v}{x}{\comput u}$.
Therefore $\handle{(\{\effectClause{y}{r}{\comput t'}\} \cup \handler h')}
        \comput s 
    \reductArr* \handle{(\{\effectClause{y}{r}{\comput t'}\} \cup \handler h')}        \effect{\val v}{x}{\comput u}
    \reductArr
        \comput t'
        \sub{y}{\val v}
        \sub{r}{\abs{x}{\handle{
            (\{\effectClause{y}{r}{\comput t'}\} \cup \handler h')}
            {\comput u}}}$.

Let us show that $\abs{x}{\handle{
    (\{\effectClause{y}{r}{\comput t'}\} \cup \handler h')}{\comput u}} \in 
    \redCand{\itSet*{\it N_i \rightarrow \it G_i}_{\rangeI}}$.
For $\rangeI$, by handler weakening (\Cref{lem:HandlerWeakening})
and substitution lemma (\Cref{lem:SubstitutionLemma}) on $\Pi_i$, 
there exists $\itJugHandler[\Pi_i']{\Gamma_i}
    {\{\effectClause{y}{r}{\comput t'}\} \cup \handler h'}
    {\it F_i \Rightarrow \it G_i}$.
Since each derivation $\Pi_i'$ is smaller than the 
derivation $\Pi_i$, we can apply the induction hypothesis on the
$\Pi_i'$ and have $\{\effectClause{y}{r}{\comput t'}\} \cup \handler h'
    \in\bigcap_{\rangeI} \redCand{\it F_i \Rightarrow \it G_i}$
Therefore, by definition of $\comput u$, $\abs{x}{\handle{
    (\{\effectClause{y}{r}{\comput t'}\} \cup \handler h')}{\comput u}} \in 
    \redCand{\itSet*{\it N_i \rightarrow \it G_i}_{\rangeI}}$.

Thus $\comput t'\sub{y}{\val v}
    \sub{r}{\abs{x}{\handle{
    (\{\effectClause{y}{r}{\comput t'}\} \cup \handler h')}{\comput u}}}
    \in \redCand{\it E}$
and by reducibility expansion (\Cref{lem:ReducibilityExpansionReduction}),
$\handle{(\{\effectClause{y}{r}{\comput t'}\} \cup \handler h')}
    \comput s \in \redCand{\it E}$.
This being true for all $\comput s\in
    \redCand{\itEffect{\sigma}{\it M}
    {\it N_i\rightarrow \it F_i}_{\rangeI}}$, 
$\{\effectClause{y}{r}{\comput t'}\} \cup \handler h'
    \in\redCand{\itEffect{\sigma}{\it M}
        {\it N_i\rightarrow \it F_i}_{\rangeI} \Rightarrow \it E}$ 
and therefore $\{\effectClause{y}{r}{\comput t}\} \cup \handler h
    \in\redCand{\Delta +_{\rangeI} \Gamma_i \vdash
        \itEffect{\sigma}{\it M}
            {\it N_i\rightarrow \it F_i}_{\rangeI} \Rightarrow \it E}$.

%% file: Proofs/ElementarySimpleSubjectReduction.tex
By case analysis on the different rewrite rules:
\paragraph*{Case $\reductBeta$.}
Let $\comput t, \comput t' \in \setComput$ such that $\comput t$ is
typed and $\comput t \reductArr'_\reductBeta \comput t'$. By definition,
$\comput t \coloneqq \app{(\abs{x}{\comput u})}{\val v}$ and $\comput t'
\coloneqq \comput u\sub{x}{\val v}$. Since $\comput t$ is typable,
there exists a derivation
$\stJugComput[\Pi]{\emptyset}{\app{(\abs{x}{\comput u})}{\val v}}{\st E}$
for some type $\st E$. By
case analysis, we deduce that $\Pi$ is necessarily of the following
form:
\begin{equation*}
    \begin{prooftree}
        \hypo{\stJugComput[\Pi_{\comput u}]{ x :: \st M}{\comput u}{\st E}}
        \typeRuleAbs{\stJugValue{\emptyset}{\abs{x}{\comput u}}{\itSet{\st M \rightarrow \st E}}}
        \hypo{\stJugValue[\Pi_{\val v}]{\emptyset}{\val v}{\st M}}
        \typeRuleApp{\stJugComput{\emptyset}{\app{(\abs{x}{\comput u})}{\val v}}{\st E}}
    \end{prooftree}
\end{equation*}
By substitution lemma (\Cref{lem:SimpleSubstitutionLemma})
with $\Pi_{\comput u}$ and $\Pi_{\val v}$, there exists
$\stJugComput[\Pi']{\emptyset}{\comput u\sub{x}{\val v}}{\st E}$,
therefore concluding this case.

\paragraph*{Case $\reductFix$.}
Let $\comput t, \comput t' \in \setComput$ such that $\comput t$ is
typed and $\comput t \reductArr'_\reductFix \comput t'$. By definition,
$\comput t \coloneqq \app{(\fix{x}{\val v})}{\val w}$ and
$\comput t'
\coloneqq
\app{\val v\sub{x}{\fix{x}{\val v}}}{\val w}$. Since
$\comput t$ is typable, there exists a derivation
$\stJugComput[\Pi]{\emptyset}{\app{(\fix{x}{\val v})}{\val w}}{\st E}$
for some type $\st E$. By
case analysis, we deduce that $\Pi$ is necessarily in the
following form:
    \begin{equation*}
         \begin{prooftree}
                    \hypo{
                        \stJugValue[\Pi_1]
                            {x :: \st M \rightarrow \st E}
                            {\val v}
                            {\st M \rightarrow \st E}}
                    
                \typeRuleFix[1]{
                    \stJugValue[\Pi_{\fix*{}{}}]
                        {\emptyset}
                        {\fix{x}{\val v}}
                        {\st M \rightarrow \st E}}
                \hypo{\stJugValue[\Pi_2]{\emptyset}{\val w}{\st M}}
                \typeRuleApp{\stJugComput{\emptyset}{\app{(\fix{x}{\val v})}{\val w}}{\st E}}
            \end{prooftree}
    \end{equation*}

By the substitution lemma (\Cref{lem:SimpleSubstitutionLemma})
with $\Pi_1$ and $\Pi_{\fix*{}{}}$, there exist a derivation
$\stJugValue[\Pi'_1]{\emptyset}{\val v\sub{x}{\fix{x}{\val v}}}{\st M
\rightarrow \st E}$. Let $\Pi'$ be the following
typing derivation:
\begin{equation*}
    \begin{prooftree}
            \hypo{
                \stJugValue[\Pi'_1]
                    {\emptyset}
                    {\val v\sub{x}{\fix{x}{\val v}}}
                    {\st M \rightarrow \st E}}
            \hypo{
                \stJugValue[\Pi_{\val w}]
                    {\emptyset}
                    {\val w}
                    {\st M}}
        \typeRuleApp{
            \stJugComput
                {\emptyset}
                {\app{\val v\sub{x}{\fix{x}{\val v}}}{\val w}}
                {\st E}}
    \end{prooftree}
\end{equation*}

\paragraph*{Case $\reductCase$.}
Let $\comput t, \comput t' \in \setComput$ such that $\comput t$ is
typed and $\comput t \reductArr'_\reductBeta \comput t'$. By definition,
$\comput t \coloneqq \case{\int n}{\comput t_1, \ldots,
\comput t_m}$ and $\comput t' \coloneqq \comput t_n$ with $0 < n \leq m$.
Since $\comput t$ is typable, there exists a derivation
$\stJugComput[\Pi]{\emptyset}{\case{\int n}{\comput t_1,
\ldots, \comput t_m}}{\st E}$ for some type $\st E$. By case analysis, we deduce that $\Pi$ is
necessarily of the following form:
\begin{equation*}
     \begin{prooftree}
                        \hypo{\stJugValue{\emptyset}{\val v}{\typeInt{m}}}
                        \hypo{(\stJugComput[\Pi_i]{\emptyset}{\comput t_i}{\st E}
                            )_{ 0<i\le m}}
                    \typeRuleCase[2]{\stJugComput{\emptyset }{\case{\val v}{\comput t_1, \ldots, \comput t_m}}{\st E}}
                \end{prooftree}
\end{equation*}
One concludes this case by observing that $\Pi_n$ is typing
$\comput t'$.

\paragraph*{Case $\reductLetRet$.}
Let $\comput t, \comput t' \in \setComput$ such that $\comput t$ is
typed and $\comput t \reductArr'_\reductLetRet \comput t'$. By
definition, $\comput t \coloneqq
\letin{x}{\ret{\val v}}{\comput u}$ and $\comput t'
\coloneqq \comput u\sub{x}{\val v}$. Since $\comput t$ is typable,
there exists a derivation
$\stJugComput[\Pi]{\emptyset}{\letin{x}{\ret{\val v}}{\comput u}}{\st E}$
for some type $\st E$. By
case analysis, we deduce that $\Pi$ is necessarily of the following
form:
\begin{equation*}
    \begin{prooftree}
            \hypo{\stJugValue[\Pi_\val v]{\emptyset}{\val v}{\st M}}
            \typeRuleRet{\stJugComput{\emptyset}{\ret{\val v}}{\itEffectReturn{\st M}}}
            \infer0{\itEffectReturn{\st M} \replaceLeaf{\st M \rightarrow \st E} \st E}

            \hypo{\stJugComput[\Pi_\comput u]{ x :: \st M}{\comput u}{\st E}}
        \typeRuleLetin{\stJugComput{\emptyset}{\letin{x}{\ret{\val v}}{\comput u}}{\st E}}
    \end{prooftree}
\end{equation*}
By substitution lemma (\Cref{lem:SimpleSubstitutionLemma})
with $\Pi_{\comput u}$ and $\Pi_{\val v}$, there exists
$\stJugComput[\Pi']{\emptyset}{\comput u\sub{x}{\val v}}{\st E}$,
therefore concluding this case.

\paragraph*{Case $\reductLetEff$.}
Let $\comput t, \comput t' \in \setComput$ such that $\comput t$ is
typed and $\comput t \reductArr'_\reductLetEff \comput t'$. By
definition, $\comput t \coloneqq
\letin{x}{\effect{\val v}{y}{\comput s}}{\comput u}$ and
$\comput t' \coloneqq
\effect{\val v}{y}{\letin{x}{\comput s}{\comput t}}$. Since
$\comput t$ is typable, there exists a derivation
$\stJugComput[\Pi]{\emptyset}{\letin{x}{\effect{\val v}{y}{\comput s}}{\comput u}}{\st E}$
for some type $\st E$. By
case analysis, we deduce that $\Pi$ is necessarily of the following
form:
\begin{equation*}
    \begin{prooftree}
                \hypo{
                    \stJugValue[\Pi_{\val v}]
                        {\emptyset}
                        {\val v}
                        {\st M}
                }
                \hypo{
                    \stJugComput[\Pi_{\comput s}]
                        { y :: \st N}
                        {\comput s}
                        {\st G}}
            \typeRuleEff{
                \stJugComput
                    {\emptyset}
                    {\effect{\val v}{y}{\comput s}}
                    {\itEffect{\sigma}{\st M}{\st N \rightarrow \st G}}
            }
            \hypo{(A)}

            \hypo{
                    \stJugComput[\Pi_\comput u]
                    { x :: \st M'}
                    {\comput u}
                    {\st F}}
        \typeRuleLetin{
            \stJugComput
                {\emptyset}
                {\letin{x}{\effect{\val v}{y}{\comput s}}{\comput u}}
                {\itEffect
                    {\sigma}{\st M}
                    {\st N \rightarrow \st E}}
        }
    \end{prooftree}
\end{equation*}
\begin{equation*}
    (A) \coloneqq
    \begin{prooftree}
            \hypo{\varphi}
            \ellipsis{}{
                \st G
                \replaceLeaf{
                    \st M' 
                    \rightarrow 
                    \st F
                    }
                \st E}
        \infer1{
            \itEffect
                {\sigma}{\st M}
                {\st N \rightarrow \st G}
            \replaceLeaf{
                    \st M' 
                    \rightarrow 
                    \st F
                }
            \itEffect
                {\sigma}{\st M}
                {\st N \rightarrow \st E}}
    \end{prooftree}
\end{equation*}
Let $\Pi'$ be the following typing derivation:
\begin{equation*}
    \begin{prooftree}
        \hypo{\stJugValue[\Pi_{\val v}]{\emptyset}{\val v}{\st M}}
        \hypo{(B_i)}
    \typeRuleEff{
        \stJugComput
            {\emptyset}
            {\effect{\val v}{y}{\letin{x}{\comput s}{\comput u}}}
            {\itEffect
                    {\sigma}{\st M}
                    {\st N \rightarrow \st E}}}
    \end{prooftree}
\end{equation*}
\begin{equation*}
    (B_i) \coloneqq
    \begin{prooftree}
            \hypo{\stJugComput[\Pi_{\comput s}]{ y :: \st N}{\comput s}{\st G}}
            \hypo{\varphi}
            \ellipsis{}{
                \st G
                \replaceLeaf{
                    \st M' 
                    \rightarrow 
                    \st F
                    }
                \st E}
            \hypo{\stJugComput[\Pi_\comput u^{(y::\st N)}]
                {y::\st N; x :: \st M'}
                {\comput u}
                {\st F}}
        \typeRuleLetin{
            \stJugComput
                { y :: \st N}
                {\letin{x}{\comput s}{\comput u}}
                {\st E}}
    \end{prooftree}
\end{equation*}
One concludes this case by observing that $\Pi'$ is typing
$\comput t'$.

\paragraph*{Case $\reductHdlRet$.}
Let $\comput t, \comput t' \in \setComput$ such that $\comput t$ is
typed and $\comput t \reductArr'_\reductHdlRet \comput t'$. By
definition, $\comput t \coloneqq
\handle{\handler h}{\ret{\val v}}$ and $\comput t'
\coloneqq \comput u\sub{x}{\val v}$ with
$\retClause{\val v}{\comput u} \in \handler h$. Since
$\comput t$ is typable, there exists a derivation
$\stJugComput[\Pi]{\emptyset}{\handle{\handler h}{\ret{\val v}}}{\st E}$
for some type $\st E$. By
case analysis, we deduce that $\Pi$ is necessarily of the following
form:
\begin{equation*}
    \begin{prooftree}
            \hypo{\stJugComput[\Pi_{\comput u}]{ x :: \st M}{\comput u}{\st E}}
           \hypo{ \{\retClause{x}{\comput u}\}\handlerCompat \handler h'}
            \typeRuleHandlerRet[2]{\stJugHandler{\emptyset}{\{\retClause{x}{\comput u}\} \cup \handler h'}{\itEffectReturn{\st M} \Rightarrow \st E}}
            \hypo{\stJugValue[\Pi_\val v]{\emptyset}{\val v}{\st M}}
            \typeRuleRet{\stJugComput{\emptyset}{\ret{\val v}}{\itEffectReturn{\st M}}}
        \typeRuleHandle{\stJugComput{\emptyset}{\handle{\{\retClause{x}{\comput u}\}\cup \handler h'}{\ret{\val v}}}{\st E}}
    \end{prooftree}
\end{equation*}
By substitution lemma (\Cref{lem:SimpleSubstitutionLemma})
with $\Pi_{\comput u}$ and $\Pi_{\val v}$, there exists
$\stJugComput[\Pi']{\emptyset}{\comput u\sub{x}{\val v}}{\st E}$,
which concluding this case since $\comput t' =
\comput u\sub{x}{\val v}$.

\paragraph*{Case $\reductHdlEff$.}
Let $\comput t, \comput t' \in \setComput$ such that $\comput t$ is
typed and $\comput t \reductArr'_\reductHdlEff \comput t'$. By definition,
$\comput t \coloneqq
\handle{\handler h}{\effect{\val v}{x}{\comput u}}$ and
$\comput t' \coloneqq
\comput s\sub{y}{\val v}\sub{r}{\abs{x}{\handle{\handler h}{\comput u}}}$
with $\effectClause[\sigma]{y}{r}{\comput s} \in
\handler h$. Since $\comput t$ is typable, there exists a derivation
$\stJugComput[\Pi]{\emptyset}{\handle{\handler h}{\effect{\val v}{x}{\comput u}}}{\st E}$
for some type $\st E$. By
case analysis, we deduce that $\Pi$ is necessarily of the following
form:
\begin{equation*}
    \begin{prooftree}
            \hypo{(B)}
                \hypo{\stJugValue[\Pi_{\val v}]{\emptyset}{\val v}{\st M}}
                \hypo{
                    \stJugComput[\Pi_{\comput u}]
                        { x :: \st N}
                        {\comput u}
                        {\st F}}
            \typeRuleEff{
                \stJugComput
                    {\emptyset}
                    {\effect{\val v}{x}{\comput u}}
                    {\itEffect
                        {\sigma}{\st M}
                        {\st N \rightarrow \st F}}}
        \typeRuleHandle{
            \stJugComput
                {\emptyset}
                {\handle
                    {\{\effectClause[\sigma]{x}{r}{\comput s}\} \cup \handler h'}
                    {\effect{\val v}{x}{\comput u}}}
                {\st E}}
    \end{prooftree}
\end{equation*}

\begin{equation*}
    (B) \;\coloneqq\;
    \begin{prooftree}
            \hypo{
                \stJugComput[\Pi_{\comput s}]
                    { y :: \st M; r :: \st N
                        \rightarrow \st G}
                    {\comput s}
                    {\st E}}
        %
            \hypo{
                \stJugHandler[\Pi_\handler h']
                    {\emptyset}
                    {\handler h'}
                    {\st F \Rightarrow \st G}}
        %
        %
        \typeRuleHandler{
            \stJugHandler
                {\emptyset}
                {\{\effectClause{y}{r}{\comput s}\} \cup \handler h'}
                {\itEffect
                        {\sigma}{\st M}{\st N
                    \rightarrow \st F}
                    \Rightarrow
                \st E}}
    \end{prooftree}
\end{equation*}
where $\{\effectClause{y}{r}{\comput s}\} \handlerCompat
\handler h'$.

By substitution lemma (\Cref{lem:SimpleSubstitutionLemma})
with $\Pi_{\comput s}$ and $\Pi_{\val v}$, there exists
$\stJugComput[\Pi_{\comput s\sub{y}{\val v}}]{\Delta; r ::\st N\rightarrow \st G}
{\comput s\sub{y}{\val v}}{\st E}$.
Since $\{\effectClause{y}{r}{\comput s}\} \handlerCompat
\handler h'$, by handler weakening (\Cref{lem:HEBHandlerWeakening})
on $\Pi_\handler h'$, one has that
$\stJugHandler[\Pi_\handler h]{\emptyset}{\handler h}
{\st F \Rightarrow \st G}$.
Let $\Pi_{\handle*{}{}}$ be the following derivation:
\begin{equation*}
    \begin{prooftree}
                \hypo{
                    \stJugHandler[\Pi_\handler h^{(x::\st N)}]
                        {x::\st N}
                        {\handler h}
                        {\st F
                            \Rightarrow
                        \st G}}
                \hypo{
                    \stJugComput[\Pi_{\comput u}]
                        { x :: \st N}
                        {\comput u}
                        {\st F}}
            \typeRuleHandle{
                \stJugComput
                    { x :: \st N}
                    {\handle{\handler h}{\comput u}}
                    {\st G}
            }
        \typeRuleAbs{
            \stJugValue
                {\emptyset}
                {\abs{x}{\handle{\handler h}{\comput u}}}
                {\st N \rightarrow \st G}
        }
    \end{prooftree}
\end{equation*}
We conclude using the substitution lemma
(\Cref{lem:SimpleSubstitutionLemma}) with
$\Pi_{\comput s\sub{y}{\val v}}$ and $\Pi_{\handle*{}{}}$
yielding $\stJugComput[\Pi']{\emptyset}{\comput s\sub{y}{\val v}\sub{r}{\abs{x}{\handle{\handler h}{\comput u}}}}{\st E}$.

%% file: Definitions/NewRefinementRelationContd.tex
\begin{tabular}{c}
\framebox{$
    \begin{array}{c}
    
    \begin{prooftree}
                \hypo{\stJugComput[\Pi']
                    {\Delta; x :: \st M}
                    {\comput t}{\st E}}
            \typeRuleAbs{\stJugValue
                {\Delta}
                {\abs{x}{\comput t}}
                {\st M \rightarrow \st E}}
        \end{prooftree}
        \refrelsym
        \begin{prooftree}
                \hypo{(\itJugComput[\Xi_i]
                    {\Gamma_i; x : \it M_i }
                    {\comput t}{\it E_i})_{\rangeI}}
            \typeRuleAbs{\itJugValue
                {+_{\rangeI} \Gamma_i}
                {\abs{x}{\comput t}}
                {\itSet{\it M_i \rightarrow \it E_i}_{\rangeI}}}
        \end{prooftree}\\
        \text{if } \refrel{\Pi'}{\Xi_i} \; \forall \rangeI
    \\[0.5cm]

        \begin{prooftree}
                \hypo{\stJugValue[\Pi]
                    {\Delta}{\val v}
                    {\typeInt n}}
                \hypo{(
                    \stJugComput[\Pi_k]{\Delta}{\comput t_k}{\st E})_{ 0<k\le n}}
            \typeRuleCase[2]{\stJugComput
                {\Delta }
                {\case{\val v}{\comput t_1, \ldots, \comput t_n}}
                {\st E}}
        \end{prooftree}
    \\[0.5cm]
    \refrelsym 
    \begin{prooftree}
                \hypo{\itJugValue[\Xi]
                    {\Gamma_1}{\val v}
                    {\itSet{\typeInt m}}}
                \hypo{\itJugComput[\Xi_m]
                    {\Gamma_m}{\comput t_m}
                    {\it E}}
                \hypo{0 < m \leq n}
            \typeRuleCase[3]{\itJugComput
                {\Gamma_1 + \Gamma_m}
                {\case{\val v}{\comput t_1, \ldots, \comput t_n}}
                {\it E}}
        \end{prooftree}\\
        \text{if } \refrel{\Pi}{\Xi} \text{ and } \refrel{\Pi_m}{\Xi_m}
        \\[0.5cm]
        \begin{prooftree}
                \hypo{\stJugValue[\Pi]{\Delta}{\val v}
                    {\st M}}
            \typeRuleRet{\stJugComput
                {\Delta}{\ret{\val v}}
                {\itEffectReturn{\st M}}}
        \end{prooftree}
        \refrelsym 
        \begin{prooftree}
                \hypo{\itJugValue[\Xi]{\Gamma}{\val v}
                    {\it M}}
            \typeRuleRet{\itJugComput
                {\Gamma}{\ret{\val v}}
                {\itEffectReturn{\it M}}}
        \end{prooftree}\\
        \text{if } \refrel{\Pi}{\Xi} 
        \\[0.5cm] 
    \begin{prooftree}
                \hypo{\stJugValue[\Pi_1]{\Delta}{\val v}                    {\st M \rightarrow \st U}}
                \hypo{\stJugValue[\Pi_2]{\Delta}{\val w}{\st M}}
            \typeRuleApp{\stJugComput
                {\Delta}
                {\app{\val v}{\val w}}
                {\st U}}
        \end{prooftree}
        \refrelsym
        \begin{prooftree}
                \hypo{\itJugValue[\Xi_1]{\Gamma_1}{\val v}
                    {\itSet{\it M \rightarrow \it E}}}
                \hypo{\itJugValue[\Xi_2]{\Gamma_2}{\val w}{\it M}}
            \typeRuleApp{\itJugComput
                {\Gamma_1 + \Gamma_2}
                {\app{\val v}{\val w}}
                {\it E}}
        \end{prooftree}\\      
        \text{if } \refrel{\Pi_1}{\Xi_1} \text{ and } \refrel{\Pi_2}{\Xi_2}\\[0.5cm]

        \begin{prooftree}
            \hypo{\stJugHandler[\Pi_1]{\Delta}{\handler h}
                {\st F \Rightarrow \st E}}
            \hypo{\stJugComput[\Pi_2]{\Delta}{\comput t}{\st F}}
            \typeRuleHandle{\stJugComput
                {\Delta}
                {\handle{\handler h}{\comput t}}{\st E}}
        \end{prooftree}
    \\[0.5cm]
    \refrelsym
    \begin{prooftree}
            \hypo{\itJugHandler[\Xi_1]{\Gamma_1}{\handler h}
                {\it F \Rightarrow \it E}}
            \hypo{\itJugComput[\Xi_2]{\Gamma_2}{\comput t}{\it F}}
            \typeRuleHandle{\itJugComput
                {\Gamma_1 + \Gamma_2}
                {\handle{\handler h}{\comput t}}
                {\it E}}
        \end{prooftree}\\
        \text{if } \refrel{\Pi_1}{\Xi_1} \text{ and } \refrel{\Pi_2}{\Xi_2}
        \\[0.5cm]

    \begin{prooftree}
    \hypo{
        \stJugComput[\Pi]
            {\Delta; y :: \st M}
            {\comput t}
            {\st E}
            }
             \hypo{\{\retClause{y}{\comput t}\} \handlerCompat \handler h}
            \typeRuleHandlerRet[2]{\stJugHandler
                {\Delta'}
                {\{\retClause{y}{\comput t}\} \cup \handler h}
                {\itEffectReturn{\st M} \Rightarrow \st E}
            }
    \end{prooftree}
    \\[0.5cm]
    \refrelsym
    \begin{prooftree}
    \hypo{
        \itJugComput[\Xi]
            {\Gamma; y :: \it M}
            {\comput t}
            {\it E}
            }
             \hypo{\{\retClause{y}{\comput t}\} \handlerCompat \handler h}
            \typeRuleHandlerRet[2]{\itJugHandler
                {\Gamma}
                {\{\retClause{y}{\comput t}\} \cup \handler h}
                {\itEffectReturn{\it M} \Rightarrow \it E}
            }
    \end{prooftree}\\ 
    \text{if } \refrel{\Pi}{\Xi}
    \\[0.5cm]
    \end{array}
$}
\end{tabular}

%% file: Definitions/HEBReductionsContd.tex
\begin{tabular}{c}
    \hspace{-1cm}
\framebox{$
\begin{array}{c}
	\begin{array}{ccl}
 \begin{prooftree}
                        \hypo{\stJugValue
                            {\emptyset}{\int m}{\typeInt n}}
                        \hypo{(\Pi_k\triangleright \stJugComput
                            {\emptyset}{\comput t_k}{\st E}
                            )_{ 0<k\le n}}
                    \typeRuleCase{\stJugComput
                        {\emptyset}
                        {\case{\val v}{\comput t_1, \ldots, \comput t_n}}
                        {\st E}}
                \end{prooftree}
    & \rightsquigarrow & \stJugComput[\Pi_m]{\emptyset}{\comput t_m}{\st E}\\ \\

		\scalebox{0.9}{\begin{prooftree}
				\hypo{\stJugValue[\Pi_\val v]{\emptyset}{\val v}{\st M}}
				\typeRuleRet{\stJugComput{\emptyset}{\ret{\val v}}{\itEffectReturn{\st M}}}
				\infer0{\itEffectReturn{\st M} \replaceLeaf{\st M \rightarrow \st F} \st F}
				\hypo{\stJugComput[\Pi_{\comput u}]{ x :: \st M}{\comput u}{\st F}}
				\typeRuleLetin{\stJugComput{\emptyset}{\letin{x}{\ret{\val v}}{\comput u}}{\st F}}
		\end{prooftree}}
		
		& \rightsquigarrow & \stJugComput[\Pi_{\comput u}\sub{x}{\Pi_\val v}]{ \emptyset}{\comput u\sub{x}{\val v}}{\st F}\\ \\

		\scalebox{0.9}{\begin{prooftree}
				\hypo{\stJugComput[\Pi_{\comput u}]{ x :: \st M}{\comput u}{\st E}}
				\typeRuleHandler[1]{\stJugHandler{\emptyset}{\{\retClause{x}{\comput u}\} }{\itEffectReturn{\st M} \Rightarrow \st E}}
				\hypo{\stJugValue[\Pi_\val v]{\emptyset}{\val v}{\st M}}
				\typeRuleRet{\stJugComput{\emptyset}{\ret{\val v}}{\itEffectReturn{\st M}}}
				\typeRuleHandle{\stJugComput{\emptyset}{\handle{\{\retClause{x}{\comput u}\} }{\ret{\val v}}}{\st E}}
		\end{prooftree}}
		& \rightsquigarrow & \stJugComput[\Pi_{\comput u}\sub{x}{\Pi_\val v}]{\emptyset}{\comput u\sub{x}{\val v}}{\st E}\\ \\
		
	\end{array}\\

	\begin{array}{c}
        \scalebox{0.9}{\begin{prooftree}
			\hypo{
				\stJugValue[\Pi_{\val v}]
				{\Delta}
				{\val v}
				{\st M}
			}
			\hypo{
				\stJugComput[\Pi_{\comput s}]
				{\Delta; y : \st N}
				{\comput s}
				{\st G}}
			\typeRuleEff{
				\stJugComput
				{\Delta}
				{\effect{\val v}{y}{\comput s}}
				{\itEffect{\sigma}{\st M}{\st N \rightarrow \st G}}
			}

			\hypo{\varphi}
			\ellipsis{}{
				\st G
				\replaceLeaf{
					\st M' 
					\rightarrow 
					\st F
				}
				\st E}
			\infer1{
				\itEffect
				{\sigma}{\st M}
				{\st N \rightarrow \st G}
				\replaceLeaf{
					\st M' 
					\rightarrow 
					\st F
				}
				\itEffect
				{\sigma}{\st M}
				{\st N \rightarrow \st E}}

			\hypo{
				\stJugComput[\Pi_\comput u]
				{\Delta; x : \st M'}
				{\comput u}
				{\st F}}
			\typeRuleLetin{
				\stJugComput
				{\Delta}
				{\letin{x}{\effect{\val v}{y}{\comput s}}{\comput u}}
				{\itEffect
					{\sigma}{\st M}
					{\st N \rightarrow \st E}}
			}
		\end{prooftree}}\\
		\rightsquigarrow
		\begin{prooftree}
			\hypo{\stJugValue[\Pi_{\val v}]{\Delta}{\val v}{\st M}}
			\hypo{\stJugComput[\Pi_{\comput s}]{\Delta; y : \st N}{\comput s}{\st G}}
			\hypo{\varphi}
			\ellipsis{}{
				\st G
				\replaceLeaf{
					\st M' 
					\rightarrow 
					\st F
				}
				\st E}
			\hypo{\stJugComput[\Pi_\comput u^{(y::\st N)}]
				{\Delta;y::\st N; x : \st M'}
				{\comput u}
				{\st F}}
			\typeRuleLetin{
				\stJugComput
				{\Delta; y : \st N}
				{\letin{x}{\comput s}{\comput u}}
				{\st E}}
			\typeRuleEff{
				\stJugComput
				{\Delta}
				{\effect{\val v}{y}{\letin{x}{\comput s}{\comput u}}}
				{\itEffect
					{\sigma}{\st M}
					{\st N \rightarrow \st E}}}
		\end{prooftree}\\  \\
	
		\begin{prooftree}
			\hypo{
				\stJugComput[\Pi_{\comput s}]
				{   x :: \st M ;
					r :: \st N \rightarrow \st G}
				{\comput s}
				{\st E}}

			\hypo{
				\begin{matrix}
					\{\effectClause{x}{r}{\comput s}\} \handlerCompat \handler h\\
					(
					\itJugHandler[\Pi_{\handler h}]
					{\emptyset}
					{\handler h}
					{\st F \Rightarrow \st G}
					)
				\end{matrix}  
			}
			\typeRuleHandler{
				\itJugHandler
				{\emptyset }
				{\{\effectClause{x}{r}{\comput s}\} \cup \handler h}
				{\itEffect{\sigma}{\it M}
					{\st N \rightarrow \st F}
					\Rightarrow \st E}}
			\hypo{\stJugValue[\Pi_{\val v}]
				{\emptyset}
				{\val v}{\st M}}
			\hypo{
				\stJugComput[\Pi_{\comput u}]
				{y :: \st N}
				{\comput u}
				{\st F}}
			\typeRuleEff{
				\stJugComput
				{ \emptyset}
				{\effect{\val v}{y}{\comput u}}
				{\itEffect
					{\sigma}{\st M}
					{\st N \rightarrow \st F}}}
			\typeRuleHandle{
				\stJugComput
				{\emptyset}
				{\handle
					{\{\effectClause[\sigma]{x}{r}{\comput s}\} \cup \handler h}
					{\effect{\val v}{y}{\comput u}}}
				{\st E}}
		\end{prooftree}\\ \\

		\rightsquigarrow \;\;\stJugComput[\Pi_\comput{s}\sub{x}{\Pi_\val v^{(r::\st N\rightarrow\st G)}}\sub{r}{\Pi_{\val{\lambda y}}}]{\emptyset }{\comput s \sub{x}{\val v}\sub{r}{\abs{y}{\handle{\handler h}{\comput u}}}}{\st E}\\ \\
	\text{With  }
		\begin{prooftree}
			\hypo{
				\itJugHandler[\Xi_{\handler h}^{(y::\st N)}]
				{y::\st N}
				{\handler h}
				{\st F
					\Rightarrow
					\st G}}
			\hypo{
				\itJugComput[\Xi_{\comput u}]
				{ y :: \st N}
				{\comput u}
				{\st F}}
			\typeRuleHandle{
				\itJugComput
				{ y :: \st N}
				{\handle{\handler h}{\comput u}}
				{\st G}
			}
			\typeRuleAbs{
				\itJugValue[\Pi_{\val{\lambda y}}]
				{\emptyset}
				{\abs{y}{\handle{\handler h}{\comput u}}}
				{\st N \rightarrow \st G}
			}
		\end{prooftree}
    \end{array}
\end{array}	
$}\end{tabular}

%% file: Proofs/NewRefinementElementarySubjectExpansion.tex
By case analysis on the different rewrite rules. 

\paragraph*{Case $\reductBeta$.}
Let $\comput t, \comput t' \in \setComput$ such that $\comput t'$ is
typed and $\comput t \reductArr_\reductBeta \comput t'$. By definition,
$\comput t \coloneqq \app{(\abs{x}{\comput u})}{\val v}$ and $\comput t'
\coloneqq \comput u\sub{x}{\val v}$. By hypothesis,
$\comput t$ is typable with a \HEB\ type derivation $\Pi$ that, by case analysis, has to have the following shape. 

\begin{equation*}
    \begin{prooftree}
        \hypo{\stJugComput[\Pi_{\comput u}]{ x ::\st M}{\comput u}{\st U}}
        \typeRuleAbs{\stJugValue{\emptyset}{\abs{x}{\comput t}}{\st M\rightarrow \st U}}
        \hypo{\stJugValue[\Pi_{\val v}]{\emptyset}{\val v}{\st M}}
        \typeRuleApp{\stJugComput{\emptyset}{\app{(\abs{x}{\comput u})}{\val v}}{\st U}}
    \end{prooftree}
\end{equation*}
Also by hypothesis, there exist two derivations
$\itJugComput[\Xi']{\emptyset}{\comput u\sub{x}{\val v}}{\it E}$ and $\stJugComput[\Pi']{\emptyset}{\comput u\sub{x}{\val v}}{\st U}$ such that $\refrel{\Pi'}{\Xi'}$.
Since by hypothesis $\Pi\rightsquigarrow \Pi'$, by definition of $\rightsquigarrow$ is clear that $\Pi'=\Pi_\comput u\sub{x}{\Pi_\val v}$.
We can then apply anti-substitution lemma
(\Cref{lem:antiSubstLemma-ref}), obtaining that there exist a type $\it M$ and type derivations
$\itJugValue[\Xi_\val v]{\emptyset}{\val v}{\it M}$ and 
$\itJugComput[\Xi_\comput u]{ x : \it M}{\comput u}{\it E}$, where $\refrel{\Pi_\val v}{\Xi_\val v}$ and $\refrel{\Pi_\comput u}{\Xi_\comput u}$.
Therefore, we can construct the following $\Xi$:

\begin{equation*}
    \begin{prooftree}
        \hypo{\itJugComput[\Xi_{\comput u}]{ x : \it M}{\comput u}{\it E}}
        \typeRuleAbs{\itJugValue{\emptyset}{\abs{x}{\comput u}}{\itSet{\it M \rightarrow \it E}}}
        \hypo{\itJugValue[\Xi_{\val v}]{\emptyset}{\val v}{\it M}}
        \typeRuleApp{\itJugComput{\emptyset}{\app{(\abs{x}{\comput u})}{\val v}}{\it E}}
    \end{prooftree}
\end{equation*}
Finally, is easy to see that $\refrel{\Pi}{\Xi}$.

\paragraph*{Case $\reductFix$.}
Let $\comput t, \comput t' \in \setComput$ such that $\comput t'$ is
typed and $\comput t \reductArr_\reductFix \comput t'$. By definition,
$\comput t \coloneqq \app{(\fix{x}{\val v})}{\val w}$ and
$\comput t'\coloneqq
\app{\val v\sub{x}{\fix{x}{\val v}}}{\val w}$. 

Since $\comput t$ is simply typable, $\Pi$ has the following shape.

\begin{equation*}
   \begin{prooftree}
        \hypo{
            \stJugValue[\Pi_\val v]
                                { x :: \st M \rightarrow \st U}
                                {\val v}
                                {\st M \rightarrow \st U}}
                    \typeRuleFix[1]{
                        \stJugValue[\Pi_{\fix*{}{}}]
                            {\emptyset}
                            {\fix{x}{\val v}}
                            {\st M \rightarrow \st U}}
        \hypo{\stJugValue{\emptyset}{\val w}{\st M}}
        \typeRuleApp{\stJugComput{\emptyset}{(\fix{x}{\val v})\val w}{\st U}}

    \end{prooftree} 
\end{equation*}

Since
$\comput t'$ is typable, there exists a derivation
$\itJugComput[\Xi']{\emptyset}
    {\app{\val v\sub{x}{\fix{x}{\val v}}}{\val w}}
    {\it E}$
for some refinement type $\it E$. By
case analysis, we deduce that $\Xi'$ is necessarily  the
following form:

\begin{equation*}
\begin{prooftree}
        \hypo{
            \itJugValue[\Xi'_{\val v}]
                {\emptyset}
                {\val v\sub{x}{\fix{x}{\val v}}}
                {\itSet*{\it M \rightarrow \it E}}}
        \hypo{
            \itJugValue[\Xi'_{\val w}]
                {\emptyset}
                {\val w}
                {\it M}}
    \typeRuleApp{
        \itJugComput
            {\emptyset}
            {\app{\val v\sub{x}{\fix{x}{\val v}}}{\val w}}
                {\it E}}
\end{prooftree}
\end{equation*}
Also, $\Pi'$ has to have the following shape
\begin{equation*}
\begin{prooftree}
        \hypo{
            \stJugValue[\Pi'_{\val v}]
                {\emptyset}
                {\val v\sub{x}{\fix{x}{\val v}}}
                {\st M \rightarrow \st U}}
        \hypo{
            \stJugValue[\Pi'_{\val w}]
                {\emptyset}
                {\val w}
                {\st M}}
    \typeRuleApp{
        \stJugComput
            {\emptyset}
            {\app{\val v\sub{x}{\fix{x}{\val v}}}{\val w}}
                {\st U}}
\end{prooftree}
\end{equation*}
Since by hypothesis $\Pi\rightsquigarrow \Pi'$, by definition of $\rightsquigarrow$ we also have that $\Pi'_\val v=\Pi_\val v\sub{x}{\Pi_{\fix*{}{}}}$ and $\refrel{\Pi'_\val w}{\Xi'_\val w}$.

We can then apply the anti-substitution lemma
(\Cref{lem:antiSubstLemma-ref}) on $\Pi'_{\val v}$. 
One then deduces that
there exist a refinement type $\it N$ and derivations
$\itJugValue[\Xi_{\val v}]
    { x : \it N }
    {\val v}{\itSet*{\it M\rightarrow \it E}}$
and $\itJugValue[\Pi_{\fix*{}{}}]
    {\emptyset}
    {\fix{x}{\val v}}{\it N}$,
where $\refrel{\Pi_\val v}{\Xi_\val v}$ and $\refrel{\Pi_{\fix*{}{}}}{\Xi_{\fix*{}{}}}$ .
Let $\Xi$ be the following typing derivation, if $\it N\neq \itSetEmpty$:

\begin{equation*}
\begin{prooftree}
            \hypo{
                \itJugValue[\Xi_{\val v}]
                    { x : \it N}
                    {\val v}{\itSet*{\it M\rightarrow \it E}}}

            \hypo{
                \itJugValue[\Xi_{\fix*{}{}}]
                    {\emptyset}
                    {\fix{x}{\val v}}{\it N}}
        \typeRuleFixRec{
            \itJugValue[\Xi^1_{\fix*{}{}}]
                {\emptyset}
                {\fix{x}{\val v}}
                {\itSet{\it M \rightarrow \it E}}}
        \hypo{
            \itJugValue[\Xi'_{\val w}]
                {\emptyset}
                {\val w}
                {\it M}}
    \typeRuleApp{
        \itJugComput
            {\emptyset}
            {\app{(\fix{x}{\val v})}{\val w}}
            {\it E}}
\end{prooftree}
\end{equation*}
Notice how by definition of the $\refrelsym$ on the fix rule, we obtain that 
$\refrel{\Pi_{\fix*{}{}}}{\Xi^1_{\fix*{}{}}}$, so finally $\refrel{\Pi}{\Xi}$.
Otherwise, if $\it N=\itSetEmpty$, we can conlude similarly, by using the rule \ruleNameFixBase.

\paragraph*{Case $\reductCase$.}
Let $\comput t, \comput t' \in \setComput$ such that $\comput t'$ is
typed and $\comput t \reductArr_\reductCase \comput t'$. By definition,
$\comput t \coloneqq \case{\int n}{\comput t_1, \ldots,
\comput t_m}$ and $\comput t' \coloneqq \comput t_n$ with $0 < n \leq m$.
Since $\comput t$ is simply typable, $\Pi$ has the following shape.
\begin{equation*}
    \begin{prooftree}
                        \hypo{\stJugValue
                            {\emptyset}{\int n}{\typeInt m}}
                        \hypo{(\Pi_k\triangleright \stJugComput
                            {\emptyset}{\comput t_k}{\st U}
                            )_{ 0<k\le m}}
                    \typeRuleCase{\stJugComput
                        {\emptyset}
                        {\case{\val v}{\comput t_1, \ldots, \comput t_m}}
                        {\st U}}
                \end{prooftree}
\end{equation*}
 
Since $\comput t'$ is typable, we have $\itJugComput[\Xi']{\emptyset}{\comput t_n}{\it E}$
for some type $\it E$. Furthermore, it exists a $\Pi'$ such that $\refrel{\stJugComput[\Pi']{\emptyset}{\comput t_n}{\st U}}{\Xi'}$ and $\Pi\rightsquigarrow \Pi'$.
Let $\Xi$ be the following typing derivation:

\begin{equation*}
    \begin{prooftree}
        \typeRuleInt{\itJugValue{\emptyset}{\int n}{\itSet{\typeInt{n}}}}
        
        \hypo{\itJugComput[\Pi_n]{\emptyset}{\comput t_n}{\it E}}
                \hypo{0 < n \leq m}

        \typeRuleCase[3]{\itJugComput{\emptyset}{\case{\int n}{\comput t_1, \ldots, \comput t_m}}{\it E}}
    \end{prooftree}
\end{equation*}
It is clear that $\refrel{\Pi}{\Xi}$.

\paragraph*{Case $\reductLetRet$.}
Let $\comput t, \comput t' \in \setComput$ such that $\comput t'$ is
typed and $\comput t \reductArr_\reductLetRet \comput t'$. By
definition, $\comput t \coloneqq
\letin{x}{\ret{\val v}}{\comput u}$ and $\comput t'
\coloneqq \comput u\sub{x}{\val v}$. The derivation $\Pi$ is in the following form:
\begin{equation*}
    \begin{prooftree}
        \hypo{\itJugValue[\Pi_\val v]{\emptyset}{\val v}{\st M}}
        \typeRuleRet{\stJugComput{\emptyset}{\ret{\val v}}{\itEffectReturn{\st M}}}
        \infer0{\itEffectReturn{\st M} \replaceLeaf{\st M \rightarrow \st F} \st F}
        \hypo{\itJugComput[\Pi_{\comput u}]{ x : \st M}{\comput u}{\st F}}
    \typeRuleLetin{\itJugComput{\emptyset}{\letin{x}{\ret{\val v}}{\comput u}}{\st F}}
    \end{prooftree}
\end{equation*}

Also,
there exist derivations $\refrel{\stJugComput[\Pi']{\emptyset}{\comput u\sub{x}{\val v}}{\st F}}{\itJugComput[\Xi']{\emptyset}{\comput u\sub{x}{\val v}}{\it F}}$.
Since by hypothesis $\Pi\rightsquigarrow \Pi'$, by definition of $\rightsquigarrow$ we have that $\Pi'=\Pi_\comput u\sub{x}{\Pi_\val v}$.
By anti-substitution lemma (\Cref{lem:antiSubstLemma-ref}), there exist a refinement type $\it M$
and derivations $\itJugValue[\Xi_\val v]{\emptyset}{\val v}{\it M}$
and $\itJugComput[\Xi_{\comput u}]{ x:\it M }{\comput u}{\it F}$,
where $\refrel{\Pi_\val v}{\Xi_\val v}$ and $\refrel{\Pi_\comput u}{\Xi_\comput u}$.
Thus, let $\Xi$ be the following derivation :
\begin{equation*}
\begin{prooftree}
        \hypo{\itJugValue[\Xi_\val v]{\emptyset}{\val v}{\it M}}
        \typeRuleRet{\itJugComput{\emptyset}{\ret{\val v}}{\itEffectReturn{\it M}}}
        \infer0{\itEffectReturn{\it M} \replaceLeaf{\it M \rightarrow \it F} \it F}
        \hypo{\itJugComput[\Xi_{\comput u}]{ x : \it M}{\comput u}{\it F}}
    \typeRuleLetin{\itJugComput{\itJugComput[\Xi']{\emptyset}{\comput u\sub{x}{\val v}}{\it F}}{\letin{x}{\ret{\val v}}{\comput u}}{\it F}}
\end{prooftree}
\end{equation*}
Finally, we can see that $\refrel{\Pi}{\Xi}$.

\paragraph*{Case $\reductLetEff$.}
Let $\comput t, \comput t' \in \setComput$ such that $\comput t'$ is
typed and $\comput t \reductArr_\reductLetEff \comput t'$. By
definition, $\comput t \coloneqq
\letin{x}{\effect{\val v}{y}{\comput s}}{\comput u}$ and
$\comput t' \coloneqq
\effect{\val v}{y}{\letin{x}{\comput s}{\comput t}}$. There exists a derivation
$\itJugComput[\Xi']{\emptyset}{\effect{\val v}{y}{\letin{x}{\comput s}{\comput t}}}{\it E}$
for some type $\it E$. By
case analysis, we deduce that $\Xi'$ is necessarily of the following
form:

\begin{equation*}
\begin{prooftree}
    \hypo{\itJugValue[\Xi_{\val v}]{\emptyset}{\val v}{\it M}}
    \hypo{A_i}
    \delims{\left(}{\right)_{\rangeI}}
\typeRuleEff[2]{
    \itJugComput
        {\emptyset}
        {\effect{\val v}{y}{\letin{x}{\comput s}{\comput u}}}
        {\itEffect
                {\sigma}{\it M}
                {\it N_i \rightarrow \it E_i}_{\rangeI}}}
\end{prooftree}
\end{equation*}
\begin{equation*}
    (A_i) \coloneqq
    \begin{prooftree}
        \hypo{\itJugComput[\Xi^i_{\comput s}]{ y : \it N_i}{\comput s}{\it G_i}}
    \hypo{B_i}
    \delims{\left(}{\right)_{\rangeI}}

        \hypo{(\itJugComput[\Xi_k^i]{x : \it M_k^i}{\comput u}{\it F_k^i})_{\rangeK_i}}
    \typeRuleLetin{
        \itJugComput
            { y : \it N_i}
            {\letin{x}{\comput s}{\comput u}}
            {\it E_i}}
    \end{prooftree}
\end{equation*}
\begin{equation*}
    (B_i) \coloneqq
    \begin{prooftree}
       \hypo{\varphi_i}
        \ellipsis{}{
            \it G_i
            \replaceLeaf{\it M_k^i \rightarrow \it F_k^i}[\rangeK_i]
            \it E_i}
    \end{prooftree}
\end{equation*}
Since $y$ is not a free variable of $\comput u$, by context simplification (\Cref{rem:ContextSimplification}) on each $\Xi_k^i$ one can derive $\itJugComput[\Xi_k^i]{x : \it M_k^i}{\comput u}{\it F_k^i}$.
Furthermore, it exists a $\stJugComput[\Pi']{\emptyset}{\effect{\val v}{y}{\letin{x}{\comput s}{\comput u}}}{\itEffect
                {\sigma}{\st M}
                {\st N \rightarrow \st E}}$.
Let $\Xi$ be the following typing derivation:
\begin{equation*}
    \begin{prooftree}
            \hypo{(A)}
            \hypo{(B)}
            \hypo{(\itJugComput[\Xi_k^i]
                { x : \it M_k^i}
                {\comput u}{\it F_k^i})_{\rangeK_i, \rangeI}}
        \typeRuleLetin{
            \itJugComput
                {\emptyset}
                {\letin{x}{\effect{\val v}{y}{\comput s}}{\comput u}}
                {\itEffect
                    {\sigma}{\it M}
                    {\it N_i \rightarrow \it E_i}_{\rangeI}}
        }
    \end{prooftree}
\end{equation*}
\begin{equation*}
    (A) \coloneqq
    \begin{prooftree}
            \hypo{
                    \itJugValue[\Xi_{\val v}]
                        {\emptyset}
                        {\val v}
                        {\it M}
                }
                \hypo{(
                    \itJugComput[\Xi^i_{\comput s}]
                        { y : \it N_i}
                        {\comput s}
                        {\it G_i}
                )_{\rangeI}}
            \typeRuleEff[2]{
                \itJugComput
                    {\emptyset}
                    {\effect{\val v}{y}{\comput s}}
                    {\itEffect{\sigma}{\it M}{\it N_i \rightarrow \it G_i}_{\rangeI}}
            }
    \end{prooftree}
\end{equation*}
\begin{equation*}
    (B) \coloneqq
    \begin{prooftree}
            \hypo{\varphi_i}
            \ellipsis{}{
                \it G_i
                \replaceLeaf{\it N_i \rightarrow \it F_k^i}[\rangeK_i]
                \it E_i}
            \delims{\left(}{\right)_{\rangeI}}
        \infer1{
            \itEffect
                    {\sigma}{\it M}
                    {\it N_i \rightarrow \it G_i}_{\rangeI}
        \replaceLeaf{
            \it M_k^i \rightarrow 
            \it F_k^i}[\rangeK_i, \rangeI]
            \itEffect
                {\sigma}{\it M}
                {\it N_i \rightarrow \it E_i}_{\rangeI}}
    \end{prooftree}
\end{equation*}
By looking at the definition of this reduction case in \Cref{def:HEBDerivationReductionContd},  from $\refrel{\Pi'}{\Xi'}$ we obtain easily that $\refrel{\Pi}{\Xi}$. In particular, notice that the subderivation $\Pi_\comput u$ of $\Pi'$ is such that $\refrel{\Pi_\comput u^{(y::\st N)}}{\Xi^i_k}$ for each $\rangeI$, $\rangeK$. By \Cref{prop:HEBClosedRefinement}, we obtain $\refrel{\Pi_\comput u}{\Xi^i_k}$.

\paragraph*{Case $\reductHdlRet$.}
Let $\comput t, \comput t' \in \setComput$ such that $\comput t'$ is
typed and $\comput t \reductArr_\reductHdlRet \comput t'$. By
definition, $\comput t \coloneqq
\handle{\handler h}{\ret{\val v}}$ and $\comput t'
\coloneqq \comput u\sub{x}{\val v}$ with
$\handler h = \{\retClause{x}{\comput u}\} \cup \handler h'$. 

Since $\comput t$ is simply typable, by case analysis its derivation $\Pi$ has the following form. 
\begin{equation*}
    \begin{prooftree}
            \hypo{\itJugComput[\Pi_{\comput u}]{ x : \st M}{\comput u}{\st E}}
            \hypo{\{\retClause{x}{\comput u}\} \handlerCompat \handler h}
            \typeRuleHandlerRet[2]{\itJugHandler{\emptyset}{\{\retClause{x}{\comput u}\} \cup \handler h}{\itEffectReturn{\st M} \Rightarrow \st E}}
            \hypo{\itJugValue[\Pi_\val v]{\emptyset}{\val v}{\st M}}
            \typeRuleRet{\itJugComput{\emptyset}{\ret{\val v}}{\itEffectReturn{\st M}}}
        \typeRuleHandle{\itJugComput{\emptyset}{\handle{\{\retClause{x}{\comput u}\}\cup \handler h }{\ret{\val v}}}{\st E}}
    \end{prooftree}
\end{equation*}
There exist two derivations 
$\itJugComput[\Xi']{\emptyset}{\comput u\sub{x}{\val v}}{\it E}$ and $\stJugComput[\Pi']{\emptyset}{\comput u\sub{x}{\val v}}{\st E}$ such that $\refrel{\Pi'}{\Xi'}$. Since by hypothesis $\Pi\rightsquigarrow \Pi'$, by definition of $\rightsquigarrow$ is clear that $\Pi'=\Pi_\comput u\sub{x}{\Pi_\val v}$.
By anti-substitution lemma (\Cref{lem:antiSubstLemma-ref}), there exist
a type $\it M$ and derivations 
$\itJugValue[\Xi_\val v]{\emptyset}{\val v}{\it M}$
and $\itJugComput[\Xi_{\comput u}]{x : \it M}{\comput u}{\it E}$,
where $\refrel{\Pi_\comput u}{\Xi_\comput u}$ and $\refrel{\Pi_\val v}{\Xi_\val v}$.
Therefore, let $\Xi$ be the following derivation :
\begin{equation*}
    \begin{prooftree}
            \hypo{\itJugComput[\Pi_{\comput u}]{ x : \it M}{\comput u}{\it E}}
            \hypo{\{\retClause{x}{\comput u}\} \handlerCompat \handler h}
            \typeRuleHandlerRet[2]{\itJugHandler{\emptyset}{\{\retClause{x}{\comput u}\}\cup \handler h }{\itEffectReturn{\it M} \Rightarrow \it E}}
            \hypo{\itJugValue[\Pi_\val v]{\emptyset}{\val v}{\it M}}
            \typeRuleRet{\itJugComput{\emptyset}{\ret{\val v}}{\itEffectReturn{\it M}}}
        \typeRuleHandle{\itJugComput{\emptyset}{\handle{\{\retClause{x}{\comput u}\} \cup \handler h}{\ret{\val v}}}{\it E}}
    \end{prooftree}
\end{equation*}
Finally, is easy to notice that $\refrel{\Pi}{\Xi}$.

\paragraph*{Case $\reductHdlEff$.}
Let $\comput t, \comput t' \in \setComput$ such that $\comput t'$ is
typed and $\comput t \reductArr_\reductHdlEff \comput t'$.

By definition,
$\comput t \coloneqq
\handle{\handler h}{\effect{\val v}{x}{\comput u}}$ and
$\comput t' \coloneqq
\comput s\sub{x}{\val v}\sub{r}{\abs{y}{\handle{\handler h}{\comput u}}}$
with $\effectClause[\sigma]{x}{r}{\comput s} \in
\handler h$. There exists two derivations
$\itJugComput[\Xi']{\emptyset}{\comput s\sub{x}{\val v}\sub{r}{\abs{y}{\handle{\handler h}{\comput u}}}}{\it E}$ and $\stJugComput[\Pi']{\emptyset}{\comput s\sub{x}{\val v}\sub{r}{\abs{y}{\handle{\handler h}{\comput u}}}}{\st E}$ such that $\refrel{\Pi'}{\Xi'}$. 

The derivation $\Pi$ has the following form. We set $\handler h = \{\effectClause{x}{r}{\comput s}\} \cup \handler h'$.
\begin{equation*}
    \begin{prooftree}
        \hypo{
                \stJugComput[\Pi_{\comput s}]
                    {   x :: \st M ;
                        r :: \st N \rightarrow \st G}
                    {\comput s}
                    {\st E}}

            \hypo{
               (A)
            }
        \typeRuleHandler{
            \stJugHandler
                {\emptyset }
                {\{\effectClause{x}{r}{\comput s}\} \cup \handler h'}
                {\itEffect{\sigma}{\it M}
                        {\st N \rightarrow \st F}
                \Rightarrow \st E}}
                \hypo{\stJugValue[\Pi_{\val v}]
                    {\emptyset}
                    {\val v}{\st M}}
                \hypo{
                    \stJugComput[\Pi_{\comput u}]
                        {y :: \st N}
                        {\comput u}
                        {\st F}}
            \typeRuleEff{
                \stJugComput
                    { \emptyset}
                    {\effect{\val v}{y}{\comput u}}
                    {\itEffect
                        {\sigma}{\st M}
                        {\st N \rightarrow \st F}}}
        \typeRuleHandle{
            \stJugComput
                {\emptyset}
                {\handle
                    {\{\effectClause[\sigma]{x}{r}{\comput s}\} \cup \handler h'}
                    {\effect{\val v}{y}{\comput u}}}
                {\st E}}
    \end{prooftree}
\end{equation*}
\[
(A)::=  \begin{matrix}
                \{\effectClause{x}{r}{\comput s}\} \handlerCompat \handler h'\\
                (
                \itJugHandler[\Pi_{\handler h'}]
                    {\emptyset}
                    {\handler h'}
                    {\st F \Rightarrow \st G}
                    )
            \end{matrix}  
\]

Since by hypothesis $\Pi\rightsquigarrow \Pi'$, by definition of $\rightsquigarrow$ we have that $\Pi'=\Pi_\comput s\sub{x}{\Pi_\val v}\sub{r}{\Pi_\val{\lambda y}}$, with $\stJugValue[\Pi_\val{\lambda y}]{\emptyset}{\lambda y.\handle{\handler h}{\comput u}}{\st N\rightarrow \st G}$ being the following derivation, as defined in \Cref{def:HEBDerivationReductionContd}.

\begin{equation*}
    \begin{prooftree}
                \hypo{
                    \stJugHandler[\Pi_{\handler h'}^{(y::\st N)(\sigma)}]
                        {y::\st N}
                        {\handler h}
                        {\st F
                            \Rightarrow
                        \st G}}
                \hypo{
                    \stJugComput[\Pi_{\comput u}]
                        { y :: \st N}
                        {\comput u}
                        {\st F}}
            \typeRuleHandle{
                \stJugComput
                    { y :: \st N}
                    {\handle{\handler h}{\comput u}}
                    {\st G}
            }
        \typeRuleAbs{
            \stJugValue
                {\emptyset}
                {\abs{y}{\handle{\handler h}{\comput u}}}
                {\st N \rightarrow \st G}
        }
    \end{prooftree}
\end{equation*}

We also have $\stJug[\Pi_\comput s\sub{x}{\Pi_\val{v}}]{\Delta_\comput s;r::\st N\rightarrow \st G}{\comput s \sub{x}{\val v}}{\st E}$.
We can then apply the anti-substitution lemma (\Cref{lem:antiSubstLemma-ref}) on $\Xi'$ with $\Pi_\comput s\sub{x}{\Pi_\val v}$ and 
$\Pi_\val{\lambda y}$, therefore there exist a type $\it L$ and derivations 
$\itJugValue[\Xi_{\val {\lambda y}}]{\emptyset}{\abs{y}{\handle{\handler h}{\comput u}}}{\it L}$ and 
$\itJugComput[\Xi_{\comput s}]{ r : \it L}{\comput s\sub{x}{\val v}}{\it E}$,
where $\refrel{\Pi_\comput s\sub{x}{\Pi_\val v}}{\Xi_{\comput s}}$ and $\refrel{\Pi_\val{\lambda y}}{\Xi_{\val {\lambda y}}}$.
By case analysis, we deduce that $\Xi_{\val {\lambda y}}$ is necessarily of the following
form:

\begin{equation*}
    \begin{prooftree}
                \hypo{
                    \itJugHandler[\Xi_{\handler h}^i]
                        {y:\it N_i^\handler h}
                        {\handler h}
                        {\it F_i
                            \Rightarrow
                        \it G_i}}
                \hypo{
                    \itJugComput[\Xi_{\comput u}^i]
                        { y : \it N_i}
                        {\comput u}
                        {\it F_i}}
            \typeRuleHandle{
                \itJugComput
                    { y : \it N_i}
                    {\handle{\handler h}{\comput u}}
                    {\it G_i}
            }
            \delims{\left(}{\right)_{\rangeI}}
        \typeRuleAbs{
            \itJugValue
                {\emptyset}
                {\abs{y}{\handle{\handler h}{\comput u}}}
                {\itSet*{\it N_i \rightarrow \it G_i}_{\rangeI}}
        }
    \end{prooftree}
\end{equation*}

where $\it L = \itSet{\it N_i \rightarrow \it G_i}_{\rangeI}$. Therefore, using \Cref{prop:HEBClosedRefinement}, we obtain $\refrel{\Pi^{(\sigma)}_{\handler h'}}{\Xi^i_\handler h}$ and $\refrel{\Pi_\comput u}{\Xi^i_\comput u}$ for each $\rangeI$.
Since $y$ is not a free variable of $\handler h$, by context simplification (\Cref{rem:ContextSimplification}) on $\Xi_{\handler h}^i$ one can derive 
$\itJugHandler[\Xi^i_{\handler h}]
    {\emptyset}
    {\handler h}
    {\it F_i \Rightarrow \it G_i}$.
Finally, by \Cref{lem:HEBHandlerElim} on $\Pi^{(\sigma)}_{\handler h'}$ and each $\Xi^i_\handler h$, we obtain derivations $\Psi^i_{\handler h'}$ such that $\refrel{\Pi_{\handler h'}}{\Psi^i_{\handler h'}}$. 
We can now apply again the anti-substitution lemma (\Cref{lem:antiSubstLemma-ref}) on $\Xi_{\comput s}$ with $\Pi_\comput s$ and $\Pi_\val v$, therefore there exist a type $\it M$ and derivations 
$\itJugValue[\Xi_\val v]{ \emptyset}
    {\val v}{\it M}$
and $\itJugComput[\Xi_{\comput s}]
    { r : \itSet*{\it N_i \rightarrow \it G_i}_{\rangeI}; x : \it M}
    {\comput s}{\it E}$,
where $\refrel{\Pi_\comput s}{\Xi_\comput s}$ and $\refrel{\Pi_\val v}{\Xi_\val v}$. 
Let $\Xi$ be the following derivation :

\begin{equation*}
    \begin{prooftree}
            \hypo{(B)}
                \hypo{\itJugValue[\Xi_{\val v}]
                    {\emptyset}
                    {\val v}{\it M}}
                \hypo{\left(
                    \itJugComput[\Xi_{\comput u}^i]
                        {y : \it N_i}
                        {\comput u}
                        {\it F_i}
                \right)_{\rangeI}}
            \typeRuleEff{
                \itJugComput
                    { \emptyset}
                    {\effect{\val v}{y}{\comput u}}
                    {\itEffect
                        {\sigma}{\it M}
                        {\it N_i \rightarrow \it F_i}_{\rangeI}}}
        \typeRuleHandle{
            \itJugComput
                {\emptyset}
                {\handle
                    {\{\effectClause[\sigma]{x}{r}{\comput s}\} \cup \handler h'}
                    {\effect{\val v}{y}{\comput u}}}
                {\it E}}
    \end{prooftree}
\end{equation*}

\begin{equation*}
    (B) \;\coloneqq\;
    \begin{prooftree}
            \hypo{
                \itJugComput[\Xi_{\comput s}]
                    {   x : \it M ;
                        r : \itSet{\it N_i \rightarrow \it G_i}_{\rangeI}}
                    {\comput s}
                    {\it E}}

            \hypo{
                \begin{matrix}
                \{\effectClause{x}{r}{\comput s}\} \handlerCompat \handler h'\\
                (
                \itJugHandler[\Psi_{\handler h'}^i]
                    {\emptyset}
                    {\handler h'}
                    {\it F_i \Rightarrow \it G_i}
                    )_{\rangeI}
            \end{matrix}  
            }
        \typeRuleHandler{
            \itJugHandler
                {\emptyset }
                {\{\effectClause{x}{r}{\comput s}\} \cup \handler h'}
                {\itEffect{\sigma}{\it M}
                        {\it N_i \rightarrow \it F_i}_{\rangeI}
                \Rightarrow \it E}}
    \end{prooftree}
\end{equation*}

One can conclude this case by noticing that 
$\itJugComput[\Xi]{\Gamma}{t}{\it E}$ and $\refrel{\Pi}{\Xi}$.

%% file: Definitions/AlgoLeafLogic.tex
\begin{tabular}{c}
    \hspace{-0.5cm}
\framebox{$
\begin{array}{ccc}    
    \mathcal B\Biggl(
        {\itEffectReturn{\it M}},{\it M_1,\dots,\it M_{|I|}},{\it G_1,\dots,\it G_{|I|}},{\it F}
    \Biggl)
    &=&
     (|I|=1)\wedge(\it M=\it M_1) \wedge(\it F =\it G_1)\\[0.5cm]

    \mathcal B\Biggl(
        {\itEffect{\sigma}{\it M}{\it N_j\rightarrow\it E'_j}_{\rangeJ}},{\it M_1,\dots,\it M_{|I|}},{\it G_1,\dots,\it G_{|I|}},{\it F}
    \Biggl)
    &=&\begin{array}{c}
        \it F=\itEffect{\sigma'}{\it M'}{\it N'_k\rightarrow\it F'_k}_{\rangeK} \wedge\\ (\sigma=\sigma')\wedge\\
        (|J|=|K|) \wedge (\it M=\it M') \wedge\\
         \forall \rangeJ.(\it N_j=\it N'_j) \wedge\\
        \forall\rangeJ.\, \exists K_j\subseteq I. I=\bigcup\limits_{\rangeJ}K_j\\
         \bigwedge_{\rangeJ}\algoleaf{\it E'_j}{\it M_1,\dots,\it M_{|K_j|}}{\it G_1,\dots,\it G_{|K_j|}}{\it F'_j} \\
         
    \end{array}\\[0.5cm]

\end{array}
$}
\end{tabular}

%% file: Definitions/AlgorithmLogic.tex
\begin{tabular}{c}
    \hspace{-0.7cm}
\framebox{$
\begin{array}{ccc}    

    \algosym\Biggl(
        \begin{array}{c}
        \begin{prooftree}
                        \hypo{\stJugComput[\Pi_1]{\Delta}{\comput t}{\st E}}
                        \hypo{
                            \dots}
                        \hypo{
                            \stJugComput[\Pi_2]
                                {\Delta; x :: \st M}
                                {\comput u}
                                {\st G}
                            }
                    \infer3{
                        \stJugComput
                            {\Delta}
                            {\letin{x}{\comput t}{\comput u}}
                            {\st F}
                    }
                \end{prooftree}
    ,\\[0.5cm]
    \qquad\qquad\qquad\qquad\itJugComput
                    {\Gamma}
                    {\letin{y}{\comput t}{\comput u}}
                    {\it F}
        \end{array}
    \Biggl)
    &=&\begin{array}{c}
        \exists\refrel{\st E}{\it E}.\, \exists I.\, |I|\leq |\it F|\\
        \wedge \forall i\in I.\,\exists\refrel{\st M}{\it M_i}.\, \exists\refrel{\st G}{\it G_i}.\\
        \exists \Gamma',\Gamma_i\subseteq \Gamma.\,  (\Gamma=\Gamma'+_{\rangeI}\Gamma_i)\\
        \land\algo{\Pi_1}{\itJugComput{\Gamma'}{\comput t}{\it E}}\\
        \bigwedge_{\rangeI}\algo{\Pi_2}{\itJugComput{\Gamma_i;x:\it M_i}{\comput u}{\it G_i}}\\
        \wedge\, \algoleaf{\it E}{\it M_1,\dots,\it M_{|I|}}{\it G_1,\dots,\it G_{|I|}}{\it F} \\
        
    \end{array}\\[1.7cm]

    \algosym\Biggl(
        \begin{array}{c}
             \begin{prooftree}
                        \hypo{\stJugValue[\Pi_1]{\Delta}{\val v}{\typeInt{n}}}
                        \hypo{(\stJugComput[\Pi_2]{\Delta}{\comput t_i}{\st E}
                            )_{ 0<i\le n}}
                    \infer2{\stJugComput{\Delta }{\case{\val v}{\comput t_1, \ldots, \comput t_n}}{\st E}}
                \end{prooftree}
    ,\\[0.5cm]
   \qquad\qquad\qquad\itJugComput{\Gamma}
        {\case{\val v}{\comput t_1, \ldots, \comput t_n}}{\it E}
        \end{array}    
    \Biggl)
    &=&\begin{array}{c}
        \exists \refrel{\typeInt n}{\itSet{\typeInt{m}}} .\,\exists\Gamma_1,\Gamma_2\subseteq \Gamma.\,\\
         (\Gamma=\Gamma_1+\Gamma_2)\\
        \land\algo{\Pi_1}{\itJugValue{\Gamma_1}{\val v}{\itSet{m}}}\\
        \wedge \algo{\Pi_2}{\itJugComput{\Gamma_2}{\comput t_m}{\it E}} \\
        
    \end{array}\\[1cm]

    \algosym\Biggl(
         \begin{array}{c}
            \begin{prooftree}
                        \hypo{\stJugValue[\Pi_1]{\Delta}{\val v}{\st M}}
                    \infer1{\stJugComput{\Delta}{\ret{\val v}}{\itEffectReturn{\st M}}}
                \end{prooftree}
    ,\\
   \qquad\qquad\itJugComput{\Gamma}
        {\ret{\val v}}{\itEffectReturn{\it M}}
    \end{array}\Biggl)
    &=&\begin{array}{c}
        \algo{\Pi_1}{\itJugValue{\Gamma}{\val v}{\it M}} 
    \end{array}\\[1cm]

    \algosym\Biggl(
        \begin{array}{c}
           \begin{prooftree}
                        \hypo{
                            \stJugValue[\Pi_1]
                                {\Delta}
                                {\val v}
                                {\st M}
                        }
                        \hypo{
                            \stJugComput[\Pi_2]
                                {\Delta; x :: \st N}
                                {\comput t}
                                {\st E}
                        }
                    \infer2{
                        \stJugComput
                            {\Delta}
                            {\effect{\val v}{x}{\comput t}}
                            {\itEffect{\sigma}{\st M}{\st N \rightarrow \st E}}
                        }
                \end{prooftree}
    ,\\[0.5cm]
   \qquad\itJugComput
            {\Gamma}
            {\effect{\val w}{y}{\comput t}}
            {\itEffect{\sigma}{\it M}{\it N_i \rightarrow \it E_i}_{\rangeI}} 
        \end{array}
    \Biggl)
    &=&\begin{array}{c}
        \forall i\in I.\,\exists \Gamma',\Gamma_i\subseteq \Gamma.\\ (\Gamma=\Gamma'+_{\rangeI}\Gamma_i)\\
        \land\algo{\Pi_1}{\itJugValue{\Gamma'}{\val v}{\it M}}\\
        \bigwedge_{i\in I}\algo{\Pi_2}{\itJugComput{\Gamma_i;x:\it N_i}{\comput t}{\it E_i}}\\
        
    \end{array}\\[1cm]

    \algosym\Biggl(
        \begin{array}{c}
           \begin{prooftree}
                \hypo{\stJugComput[\Pi_1]
                    {\Delta;
                        y :: \st M}
                    {\comput t}
                    {\st E}}
                
            \infer1{\stJugHandler
                {\Delta}
                {\{\retClause{y}{\comput t}\}}%
                {\itEffectReturn{\st M} \Rightarrow \st E}
            }%
        \end{prooftree}
    ,\\[0.5cm]
   \quad\itJugComput{\Gamma}
        {\{\retClause{y}{\comput t}\}}{\itEffectReturn{\it M}\Rightarrow \it E} 
        \end{array}
    \Biggl)
    &=&\begin{array}{c}
        \algo{\Pi_1}{\itJugComput{\Gamma;y:\it M}{\comput t}{\it E}} 
    \end{array}\\[1cm]

\end{array}
$}
\end{tabular}

%% file: Proofs/AlgoSoundness.tex
By induction on $\Pi$.

\paragraph*{Case Integers} 
$\Pi$ has the following form:
    \[
         \begin{prooftree}
                    \hypo{0<n\le m}
                    \typeRuleInt[1]{\stJugValue{\Delta}{\int n}{\typeInt{m}}}
                \end{prooftree}
    \]
    We have $J_r\coloneq\itJugValue{\Gamma}{\int n}{\it M}$ with $\it M\in\itSet{\itSetEmpty,\itSet{\typeInt{i}}}$. Since $\models\algo{\Pi}{J_r}$, we have that $\Gamma=\emptyset$ and either $\it M=\itSetEmpty$ or $i=n$. We can construct the derivation $\Xi$ as follows:
    \[
         \begin{prooftree}
            \typeRuleVar{\itJugValue{\emptyset}{\int n}{\it M}}
        \end{prooftree}
    \]
    Finally, $\refrel{\Pi}{\Xi}$.

\paragraph*{Case Variable}
$\Pi$ has the following form:
    \[
         \begin{prooftree}
            \typeRuleVar{\stJugValue{\Delta;x::\st M}{x}{\st M}}
        \end{prooftree}
    \]
    We have $J_r\coloneq\itJugValue{\Gamma;x:\it M'}{x}{\it M}$, and since $\refrel{J_s}{J_r}$, we have that $\refrel{\st M}{\it M}$ and $\refrel{\st M}{\it M'}$. Moreover, since $\models\algo{\Pi}{J_r}$, then $\Gamma=\emptyset$ and $\it M=\it M'$.
    We then can construct the derivation $\Xi$ as follows:
    \[
         \begin{prooftree}
            \typeRuleVar{\itJugValue{x:\it M}{x}{\it M}}
        \end{prooftree}
    \]
    Finally, $\refrel{\Pi}{\Xi}$.       
\paragraph*{Case Abstraction} 
$\Pi$ has the following form : 

\begin{equation*}
     \begin{prooftree}
                        \hypo{\stJugComput[\Pi']{\Delta; x :: \st M}{\comput t}{\st E}}
                    \typeRuleAbs{\stJugValue{\Delta}{\abs{x}{\comput t}}{\st M \rightarrow \st E}}
                \end{prooftree}
\end{equation*}
Being $J_s$ the conclusion of $\Pi$, $J_r$ is such that $\refrel{J_s}{J_r\coloneq\itJugValue{\Gamma}{\abs{x}{\comput t}}{\itSet*{\it M_i \rightarrow \it E_i}_{\rangeI}}}$, therefore $\refrel{\Delta}{\Gamma}$, $\refrel{\st M}{\it M_i}$ and $\refrel{\st E}{\it E_i}$ for each $\rangeI$. Let $J'_s$ be the conclusion of $\Pi'$.
Since $\models\algo{\Pi}{J_r}$, there exist $\Gamma_i\subseteq\Gamma$ such that $\Gamma=+_{\rangeI}\Gamma_i$. We set $J^i_r\coloneq\itJugComput{\Gamma_i;x:\it M_i} {\comput t}{\it E_i}$ and have that $\models\algo{\Pi'}{J^i_r}$ for each $\rangeI$. Notice also that $\refrel{J'_s}{J^i_r}$, therefore, by induction hypothesis, we have a derivations $\refrel{\Pi'}{\Xi_i\triangleright J^i_r}$ for each $\rangeI$. We can then construct the following derivation :

\begin{equation*}
     \begin{prooftree}
                        \hypo{(\itJugComput[\Xi_i]{\Gamma_i;x:\it M_i}{\comput t}{\it E_i})_{\rangeI}}
                    \typeRuleAbs{\itJugValue{+_{\rangeI}\Gamma_i}{\abs{x}{\comput t}}{\itSet*{\it M \rightarrow \it E}_{\rangeI}}}
                \end{prooftree}
\end{equation*}
Finally, we have $\refrel{\Pi}{\Xi}$.

\paragraph*{Case Fixpoint} 
$\Pi$ has the following form :

    \begin{equation*}
     \begin{prooftree}
                    \hypo{
                        \stJugValue[\Pi']
                            {\Delta; x :: \st M \rightarrow \st E}
                            {\val v}
                            {\st M \rightarrow \st E}}
                    
                \typeRuleFix[1]{
                    \stJugValue
                        {\Delta}
                        {\fix{x}{\val v}}
                        {\st M \rightarrow \st E}}
            \end{prooftree}
    \end{equation*}

Being $J_s$ the conclusion of $\Pi$, $J_r$ is such that $\refrel{J_s}{J_r\coloneq \itJugValue{\Gamma}{\fix{x}{\val v}}{\itSet*{\it M_i\rightarrow \it E_i}_{\rangeI}}}$, so $\refrel{\Delta}{\Gamma}$, $\refrel{\st M\rightarrow \st E}{\itSet{\it M_i\rightarrow \it E_i}_{\rangeI}}$. Let $J'_s$ be the conclusion of $\Pi'$. Since $\models\algo{\Pi}{J_r}$, there exists an integer $n$, some contexts $\Gamma_j\subseteq \Gamma$ for $j\leq n$ such that $\Gamma=+_{j\leq n}\Gamma_j$ and some \HEBI\ types $\it M_j$ for $j\leq n+1$ such that $\refrel{\st M \rightarrow \st E}{\it M_i}$, $\it M_0=\itSetEmpty$ and $\it M_{n+1}=\itSet*{\it M_i\rightarrow \it E_i}_{\rangeI}$. We set $J^j_r\coloneq \itJugValue{\Gamma_j;x:\it M_j}{\val v}{\it M_{j+1}}$ and have that $\models\algo{\Pi_1}{J^j_r}$ for each $j\leq n$. Notice also that $\refrel{J'_s}{J^j_r}$, therefore, by induction hypothesis, we obtain derivations $\refrel{\Pi'}{\Xi_j\triangleright J^j_r}$ for each $j\leq n$. 
By \Cref{lem:RefinementFixSoundness} it exists a derivation $\itJugValue[\Omega]{+_{0\leq j\leq n}\Gamma_j}{\fix{x}{\val v}}{\it M_{n+1}}$ and $\refrel{\Pi}{\Omega}$.

\paragraph*{Case Application}
$\Pi$ has the following form:
\[
 \begin{prooftree}
                        \hypo{\stJugValue[\Pi_1]{\Delta}{\val v}{\st M \rightarrow \st E}}
                        \hypo{\stJugValue[\Pi_2]{\Delta}{\val w}{\st M}}
                    \typeRuleApp{\stJugComput{\Delta}{\app{\val v}{\val w}}{\st E}}
                \end{prooftree}
\]
Being $J_s$ the conclusion of $\Pi$, $J_r$ is such that $\refrel{J_s}{J_r\coloneq\itJugValue{\Gamma}{\app{\val v}{\val w}}{ \it E}}$, so $\refrel{\st E}{\it E}$ and $\refrel{\Delta}{\Gamma}$. Let $J^1_s$ and $J^2_s$ be the conclusions respectively of $\Pi_1$ and $\Pi_2$. Since $\models\algo{\Pi}{J_r}$, there exists two context $\Gamma_1,\Gamma_2\subseteq\Gamma$ such that $\Gamma=\Gamma_1+\Gamma_2$ and an $\it M$ such that $\refrel{\st M}{\it M}$. We set $J^1_r\coloneq\itJugValue{\Gamma_1}{\val v}{\it M \rightarrow\it E}$ and $J^1_r\coloneq\itJugValue{\Gamma_2}{\val w}{\it M}$, having $\models\algo{\Pi_1}{J^1_r}$ and $\models\algo{\Pi_2}{J^2_r}$. 
By \Cref{lem:ContextRefinement} we obtain $\refrel{J^1_s}{J^1_r}$ and $\refrel{J^2_s}{J^2_r}$. Therefore, by induction hypothesis, we have derivations $\refrel{\Pi_1}{\Xi_1\triangleright J^1_r}$ and $\refrel{\Pi_2}{\Xi_2\triangleright J^2_r}$.  
We can then construct $\Xi$ as follows:
\[
 \begin{prooftree}
                        \hypo{\itJugValue[\Xi_1]{\Gamma_1}{\val v}{\it M \rightarrow \it E}}
                        \hypo{\itJugValue[\Xi_2]{\Gamma_2}{\val w}{\it M}}
                    \typeRuleApp{\itJugComput{\Gamma_1 + \Gamma_2}{\app{\val v}{\val w}}{\it E}}
                \end{prooftree}
\]
Finally, $\refrel{\Pi}{\Xi}$.

\paragraph*{Case Let-Binding}
$\Pi$ has the following form :
\begin{equation*}
    \hspace{-0.5cm}
\begin{prooftree}
                        \hypo{\stJugComput[\Pi_1]{\Delta}{\comput t}{\st E}}
                        \hypo{
                            \st E
                                \replaceLeaf{\st M \rightarrow \st G}
                            \st F}
                        \hypo{
                            \stJugComput[\Pi_2]
                                {\Delta; x :: \st M}
                                {\comput u}
                                {\st G}
                            }
                    \typeRuleLetin{
                        \stJugComput
                            {\Delta}
                            {\letin{x}{\comput t}{\comput u}}
                            {\st F}
                    }
                \end{prooftree}
\end{equation*}
Being $J_s$ the conclusion of $\Pi$, $J_r$ is such that $\refrel{J_s}{J_r\coloneq\itJugComput
                    {\Gamma}
                    {\letin{y}{\comput t}{\comput u}}
                    {\it F}}$, so $\refrel{\st F}{\it F}$ and $\refrel{\Delta}{\Gamma}$. 
We start by finding two contexts $\Gamma'$ and $\Gamma_i$ for each $\rangeI$ such that $\Gamma=\Gamma' +_{\rangeI} \Gamma_i$.
Since $\models\algo{\Pi}{J_r}$, there exist a set $I$, contexts $\Gamma',\Gamma_i\subseteq \Gamma$ such that $\Gamma=_+{\rangeI}\Gamma_i$ and \HEBI\ types $\refrel{\st E}{\it E}$, $\refrel{\st M}{\it M_i}$ and $\refrel{\st G}{\it G_i}$.
We set $J'_r\coloneq\itJugComput{\Gamma'}{\comput t}{\it E}$ and $J^i_r\coloneq\itJugComput{\Gamma_i;x:\it M_i}{\comput u}{\it G_i}$ and have $\models\algo{\Pi_1}{J'_r}$ and $\models\algo{\Pi_2}{J^i_r}$ for each $\rangeI$.
Notice also that, by \Cref{lem:ContextRefinement}, $\refrel{J^1_s}{J'_r}$ and $\refrel{J^2_s}{J^i_r}$.
Therefore, by induction hypothesis we have derivations $\refrel{\Pi_1}{\Xi'\triangleright J'_r}$ and $\refrel{\Pi_2}{\Xi_i\triangleright J^i_r}$ for each $\rangeI$.
Furthermore, $\models\algoleaf{\it E}{\it M_1,\dots,\it M_{|I|}}{\it G_1,\dots,\it G_{|I|}}{\it F}$, so by \Cref{lem:AlgoLeafSoundness} the relation $ \it E\replaceLeaf{\it M_i \rightarrow \it G_i}[\rangeI] \it F$ holds.
Thus, one can derive the following $\Xi$.
\begin{equation*}
\begin{prooftree}
                \hypo{\itJugComput[\Xi']{\Gamma }{\comput t}{\it E}}
                \hypo{
                    \it E
                        \replaceLeaf{\it M_i \rightarrow \it G_i}[\rangeI]
                    \it F}
                \hypo{
                    (\itJugComput[\Xi_i]
                        {\Gamma_i;x:\it M_i}
                        {\comput u}
                        {\it G_i}
                    )_{\rangeI}}
            \typeRuleLetin{
                \itJugComput
                    {\Gamma +_{\rangeI} \Gamma_i}
                    {\letin{y}{\comput t}{\comput u}}
                    {\it F}
            }
        \end{prooftree}
\end{equation*}
Finally, obtaining that $\refrel{\Pi}{\Xi}$.

\paragraph*{Case Case}
$\Pi$ has the following form :
\[
 \begin{prooftree}
                        \hypo{\stJugValue{\Delta}{\val v}{\typeInt{n}}}
                        \hypo{(\stJugComput{\Delta}{\comput t_i}{\st E}
                            )_{ 0<i\le n}}
                    \typeRuleCase[2]{\stJugComput{\Delta }{\case{\val v}{\comput t_1, \ldots, \comput t_n}}{\st E}}
                \end{prooftree}
\]
Being $J_s$ the conclusion of $\Pi$, $J_r$ is such that $\refrel{J_s}{J_r\coloneq\itJugComput{\Gamma}
        {\case{\val v}{\comput t_1, \ldots, \comput t_n}}{\it E}}$, so $\refrel{\st E}{\it E}$ and $\refrel{\Delta}{\Gamma}$. 
Let $J^1_s$ be the conclusion of $\Pi_1$ and $J^2_s$ be the conclusion of $\Pi_2$. Since $\models\algo{\Pi}{J_r}$, there exists two contexts $\Gamma_1,\Gamma_2\subseteq\Gamma$ such that $\Gamma=\Gamma_1 +\Gamma_2$ and an \HEBI\ type $\itSet{\typeInt{m}}$ such that $\refrel{\typeInt n}{\itSet{\typeInt{m}}}$. 
We set $J^1_r\coloneq\itJugValue{\Gamma_1}{\val v}{\itSet{m}}$ and $J^2_r\coloneq\itJugComput{\Gamma_2}{\comput t_m}{\it E}$, having $\models\algo{\Pi_1}{J^1_r}$ and $\models\algo{\Pi_2}{J^2_r}$.  
By \Cref{lem:ContextRefinement} we obtain $\refrel{J^1_s}{J^1_r}$ and $\refrel{J^2_s}{J^2_r}$. 
Therefore, by induction hypothesis, we have derivations $\refrel{\Pi_1}{\Xi_1\triangleright J^1_r}$ and $\refrel{\Pi_2}{\Xi_2\triangleright J^2_r}$.
We can derive the following $\Xi$.
\begin{equation*}
\begin{prooftree}
        \hypo{\itJugValue[\Xi_1]{\Gamma_1}{\val v}{\itSet{\typeInt{m}}}}
        \hypo{\itJugComput[\Xi_2]{\Gamma_2}{\comput t_m}{\it E}}
    \typeRuleCase{\itJugComput{\Gamma_1 + \Gamma_2}
        {\case{\val v}{\comput t_1, \ldots, \comput t_n}}{\it E}}
\end{prooftree}
\end{equation*}
Finally, $\refrel{\Pi}{\Xi}$.

\paragraph*{Case Return}
$\Pi$ has the following form :
\[
    \begin{prooftree}
                        \hypo{\stJugValue[\Pi_1]{\Delta}{\val v}{\st M}}
                    \typeRuleRet{\stJugComput{\Delta}{\ret{\val v}}{\itEffectReturn{\st M}}}
                \end{prooftree}
\]
Being $J_s$ the conclusion of $\Pi$, $J_r$ is such that $\refrel{J_s}{J_r\coloneq\itJugComput{\Gamma}
        {\ret{\val v}}{\itEffectReturn{\it M}}}$, so $\refrel{\itEffectReturn{\st M}}{\itEffectReturn{\it M}}$, therefore $\refrel{\st M}{\it M}$ and $\refrel{\Delta}{\Gamma}$. 
Let $J^1_s$ be the conclusion of $\Pi_1$, we set $J^1_r\coloneq\itJugValue{\Gamma}{\val v}{\it M}$ and have $\refrel{J^1_s}{J^1_r}$. Since $\models\algo{\Pi}{J_r}$, we have also $\models\algo{\Pi_1}{J^1_r}$.
Therfore, one can derive the following $\Xi$.
\begin{equation*}
\begin{prooftree}
        \hypo{\itJugValue[\Xi_1]{\Gamma}{\val v}{\it M}}
    \typeRuleRet{\itJugComput{\Gamma}
        {\ret{\val v}}{\itEffectReturn{\it M}}}
\end{prooftree}
\end{equation*}

Finally, $\refrel{\Pi}{\Xi}$.

\paragraph*{Case Effect}
$\Pi$ has the following form :

\[
 \begin{prooftree}
                        \hypo{
                            \stJugValue[\Pi_1]
                                {\Delta}
                                {\val v}
                                {\st M}
                        }
                        \hypo{
                            \stJugComput[\Pi_2]
                                {\Delta; x :: \st N}
                                {\comput t}
                                {\st E}
                        }
                    \typeRuleEff{
                        \stJugComput
                            {\Delta}
                            {\effect{\val v}{x}{\comput t}}
                            {\itEffect{\sigma}{\st M}{\st N \rightarrow \st E}}
                        }
                \end{prooftree}
\]
Being $J_s$ the conclusion of $\Pi$, $J_r$ is such that $\refrel{J_s}{J_r\coloneq\itJugComput
            {\Gamma}
            {\effect{\val w}{y}{\comput t}}
            {\itEffect{\sigma}{\it M}{\it N_i \rightarrow \it E_i}_{\rangeI}}}$, so $\refrel{\itEffect{\sigma}{\st M}{\st N \rightarrow \st E}}{\itEffect{\sigma}{\it M}{\it N_i \rightarrow \it E_i}_{\rangeI}}$ and $\refrel{\Delta}{\Gamma}$. 
Let $J^1_s$ and $J^2_s$ be the conclusions of $\Pi_1$ and $\Pi_2$ respectively. Since $\models\algo{\Pi}{J_r}$, there exists contexts $\Gamma',\Gamma_i\subseteq \Gamma$ such that $\Gamma=\Gamma+_{\rangeI}\Gamma_i$.
We set $J'_r\coloneq\itJugComput{\Gamma'}{\comput t}{\it M}$ and $J^i_r\coloneq\itJugComput{\Gamma_i;x:\it N_i}{\comput u}{\it E_i}$ for each $\rangeI$. We have $\models\algo{\Pi_1}{J'_r}$ and $\models\algo{\Pi_2}{J^i_r}$ and by \Cref{lem:ContextRefinement} we obtain $\refrel{J^1_s}{J'_r}$ and $\refrel{J^2_s}{J^i_r}$ for each $\rangeI$. We can then apply the induction hypothesis, obtaining derivations $\refrel{\Pi_1}{\Xi'\triangleright J'_r}$ and $\refrel{\Pi_2}{\Xi_i\triangleright J^2_r}$.
Therefore, we construct the following derivation $\Xi$.

\begin{equation*}
\begin{prooftree}
        \hypo{\itJugValue[\Xi']
            {\Gamma'}
            {\val w}
            {\it M}}
        \hypo{(\itJugComput[\Xi_i]
            {\Gamma_i; y : \it N_i}
            {\comput t}
            {\it E_i}
        )_{\rangeI}}
    \typeRuleEff[2]{
        \itJugComput
            {\Gamma' +_{\rangeI} \Gamma_i}
            {\effect{\val w}{y}{\comput t}}
            {\itEffect{\sigma}{\it M}{\it N_i \rightarrow \it E_i}_{\rangeI}}}
\end{prooftree}
\end{equation*}
Finally, $\refrel{\Pi}{\Xi}$.

\paragraph*{Case Handler}
$\Pi$ has the following form :
\begin{equation*}
\begin{prooftree}
                    \hypo{\stJugHandler{\Delta}{\handler h}{\st F \Rightarrow \st E}}
                    \hypo{\stJugComput{\Delta}{\comput t}{\st F}}
                    \typeRuleHandle{\stJugComput{\Delta}{\handle{\handler h}{\comput t}}{\st E}}
                \end{prooftree}
\end{equation*}
Being $J_s$ the conclusion of $\Pi$, $J_r$ is such that $\refrel{J_s}{J_r\coloneq\itJugHandler{\Gamma}{\handle{\handler h}{\comput t}}{\it E}}$, so $\refrel{\st E}{\it E}$ and $\refrel{\Delta}{\Gamma}$. 
Let $J^1_s$ be the conclusion of $\Pi_1$ and $J^2_s$ be the conclusion of $\Pi_2$. Since $\models\algo{\Pi}{J_r}$, there exist contexts $\Gamma_1,\Gamma_2\subseteq \Gamma$ such that $\Gamma=\Gamma_1+\Gamma_2$ and a type $\it F$ such that $\refrel{\st F}{\it F}$. We set $J^1_r\coloneq\itJugHandler{\Gamma_1}{\handler h}{\it F \Rightarrow \it E}$ and $J^2_r\coloneq\itJugComput{\Gamma_2}{\comput t}{\it F}$. We have $\models\algo{\Pi_1}{J^1_r}$ and $\models\algo{\Pi_2}{J^2_r}$ and, by \Cref{lem:ContextRefinement}, $\refrel{J^1_s}{J^1_r}$ and $\refrel{J^2_s}{J^2_r}$.
We can then apply the induction hypothesis, obtaining derivations $\refrel{\Pi_1}{\Xi_1\triangleright J^1_r}$ and $\refrel{\Pi_2}{\Xi_2\triangleright J^2_r}$.
Therfore, one can derive the following $\Xi$.
\begin{equation*}
\begin{prooftree}
        \hypo{\itJugHandler[\Xi_1]{\Gamma_1}{\handler h}{\it F \Rightarrow \it E}}
        \hypo{\itJugComput[\Xi_2]{\Gamma_2}{\comput t}{\it F}}
    \typeRuleHandle{\itJugHandler{\Gamma_1 + \Gamma_2}{\handle{\handler h}{\comput t}}{\it E}}
\end{prooftree}
\end{equation*}
Finally, $\refrel{\Pi}{\Xi}$.

\paragraph*{Case Handler Branching Return}
$\Xi$ has the following form :

\[
 \begin{prooftree}
                \hypo{\stJugComput[\Pi_1]
                    {\Delta;
                        y :: \st M}
                    {\comput t}
                    {\st E}}
                                            \hypo{\{\retClause{y}{\comput u}\} \handlerCompat \handler h}

            \typeRuleHandlerRet[2]{\stJugHandler
                {\Delta}
                {\{\retClause{y}{\comput t}\}}%
                {\itEffectReturn{\st M} \Rightarrow \st E}
            }%
        \end{prooftree}
\]

Being $J_s$ the conclusion of $\Pi$, $J_r$ is such that $\refrel{J_s}{J_r\coloneq\itJugComput{\Gamma}
        {\ret{\val v}}{\itEffectReturn{\it M}\Rightarrow \it E}}$, therefore $\refrel{\st M}{\it M}$, $\refrel{\st E}{\it E}$ and $\refrel{\Delta}{\Gamma}$. 
We can then construct a judgement $J^1_r\coloneq\itJugComput
                    {\Gamma;
                        y : \it M}
                    {\comput t}
                    {\it E}$. 
Let $J^1_s$ be the conclusion of $\Pi_1$ we obtain $\refrel{J^1_s}{J^1_r}$. Since $\models\algo{\Pi}{J_r}$, then also $\models\algo{\Pi_1}{J^1_r}$. By induction hypothesis we obtain a derivation $\Xi_1$ such that $\refrel{\Pi_1}{\Xi_1}$.
Therefore, one can derive the following $\Xi_1$:
\begin{equation*}
    \begin{prooftree}
                \hypo{\itJugComput[\Xi_1]
                    {\Gamma;
                        y : \it M}
                    {\comput t}
                    {\it E}}
                                            \hypo{\{\retClause{y}{\comput u}\} \handlerCompat \handler h}

            \typeRuleHandlerRet[2]{\itJugHandler
                {\Gamma}
                {\{\retClause{y}{\comput t}\} }%
                {\itEffectReturn{\it M} \Rightarrow \it E}
            }%
        \end{prooftree}
\end{equation*}

Finally, $\refrel{\Pi}{\Xi}$.

\paragraph*{Case Handler Branching}
$\Pi$ has the following form :
\[
 \begin{prooftree}
                        \hypo{%
                            \stJugComput%
                                {\Delta;
                                    y :: \st M;
                                    r :: \st N \rightarrow \st G}%
                                {\comput t}%
                                {\st E}%
                            }%
                        \hypo{%
                            \stJugHandler%
                                {\Delta}%
                                {\handler h}%
                                {\st F \Rightarrow \st G}%
                        }%
                        \hypo{\{\effectClause{y}{r}{\comput t}\} \handlerCompat \handler h}%
                    \typeRuleHandler[3]{%
                        \stJugHandler%
                            {\Delta}%
                            {\{\effectClause{y}{r}{\comput t}\} \cup \handler h}%
                            {
                                \itEffect{\sigma}{\st M}
                                    {\st N
                                        \rightarrow 
                                    \st F
                                }
                                    \Rightarrow 
                                \st E
                            }%
                    }%
                \end{prooftree}
\]

Being $J_s$ the conclusion of $\Pi$, $J_r$ is such that $\refrel{J_s}{J_r\coloneq\itJugHandler
                {\Gamma +_{\rangeI} \Gamma_i}
                {\{\effectClause{y}{r}{\comput t}\} \cup \handler h}%
                {\itEffect{\sigma}{\it M}
                    {\it N_i\rightarrow \it F_i
                        }_{\rangeI}
                    \Rightarrow \it E}}$, so $\refrel{\st M}{\it M}$, $\refrel{\st E}{\it E}$, $\refrel{\st N}{\it N_i}$, $\refrel{\st F}{\it F_i$} for each $\rangeI$ and $\refrel{\Delta}{\Gamma}$. 
Let $J^1_s$ be the conclusion of $\Pi_1$ and $J^2_s$ be the conclusion of $\Pi_2$. Since, $\models\algo{\Pi}{J_r}$, there exist contexts $\Gamma',\Gamma_i\subseteq\Gamma$ and types $\it G_i$ such that $\refrel{\st G}{\it G_i}$. We set $J'_r\coloneq\itJugComput[\Xi_\comput t]
                    {\Gamma_\comput t;
                        y : \it M ;
                        r : \itSet*{
                                \it N_i \rightarrow \it G_i
                                }_{\rangeI}}
                    {\comput t}
                    {\it E}$
and $J^i_r\coloneq\itJugHandler
                        {\Gamma_i}%
                        {\handler h}%
                        {\it F_i \Rightarrow \it G_i}$.
for each $\rangeI$. We have $\models\algo{\Pi_1}{J'_r}$ and $\models\algo{\Pi_2}{J^i_r}$ and, by \Cref{lem:ContextRefinement}, we obtain $\refrel{J^1_s}{J'_r}$ and $\refrel{J^2_s}{J^i_r}$ for each $\rangeI$. We can then apply the induction hypothesis, obtaining derivations $\refrel{\Pi_1}{\Xi'\triangleright J'_r}$ and $\refrel{\Pi_2}{\Xi_i\triangleright J^i_r}$ for each $\rangeI$.
Therefore, we construct the following derivation $\Xi$:
\begin{equation*}
    \hspace{-0.5cm}
\begin{prooftree}
                \hypo{\itJugComput[\Xi_\comput t]
                    {\Gamma_\comput t;
                        y : \it M ;
                        r : \itSet*{
                                \it N_i \rightarrow \it G_i
                                }_{\rangeI}}
                    {\comput t}
                    {\it E}}
                \hypo{
                    (\itJugHandler[\Xi_i]
                        {\Gamma_i}%
                        {\handler h}%
                        {\it F_i \Rightarrow \it G_i}
                    )_{\rangeI}
                }
                \hypo{(A)}
            \typeRuleHandler[3]{\itJugHandler
                {\Gamma +_{\rangeI} \Gamma_i}
                {\{\effectClause{y}{r}{\comput t}\} \cup \handler h}%
                {\itEffect{\sigma}{\it M}
                    {\it N_i\rightarrow \it F_i
                        }_{\rangeI}
                    \Rightarrow \it E}}%
        \end{prooftree}
\end{equation*}
\[
(A)= \begin{matrix}
                    \{\effectClause{y}{r}{\comput t}\} 
                    \handlerCompat \handler h
            \end{matrix}
\]
Finally, $\refrel{\Pi}{\Xi}$.

%% file: Proofs/AlgoCompleteness.tex
By induction on $\Pi$.

\paragraph*{Case Integers} 
$\Pi$ has the following form:
    \[
         \begin{prooftree}
            \hypo{0<n \le m}
            \typeRuleInt[1]{\itJugValue{\Delta}{\int n}{\typeInt{m}}}
        \end{prooftree}
        \refrelsym
        \begin{prooftree}
                \hypo{\it M \in \{\itSetEmpty, \itSet{\typeInt{n}}\}}
            \typeRuleInt[1]{\itJugValue{\emptyset}{\int n}{\it M}}
        \end{prooftree}
    \]
    It is clear that $J_r$ respects the condition given by the defition of the function, therefore $\models\algo{\Pi}{J_r}$. 

\paragraph*{Case Variable}
$\Pi$ has the following form:
    \[
    \begin{array}{ccc}
        \begin{prooftree}
            \typeRuleVar{\stJugValue[\Pi]{\Delta;x::\st M}{x}{\st M}}
        \end{prooftree}& \refrelsym &
        \begin{prooftree}
            \typeRuleVar{\itJugValue[\Xi]{x:\it M}{x}{\it M}}
        \end{prooftree}
    \end{array} 
    \]
    By definition of the refinement relation, $\refrel{\st M}{\it M}$. Therefore, $\models\algo{\Pi}{J_r}$.
\paragraph*{Case Abstraction} 
$\Pi$ has the following form : 

\[
\begin{array}{ccc}
    \begin{prooftree}
                        \hypo{\stJugComput[\Pi']{\Delta; x :: \st M}{\comput t}{\st E}}
                    \typeRuleAbs{\stJugValue{\Delta}{\abs{x}{\comput t}}{\st M \rightarrow \st E}}
                \end{prooftree}
                & \refrelsym & 
     \begin{prooftree}
                        \hypo{(\itJugComput[\Xi_i]{\Gamma_i;x:\it M_i}{\comput t}{\it E_i})_{\rangeI}}
                    \typeRuleAbs{\itJugValue[\Xi]{+_{\rangeI}\Gamma_i}{\abs{x}{\comput t}}{\itSet*{\it M \rightarrow \it E}_{\rangeI}}}
                \end{prooftree}
\end{array}
\]
With $\refrel{\Pi'}{\Xi_i}$ for each $\rangeI$. Let $J'_s$ be the conclusion of $\Pi'$ and $J_i$ the conclusion of $\Xi_i$, by \Cref{rem:NewRefineConclusion}, we have $\refrel{J'_s}{J_i}$ for each $\rangeI$. Then, by induction hypothesis, $\models\algo{\Pi'}{J_i}$ for each $\rangeI$. Moreover, it obviously exists $\Gamma=+_{\rangeI}\Gamma_i$, thus $\models\algo{\Pi}{J_r}$.

\paragraph*{Case Fixpoint} 
$\Pi$ has the following form :

    \[\begin{prooftree}
                \hypo{
                    \stJugValue[\Pi']
                        {\Delta; x::\st M\rightarrow \st E}
                        {\val v}
                        {\st M \rightarrow \st E}}
            \typeRuleFix[1]{
                \stJugValue[\Pi]
                    {\Delta}
                    {\fix{x}{\val v}}
                    {\st M \rightarrow \st E}}
        \end{prooftree}
        \refrelsym 
                \stJugValue[\Xi]
                    {\Gamma}
                    {\fix{x}{\val v}}
                    {\it M}\]
By \Cref{lem:RefinementFixCompleteness} there exists a context $\Gamma_0\subseteq\Gamma$ and a simple path $\phi$ between $(\Gamma_0,\itSetEmpty)$ and $(\Gamma,\it M')$ in $\graph{\Pi'}{\st M \rightarrow \st E}{\Gamma}$. Moreover, \Cref{rem:PathToDerivations} there exist some \HEBI\ types $\it N_i$ for $i\leq |\phi|+1$ such that $\refrel{\st M \rightarrow \st E}{\it N_i}$, $\it N_0=\itSetEmpty$ and $\it N_{|\phi|+1}=\it M$. Furthermore, there exist derivations $\itJugValue[\Xi_i]{\Gamma_i;x:\it N_i}{\val v}{\it N_{i+1}}$ for $i\leq |\phi|$ such that $\refrel{\Pi'}{\Xi_i}$ and $\Gamma=+_{i\leq |\phi|}\Gamma_i$.
Let $J^i_r$ be the conclusion of each $\Xi_i$ and $J'_s$ be the conlcusion of $\Pi'$, by \Cref{rem:NewRefineConclusion}, we have $\refrel{J'_s}{J^i_r}$ for each $i\leq n$. Therefore by induction hypothesis $\models\algo{\Pi'}{J^i_r}$ for each $i\leq n$.
Finally, since $\phi$ is simple, \Cref{rem:GraphFinite} implies $|\phi|\leq |\Refn(\st M\rightarrow\st E)|\cdot |Sub(\Gamma)|$. 
Taking $n= |\phi|$ and $\Gamma=+_{i\leq n}\Gamma_i$, we can conclude $\models\algo{\Pi}{J_r}$.

\paragraph*{Case Application}
$\Pi$ has the following form:
\[
  \begin{prooftree}
                \hypo{\stJugValue[\Pi_1]{\Delta}{\val v}                    {\st M \rightarrow \st U}}
                \hypo{\stJugValue[\Pi_2]{\Delta}{\val w}{\st M}}
            \typeRuleApp{\stJugComput
                {\Delta}
                {\app{\val v}{\val w}}
                {\st U}}
        \end{prooftree}
        \refrelsym
        \begin{prooftree}
                \hypo{\itJugValue[\Xi_1]{\Gamma_1}{\val v}
                    {\itSet{\it N \rightarrow \it E}}}
                \hypo{\itJugValue[\Xi_2]{\Gamma_2}{\val w}{\it N}}
            \typeRuleApp{\itJugComput
                {\Gamma_1 + \Gamma_2}
                {\app{\val v}{\val w}}
                {\it E}}
        \end{prooftree}
\]
With $\refrel{\Pi_1}{\Xi_1}$ and $\refrel{\Pi_2}{\Xi_2}$. Let $J^1_s$ and $J^2_s$ be the conclusions of $\Pi_1$ and $\Pi_2$, moreover let $J^1_r$ and $J^2_r$ be the conclusions of $\Xi_1,\Xi_2$. By \Cref{rem:NewRefineConclusion}, we have $\refrel{J_s}{J_r}$, $\refrel{J^1_s}{J^1_r}$ and $\refrel{J^1_s}{J^1_r}$. By induction hypothesis $\models\algo{\Pi_1}{J^1_r}$ and $\models\algo{\Pi_1}{J^1_r}$, thus $\models\algo{\Pi}{J_r}$.
Notice that $\refrel{\st M}{\it N}$ and $\Gamma=\Gamma_1+\Gamma_2$, thus $\models\algo{\Pi}{J_r}$.

\paragraph*{Case Let-Binding}
$\Pi$ has the following form :
\[\begin{array}{c}
  \begin{prooftree}
                \hypo{\stJugComput[\Pi_1]{\Delta}{\comput t}{\st E}}
            \hypo{
                    \st E
                        \replaceLeaf{\st M \rightarrow \st G}
                    \st F}
                \hypo{\stJugComput[\Pi_2]
                        {\Delta; x :: \st M}
                        {\comput u}
                        {\st G}}
            \typeRuleLetin[3]{
                \stJugComput
                    {\Delta}
                    {\letin{x}{\comput t}{\comput u}}
                    {\st F}
            }
        \end{prooftree}
    \\[0.5cm]
    \refrelsym 
    \begin{prooftree}
                \hypo{\itJugComput[\Xi]{\Gamma'}{\comput t}{\it E}}
                \hypo{
                    \it E
                        \replaceLeaf{\it M_j \rightarrow \it G_j}[\rangeJ]
                    \it F}
                \hypo{
                    (\itJugComput[\Xi_j]
                        {\Gamma_j; x : \it M_j}
                        {\comput u}
                        {\it G_j}
                    )_{\rangeJ}}
            \typeRuleLetin{
                \itJugComput
                    {\Gamma' +_{\rangeJ} \Gamma_j}
                    {\letin{x}{\comput t}{\comput u}}
                    {\it F}
            }
        \end{prooftree}
        \end{array}
\]
With $\refrel{\Pi_1}{\Xi'}$ and $\refrel{\Pi_2}{\Xi_i}$ for each $\rangeI$. Let $J^1_s$ and $J^2_s$ be the conclusions of $\Pi_1$ and $\Pi_2$, moreover let $J'_r$ and $J^j_r$ be the conclusions respectively of $\Xi'$ and $\Xi_i$. By \Cref{rem:NewRefineConclusion}, we have $\refrel{J_s}{J_r}$, $\refrel{J^1_s}{J'_r}$ and $\refrel{J^2_s}{J^j_r}$. By induction hypothesis $\models\algo{\Pi_1}{J'_r}$ and $\models\algo{\Pi_2}{J^j_r}$ for each $\rangeJ$. Moreover, by \Cref{lem:LeafReplaceBound}, $|J|\leq |\it F|$ and by \Cref{lem:AlgoLeafSoundness}, $\models\algoleaf{\it E}{\it M_1,\dots,\it M_{|J|}}{\it G_1,\dots,\it G_{|J|}}{\it F}$. Having $\Gamma=\Gamma'+_{\rangeJ}\Gamma_j$, all the conditions in the definition of $\mathcal A$ are satisfied, thus $\models\algo{\Pi}{J_r}$.

\paragraph*{Case Case}
$\Pi$ has the following form :
\[\begin{array}{c}
    \begin{prooftree}
                \hypo{\stJugValue[\Pi]
                    {\Delta}{\val v}
                    {\typeInt n}}
                \hypo{(
                    \stJugComput[\Pi_k]{\Delta}{\comput t_k}{\st U})_{ 0<k\le n}}
            \typeRuleCase[2]{\stJugComput
                {\Delta }
                {\case{\val v}{\comput t_1, \ldots, \comput t_n}}
                {\st U}}
        \end{prooftree}
    \\[0.5cm]
    \refrelsym 
    \begin{prooftree}
                \hypo{\itJugValue[\Xi]
                    {\Gamma_1}{\val v}
                    {\itSet{\typeInt m}}}
            \hypo{0 < m \leq n}
                \hypo{\itJugComput[\Xi_m]
                    {\Gamma_m}{\comput t_m}
                    {\it E}}
            \typeRuleCase[3]{\itJugComput
                {\Gamma_1 + \Gamma_m}
                {\case{\val v}{\comput t_1, \ldots, \comput t_n}}
                {\it E}}
        \end{prooftree}
\end{array}
\]
With $\refrel{\Pi_1}{\Xi_1}$ and $\refrel{\Pi_m}{\Xi_m}$. Let $J^1_s$ and $J^m_s$ be the conclusions of $\Pi_1$ and $\Pi_m$, moreover let $J^m_r$ and $J^m_r$ be the conclusions of $\Xi_1,\Xi_m$. By \Cref{rem:NewRefineConclusion}, we have $\refrel{J_s}{J_r}$, $\refrel{J^1_s}{J^1_r}$ and $\refrel{J^m_s}{J^m_r}$. By induction hypothesis $\models\algo{\Pi_1}{J^1_r}$ and $\models\algo{\Pi_1}{J^1_r}$. Since $\Gamma=\Gamma_1+\Gamma_m$ and $\refrel{\typeInt n}{\itSet{\typeInt{m}}}$, we conclude $\models\algo{\Pi}{J_r}$.

\paragraph*{Case Return}
$\Pi$ has the following form :
\[
    \begin{prooftree}
                \hypo{\stJugValue[\Pi']{\Delta}{\val v}
                    {\st M}}
            \typeRuleRet{\stJugComput
                {\Delta}{\ret{\val v}}
                {\itEffectReturn{\st M}}}
        \end{prooftree}
        \refrelsym 
        \begin{prooftree}
                \hypo{\itJugValue[\Xi']{\Gamma}{\val v}
                    {\it M}}
            \typeRuleRet{\itJugComput
                {\Gamma}{\ret{\val v}}
                {\itEffectReturn{\it M}}}
        \end{prooftree}
\]
With $\refrel{\Pi'}{\Xi'}$. Let $J'_s$ be the conclusions of $\Pi'$ and $J'_r$ be the conclusions of $\Xi'$. By \Cref{rem:NewRefineConclusion}, we have $\refrel{J_s}{J_r}$ and $\refrel{J'_s}{J'_r}$.
By induction hypothesis $\models\algo{\Pi'}{J'_r}$, thus $\models\algo{\Pi}{J_r}$.

\paragraph*{Case Effect}
$\Pi$ has the following form :

\[\begin{array}{c}
 \begin{prooftree}
                \hypo{\stJugValue[\Pi_1]
                    {\Delta}
                    {\val v}
                    {\st M}}
                \hypo{
                    \stJugComput[\Pi_2]
                        {\Delta; x :: \st N}
                        {\comput t}
                        {\st E}}

            \typeRuleEff[2]{
                \stJugComput
                    {\Delta }
                    {\effect{\val v}{x}{\comput t}}
                    {\itEffect{\sigma}{\st M}{\st N \rightarrow \st E}}
                }
        \end{prooftree}
    \\[0.5cm]
     \begin{prooftree}
                \hypo{\itJugValue[\Xi]
                    {\Gamma'}
                    {\val v}
                    {\it M}}
                \hypo{(
                    \itJugComput[\Xi_i]
                        {\Gamma_i; x : \it N_i}
                        {\comput t}
                        {\it E_i}
                )_{\rangeI}}
            \typeRuleEff{
                \itJugComput
                    {\Gamma' +_{\rangeI} \Gamma_i}
                    {\effect{\val v}{x}{\comput t}}
                    {\itEffect{\sigma}{\it M}{\it N_i \rightarrow \it E_i}_{\rangeI}}
                }
        \end{prooftree}
        \end{array}
\]
With $\refrel{\Pi_1}{\Xi'}$ and $\refrel{\Pi_2}{\Xi_i}$ for each $\rangeI$. Let $J^1_s$ and $J^2_s$ be the conclusions of $\Pi_1$ and $\Pi_2$, moreover let $J'_r$ and $J^i_r$ be the conclusions respectively of $\Xi'$ and $\Xi_i$. By \Cref{rem:NewRefineConclusion}, we have $\refrel{J_s}{J_r}$, $\refrel{J^1_s}{J'_r}$ and $\refrel{J^2_s}{J^i_r}$. By induction hypothesis $\models\algo{\Pi_1}{J'_r}$ and $\models\algo{\Pi_2}{J^i_r}$. Having $\Gamma=\Gamma'+_{\rangeI}\Gamma_i$, we conclude $\models\algo{\Pi}{J_r}$.

\paragraph*{Case Handler}
$\Pi$ has the following form :
\begin{equation*}
  \begin{prooftree}
            \hypo{\stJugHandler[\Pi_1]{\Delta}{\handler h}
                {\st F \Rightarrow \st E}}
            \hypo{\stJugComput[\Pi_2]{\Delta}{\comput t}{\st F}}
            \typeRuleHandle{\stJugComput
                {\Delta}
                {\handle{\handler h}{\comput t}}{\st E}}
        \end{prooftree}
    \\[0.5cm]
    \refrelsym
    \begin{prooftree}
            \hypo{\itJugHandler[\Xi_1]{\Gamma_1}{\handler h}
                {\it F \Rightarrow \it E}}
            \hypo{\itJugComput[\Xi_2]{\Gamma_2}{\comput t}{\it F}}
            \typeRuleHandle{\itJugComput
                {\Gamma_1 + \Gamma_2}
                {\handle{\handler h}{\comput t}}
                {\it E}}
        \end{prooftree}
\end{equation*}
With $\refrel{\Pi_1}{\Xi_1}$ and $\refrel{\Pi_2}{\Xi_2}$. Let $J^1_s$ and $J^2_s$ be the conclusions of $\Pi_1$ and $\Pi_2$, moreover let $J^1_r$ and $J^2_r$ be the conclusions of $\Xi_1,\Xi_2$. By \Cref{rem:NewRefineConclusion}, we have $\refrel{J_s}{J_r}$, $\refrel{J^1_s}{J^1_r}$ and $\refrel{J^2_s}{J^2_r}$. By induction hypothesis $\models\algo{\Pi_1}{J^1_r}$ and $\models\algo{\Pi_1}{J^2_r}$. Having $\Gamma=\Gamma_1+\Gamma_2$ and $\refrel{\st F}{\it F}$, we conclude $\models\algo{\Pi}{J_r}$.

\paragraph*{Case Handler Branching Return}
$\Xi$ has the following form :

\[\begin{array}{c}
    \begin{prooftree}
    \hypo{
        \stJugComput[\Pi']
            {\Delta; y :: \st M}
            {\comput t}
            {\st E}
            }
            \typeRuleHandlerRet[1]{\stJugHandler
                {\Delta'}
                {\{\retClause{y}{\comput t}\} \cup \handler h}
                {\itEffectReturn{\st M} \Rightarrow \st E}
            }
    \end{prooftree}
    \\[0.5cm]
    \refrelsym
    \begin{prooftree}
    \hypo{
        \itJugComput[\Xi']
            {\Gamma; y :: \it M}
            {\comput t}
            {\it E}
            }
            \typeRuleHandlerRet[1]{\itJugHandler
                {\Gamma}
                {\{\retClause{y}{\comput t}\} \cup \handler h}
                {\itEffectReturn{\it M} \Rightarrow \it E}
            }
    \end{prooftree}
\end{array}
\]

With $\refrel{\Pi'}{\Xi'}$ and $\{\retClause{y}{\comput u}\} \handlerCompat \handler h$. Let $J'_s$ be the conclusions of $\Pi'$ and $J'_r$ be the conclusions of $\Xi'$. By \Cref{rem:NewRefineConclusion}, we have $\refrel{J_s}{J_r}$ and $\refrel{J'_s}{J'_r}$.
By induction hypothesis $\models\algo{\Pi'}{J'_r}$, thus $\models\algo{\Pi}{J_r}$.

\paragraph*{Case Handler Branching}
$\Pi$ has the following form :
\[\begin{array}{c}
  \begin{prooftree}
    \hypo{
        \stJugHandler[\Pi_1]
            {\Delta; y :: \st M;r :: \st N \rightarrow \st G}
            {\comput t}
            {\st E}
            }
            \hypo{
                \stJugHandler[\Pi_2]
                    {\Delta}%
                    {\handler h}%
                    {\st F \Rightarrow \st G}
                }
            \hypo{\{\effectClause{y}{r}{\comput t}\} \handlerCompat \handler h}
            \typeRuleHandlerSigma[3]{\stJugHandler
                {\Delta }
                {\{\effectClause{y}{r}{\comput t}\} \cup \handler h}
                {\itEffect{\sigma}{\st M}
                                    {\st N
                                        \rightarrow 
                                    \st F
                                }
                                    \Rightarrow 
                                \st E}
            }
            \end{prooftree}
        \\[0.5cm] \refrelsym
        \begin{prooftree}
                        \hypo{%
                            \itJugComput[\Xi]
                                {\Gamma';
                                    y : \it M;
                                    r : \itSet*{\it N_i \rightarrow \it G_i}_{\rangeI}}%
                                {\comput t}%
                                {\it E}%
                            }%
                        \hypo{(%
                            \itJugHandler[\Xi_i]
                                {\Gamma_i}%
                                {\handler h}%
                                {\it F_i \Rightarrow \it G_i}%
                        )_{\rangeI}}%
                        \hypo{\{\effectClause{y}{r}{\comput t}\} \handlerCompat \handler h}%
                    \typeRuleHandlerSigma[3]{%
                        \itJugHandler%
                            {\Gamma' +_{\rangeI} \Gamma_i}%
                            {\{\effectClause{y}{r}{\comput t}\} \cup \handler h}%
                            {
                                \itEffect{\sigma}{\it M}
                                    {\it N_i
                                        \rightarrow 
                                    \it F_i
                                }_{\rangeI}
                                    \Rightarrow 
                                \it E
                            }%
                    }%
        \end{prooftree}
        \end{array}
\]
With $\refrel{\Pi_1}{\Xi'}$ and $\refrel{\Pi_2}{\Xi_i}$ for each $\rangeI$. Let $J^1_s$ and $J^2_s$ be the conclusions of $\Pi_1$ and $\Pi_2$, moreover let $J'_r$ and $J^i_r$ be the conclusions respectively of $\Xi'$ and $\Xi_i$. By \Cref{rem:NewRefineConclusion}, we have $\refrel{J_s}{J_r}$, $\refrel{J^1_s}{J'_r}$ and $\refrel{J^2_s}{J^i_r}$. By induction hypothesis $\models\algo{\Pi_1}{J'_r}$ and $\models\algo{\Pi_2}{J^i_r}$. Having $\Gamma=\Gamma'+_{\rangeI}\Gamma_i$ and $\refrel{\st G}{\it G_i}$ for all $\rangeI$, we conclude $\models\algo{\Pi}{J_r}$.

%% file: Definitions/RefinementRelation.tex
\begin{tabular}{c}
    \hspace{-0.5cm}
\framebox{$
    \begin{array}{c}
        \begin{prooftree}
                \hypo{\stJugComput[\Pi_1]{\Delta}{\comput t}{\stEffect{\st M}}}
                \hypo{\stJugComput[\Pi_2]
                        {\Delta; x :: \st M}
                        {\comput u}
                        {\stEffect{\st N}}}
            \typeRuleLetin[2]{
                \stJugComput
                    {\Delta}
                    {\letin{x}{\comput t}{\comput u}}
                    {\stEffect{\st N}}
            }
        \end{prooftree}
    \\[0.5cm]
    \refrelsym 
    \begin{prooftree}
                \hypo{\itJugComput[\Xi]{\Gamma}{\comput t}{\it E}}
                \hypo{
                    \it E
                        \replaceLeaf{\it M_i \rightarrow \it G_i}[\rangeI]
                    \it F}
                \hypo{
                    (\itJugComput[\Xi_i]
                        {\Gamma_i; x : \it M_i}
                        {\comput u}
                        {\it G_i}
                    )_{\rangeI}}
            \typeRuleLetin{
                \itJugComput
                    {\Gamma +_{\rangeI} \Gamma_i}
                    {\letin{x}{\comput t}{\comput u}}
                    {\it F}
            }
        \end{prooftree}\\
        \text{if } \refrel{\Pi_1}{\Xi} \text{ and } \refrel{\Pi_2}{\Xi_i} \text{ for each } \rangeI
        \\[0.5cm]
        \begin{prooftree}
                \hypo{\stJugValue[\Pi]{\Delta}{\val v}
                    {\st M}}
            \typeRuleRet{\stJugComput
                {\Delta}{\ret{\val v}}
                {\stEffect{\st M}}}
        \end{prooftree}
        \refrelsym 
        \begin{prooftree}
                \hypo{\itJugValue[\Xi]{\Gamma}{\val v}
                    {\it M}}
            \typeRuleRet{\itJugComput
                {\Gamma}{\ret{\val v}}
                {\itEffectReturn{\it M}}}
        \end{prooftree}\\
        \text{if } \refrel{\Pi}{\Xi} 
        \\[0.5cm] 
    \begin{prooftree}
                \hypo{\stJugValue[\Pi_1]{\Delta}{\val v}                    {\st M \rightarrow \st U}}
                \hypo{\stJugValue[\Pi_2]{\Delta}{\val w}{\st M}}
            \typeRuleApp{\stJugComput
                {\Delta}
                {\app{\val v}{\val w}}
                {\st U}}
        \end{prooftree}
        \refrelsym
        \begin{prooftree}
                \hypo{\itJugValue[\Xi_1]{\Gamma_1}{\val v}
                    {\itSet{\it M \rightarrow \it E}}}
                \hypo{\itJugValue[\Xi_2]{\Gamma_2}{\val w}{\it M}}
            \typeRuleApp{\itJugComput
                {\Gamma_1 + \Gamma_2}
                {\app{\val v}{\val w}}
                {\it E}}
        \end{prooftree}\\      
        \text{if } \refrel{\Pi_1}{\Xi_1} \text{ and } \refrel{\Pi_2}{\Xi_2}\\[0.5cm] 

     \begin{prooftree}
                \hypo{\stJugValue[\Pi_1]
                    {\Delta}
                    {\val v}
                    {\st M}}
                \hypo{
                    \stJugComput[\Pi_2]
                        {\Delta; x :: \st N}
                        {\comput t}
                        {\stEffect{\st M'}}}
                \hypo{
                    \effectSpec{\st M}{\st N} \in \effectCtxt E
                   }
            \typeRuleEff[3]{
                \stJugComput
                    {\Delta}
                    {\effect{\val v}{x}{\comput t}}
                    {\stEffect{\st M'}}
                }
        \end{prooftree}
    \\[0.5cm]
     \begin{prooftree}
                \hypo{\itJugValue[\Xi]
                    {\Gamma}
                    {\val v}
                    {\it M}}
                \hypo{(
                    \itJugComput[\Xi_i]
                        {\Gamma_i; x : \it N_i}
                        {\comput t}
                        {\it E_i}
                )_{\rangeI}}
            \typeRuleEff{
                \itJugComput
                    {\Gamma +_{\rangeI} \Gamma_i}
                    {\effect{\val v}{x}{\comput t}}
                    {\itEffect{\sigma}{\it M}{\it N_i \rightarrow \it E_i}_{\rangeI}}
                }
        \end{prooftree}\\
        \text{if } \refrel{\Pi_1}{\Xi} \text{ and } \refrel{\Pi_2}{\Xi_i} \text{ for each }\rangeI
        \\[0.5cm]
       
        \begin{prooftree}
            \hypo{\stJugHandler[\Pi_1]{\Delta}{\handler h}
                {\st S \Rightarrow \st U}}
            \hypo{\stJugComput[\Pi_2]{\Delta}{\comput t}{\st S}}
            \typeRuleHandle{\stJugComput
                {\Delta}
                {\handle{\handler h}{\comput t}}{\st U}}
        \end{prooftree}
    \\[0.5cm]
    \refrelsym
    \begin{prooftree}
            \hypo{\itJugHandler[\Xi_1]{\Gamma_1}{\handler h}
                {\it F \Rightarrow \it E}}
            \hypo{\itJugComput[\Xi_2]{\Gamma_2}{\comput t}{\it F}}
            \typeRuleHandle{\itJugComput
                {\Gamma_1 + \Gamma_2}
                {\handle{\handler h}{\comput t}}
                {\it E}}
        \end{prooftree}\\
        \text{if } \refrel{\Pi_1}{\Xi_1} \text{ and } \refrel{\Pi_2}{\Xi_2}
        \\[0.5cm]

     \end{array}
$}
\end{tabular}

%% file: Definitions/RefinementRelationHand.tex
\begin{tabular}{c}
\framebox{$
    \begin{array}{c}

     \begin{prooftree}
    \hypo{(
        \stJugComput
            {\Delta; x : \st M_i';r :: \st N_i \rightarrow \st U}
            {\comput t_i}
            {\st U})_{1\leq i\leq n}
            }
            \hypo{
                \begin{matrix}
            \stJugComput[\Pi_\comput t]
            {\Delta; y :: \st M}
            {\comput t}
            {\st U}\\
            \{\sigma_i:{\st M'_i}\rightsquigarrow{\st N_i}\in \st E \vsep 1 \leq i \leq n\}
                \end{matrix}
            }
            %
            %
            \typeRuleHandler[2]{\stJugHandler
                {\Delta}
                {\{\retClause{y}{\comput t}\} \cup\{\effectClause[\sigma_{\spaceNegII i}]{x}{r}{\comput t_i} \vsep 1 \leq i \leq n\}}
                {\stEffect{\st M} \Rightarrow \st U}
            }
            \end{prooftree}
    \\[1cm]
    \refrelsym
    \begin{prooftree}
    \hypo{
        \itJugComput[\Xi]
            {\Gamma; y :: \it M}
            {\comput t}
            {\it E}
            }
            \hypo{\{\retClause{y}{\comput t}\} \handlerCompat \handler h}
            \typeRuleHandlerRet[2]{\itJugHandler
                {\Gamma}
                {\{\retClause{y}{\comput t}\} \cup \handler h}
                {\itEffectReturn{\it M} \Rightarrow \it E}
            }
    \end{prooftree}\\[0.5cm]
    \text{if } \refrel{\Pi_\comput t}{\Xi}
    \\[0.5cm]
         \begin{prooftree}
    \hypo{(
        \stJugComput[\Pi_i]
            {\Delta; x :: \st M_j';r :: \st N_j \rightarrow \st U}
            {\comput t_j}
            {\st U})_{1\leq j\leq n}
            }
            \hypo{
                \begin{matrix}
            \stJugComput
            {\Delta; y :: \st M}
            {\comput t}
            {\st U}\\
            \{\sigma_j:{\st M'_j}\rightsquigarrow{\st N_j}\in \st E \vsep 1 \leq j \leq n\}
                \end{matrix}
            }
            %
            %
            \typeRuleHandler[2]{\stJugHandler
                {\Delta}
                {\{\retClause{y}{\comput t}\} \cup\{\effectClause[\sigma_{\spaceNegII j}]{x}{r}{\comput t_j} \vsep 1 \leq j \leq n\}}
                {\stEffect{\st M} \Rightarrow \st U}
            }
            \end{prooftree}
        \\[1cm] \refrelsym
        \begin{prooftree}
                        \hypo{%
                            \itJugComput[\Xi]
                                {\Gamma;
                                    y : \it M;
                                    r : \itSet*{\it N_i \rightarrow \it G_i}_{\rangeI}}%
                                {\comput t}%
                                {\it E}%
                            }%
                        \hypo{
                        \begin{matrix}
                            \{\effectClause[\sigma_{\spaceNegII k}]{y}{r}{\comput t}\} \handlerCompat \handler h\\
                            (%
                            \itJugHandler[\Xi_i]
                                {\Gamma_i}%
                                {\handler h}%
                                {\it F_i \Rightarrow \it G_i}%
                        )_{\rangeI}
                        \end{matrix}    
                        }%
                    \typeRuleHandlerSigma[2]{%
                        \itJugHandler%
                            {\Gamma +_{\rangeI} \Gamma_i}%
                            {\{\effectClause[\sigma_{\spaceNegII k}]{y}{r}{\comput t}\} \cup \handler h}%
                            {
                                \itEffect{\sigma_k}{\it M}
                                    {\it N_i
                                        \rightarrow 
                                    \it F_i
                                }_{\rangeI}
                                    \Rightarrow 
                                \it E
                            }%
                    }%
        \end{prooftree}\\[0.5cm]
        \text{if } \refrel{\Pi_k}{\Xi} \text{ and } \refrel{\stJugHandler[\Pi]{\Delta}{\handler h}{\prescript{}{\st E}{\st M}}\Rightarrow\st U}{\Xi_i} \text{ for each }\rangeI

    \end{array}
    
$}
\end{tabular}

%% file: Definitions/HEPCFReductions.tex
\begin{tabular}{c}
    \hspace{-0.5cm}
\framebox{$

\begin{array}{c}

    \begin{array}{ccl}
    \begin{prooftree}
        \hypo{\stJugValue[\Pi_\val v]{\emptyset}{\val v}{\st M}}
        \typeRuleRet{\stJugComput{\emptyset}{\ret{\val v}}{\stEffect{\st M}}}
        \hypo{\stJugComput[\Pi_{\comput u}]{ x ::\st M}{\comput u}{\stEffect{\st N}}}
    \typeRuleLetin[2]{\stJugComput{\emptyset}{\letin{x}{\ret{\val v}}{\comput u}}{\stEffect{\st N}}}
\end{prooftree}
    & \rightsquigarrow & \stJugComput[\Pi_{\comput u}\sub{x}{\Pi_\val v}]{ \emptyset}{\comput u\sub{x}{\val v}}{\stEffect{\st N}}\\[1cm]
    \end{array}\\\\
    \begin{prooftree}
            \hypo{\stJugComput[\Pi_{\comput u}]{ x :: \st M}{\comput u}{\st U}}
            \hypo{\begin{matrix}
        \{\sigma_j:{\st M'_j}\rightsquigarrow{\st N_j}\in \st E \vsep 1 \leq j \leq n\}\\
        (
        \stJugComput[\Pi_i]
            {\Delta; x :: \st M_j';r :: \st N_j \rightarrow \st U}
            {\comput t_j}
            {\st U})_{1\leq j\leq n}\\
    \end{matrix}}
            \typeRuleHandler[2]{\stJugHandler[\Pi_\handler{h}]{\emptyset}{\{\retClause{x}{\comput u}\}\cup \handler h'}{\stEffect{\st M} \Rightarrow \st U}}
            \hypo{\stJugValue[\Pi_\val v]{\emptyset}{\val v}{\st M}}
            \typeRuleRet{\stJugComput{\emptyset}{\ret{\val v}}{\stEffect{\st M}}}
        \typeRuleHandle{\stJugComput{\emptyset}{\handle{\{\retClause{x}{\comput u}\} \cup \handler h'}{\ret{\val v}}}{\st U}}
    \end{prooftree}\\[1.2cm]
     \rightsquigarrow  \stJugComput[\Pi_{\comput u}\sub{x}{\Pi_\val v}]{\emptyset}{\comput u\sub{x}{\val v}}{\st U}\\[0.5cm]

    \begin{prooftree}
            \hypo{
                    \stJugValue[\Pi_{\val v}]
                        {\emptyset}
                        {\val v}
                        {\st M}
                }
            \hypo{
        \begin{matrix}
                \effectSpec{\st M}{\st N} \in \effectCtxt E\\
                \stJugComput[\Pi_{\comput s}]
                        {y :: \st N}
                        {\comput s}
                        {\stEffect{\st N'}}
        \end{matrix}}
            \typeRuleEff[2]{
                \stJugComput
                    {\emptyset}
                    {\effect{\val v}{y}{\comput s}}
                    {\stEffect{\st N'}}
            }
            \hypo{\stJugComput[\Pi_\comput u]
                {x ::\st N'}{\comput u}
                {\stEffect{\st M'}}}
        \typeRuleLetin[2]{
            \stJugComput
                {\emptyset}
                {\letin{x}{\effect{\val v}{y}{\comput s}}{\comput u}}
                {\stEffect{\st M'}}
        }
    \end{prooftree}\qquad\\[1.5cm]

    \rightsquigarrow
    \begin{prooftree}
            \hypo{
                    \stJugValue[\Pi_{\val v}]
                        {\emptyset}
                        {\val v}
                        {\st M}
                }

                \hypo{\begin{matrix}
                    \stJugComput[\Pi_\comput u^{(y::\st N)}]
                {y::\st N;x ::\st N'}{\comput u}
                {\stEffect{\st M'}}
                    \\
                    \stJugComput[\Pi_{\comput s}]
                        { y :: \st N}
                        {\comput s}
                        {\stEffect{\st N'}}
                \end{matrix}
                    }
        \typeRuleLetin[1]{
                \stJugComput
                    {y::\st N}
                    {\letin{x}{\comput s}{\comput u}}
                    {\stEffect{\st M'}}
            }
        \hypo{
        \begin{matrix}
                \effectSpec{\st M}{\st N} \in \effectCtxt E
        \end{matrix}}
        \typeRuleEff[3]{
            \stJugComput
                {\emptyset}
                {\effect{\val v}{y}{\letin{x}{\comput s}{\comput u}}}
                {\stEffect{\st M'}}
        }
    \end{prooftree}\\[1cm]

    \begin{prooftree}
    \hypo{(
        \stJugComput[\Pi_i]
            {\Delta; x :: \st M_j';r :: \st N_j \rightarrow \st U}
            {\comput t_j}
            {\st U})_{1\leq j\leq n}
            }
            \hypo{
                (A)
            }
            %
            %
            \typeRuleHandler[2]{\stJugHandler[\Pi_\handler h]
                {\Delta}
                {\{\retClause{y}{\comput t}\} \cup\{\effectClause[\sigma_{\spaceNegII j}]{x}{r}{\comput t_j} \vsep 1 \leq j \leq n\}}
                {\stEffect{\st M} \Rightarrow \st U}
            }

        \hypo{
            \begin{matrix}
                \stJug[\Pi_\val v]{\emptyset}{\val v}{\st M'_k}\\
                \stJug[\Pi_\comput u]{y::\st N_k}{\comput u}{\stEffect{\st M}}\\
                \effectSpec[\sigma_k]{\st M'_k}{\st N_k} \in \effectCtxt E
            \end{matrix}       
        }   

        \typeRuleEff[1]{\stJugComput
            {\emptyset}{\effect[\sigma_k]{\val v}{y}{\comput u}}
            {\stEffect{\st M}}}
        \typeRuleHandle[2]{
            \stJugComput
                {\emptyset}
                {\handle{\{\effectClause[\sigma_k]{x}{r}{\comput t_k}\} \cup \handler h}{\effect[\sigma_k]{\val v}{y}{\comput u}}}
                {\st U}
        }
    \end{prooftree}\\[1.5cm]
    (A):= \; \begin{matrix}
            \stJugComput
            {\Delta; y :: \st M}
            {\comput t}
            {\st U}\\
            \{\sigma_j:{\st M'_j}\rightsquigarrow{\st N_j}\in \st E \vsep 1 \leq j \leq n\}
                \end{matrix}
    \\[0.5cm]
    \rightsquigarrow \stJugComput[\Pi_k\sub{x}{\Pi_\val v}\sub{r}{\Pi_{\val{\lambda y}}}]{\emptyset }{\comput t_k \sub{x}{\val v}\sub{r}{\abs{y}{\handle{\handler h}{\comput u}}}}{\st U}\\[0.5cm]
\text{With}\\
\begin{prooftree}

    \hypo{
        \stJugHandler[\Pi_\handler h^{(y::\st N_k)}]{y::\st N_k}{\handler h}{\stEffect{\st M}\Rightarrow \st U}
    }
    \hypo{
        \stJugComput[\Pi_\comput u]{y::\st N_k}{\comput u}{\stEffect{\st M}}
    }
    \typeRuleHandler{
        \stJugComput{ y::\st N_k}{\handle{\handler h}{\comput u}}{\st U}
    }
    \typeRuleAbs{
        \stJugValue[\Pi_\val{\lambda y}]{\emptyset}{\lambda y.\handle{ \handler h}{\comput u}}{\st N_k\Rightarrow \st U}
    }
\end{prooftree}
\end{array}
$}
\end{tabular}

%% file: Proofs/RefinementElementarySubjectExpansion.tex
By case analysis on the different rewrite rules. We are going to detail only the new rewriting rules, the other cases being identical to \Cref{lem:NewElementarySubjectExpansion-ref}

\paragraph*{Case $\reductLetRet$.}
Let $\comput t, \comput t' \in \setComput$ such that $\comput t'$ is
typed and $\comput t \reductArr_\reductLetRet \comput t'$. By
definition, $\comput t \coloneqq
\letin{x}{\ret{\val v}}{\comput u}$ and $\comput t'
\coloneqq \comput u\sub{x}{\val v}$. Since $t$ is simply typable, we get the existence of two derivations $\stJugValue[\Pi_\val v]{\emptyset}{\val v}{\st M}$ and $\stJugComput[\Pi_\comput u]{x::\st M}{u}{\stEffect{\st N}}$ for some $\st M$. Also,
there exist derivations $\refrel{\stJugComput[\Pi']{\emptyset}{\comput u\sub{x}{\val v}}{\stEffect{\st M}}}{\itJugComput[\Xi']{\emptyset}{\comput u\sub{x}{\val v}}{\it F}}$.
Since by hypothesis $\Pi\rightsquigarrow \Pi'$, by definition of $\rightsquigarrow$ we have that $\Pi'=\Pi_\comput u\sub{x}{\Pi_\val v}$.
By anti-substitution lemma (\Cref{lem:HEPCFantiSubstLemma-ref}), there exist a refinement type $\it M$
and derivations $\itJugValue[\Xi_\val v]{\emptyset}{\val v}{\it M}$
and $\itJugComput[\Xi_{\comput u}]{ x:\it M }{\comput u}{\it F}$,
where $\refrel{\Pi_\val v}{\Xi_\val v}$ and $\refrel{\Pi_\comput u}{\Xi_\comput u}$.
Thus, let $\Xi$ be the following derivation :
\begin{equation*}
\begin{prooftree}
        \hypo{\itJugValue[\Xi_\val v]{\emptyset}{\val v}{\it M}}
        \typeRuleRet{\itJugComput{\emptyset}{\ret{\val v}}{\itEffectReturn{\it M}}}
        \infer0{\itEffectReturn{\it M} \replaceLeaf{\it M \rightarrow \it F} \it F}
        \hypo{\itJugComput[\Xi_{\comput u}]{ x : \it M}{\comput u}{\it F}}
    \typeRuleLetin{\itJugComput{\itJugComput[\Xi']{\emptyset}{\comput u\sub{x}{\val v}}{\it F}}{\letin{x}{\ret{\val v}}{\comput u}}{\it F}}
\end{prooftree}
\end{equation*}
Finally, we can see that $\refrel{\Pi}{\Xi}$.

\paragraph*{Case $\reductLetEff$.}
Let $\comput t, \comput t' \in \setComput$ such that $\comput t'$ is
typed and $\comput t \reductArr_\reductLetEff \comput t'$. By
definition, $\comput t \coloneqq
\letin{x}{\effect{\val v}{y}{\comput s}}{\comput u}$ and
$\comput t' \coloneqq
\effect{\val v}{y}{\letin{x}{\comput s}{\comput t}}$. There exists a derivation
$\itJugComput[\Xi']{\emptyset}{\effect{\val v}{y}{\letin{x}{\comput s}{\comput t}}}{\it E}$
for some type $\it E$. By
case analysis, we deduce that $\Xi'$ is necessarily of the following
form:

\begin{equation*}
\begin{prooftree}
    \hypo{\itJugValue[\Xi_{\val v}]{\emptyset}{\val v}{\it M}}
    \hypo{A_i}
    \delims{\left(}{\right)_{\rangeI}}
\typeRuleEff[2]{
    \itJugComput
        {\emptyset}
        {\effect{\val v}{y}{\letin{x}{\comput s}{\comput u}}}
        {\itEffect
                {\sigma}{\it M}
                {\it N_i \rightarrow \it E_i}_{\rangeI}}}
\end{prooftree}
\end{equation*}
\begin{equation*}
    (A_i) \coloneqq
    \begin{prooftree}
        \hypo{\itJugComput[\Xi^i_{\comput s}]{ y : \it N_i}{\comput s}{\it G_i}}
    \hypo{B_i}
    \delims{\left(}{\right)_{\rangeI}}

        \hypo{(\itJugComput[\Xi_k^i]{x : \it M_k^i}{\comput u}{\it F_k^i})_{\rangeK_i}}
    \typeRuleLetin{
        \itJugComput
            { y : \it N_i}
            {\letin{x}{\comput s}{\comput u}}
            {\it E_i}}
    \end{prooftree}
\end{equation*}
\begin{equation*}
    (B_i) \coloneqq
    \begin{prooftree}
       \hypo{\varphi_i}
        \ellipsis{}{
            \it G_i
            \replaceLeaf{\it M_k^i \rightarrow \it F_k^i}[\rangeK_i]
            \it E_i}
    \end{prooftree}
\end{equation*}
Since $y$ is not a free variable of $\comput u$, by context simplification (\Cref{rem:ContextSimplification}) on each $\Xi_k^i$ one can derive $\itJugComput[\Xi_k^i]{x : \it M_k^i}{\comput u}{\it F_k^i}$.
Furthermore, it exists a $\stJugComput[\Pi']{\emptyset}{\effect{\val v}{y}{\letin{x}{\comput s}{\comput u}}}{\stEffect{\st M'}}$.
Let $\Xi$ be the following typing derivation:
\begin{equation*}
    \begin{prooftree}
            \hypo{(A)}
            \hypo{(B)}
            \hypo{(\itJugComput[\Xi_k^i]
                { x : \it M_k^i}
                {\comput u}{\it F_k^i})_{\rangeK_i, \rangeI}}
        \typeRuleLetin{
            \itJugComput
                {\emptyset}
                {\letin{x}{\effect{\val v}{y}{\comput s}}{\comput u}}
                {\itEffect
                    {\sigma}{\it M}
                    {\it N_i \rightarrow \it E_i}_{\rangeI}}
        }
    \end{prooftree}
\end{equation*}
\begin{equation*}
    (A) \coloneqq
    \begin{prooftree}
            \hypo{
                    \itJugValue[\Xi_{\val v}]
                        {\emptyset}
                        {\val v}
                        {\it M}
                }
                \hypo{(
                    \itJugComput[\Xi^i_{\comput s}]
                        { y : \it N_i}
                        {\comput s}
                        {\it G_i}
                )_{\rangeI}}
            \typeRuleEff[2]{
                \refJugComput
                    {\emptyset}
                    {\effect{\val v}{y}{\comput s}}
                    {\itEffect{\sigma}{\it M}{\it N_i \rightarrow \it G_i}_{\rangeI}}
            }
    \end{prooftree}
\end{equation*}
\begin{equation*}
    (B) \coloneqq
    \begin{prooftree}
            \hypo{\varphi_i}
            \ellipsis{}{
                \it G_i
                \replaceLeaf{\it N_i \rightarrow \it F_k^i}[\rangeK_i]
                \it E_i}
            \delims{\left(}{\right)_{\rangeI}}
        \infer1{
            \itEffect
                    {\sigma}{\it M}
                    {\it N_i \rightarrow \it G_i}_{\rangeI}
        \replaceLeaf{
            \it M_k^i \rightarrow 
            \it F_k^i}[\rangeK_i, \rangeI]
            \itEffect
                {\sigma}{\it M}
                {\it N_i \rightarrow \it E_i}_{\rangeI}}
    \end{prooftree}
\end{equation*}
By looking at the definition of this reduction case in \Cref{def:HEPCFReduction},  from $\refrel{\Pi'}{\Xi'}$ we obtain easily that $\refrel{\Pi}{\Xi}$. In particular, notice that the subderivation $\Pi_\comput u$ of $\Pi'$ is such that $\refrel{\Pi_\comput u^{(y::\st N)}}{\Xi^i_k}$ for each $\rangeI$, $\rangeK$. By \Cref{prop:HEPCFClosedRefinement}, we obtain $\refrel{\Pi_\comput u}{\Xi^i_k}$.

\paragraph*{Case $\reductHdlRet$.}
Let $\comput t, \comput t' \in \setComput$ such that $\comput t'$ is
typed and $\comput t \reductArr_\reductHdlRet \comput t'$. By
definition, $\comput t \coloneqq
\handle{\handler h}{\ret{\val v}}$ and $\comput t'
\coloneqq \comput u\sub{x}{\val v}$ with
$\handler h = \{\retClause{x}{\comput u}\} \cup \handler h'$. 

Since $\comput t$ is simply typable, by case analysis its derivation $\Pi$ has the following form. 
\begin{equation*}
    \hspace{-1cm}
    \begin{prooftree}
            \hypo{\stJugComput[\Pi_{\comput u}]{ x :: \st M}{\comput u}{\st U}}
            \hypo{...}
            \typeRuleHandler[2]{\stJugHandler[\Pi_\handler{h}]{\emptyset}{\{\retClause{x}{\comput u}\}\cup \handler h'}{\stEffect{\st M} \Rightarrow \st U}}
            \hypo{\stJugValue[\Pi_\val v]{\emptyset}{\val v}{\st M}}
            \typeRuleRet{\stJugComput{\emptyset}{\ret{\val v}}{\stEffect{\st M}}}
        \typeRuleHandle{\stJugComput{\emptyset}{\handle{\{\retClause{x}{\comput u}\}\cup \handler h' }{\ret{\val v}}}{\st U}}
    \end{prooftree}
\end{equation*}
There exist two derivations 
$\itJugComput[\Xi']{\emptyset}{\comput u\sub{x}{\val v}}{\it E}$ and $\stJugComput[\Pi']{\emptyset}{\comput u\sub{x}{\val v}}{\st U}$ such that $\refrel{\Pi'}{\Xi'}$. Since by hypothesis $\Pi\rightsquigarrow \Pi'$, by definition of $\rightsquigarrow$ is clear that $\Pi'=\Pi_\comput u\sub{x}{\Pi_\val v}$.
By anti-substitution lemma (\Cref{lem:HEPCFantiSubstLemma-ref}), there exist
a type $\it M$ and derivations 
$\itJugValue[\Xi_\val v]{\emptyset}{\val v}{\it M}$
and $\itJugComput[\Xi_{\comput u}]{x : \it M}{\comput u}{\it E}$,
where $\refrel{\Pi_\comput u}{\Xi_\comput u}$ and $\refrel{\Pi_\val v}{\Xi_\val v}$.
Since $\comput t$ is not a clash, by \Cref{lem:CompatImpliesClashFree}
$\{\retClause{x}{\comput u}\} \handlerCompat \handler h'$.
Therefore, let $\Xi$ be the following derivation :
\begin{equation*}
    \begin{prooftree}
            \hypo{\itJugComput[\Xi_{\comput u}]{ x : \it M}{\comput u}{\it E}}
            \hypo{\{\retClause{x}{\comput u}\} \handlerCompat \handler h'}
            \typeRuleHandler[2]{\itJugHandler{\emptyset}{\{\retClause{x}{\comput u}\}\cup \handler h' }{\itEffectReturn{\it M} \Rightarrow \it E}}
            \hypo{\itJugValue[\Xi_\val v]{\emptyset}{\val v}{\it M}}
            \typeRuleRet{\itJugComput{\emptyset}{\ret{\val v}}{\itEffectReturn{\it M}}}
        \typeRuleHandle{\itJugComput{\emptyset}{\handle{\{\retClause{x}{\comput u}\} \cup \handler h'}{\ret{\val v}}}{\it E}}
    \end{prooftree}
\end{equation*}
Finally, is easy to notice that $\refrel{\Pi}{\Xi}$.

\paragraph*{Case $\reductHdlEff$.}
Let $\comput t, \comput t' \in \setComput$ such that $\comput t'$ is
typed and $\comput t \reductArr_\reductHdlEff \comput t'$. 

By definition,
$\comput t \coloneqq
\handle{\handler h}{\effect[\sigma_k]{\val v}{x}{\comput u}}$ and
$\comput t' \coloneqq
\comput t_k\sub{x}{\val v}\sub{r}{\abs{y}{\handle{\handler h}{\comput u}}}$
with $\effectClause[\sigma_k]{x}{r}{\comput t_k} \in
\handler h$. There exists two derivations
$\itJugComput[\Xi']{\emptyset}{\comput t_k\sub{x}{\val v}\sub{r}{\abs{y}{\handle{\handler h}{\comput u}}}}{\it E}$ and $\stJugComput[\Pi']{\emptyset}{\comput t_k\sub{x}{\val v}\sub{r}{\abs{y}{\handle{\handler h}{\comput u}}}}{\st U}$ such that $\refrel{\Pi'}{\Xi'}$. 
The derivation $\Pi$ has the following form. 
\begin{equation*}
     \begin{prooftree}
    \hypo{(
        \stJugComput[\Pi_i]
            { x :: \st M_j';r :: \st N_j \rightarrow \st U}
            {\comput t_j}
            {\st U})_{1\leq j\leq n}
            }
            \hypo{
                (A)
            }
            %
            %
            \typeRuleHandler[2]{\stJugHandler[\Pi_\handler h]
                {\emptyset}
                {\{\retClause{y}{\comput t}\} \cup\{\effectClause[\sigma_{\spaceNegII j}]{x}{r}{\comput t_j} \vsep 1 \leq j \leq n\}}
                {\stEffect{\st M} \Rightarrow \st U}
            }

        \hypo{
            \begin{matrix}
                \stJug[\Pi_\val v]{\emptyset}{\val v}{\st M'_k}\\
                \stJug[\Pi_\comput u]{y::\st N_k}{\comput u}{\stEffect{\st M}}\\
                \effectSpec[\sigma_k]{\st M'_k}{\st N_k} \in \effectCtxt E
            \end{matrix}       
        }   

        \typeRuleEff[1]{\stJugComput
            {\emptyset}{\effect[\sigma_k]{\val v}{y}{\comput u}}
            {\stEffect{\st M}}}
        \typeRuleHandle[2]{
            \stJugComput
                {\emptyset}
                {\handle{\{\effectClause[\sigma_k]{x}{r}{\comput t_k}\} \cup \handler h}{\effect[\sigma_k]{\val v}{y}{\comput u}}}
                {\st U}
        }
    \end{prooftree}
\end{equation*}
\begin{equation*}
        (A):= \; \begin{matrix}
            \stJugComput
            {\Delta; y :: \st M}
            {\comput t}
            {\st U}\\
            \{\sigma_j:{\st M'_j}\rightsquigarrow{\st N_j}\in \st E \vsep 1 \leq j \leq n\}
                \end{matrix}
\end{equation*}
Since by hypothesis $\Pi\rightsquigarrow \Pi'$, by definition of $\rightsquigarrow$ we have that $\Pi'=\stJugComput[\Pi_k\sub{x}{\Pi_\val v}\sub{r}{\Pi_{\val{\lambda y}}}]{\emptyset }{\comput t_k \sub{x}{\val v}\sub{r}{\abs{y}{\handle{\handler h}{\comput u}}}}{\st U}$, with $\Pi_\val{\lambda y}$ being the following derivation, as defined in \Cref{def:HEPCFReduction}.

\begin{equation*}
    \begin{prooftree}

    \hypo{
        \stJugHandler[\Pi_\handler h^{(y::\st N_k)}]{y::\st N_k}{\handler h}{\stEffect{\st M}\Rightarrow \st U}
    }
    \hypo{
        \stJugComput[\Pi_\comput u]{y::\st N_k}{\comput u}{\stEffect{\st M}}
    }
    \typeRuleHandle{
        \stJugComput{ y::\st N_k}{\handle{\handler h}{\comput u}}{\st U}
    }
    \typeRuleAbs{
        \stJugValue[\Pi_\val{\lambda y}]{\emptyset}{\lambda y.\handle{\handler h}{\comput u}}{\st N_k\Rightarrow \st U}
    }
\end{prooftree}
\end{equation*}

We also have $\stJug[\Pi_k\sub{x}{\Pi_\val{v}}]{\Delta;r::\st N_k\rightarrow \st U}{\comput t_k \sub{x}{\val v}}{\st U}$.
We can then apply the anti-substitution lemma (\Cref{lem:HEPCFantiSubstLemma-ref}) on $\Xi'$ with $\Pi_k\sub{x}{\Pi_\val v}$ and 
$\Pi_\val{\lambda y}$, therefore there exist a type $\it L$ and derivations 
$\itJugValue[\Xi_{\val {\lambda y}}]{\emptyset}{\abs{y}{\handle{\handler h}{\comput u}}}{\it L}$ and 
$\itJugComput[\Xi_{\comput s}]{ r : \it L}{\comput t_k\sub{x}{\val v}}{\it E}$,
where $\refrel{\Pi_k\sub{x}{\Pi_\val v}}{\Xi_{\comput s}}$ and $\refrel{\Pi_\val{\lambda y}}{\Xi_{\val {\lambda y}}}$.
By case analysis, we deduce that $\Xi_{\val {\lambda y}}$ is necessarily of the following
form:

\begin{equation*}
    \begin{prooftree}
                \hypo{
                    \itJugHandler[\Xi_{\handler h}^i]
                        {y:\it N_i^\handler h}
                        {\handler h}
                        {\it F_i
                            \Rightarrow
                        \it G_i}}
                \hypo{
                    \itJugComput[\Xi_{\comput u}^i]
                        { y : \it N_i}
                        {\comput u}
                        {\it F_i}}
            \typeRuleHandle{
                \itJugComput
                    { y : \it N_i}
                    {\handle{\handler h}{\comput u}}
                    {\it G_i}
            }
            \delims{\left(}{\right)_{\rangeI}}
        \typeRuleAbs{
            \itJugValue
                {\emptyset}
                {\abs{y}{\handle{\handler h}{\comput u}}}
                {\itSet*{\it N_i \rightarrow \it G_i}_{\rangeI}}
        }
    \end{prooftree}
\end{equation*}

where $\it L = \itSet{\it N_i \rightarrow \it G_i}_{\rangeI}$. Therefore we obtain $\refrel{\Pi_\handler h^{(y::\st N')}}{\Xi^i_\handler h}$ and $\refrel{\Pi_\comput u}{\Xi^i_\comput u}$ for each $\rangeI$. By \Cref{prop:HEPCFClosedRefinement}, we have that $\refrel{\Pi_\handler h}{\Xi^i_\handler h}$ for each $\rangeI$. Since $y$ is not a free variable of $\handler h$, by context simplification (\Cref{rem:ContextSimplification}) on $\Xi_{\handler h}^i$ one can derive 
$\itJugHandler[\Xi^i_{\handler h}]
    {\emptyset}
    {\handler h}
    {\it F_i \Rightarrow \it G_i}$.
We can now apply again the anti-substitution lemma (\Cref{lem:HEPCFantiSubstLemma-ref}) on $\Xi_{\comput s}$ with $\Pi_k$ and $\Pi_\val v$, therefore there exist a type $\it M$ and derivations 
$\itJugValue[\Xi_\val v]{ \emptyset}
    {\val v}{\it M}$
and $\itJugComput[\Xi_{\comput t}]
    { r : \itSet*{\it N_i \rightarrow \it G_i}_{\rangeI}; x : \it M}
    {\comput t_k}{\it E}$,
where $\refrel{\Pi_k}{\Xi_\comput t}$ and $\refrel{\Pi_\val v}{\Xi_\val v}$. 
Since $\comput t$ is not a clash, by \Cref{lem:CompatImpliesClashFree}
$\{\effectClause[\sigma_k]{x}{r}{\comput t_k}\} \handlerCompat \handler h$.
Let $\Xi$ be the following derivation :

\begin{equation*}
    \begin{prooftree}
            \hypo{(B)}
                \hypo{\itJugValue[\Xi_{\val v}]
                    {\emptyset}
                    {\val v}{\it M}}
                \hypo{\left(
                    \itJugComput[\Xi_{\comput u}^i]
                        {y : \it N_i}
                        {\comput u}
                        {\it F_i}
                \right)_{\rangeI}}
            \typeRuleEff{
                \itJugComput
                    { \emptyset}
                    {\effect{\val v}{y}{\comput u}}
                    {\itEffect
                        {\sigma}{\it M}
                        {\it N_i \rightarrow \it F_i}_{\rangeI}}}
        \typeRuleHandle{
            \itJugComput
                {\emptyset}
                {\handle
                    {\{\effectClause[\sigma]{x}{r}{\comput t_k}\} \cup \handler h}
                    {\effect{\val v}{y}{\comput u}}}
                {\it E}}
    \end{prooftree}
\end{equation*}

\begin{equation*}
    \hspace{-1cm}
    (B) \;\coloneqq\;
    \begin{prooftree}
            \hypo{
                \itJugComput[\Xi_{\comput t}]
                    {   x : \it M ;
                        r : \itSet{\it N_i \rightarrow \it G_i}_{\rangeI}}
                    {\comput t_k}
                    {\it E}}

            \hypo{
                \begin{matrix}
                \{\effectClause[\sigma_k]{x}{r}{\comput t_k}\} \handlerCompat \handler h\\
                (
                \itJugHandler[\Xi_{\handler h}^i]
                    {\emptyset}
                    {\handler h}
                    {\it F_i \Rightarrow \it G_i}
                    )_{\rangeI}
            \end{matrix}  
            }
        \typeRuleHandler{
            \itJugHandler[\Xi_\handler h]
                {\emptyset }
                {\{\effectClause[\sigma_k]{x}{r}{\comput t_k}\} \cup \handler h}
                {\itEffect{\sigma_k}{\it M}
                        {\it N_i \rightarrow \it F_i}_{\rangeI}
                \Rightarrow \it E}}
    \end{prooftree}
\end{equation*}
To conclude, we have to prove that $\Xi_\handler h$ refines
\[
\begin{prooftree}
\hypo{(\stJugComput[\Pi_i]
            { x :: \st M_j';r :: \st N_j \rightarrow \st U}
            {\comput t_j}
            {\st U})_{1\leq j\leq n}
            }
            \hypo{
                \begin{matrix}
            \stJugComput
            { y :: \st M}
            {\comput t}
            {\st U}\\
            \{\sigma_j:{\st M'_j}\rightsquigarrow{\st N_j}\in \st E \vsep 1 \leq j \leq n\}
                \end{matrix}
            }
            %
            %
            \typeRuleHandler[2]{\stJugHandler[\Pi_\handler h]
                {\emptyset}
                {\{\retClause{y}{\comput t}\} \cup\{\effectClause[\sigma_{\spaceNegII j}]{x}{r}{\comput t_j} \vsep 1 \leq j \leq n\}}
                {\stEffect{\st M} \Rightarrow \st U}
            }
            \end{prooftree}
\]
We have $\refrel{\Pi_k}{\Xi_\comput t}$ and $\refrel{\Pi_\handler h}{\Xi^i_{\handler h}}$ for each $\rangeI$, so $\refrel{\Pi_\handler h}{\Xi_\handler h}$.
One can conclude this case by noticing that 
$\itJugComput[\Xi]{\Gamma}{t}{\it E}$ and $\refrel{\Pi}{\Xi}$.